\pdfoutput=1
\RequirePackage{fix-cm}
\documentclass[11pt]{article}
\usepackage[T1]{fontenc}
\usepackage[utf8]{inputenc}
\usepackage[margin=1in]{geometry}
\usepackage{mathtools,amssymb,amsthm}
\usepackage[varbb]{newpxmath}
\usepackage{microtype}
\usepackage{etoolbox}
\usepackage{booktabs,array,tabularx}
\usepackage{float}
\usepackage{graphicx}
\usepackage{multicol}
\usepackage{caption}
\usepackage{dashrule}
\usepackage[inline]{enumitem}
\usepackage{acro}
\usepackage{xcolor}
\usepackage{tcolorbox}
\tcbuselibrary{skins,breakable}
\usepackage{needspace}
\usepackage{aliascnt}
\usepackage{textcase}
\definecolor{darkcyan}{RGB}{0,110,110}
\definecolor{brightblue}{RGB}{0,90,210}
\definecolor{darkdarkred}{RGB}{110,0,0}
\definecolor{revisionold}{RGB}{128,0,0}
\definecolor{revisionneworange}{RGB}{180,85,0}
\definecolor{revisionnewgreen}{RGB}{0,100,0}
\newenvironment{oldversion}
  {\par\begingroup\color{revisionold}}
  {\par\endgroup}

\newenvironment{newversion}
  {\par\begingroup\color{revisionneworange}}
  {\par\endgroup}

\usepackage[hypertexnames=false]{hyperref}
\definecolor{blueGrotto}{HTML}{059DC0}
\definecolor{royalBlue}{HTML}{057DCD}
\definecolor{navyBlue}{HTML}{0B579C}
\hypersetup{
  colorlinks=true,
  urlcolor=blueGrotto,
  linkcolor=royalBlue,
  citecolor=navyBlue
}
\newtagform{referenceblue}[\textcolor{royalBlue}]{\textcolor{black}{(}}{\textcolor{black}{)}}
\usetagform{referenceblue}

\usepackage[
  backref=true,
  backend=biber,
  natbib=true,
  style=alphabetic,
  sorting=alphabeticlabel,
  sortcites=true,
  minbibnames=3,
  maxbibnames=999,
  mincitenames=4,
  maxcitenames=4,
  minalphanames=4,
  maxalphanames=4
]{biblatex}
\DeclareSortingTemplate{alphabeticlabel}{
  \sort[final]{\field{labelalpha}}
  \sort{\field{year}}
  \sort{\field{title}}
}
\AtBeginRefsection{\GenRefcontextData{sorting=ynt}}
\AtEveryCite{\localrefcontext[sorting=ynt]}
\usepackage{xurl}
\usepackage[capitalize,nameinlink,nosort]{cleveref}
\crefname{equation}{Equation}{Equations}
\Crefname{equation}{Equation}{Equations}
\crefname{subequation}{Equation}{Equations}
\Crefname{subequation}{Equation}{Equations}
\crefname{theorem}{Theorem}{Theorems}
\crefname{lemma}{Lemma}{Lemmas}
\crefname{proposition}{Proposition}{Propositions}
\crefname{corollary}{Corollary}{Corollaries}
\crefname{definition}{Definition}{Definitions}
\crefname{assumption}{Assumption}{Assumptions}
\crefname{example}{Example}{Examples}
\crefname{remark}{Remark}{Remarks}
\newcommand{\appref}[1]{\hyperref[#1]{Appendix~\ref*{#1}}}

\newcolumntype{Y}{>{\raggedright\arraybackslash}X}
\setlist[enumerate]{label=\textup{(\roman*)},leftmargin=*,itemsep=2pt}
\setlist[itemize]{leftmargin=*,itemsep=2pt}
\newenvironment{compactdisplay}{%
  \begingroup
  \setlength{\abovedisplayskip}{3pt plus 1pt minus 1pt}%
  \setlength{\belowdisplayskip}{3pt plus 1pt minus 1pt}%
  \setlength{\abovedisplayshortskip}{1.5pt plus 1pt}%
  \setlength{\belowdisplayshortskip}{3pt plus 1pt minus 1pt}%
}{\endgroup}
\newcommand{\R}{\mathbb R}
\newcommand{\val}{\operatorname{Val}(A)}
\newcommand{\spnorm}[1]{\left\lVert #1\right\rVert_{\mathrm{sp}}}
\DeclareMathOperator*{\argmax}{arg\,max}
\DeclareMathOperator*{\argmin}{arg\,min}
\DeclareMathOperator{\br}{br}
\DeclareMathOperator{\supp}{supp}
\DeclareMathOperator{\dist}{dist}
\DeclareMathOperator{\Cov}{Cov}
\DeclareMathOperator{\Var}{Var}

\DeclareMathOperator{\Rew}{\mathfrak{Rew}}
\DeclareMathOperator{\Surplus}{\mathfrak{Surplus}}
\DeclareMathOperator{\vag}{VAG}
\DeclareMathOperator{\lag}{LAG}

\newcommand{\vagd}{\vag}
\newcommand{\lagd}{\lag}
\newcommand{\pto}{\xrightarrow{\mathbb P}}
\newcommand{\dto}{\xrightarrow{\mathrm d}}
\newcounter{noiseregime}

\crefname{noiseregime}{Regime}{Regimes}
\Crefname{noiseregime}{Regime}{Regimes}
\newcommand{\regimetag}[1]{\textcolor{darkdarkred}{Regime~(#1)}}
\newcommand{\regimeparagraph}[2]{%
  \par\smallskip\noindent
  \refstepcounter{noiseregime}\label{regime:#1}%
  \textbf{\regimetag{#1}: #2}\enspace}
\DeclareRobustCommand{\regime}[1]{\hyperref[regime:#1]{\regimetag{#1}}}
\DeclareRobustCommand{\regimeref}[1]{\hyperref[regime:#1]{\textcolor{darkdarkred}{(#1)}}}
\DeclareRobustCommand{\Regimes}{\textcolor{darkdarkred}{Regimes}}

\DeclareAcronym{ftrl}{
  short = FTRL,
  long = follow-the-regularized-leader
}

\newtheoremstyle{tintedresult}
  {0pt}{0pt}{\itshape}{0pt}
  {\color{brightblue}\bfseries}{.}{0.5em}
  {\thmname{#1}\thmnumber{ #2}\thmnote{ \mdseries(#3)}}
\theoremstyle{tintedresult}
\newtheorem{theorem}{Theorem}[section]
\newaliascnt{lemma}{theorem}
\newtheorem{lemma}[lemma]{Lemma}
\aliascntresetthe{lemma}
\newaliascnt{proposition}{theorem}
\newtheorem{proposition}[proposition]{Proposition}
\aliascntresetthe{proposition}
\newaliascnt{corollary}{theorem}
\newtheorem{corollary}[corollary]{Corollary}
\aliascntresetthe{corollary}
\tcbset{resultbox/.style={
  enhanced, breakable, lines before break=3, frame hidden, sharp corners,
  colback=brightblue!4,
  borderline west={1.2pt}{0pt}{brightblue},
  left=4pt, right=4pt, top=4pt, bottom=4pt, boxsep=0pt,
  before skip=7pt plus 2pt, after skip=7pt plus 2pt
}}
\tcolorboxenvironment{theorem}{resultbox}
\tcolorboxenvironment{lemma}{resultbox}
\tcolorboxenvironment{proposition}{resultbox}
\tcolorboxenvironment{corollary}{resultbox}
\newtcolorbox{proofstrategy}[1]{
  enhanced, breakable, colback=cyan!3, colframe=brightblue!65,
  coltext=red, boxrule=0.45pt, arc=2pt,
  left=7pt, right=7pt, top=6pt, bottom=6pt, boxsep=0pt,
  before skip=8pt plus 2pt, after skip=5pt plus 1pt,
  fontupper=\normalfont\small,
  before upper={\textbf{Proof strategy for #1.}\par\smallskip}
}
\BeforeBeginEnvironment{proofstrategy}{\Needspace{11\baselineskip}}
\newtheoremstyle{compactdefinition}
  {0pt}{0pt}{\normalfont}{0pt}{\bfseries}{.}{0.5em}{}
\theoremstyle{compactdefinition}
\newaliascnt{definition}{theorem}
\newtheorem{definition}[definition]{Definition}
\aliascntresetthe{definition}
\newaliascnt{assumption}{theorem}
\newtheorem{assumption}[assumption]{Assumption}
\aliascntresetthe{assumption}
\newaliascnt{example}{theorem}

\aliascntresetthe{example}
\colorlet{remarkcolor}{darkcyan}
\newtheoremstyle{tintedremark}
  {0pt}{0pt}{\itshape}{0pt}
  {\color{remarkcolor}\bfseries}{.}{0.5em}
  {\thmname{#1}\thmnumber{ #2}\thmnote{ \mdseries(#3)}}
\theoremstyle{tintedremark}
\newaliascnt{remark}{theorem}
\newtheorem{remark}[remark]{Remark}
\aliascntresetthe{remark}
\tcolorboxenvironment{remark}{
  enhanced, breakable, frame hidden, sharp corners,
  fontupper=\small,
  colback=remarkcolor!6,
  borderline west={1.2pt}{0pt}{remarkcolor},
  left=5pt, right=4pt, top=3pt, bottom=3pt, boxsep=0pt,
  before skip=6pt plus 2pt, after skip=6pt plus 2pt
}

\title{Noise Shapes the Surplus:\\
Strategic Play Against No-Regret Learners}
\date{}
\author{
Yiheng Su\\
University of Wisconsin--Madison\\
\texttt{ysu24@cs.wisc.edu}
\and
Emmanouil-Vasileios Vlatakis-Gkaragkounis\\
University of Wisconsin--Madison\\
\texttt{vlatakis@wisc.edu}
}
\hypersetup{
  pdftitle={Noise Shapes the Surplus: Strategic Play Against No-Regret Learners},
  pdfauthor={Yiheng Su and Emmanouil-Vasileios Vlatakis-Gkaragkounis}
}

\begin{document}
\raggedbottom
\pagenumbering{roman}
\maketitle
\bigskip
\bigskip
\thispagestyle{empty}
\begin{abstract}
Recent work has shown that,
in zero-sum games,
an optimizer can extract $\Omega(\sqrt{T})$ \emph{strategic surplus}---cumulative
payoff above the minimax benchmark---against noiseless no-regret
learners with a constant step size.
However,
in practice,
learners often act on perturbed payoff scores
and use decaying learning rates $\eta_t$.
Since existing extraction mechanisms rely on exact
scores,
a natural question is whether noise shields
the learner from exploitation and eliminates surplus
altogether.

We address this question for the widely
used class of \emph{Follow-the-Regularized-Leader} (FTRL) algorithms.
Under standard regularity conditions,
we establish three results,
none of which assumes independence of the
noise.
First,
under summability conditions,
passive play of a strategy fixed independently
of the horizon earns at most constant
surplus.
Second,
we design a new mechanism that exploits
the learner's noise-dependent response:
a clairvoyant optimizer who knows the update
rule and the limiting mean response secures
$\Omega(1/\eta_T)$ surplus whenever that mean is profitable.
Third,
even when that mean is not profitable,
active alternating play extracts $\Omega(\sum_{t<T}\eta_t)$ surplus by
exploiting the learner's movement rather than its
location,
under suitable responsiveness and stability conditions.

Finally,
at the canonical learning rate $\eta_t\propto t^{-1/2}$,
we recover the noiseless $\Omega(\sqrt T)$ rate for
mean-zero,
finite-variance perturbations---whether fresh with arbitrary temporal dependence
or accumulated from i.i.d.\ increments---provided the limiting mean
response is profitable.
\end{abstract}

\clearpage
\tableofcontents
\thispagestyle{empty}
\clearpage
\pagenumbering{arabic}

\section{Introduction}\label{sec:intro}

Consider a penalty kick:
the kicker chooses a side,
the goalkeeper dives,
and exactly one of them benefits.
This is matching pennies played for a
living---one of the few games in which
professionals have been found to use minimax
strategies \citep{chiappori2002testing,palacioshuerta2003professionals}.
Now suppose that the goalkeeper learns instead.
Before each kick,
he uses the kicker's record to lean
toward the side that has historically been
most rewarding to cover,
as prescribed by multiplicative weights or,
more generally,
\ac{ftrl}.
Because the update rule is public,
the kicker can anticipate his response.
She now plays against a \emph{rule} as
well as an opponent:
a kick to the left today may
matter both for its immediate payoff and
for how it changes the goalkeeper's behavior
tomorrow.

Game theory gives this asymmetry precise arithmetic.
If the goalkeeper's rule has sublinear regret---if
after $T$ kicks he performs asymptotically as
well as the best fixed side in
hindsight---then the kicker's cumulative payoff is at
most $T\val+o(T)$,
where $\val$ is the minimax value of
the game.
This ceiling comes from the zero-sum structure:
every additional unit earned by the kicker
is a unit lost by the goalkeeper,
so his regret directly bounds her advantage.
We call payoff above $T\val$ the \emph{strategic
surplus}.
A no-regret guarantee rules out linear surplus
but leaves open the size and mechanism
of the remaining sublinear gain.

The potential for exploitation is even greater
in general-sum games,
where the players' utilities need not offset
one another.
A no-regret guarantee still protects the learner
relative to her own benchmark,
but it places no comparable upper bound
on the optimizer's payoff.
An optimizer who anticipates the learner's behavior
can therefore obtain $\Omega(T)$ gains \citep{deng2019strategizing,mansour2022strategizing,brown2023learning}.
This possibility motivates work on Stackelberg-style benchmarks,
dynamic contracts,
and repeated auctions designed around learner behavior
\citep{guruganesh2024contracting,lin2025generalized,braverman2018selling,cai2023selling,rubinstein2024strategizing}.
Against this broader backdrop,
two-player zero-sum games pose a subtler extraction
problem:
linear gains are impossible,
so how large can the remaining sublinear
surplus be?
Only recently have two efficient mechanisms been
identified \citep{assos2024maximizing,su2026nocoin}.

The passive and active mechanisms of \citet{su2026nocoin}
exploit two distinct sources of surplus.
Under passive play,
the optimizer fixes a max--min strategy and
profits while the learner still assigns weight
to suboptimal responses.
For noiseless \ac{ftrl} with a constant learning
rate $\eta$,
this yields a transient surplus of order
$N_{\rm sub}/\eta$,
where $N_{\rm sub}$ counts responses that are strictly
suboptimal against that strategy.

The penalty-kick example illustrates why active play
matters.
In matching pennies,
both dives are best responses to the
kicker's equilibrium mixed strategy,
so $N_{\rm sub}=0$ and fixed max--min play earns
no surplus.
By alternating her strategy,
the kicker can instead profit from the
goalkeeper's adjustments,
even though neither dive is suboptimal at
equilibrium.
Under suitable conditions on the game and
regularizer,
this movement yields $\Omega(\eta T)$ surplus.

\noindent\textbf{\emph{Uncertainty}: an overlooked complication.}
Both analyses,
nevertheless,
rely on two idealizations:
$(i)$ \emph{exact payoff feedback} and $(ii)$ \emph{a
constant learning rate}.
Neither is generally warranted in practice.
A goalkeeper combines video evidence with potentially
inaccurate reports about a kicker's preferences \citep{murta2022goalkeepers};
an online advertising platform estimates click-through rates
from binary feedback \citep{zhang2024online};
and algorithms for large imperfect-information games estimate
counterfactual values through sampling \citep{lanctot2009montecarlo}.
In each case,
the learner updates from noisy cumulative-payoff estimates.
At the same time,
anytime implementations of \ac{ftrl} commonly use decaying
learning rates such as $\eta_t\propto t^{-\rho}$ \citep{cesabianchi2006prediction,orabona2019modern}.

\noindent\textbf{Technical challenges.}
Noise and learning-rate decay disrupt the two
extraction mechanisms in different ways.
Noise makes it difficult to maintain exact
score ties and introduces randomness into the
learner's selected mixed strategy.
A decaying learning rate changes the learner's
sensitivity over time and may render its
transient mass on suboptimal actions summable.
From a control perspective \citep{deng2019strategizing},
the optimizer must now steer a stochastic,
time-varying dynamical system rather than the deterministic,
time-homogeneous system generated by exact feedback and
a constant learning rate.
It is therefore unclear whether uncertainty protects
the learner or merely requires an adaptation
of the manipulation mechanism and rate of
surplus extraction.

\noindent These challenges lead to two central
questions:\vspace{0.125em}\par
\noindent
{%
\begin{minipage}{0.94\linewidth}
\begin{itemize}[leftmargin=2.5em,before=\raggedright]
\item[{\hyperref[sec:learner-dynamics-maxmin]{\textcolor{darkcyan}{\textbf{(Q1)}}}}]
\emph{Can feedback noise protect a no-regret learner
from a learner-aware optimizer?}
\item[{\hyperref[sec:exploitation]{\textcolor{darkcyan}{\textbf{(Q2)}}}}]
\emph{Which mechanisms still allow a learner-aware optimizer
to extract strategic surplus under noisy feedback,
and at what rates?}
\end{itemize}
\end{minipage}}

\noindent\textbf{Our model.}
To study these questions,
we consider a learner who applies \ac{ftrl}
with learning rate $\eta_t$ to a cumulative
payoff vector perturbed by a general stochastic
process $(Z_t)_{t\geq 1}$.
The noise may be dependent across actions
and over time.
The optimizer never observes its realization;
whether knowledge of its law is required
depends on the guarantee under consideration.

This formulation captures several practically relevant sources
of uncertainty.
For example,
$Z_t$ may aggregate errors in scouting records,
click-value estimates \citep{zhang2024online},
or sampled counterfactual values \citep{lanctot2009montecarlo}.
A single sampled trajectory can perturb several
counterfactual values at once,
creating dependence across actions,
while delayed observations and Markovian sampling can
create dependence over time
\citep{joulani2013delayed,mou2022markovian}.
Thus,
the model accommodates estimates derived from shared
observations or from a common underlying trajectory.

\noindent\textbf{Our results at a glance.}
Our analysis is twofold: we characterize how noise shapes the
learner's response and how an optimizer turns that response into surplus.

\noindent$\star$\textsc{Learner's side}
\begin{itemize}[leftmargin=0.85em,labelsep=0.35em,itemsep=1pt,topsep=0pt,parsep=0pt,partopsep=0pt]
\item \textbf{Localization and selection.}
Our key technical insight explains how
the learner approaches the best-response face
of a fixed max--min strategy.
Within that face,
the relative noise leads to either
a deterministic mixed strategy chosen by
the regularizer,
a random mixed strategy shaped by
the noise and regularizer,
or a random nearly pure action
(\cref{subsec:main-noise-regimes}).
\end{itemize}

\noindent$\star$\textsc{Optimizer's side}
\begin{itemize}[leftmargin=0.85em,labelsep=0.35em,itemsep=1pt,topsep=0pt,parsep=0pt,partopsep=0pt]
\item \textbf{Limits of passive play.}
Under summable response and regular noise tails\footnote{For
\(\eta_t\propto t^{-\rho}\) with \(0<\rho<1\),
these conditions hold for entropy and
Euclidean regularization under centered finite-variance noise,
including Gaussian,
Laplace,
and bounded uniform laws,
either fresh with arbitrary temporal dependence
or accumulated from i.i.d.\ increments.
For Tsallis-\(q\),
also require \(\rho<q\)
(\cref{sec:noise-examples}).},
any constant optimizer strategy independent of
\(T\) earns at most constant surplus
(\cref{thm:main-normal-bounded-surplus}).
For noiseless \ac{ftrl} with constant rate \(\eta\), passive
play yields a transient of order \(N_{\rm sub}/\eta\).
Under our conditions, a decaying rate still leaves the total
surplus bounded, even as \(T, 1/\eta_T\to\infty\).
\vspace{0.5em}

{\it This limitation motivates the analysis and design of two active mechanisms:}
\item \textbf{Surplus from selection.}
A clairvoyant
optimizer knows the learner's update
rule and learning-rate schedule
and can accurately estimate its limiting
mean response under the noise
process.
We call this mean \emph{profitable} when
some optimizer strategy earns more than
the minimax value against it.
Given an estimate accurate enough to
identify such a strategy,
we construct small perturbations toward it
from a max--min strategy that secure
\(\Omega(1/\eta_T)\) surplus without knowing \(T\)
(\cref{thm:main-selection-exploitation}).
Unlike passive extraction from transiently suboptimal actions,
this mechanism exploits the response selected within
the best-response face.
\item \textbf{Surplus from movement.}
Even when the limiting mean is unprofitable
or the optimizer cannot estimate it accurately,
we can exploit the learner's movement.
Under responsiveness and stability conditions, an active
alternating play around a max--min strategy
earns \(\Omega(\sum_{t<T}\eta_t)\) surplus
(\cref{thm:main-alternating-exploitation}).
For cumulative noise with \(\eta_t\propto t^{-\rho}\),
nearly pure responses can weaken this
mechanism when \(\rho<1/2\).
\end{itemize}
\begin{remark}[Connection to regret]
For a constant learning rate $\eta$,
the classical \ac{ftrl} analysis gives a regret
bound of the form
\(
\operatorname{Reg}_L(T)\lesssim {1/\eta}+\eta T.
\)
The two terms have different origins:
$1/\eta$ is the initial regularization cost,
whereas $\eta T$ accumulates the stability cost of
successive updates.
Our extraction surplus mechanisms aim to  complementarily saturate
these same scales:
Selection targets the inverse-rate term---$1/\eta_T$ for a
decaying schedule---while movement targets the cumulative-rate term
$\sum_{t<T}\eta_t$,
which reduces to $\eta T$ for a constant
rate.
\end{remark}
\begin{remark}[Beyond passive extraction]
The selection mechanism is also of independent
interest in the noiseless,
constant-rate setting.
Whenever the selected best-response strategy is profitable,
it yields surplus of order $1/\eta$ even
when every learner action is a best
response to the optimizer's max--min strategy.
By contrast,
the passive mechanism of \citet{su2026nocoin} yields surplus
of order $N_{\rm sub}/\eta$,
which vanishes when there are no suboptimal
responses.
\end{remark}
To our knowledge,
this is the first study of strategic
surplus in a model that combines noisy
cumulative payoff scores with a learning rate
that decays during the interaction.

\subsection{Related Work}\label{subsec:related-work}
\noindent\textcolor{black}{$\blacktriangleright$}\ \textbf{\emph{Strategic manipulation under uncertainty.}}
In repeated contracting, \citet{guruganesh2024contracting} exploit
a learning agent's predictable movement through its action space
using a \emph{free-fall} contract. Their analysis covers full and
bandit feedback and shows that uncertainty about the horizon can
reduce the principal's advantage over a static contract.
We study how uncertainty in the learner's payoff scores affects
surplus in zero-sum games, where the optimizer controls its own
play and never observes the realized noise.

\noindent\textcolor{black}{$\blacktriangleright$}\ \textbf{\emph{Manipulation of stochastic feedback.}}
Reward-poisoning attacks steer stochastic bandit algorithms toward
a target arm at low corruption cost
\citep{jun2018attacks,liu2019poisoning,zuo2024attacks}.
Strategic-arm models let the arms manipulate observed rewards
\citep{braverman2019strategic,feng2020strategic}, while attacks on
linear bandits depend on arm geometry and correlations
\citep{wang2022attackable}.
These models give the strategic party control over feedback.
Our optimizer acts through the game's payoffs; the score noise
is exogenous and hidden.

\noindent\textcolor{black}{$\blacktriangleright$}\ \textbf{\emph{Stochastic regularized learning.}}
Under noisy feedback, regularized learning can eliminate dominated
strategies, with rates shaped by regularizer geometry
\citep{bravo2017robustness,giannou2021rate}.
Mixed equilibria can nevertheless be stochastically unstable
\citep{giannou2021survival}, and persistent uncertainty can favor
extreme strategies \citep{cauvin2025impact}.
These results describe the learner's long-run behavior.
To quantify an optimizer's surplus, we also need rates of approach,
the selected response, and its sensitivity to intervention.

\noindent\textbf{Technical questions.}
These two questions connect response dynamics to surplus and guide the
reader through the remainder of the paper:
\begin{itemize}[leftmargin=2.8em,labelsep=0.45em,itemsep=3pt,
  topsep=3pt,parsep=0pt,partopsep=0pt,before=\raggedright]
\item[{\hyperref[sec:learner-dynamics-maxmin]{\textcolor{darkcyan}{\textbf{(T1)}}}}]
\emph{How fast does the learner approach the best-response face,
and which response does noise select within it?}
\item[{\hyperref[sec:exploitation]{\textcolor{darkcyan}{\textbf{(T2)}}}}]
\emph{When do small score changes still move the learner,
and how much surplus can that movement yield?}
\end{itemize}

\noindent\textbf{Organization.}
\Cref{sec:preliminaries} introduces the model and assumptions.
\Cref{sec:learner-dynamics-maxmin,sec:exploitation} answer (T1) and (T2),
respectively, and \cref{sec:noise-examples} treats concrete noise models.

\section{Preliminaries}
\label{sec:preliminaries}

We study a finite zero-sum game with optimizer payoff matrix
\(A\in\mathbb R^{n\times m}\) and learner payoff matrix \(B=-A\).
At round \(t=0,\ldots,T-1\), the players choose \(x_t\in\Delta_n\)
and \(y_t\in\Delta_m\), and the optimizer earns
\(u_o(x_t,y_t)=x_t^\top Ay_t\).
Here \(\Delta_d\) is the probability simplex in \(\mathbb R^d\)
and \(e_j\) its \(j\)-th pure strategy.
We write \(\val:=\max_{x\in\Delta_n}\min_{y\in\Delta_m}x^\top Ay\)
for the one-shot value and
\(\br(x):=\argmin_{j\in[m]}(A^\top x)_j\)
for the learner's pure best responses to \(x\).

Rock--paper--scissors makes the distinction geometric.
Under uniform max--min play, all three learner actions are tied. The regimes differ only in where the learner settles inside
the same response simplex.
\Cref{fig:rps-three-triangles} previews this geometry before the formal
response limits in \cref{sec:learner-dynamics-maxmin}.

\begin{figure*}[t]
    \centering
    \includegraphics[width=\textwidth]{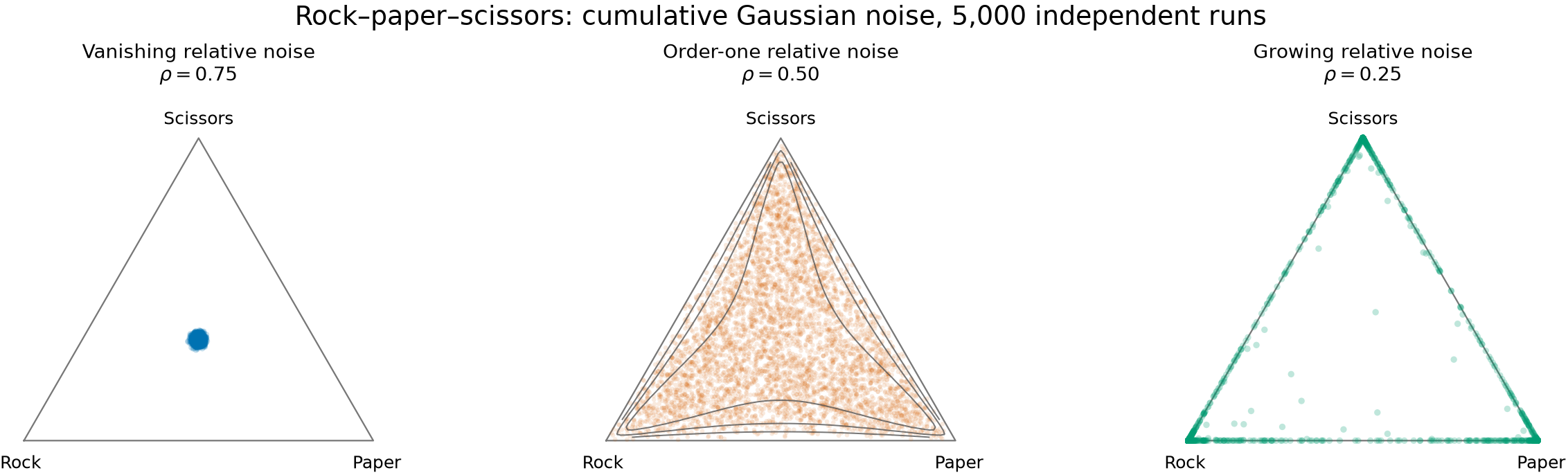}
    \caption{Terminal learner responses in \Regimes~\regimeref{I}--\regimeref{III}.
    Each panel shows \(5000\) independent runs at \(T=10^6\)
    with negative entropy regularization, repeated max--min play,
    cumulative Gaussian noise, and \(\eta_t=(t+1)^{-\rho}\).
    From left to right, responses concentrate near the uniform mixed
    strategy, remain dispersed, or approach the simplex vertices.
    Gray contours show the limiting logistic-normal density.}
    \label{fig:rps-three-triangles}
\end{figure*}

\paragraph{The optimizer.}
Write \(x_{<T}:=(x_0,\ldots,x_{T-1})\) for a trajectory,
\(x_{\times T}:=(x,\ldots,x)\) for constant play, and
\(\bar x_T:=T^{-1}\sum_{t<T}x_t\) for the time average.
The optimizer knows \(A\) and the learner's update rule.
Its play is \emph{oblivious}: the trajectory is fixed independently
of the realized noise, which it never observes.

Expected reward is \(\Rew(x_{<T}):=\mathbb E\sum_{t<T}x_t^\top Ay_t\),
and surplus above the game value is
\begin{compactdisplay}
\begin{equation}
  \Surplus(x_{<T}):=\Rew(x_{<T})-T\val.
  \label{eq:discrete-surplus}\tag{Surplus}
\end{equation}
\end{compactdisplay}
Expectations are over the noise; the trajectory encodes the horizon.
The learner's \emph{expected external regret} is its best fixed
payoff in hindsight minus its expected payoff. Since \(B=-A\),
\begin{compactdisplay}
\begin{equation}
  \operatorname{Reg}_L(x_{<T})=\Rew(x_{<T})-T\min_{y\in\Delta_m}\bar x_T^\top Ay.
  \label{eq:main-learner-regret}\tag{regret}
\end{equation}
\end{compactdisplay}
\paragraph{The learner.}
Noisy Follow-the-Regularized-Leader (FTRL) uses the following deterministic step sizes.
\begin{assumption}[Step sizes]
\label{ass:step-sizes}
\begin{compactdisplay}
\begin{align}
  \label{eqn:step-size-assumption}
  \eta_t > 0, \quad \eta_{t+1} \le \eta_t, \quad t \eta_t \to \infty.
\end{align}
\end{compactdisplay}
\end{assumption}
Its cumulative payoff scores are defined as
\begin{compactdisplay}
\begin{equation*}
  S_t:=\sum_{\tau<t}B^\top x_\tau,\qquad S_0=0.
  \label{eq:cumulative-payoff}\tag{Cumulative Payoff}
\end{equation*}
\end{compactdisplay}
\begin{definition}[Regularizer]
\label{def:regularizer}
A \emph{regularizer} is a continuous, strictly convex function
\(h:\Delta_m\to\mathbb R\).
Strict convexity defines a unique choice map
\(Q_h:\mathbb R^m\to\Delta_m\):
\begin{compactdisplay}
\begin{equation*}
    Q_h(u)
    :=
    \argmax_{y\in\Delta_m}
    \bigl\{\langle u,y\rangle-h(y)\bigr\}.
    \tag{Choice Map}
    \label{eqn:choice-map}
\end{equation*}
\end{compactdisplay}
\end{definition}
\emph{Noisy FTRL.} At round \(t\),
a random perturbation \(Z_t \in \R^m \) is added to
the cumulative score,
and the learner
plays
\begin{compactdisplay}
\begin{equation*}
    y_t
    =
    Q_h\!\left(\eta_t(S_t+Z_t)\right),
    \tag{Noisy FTRL}
    \label{eq:ftrl}
\end{equation*}
\end{compactdisplay}
Let \(\Xi_t = \eta_t Z_t \) be the scaled noise seen
by the choice map,
while \(r_t = t\eta_t\) is the scale of the
accumulated deterministic score.
We allow arbitrary dependence across coordinates and
across rounds.

Only relative scores affect the learner's response. We therefore
measure noise differences by the span seminorm
\(\spnorm{u}:=\max_j u_j-\min_j u_j\).
\begin{assumption}[Score noise]
\label{def:step-size-noise-model}
\label{ass:normal}
Our model uses almost
surely finite score perturbations \(Z_t\in\mathbb R^m\) satisfying,
as \(t\to\infty\), $\spnorm{Z_t} / t \pto0.$
\end{assumption}
By \cref{ass:step-sizes}, \(r_t\to\infty\). Since
\(\spnorm{\Xi_t}/r_t=\spnorm{Z_t}/t\), \cref{ass:normal} gives
\(\spnorm{\Xi_t}=o_{\mathbb P}(r_t)\).
\begin{definition}[Separable regularizer]
\label{def:separable-regularizer}
A regularizer \(h\) is \emph{separable} if
\(h(y)=\sum_{j=1}^m \theta(y_j)\),
where the kernel \(\theta:[0,1]\to\mathbb R\) is continuous and strictly convex,
belongs to \(C^2((0,1])\),
and satisfies $
\theta''(u)>0\text{ for all }u\in(0,1].
$
\end{definition}

For a separable regularizer,
define the truncated inverse derivative
$\phi:=\operatorname{trunc}_{[0,1]}\{(\theta')^{-1}\}.$
If \(y=Q_h(u)\),
the KKT conditions give a scalar \(\lambda\)
such that
\begin{compactdisplay}
\begin{equation*}
\bigl\{y_j=\phi(u_j+\lambda),\;
\textstyle\sum_j y_j=1\bigr\}.
\tag{Sep.\ Choice Map}
\label{eq:separable-choice-map}
\end{equation*}
\end{compactdisplay}
On every positive coordinate,
$\theta'(y_j)=u_j+\lambda,$
while an inactive coordinate satisfies \(u_j+\lambda\leq\theta'(0^+)\).

\begin{definition}
\label{def:steep}
A separable regularizer with kernel \(\theta\) is
\emph{steep} if
$\theta'(0^+)=-\infty,$
and \emph{non-steep} otherwise.
\end{definition}
\paragraph{Examples.}
We focus on three standard separable regularizers:
\begin{itemize}[label=$\triangleright$,topsep=0pt,partopsep=0pt,
  itemsep=1pt,parsep=0pt]
\item \emph{Negative entropy}~\citep{auer1995gambling}.
For \(\theta(u)=u\log u\),
the choice map is the softmax
\(Q_h(v)_j=\exp(v_j)/\sum_{\ell}\exp(v_\ell)\),
and
\(1/\theta''(u)=u\).
\item \emph{Euclidean}~\citep{zinkevich2003online}.
For \(\theta(u)=u^2/2\),
\(Q_h(v)\) is the Euclidean projection of \(v\)
onto \(\Delta_m\),
and \(1/\theta''(u)=1\).
\item \emph{Negative Tsallis}~\citep{abernethy2015fighting}.
For \(q\in(0,1)\),
let
\(\theta(u)=(u^q-u)/(q-1)\).
The response \(y=Q_h(v)\) is characterized by
\(q y_j^{q-1}=1+(q-1)(v_j+\lambda)\),
where \(\lambda\) normalizes
\(\sum_j y_j=1\),
and \(1/\theta''(u)=u^{2-q}/q\).
\end{itemize}

\noindent\textbf{Noise models.}
For nonempty \(K\subseteq[m]\), write \(u_K=(u_j)_{j\in K}\).
We classify relative noise on \(K\), which will be \(\br(x^\star)\)
in \cref{sec:learner-dynamics-maxmin}.
\regimeparagraph{I}{vanishing noise.}
\(\spnorm{\Xi_{t,K}}\pto0\).
\regimeparagraph{II}{finite noise.}
For almost surely finite scalar shifts \(c_t\),
\(\Xi_{t,K}-c_t\mathbf1\dto\Xi_{\infty,K}\),
with an almost surely finite limit.
\regimeparagraph{III}{growing noise.}
For deterministic \(a_t>0\) with \(a_t\to\infty\) and almost surely
finite scalar shifts \(c_t\),
\((\Xi_{t,K}-c_t\mathbf1)/a_t\dto\Xi_{\infty,K}\).
The limit is almost surely finite with an almost surely unique
largest coordinate.

These conditions describe relative noise within \(K\).
The model in \cref{def:step-size-noise-model} controls localization
to the best-response face.



\section{Learner Dynamics under Max--Min Play}
\label{sec:learner-dynamics-maxmin}
Before asking how the optimizer can extract surplus, we identify where
it can afford to play on average.
The next identity is a budget equation: learner regret is the available
budget, while any loss caused by moving the optimizer's average away
from max--min play is deducted from that budget.

\begin{proposition}
\label{prop:must-be-average-maxmin}
For every optimizer's \(x_{<T}\),
\[
\begin{aligned}
  \Surplus(x_{<T})
    &=\operatorname{Reg}_L(x_{<T})-T\left(\val-\min_{y\in\Delta_m}\bar x_T^\top Ay\right).
\end{aligned}
\]
\end{proposition}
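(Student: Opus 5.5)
This identity is purely algebraic, so the plan is to expand both sides from the definitions (\ref{eq:discrete-surplus}) and (\ref{eq:main-learner-regret}) and compare them. No property of the learner's update rule, the noise, or the step sizes is needed, and the identity holds for every trajectory \(x_{<T}\), random or not, as long as \(\Rew(x_{<T})\) is finite. Finiteness is automatic, since each term \(x_t^\top A y_t\) is bounded by \(\max_{i,j}|A_{ij}|\).

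First, I will confirm the regret formula (\ref{eq:main-learner-regret}) from the learner's perspective. The learner's payoff is \(x_t^\top B y_t=-x_t^\top A y_t\), so its expected cumulative payoff is \(-\Rew(x_{<T})\). Against a fixed \(y\in\Delta_m\), its payoff in hindsight is
\[
\sum_{t<T}x_t^\top B y=-T\,\bar x_T^\top A y,
\]
and this quantity is deterministic because the optimizer's play is oblivious. Maximizing over \(y\) gives \(-T\min_{y\in\Delta_m}\bar x_T^\top Ay\). Subtracting the expected payoff yields
\[
\operatorname{Reg}_L(x_{<T})=\Rew(x_{<T})-T\min_{y\in\Delta_m}\bar x_T^\top Ay,
\]
as stated. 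Note that the best fixed action is taken against the realized sequence, with the expectation applied only to the learner's payoff. Obliviousness is exactly what makes the hindsight term non-random, so the expectation of a maximum does not arise here.

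Second, I substitute \(\Rew(x_{<T})=\operatorname{Reg}_L(x_{<T})+T\min_{y}\bar x_T^\top Ay\) into the definition \(\Surplus(x_{<T})=\Rew(x_{<T})-T\val\). This gives
\[
\Surplus(x_{<T})=\operatorname{Reg}_L(x_{<T})-T\Bigl(\val-\min_{y\in\Delta_m}\bar x_T^\top Ay\Bigr),
\]
which is the claim.

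As a sanity remark, the bracketed deduction is nonnegative because \(\bar x_T\in\Delta_n\) and \(\val\) is the maximum over \(\Delta_n\) of \(\min_y x^\top Ay\). This is what justifies reading the identity as a budget equation.

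The only step needing care is the first one: making sure the expectation in the regret is taken only over the learner's payoff and not over the comparator. Beyond that, there is no real obstacle.
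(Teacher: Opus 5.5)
Your proposal is correct and follows the paper's proof: linearity writes the fixed-comparator payoff as $T\bar x_T^\top Ay$, and substituting the regret identity into $\Surplus(x_{<T})=\Rew(x_{<T})-T\val$ gives the claim. One small point: your aside that the trajectory may be ``random or not'' conflicts with your later appeal to obliviousness. In the paper's setting the trajectory is deterministic, and that is exactly what makes $T\min_{y\in\Delta_m}\bar x_T^\top Ay$ non-random.
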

The proof is deferred to \appref{app:maxmin-regret}.
This identity leaves little room for the optimizer's average strategy.
If \(\bar x_T\) loses a fixed \(\varepsilon>0\) relative to the game
value, the optimizer pays \(\varepsilon T\), which sublinear regret
cannot offset.
Therefore, if the learner has sublinear expected external regret, every trajectory with nonnegative surplus up to lower-order
terms must satisfy
$  \val-\min_{y\in\Delta_m}\bar x_T^\top Ay=o(1).$

In words, the optimizer must remain asymptotically max--min \emph{on
average}, although it may still make profitable round-by-round
deviations around that baseline.
The formal consequence is proved in \cref{cor:average-maxmin}.

We now freeze the optimizer at a max--min strategy \(x^\star\) and use
the resulting learner response as our reference process.
Let \(K=\br(x^\star)\) be the learner actions tied at the game value,
and let \(\delta=A^\top x^\star-\val\mathbf1\) record the payoff gaps:
\(\delta_j=0\) on \(K\) and \(\delta_j>0\) outside \(K\).
Thus \(\Delta_K=\{y\in\Delta_m:\supp(y)\subseteq K\}\) is the
best-response face, while repeated play gives
\(S_t=-tA^\top x^\star\).

Since common score shifts leave \(Q_h\) unchanged,\footnote{Adding
\(c\mathbf1\) to \(u\) adds \(c\) to every feasible objective, so
\(Q_h(u+c\mathbf1)=Q_h(u)\).}
the learner responds to the \emph{reference score} as
\begin{equation}
  y_t=Q_h(-r_t\delta+\Xi_t).
  \label{eq:main-fixed-response}
\end{equation}
Equation~\eqref{eq:main-fixed-response} separates two geometrically
different effects.
The deterministic term \(-r_t\delta\) acts \emph{normally} to the face
and removes mass from inferior actions; the relative noise
\(\Xi_{t,K}\) acts \emph{tangentially} along the face and determines
how the surviving mass is split among tied best responses.
We analyze these effects in that order.
\paragraph{Normal dynamics: localization to the face.}
The step-size and noise model ensures \(r_t\to\infty\) and
\(\spnorm{\Xi_t}=o_{\mathbb P}(r_t)\).
Thus each positive payoff gap dominates noise on the
accumulated-score scale.

\begin{proposition}
\label{prop:main-localization}
\(\sum_{j\notin K}y_{t,j}\pto0\) for every regularizer \(h\).
\end{proposition}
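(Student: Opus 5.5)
The plan is a deterministic continuity/compactness argument about $Q_h$, followed by a probabilistic transfer using $\spnorm{\Xi_t}=o_{\mathbb P}(r_t)$. Set $u_t:=-r_t\delta+\Xi_t$, so that $y_t=Q_h(u_t)$ by \eqref{eq:main-fixed-response}. Shift invariance gives
\[
Q_h(u_t)=Q_h\bigl(r_t(-\delta+w_t)\bigr),\qquad w_t:=(\Xi_t-\min_j\Xi_{t,j}\mathbf1)/r_t .
\]
Then $w_t\ge0$, $\max_j w_{t,j}=\spnorm{\Xi_t}/r_t\pto0$, and $r_t\to\infty$. So it suffices to prove the following \emph{deterministic} statement: for every $\varepsilon>0$ there exist $\gamma>0$ and $R<\infty$ such that
\[
r\ge R \text{ and } \|w\|_\infty\le\gamma \;\Longrightarrow\; \sum_{j\notin K}Q_h\bigl(r(-\delta+w)\bigr)_j\le\varepsilon .
\]
Granting this, $\mathbb P(\sum_{j\notin K}y_{t,j}>\varepsilon)\le\mathbb P(\|w_t\|_\infty>\gamma)\to0$ once $r_t\ge R$, which is the claim.

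For the deterministic step, fix $\gamma\le\underline\delta/2$, where $\underline\delta:=\min_{j\notin K}\delta_j>0$. Write $v:=-\delta+w$ and $y=Q_h(rv)$, and let $y^K$ be any point of $\Delta_K$, for instance the renormalized restriction of $y$ to $K$ (if $y$ has no mass on $K$, take any vertex of $K$). Optimality of $y$ against the feasible competitor $y^K$ gives
\[
r\langle v,y\rangle-h(y)\ \ge\ r\langle v,y^K\rangle-h(y^K).
\]
On $K$ we have $v_j=w_j\ge0$, so $\langle v,y^K\rangle\ge0$. Off $K$ we have $v_j\le-\underline\delta/2$. On $K$ the entries satisfy $v_j\le\gamma$, so $\langle v,y\rangle\le\gamma-(\underline\delta/2)\,s$, where $s:=\sum_{j\notin K}y_j$. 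Combining the three inequalities,
\[
r\bigl((\underline\delta/2)\,s-\gamma\bigr)\le h(y^K)-h(y)\le \operatorname{osc}(h):=\max_{\Delta_m}h-\min_{\Delta_m}h<\infty,
\]
which is finite because $h$ is continuous on a compact set. Hence
\[
s\le \frac{2}{\underline\delta}\Bigl(\gamma+\frac{\operatorname{osc}(h)}{r}\Bigr).
\]
Choosing $\gamma=\varepsilon\underline\delta/4$ and $R=4\operatorname{osc}(h)/(\varepsilon\underline\delta)$ yields $s\le\varepsilon$.

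The argument uses only continuity of $h$ and uniqueness of the choice map, so it holds for every regularizer, as the statement requires. If $K=[m]$ there is nothing to prove.

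The main subtlety is choosing the comparison point correctly: it must lie in $\Delta_K$, so that the noise term on $K$ cannot hurt it. It is also essential to keep the noise bound only as $o(r_t)$ in the span seminorm, which is why the minimum coordinate of $\Xi_t$ is subtracted first. Everything else is routine bookkeeping with convergence in probability.
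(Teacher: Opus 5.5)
Your proof is correct and is essentially the paper's argument: compare $y_t$ with a fixed competitor through the optimality inequality, bound the regularizer difference by a constant, and divide by $r_t\to\infty$ while $\spnorm{\Xi_t}/r_t\pto0$. The only real difference is the comparator. The paper uses the interior point $p_\alpha=(1-\alpha)e_i+\alpha m^{-1}\mathbf1$ and pays a small loss $\alpha\langle\delta,p\rangle$, which lets the same proof cover regularizers that are infinite on the boundary; your face comparator instead relies on $\max_{\Delta_m}h-\min_{\Delta_m}h<\infty$, which holds for continuous $h$. Also, your deterministic claim should include the hypothesis $w\ge0$, which your shifted $w_t$ does satisfy.
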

Each off-face action has a score disadvantage \(r_t\delta_j\),
while noise differences are \(o_{\mathbb P}(r_t)\).
Comparing with an interior mixed strategy close to a best response
makes its finite regularization cost negligible on that scale.
The proof is deferred to \cref{lem:normal} of \appref{app:normal-localization}.

\paragraph{Tangential dynamics: reduction to the face.}
We define the restricted choice map by
\begin{equation}\label{eq:restricted-choice}
  Q_{h,K}(u_K)
  :=\argmax_{y\in\Delta_K}
  \{\langle u_K,y_K\rangle-h(y)\},
\end{equation}
Coordinates outside \(K\) are zero.
Continuity on the compact face gives existence, and strict convexity
gives uniqueness.
\begin{proposition}[Reduction to best-response face]
\label{prop:main-face-reduction}
The response satisfies
\(\|y_t-Q_{h,K}(\Xi_{t,K})\|_1\pto0\).
\end{proposition}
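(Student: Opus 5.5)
The plan is to prove a deterministic stability estimate for the choice map and then feed it the localization of \cref{prop:main-localization}. Write \(u_t=-r_t\delta+\Xi_t\), so that \(y_t=Q_h(u_t)\) by \eqref{eq:main-fixed-response}, and set \(z_t=Q_{h,K}(\Xi_{t,K})\). The basic observation is that \(\delta_K=0\), so \(u_t\) and \(\Xi_t\) agree on \(K\). The restricted objective \(G_t(w)=\langle \Xi_{t,K},w_K\rangle-h(w)\) on \(\Delta_K\) therefore coincides with the full objective \(F_t(y)=\langle u_t,y\rangle-h(y)\) on the face. The difficulty is that \(\Xi_{t,K}\) may be unbounded, as in \regime{III}, so compactness in the noise cannot be used. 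Every estimate will instead be uniform in the score and rely only on compactness of \(\Delta_m\).

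\textbf{Step 1 (two uniform moduli of \(h\)).} Since \(h\) is continuous on the compact set \(\Delta_m\), it has a modulus of uniform continuity \(\omega_c\), with \(|h(a)-h(b)|\le\omega_c(\|a-b\|_1)\) and \(\omega_c(s)\to0\) as \(s\to0\). Strict convexity together with compactness gives a midpoint modulus
\[
\omega(s):=\min\Bigl\{\tfrac{h(a)+h(b)}2-h\bigl(\tfrac{a+b}2\bigr):\|a-b\|_1\ge s\Bigr\}.
\]
Here \(\omega(s)>0\) for \(s>0\), and \(\omega\) is nondecreasing. Because \(G_t\) is concave and maximized at \(z_t\), comparing \(z_t\) with the midpoint \((z_t+w)/2\) gives
\[
G_t(z_t)-G_t(w)\ge 2\,\omega(\|z_t-w\|_1)\qquad\text{for every } w\in\Delta_K,
\]
and this holds uniformly in the noise.

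\textbf{Step 2 (comparison with the renormalized face point).} Let \(s_t=\sum_{j\in K}y_{t,j}\). When \(s_t>0\), define \(w_t=y_{t,K}/s_t\in\Delta_K\), so that \(\|y_t-w_t\|_1\le 2(1-s_t)\). Optimality of \(y_t\) and \(z_t\in\Delta_m\) give
\[
G_t(z_t)=F_t(z_t)\le F_t(y_t).
\]
I would write \(F_t(y_t)-G_t(w_t)=\langle u_t,y_t\rangle-\langle\Xi_{t,K},w_{t,K}\rangle+h(w_t)-h(y_t)\) and subtract the constant \(c_t=\min_{j\in K}\Xi_{t,j}\) from every coordinate. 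This is legitimate because both \(y_t\) and \(w_t\) have unit mass. The linear difference then splits into two terms:
\[
\sum_{j\notin K}y_{t,j}\bigl(\Xi_{t,j}-c_t-r_t\delta_j\bigr)
\quad\text{and}\quad
\langle\Xi_{t,K}-c_t\mathbf 1,\,y_{t,K}-w_{t,K}\rangle .
\]
The second term is \(\le0\), since \(\Xi_{t,K}-c_t\mathbf1\ge0\) and \(y_{t,K}-w_{t,K}=(s_t-1)w_{t,K}\le0\). The first term is at most \((1-s_t)\bigl(\spnorm{\Xi_t}-r_t\min_{j\notin K}\delta_j\bigr)\). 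On the event \(E_t=\{\spnorm{\Xi_t}<r_t\min_{j\notin K}\delta_j\}\) it is therefore \(\le0\), and \(\mathbb P(E_t)\to1\) because \(\spnorm{\Xi_t}=o_{\mathbb P}(r_t)\). (If \(K=[m]\), the first term is empty and the whole statement is trivial.) Hence on \(E_t\),
\[
2\,\omega(\|z_t-w_t\|_1)\le G_t(z_t)-G_t(w_t)\le h(w_t)-h(y_t)\le\omega_c\bigl(2(1-s_t)\bigr).
\]

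\textbf{Step 3 (conclusion).} By \cref{prop:main-localization}, \(1-s_t\pto0\). It follows that \(\omega_c(2(1-s_t))\pto0\), and so \(\omega(\|z_t-w_t\|_1)\pto0\). Since \(\omega\) is positive away from \(0\) and nondecreasing, this forces \(\|z_t-w_t\|_1\pto0\). The triangle inequality \(\|y_t-z_t\|_1\le 2(1-s_t)+\|w_t-z_t\|_1\) then completes the argument; the event \(s_t=0\) has vanishing probability and is harmless.

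The main obstacle is Step 2: controlling the off-face noise terms without any boundedness of \(\Xi_t\). The choice \(c_t=\min_K\Xi_{t,j}\) settles the signs, and the off-face noise is absorbed by the gap \(r_t\delta\) exactly as in the localization step.
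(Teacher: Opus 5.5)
Your proof is correct and follows essentially the same route as the paper's: use localization, repair the off-face mass into \(\Delta_K\) without lowering the linear score on the event that \(\spnorm{\Xi_t}\) is below a fixed fraction of \(r_t\delta_{\min}\), bound the regularizer change by uniform continuity of \(h\), and convert the objective gap into a distance with a score-independent midpoint strict-convexity modulus. The differences are cosmetic. The paper moves all off-face mass to a maximizer of \(\Xi_{t,K}\) and applies the midpoint gap to \(y_t\) and \(Q_{h,K}(\Xi_{t,K})\) in the full problem. You instead renormalize \(y_{t,K}\) and apply the gap on the face to \(w_t\) and \(z_t\), then close with the triangle inequality.
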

The proof repairs the vanishing off-face mass and uses a uniform
strict-convexity gap to compare the two responses
(\cref{lem:reduction}, \appref{app:face-reduction}).
Both propositions also give convergence in expectation because
the respective quantities are bounded by one and two.
It remains to apply the noise models from \cref{sec:preliminaries}
to the restricted map.

\subsection{Response limits in the three noise regimes}
\label{subsec:main-noise-regimes}

After localization, only one comparison remains: the size of the
relative noise along \(\Delta_K\) versus the unit-scale regularization
penalty.
Vanishing noise lets the regularizer choose; finite noise and
regularization choose jointly; growing noise overwhelms regularization
and selects the largest noisy coordinate.
The theorem below characterizes the response limit in these three regimes.

\begin{theorem}[Response limits]
\label{thm:main-noise-regimes}
Let \(K=\br(x^\star)\).
For the corresponding noise model:
\noindent\regime{I}.
\(y_t\pto y_K:=\argmin_{y\in\Delta_K}h(y)\).
For a separable \(h\), \(y_K=|K|^{-1}\sum_{j\in K}e_j\).

\noindent\regime{II}.
\(y_t\dto Y_\infty:=Q_{h,K}(\Xi_{\infty,K})\).

\noindent\regime{III}.
\(y_t\dto e_{J_\infty}\), with \(J_\infty\) the almost surely
unique maximizer of \(\Xi_{\infty,K}\).

The mean responses converge respectively to \(y_K\),
\(\mathbb E[Y_\infty]\), and
\(\sum_{j\in K}\Pr(J_\infty=j)e_j\).
\end{theorem}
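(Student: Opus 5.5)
The plan is to reduce everything to the restricted choice map via \cref{prop:main-face-reduction}: since \(\|y_t-Q_{h,K}(\Xi_{t,K})\|_1\pto0\), by Slutsky it suffices to identify the limit of \(Q_{h,K}(\Xi_{t,K})\) in each regime. Two structural facts about \(Q_{h,K}\) are used throughout. First, shift invariance: \(Q_{h,K}(u_K+c\mathbf1)=Q_{h,K}(u_K)\), so the almost surely finite shifts \(c_t\) can be removed and \(\Xi_{t,K}\) replaced by \(\Xi_{t,K}-c_t\mathbf1\). Second, continuity: \(Q_{h,K}\) is continuous on \(\mathbb R^{K}\). This holds by Berge's maximum theorem, since the objective is jointly continuous, the face \(\Delta_K\) is compact, and the maximizer is unique by strict convexity.

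\textbf{Regime I.} Write \(\Xi_{t,K}=\min_j\Xi_{t,j}\mathbf1+D_t\), where \(0\le D_t\) and \(\|D_t\|_\infty\le\spnorm{\Xi_{t,K}}\pto0\). By shift invariance \(Q_{h,K}(\Xi_{t,K})=Q_{h,K}(D_t)\). Continuity at \(0\) together with the continuous mapping theorem then gives \(Q_{h,K}(D_t)\pto Q_{h,K}(0)=\argmin_{\Delta_K}h=y_K\). For separable \(h\), the objective \(\sum_{j\in K}\theta(y_j)\) is symmetric and strictly convex, so its unique minimizer is invariant under permutations of \(K\) and hence uniform. Equivalently, Jensen's inequality gives \(\sum_{j\in K}\theta(y_j)\ge|K|\theta(1/|K|)\).

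\textbf{Regime II.} Apply the continuous mapping theorem to \(\Xi_{t,K}-c_t\mathbf1\dto\Xi_{\infty,K}\) with the continuous map \(Q_{h,K}\). This gives \(Q_{h,K}(\Xi_{t,K})\dto Y_\infty\), and Slutsky transfers the limit to \(y_t\).

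\textbf{Regime III.} This is the main obstacle, because the argument of \(Q_{h,K}\) diverges and continuity alone does not help. Set \(W_t=(\Xi_{t,K}-c_t\mathbf1)/a_t\dto\Xi_{\infty,K}\). Then \(Q_{h,K}(\Xi_{t,K})=Q_{h,K}(a_tW_t)\), and I will prove a deterministic lemma: if \(a\to\infty\), \(w\to w_\infty\), and \(w_\infty\) has a unique maximizer \(J\), then \(Q_{h,K}(aw)\to e_J\), locally uniformly in \(w\) near any such \(w_\infty\). The lemma follows by the same argument as \cref{prop:main-localization}. Let \(g=w_{\infty,J}-\max_{j\ne J}w_{\infty,j}>0\). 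Comparing the optimizer \(y\) with \(e_J\) gives
\[a\langle w,y\rangle-h(y)\ge a w_J-h(e_J).\]
Moreover, \(\langle w,y\rangle\le w_J-(g/2)(1-y_J)\) for \(w\) near \(w_\infty\). Combining the two yields \(1-y_J\le 2\spnorm{h}/(ag)\to0\), where \(\spnorm{h}\) is the oscillation of \(h\), finite by continuity on the compact face. To pass to the distributional statement, I will use the extended continuous mapping theorem (or Skorokhod representation). Represent \(W_t\to\Xi_{\infty,K}\) almost surely; the limit has an almost surely unique maximizer, so the lemma gives \(Q_{h,K}(a_tW_t)\to e_{J_\infty}\) almost surely, and hence \(y_t\dto e_{J_\infty}\).

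\textbf{Means.} All responses lie in the bounded set \(\Delta_m\), so convergence in distribution (or in probability) implies convergence of expectations by bounded convergence. This gives \(\mathbb E y_t\to y_K\), \(\mathbb E[Y_\infty]\), and \(\mathbb E e_{J_\infty}=\sum_{j\in K}\Pr(J_\infty=j)e_j\), respectively.
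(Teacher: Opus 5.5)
Your proposal is correct. Regimes I and II and the mean statements match the paper's proof closely: face reduction, shift invariance, continuity of \(Q_{h,K}\), the continuous mapping theorem and Slutsky, and boundedness of the simplex. Centering at the minimum instead of at a fixed \(i\in K\), and citing Berge's theorem instead of the subsequence argument of \cref{lem:continuity}, are cosmetic differences.

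In Regime III your core estimate is also the paper's. Comparing with the winning vertex bounds the misplaced mass by a multiple of \((\max h-\min h)/(a\cdot\text{gap})\), which is \eqref{eq:current-winner-bound}. What differs is how you handle the randomness. The paper stays on the original probability space and compares \(q_t\) with the \emph{current} noisy winner \(e_{J_t}\). It controls near ties through the top-score gap \(\Gamma_t\dto\Gamma_\infty>0\) and the Portmanteau theorem, which yields \(\|q_t-e_{J_t}\|_1\pto0\). Only then does it apply the continuous mapping theorem to \(e_{J_t}\). You instead fix the winner of the \emph{limit} and prove a deterministic statement along sequences \(w\to w_\infty\). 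The extended continuous mapping theorem, or Skorokhod representation, then absorbs the near-tie issue through the fact that \(\Xi_{\infty,K}\) concentrates on vectors with a unique maximum.

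Your route is shorter and more modular. The paper's route gives a stronger conclusion, recorded in \cref{thm:diverging}: the realized response is close to the realized noisy winner, \(\mathbb E\|y_t-e_{J_t}\|_1\to0\). A Skorokhod coupling on another probability space does not provide this, though the present statement does not need it. Two minor points: for \(|K|=1\) your gap \(g\) is undefined, but then \(\Delta_K\) is a single vertex and the claim is trivial. Also, the oscillation you denote \(\spnorm{h}\) is the paper's \(H_h\).
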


\emph{Proof idea.}
Face reduction transfers the problem to \(Q_{h,K}\).
In \Regimes~\regimeref{I} and~\regimeref{II}, continuity transfers
the relative score limit to the response. In \regime{III}, the
growing score gap at the unique winning coordinate overwhelms
the bounded regularizer. Bounded simplex coordinates give convergence
of means. \Cref{prop:vanishing,cor:uniform-separable-selection,prop:finite,thm:diverging}
provide the proof of the theorem in the different regimes.

The three panels of \cref{fig:rps-three-triangles} are therefore a
geometric dictionary for the theorem: concentration at the regularizer's
center in \regime{I}, persistent dispersion in \regime{II}, and
concentration near vertices in \regime{III}.
For exploitation, the realized randomness is only part of the story:
expected payoff is governed by the mean of the limiting response.
The experimental setup and trajectory plots appear in
\appref{app:rps-noise-regimes}.

\section{Exploiting the Learner's Response}
\label{sec:exploitation}

The previous section tells us where the learner's response lies.
Surplus can be harvested at three different stages of that response.
\emph{Normal exploitation} earns from mass that has not yet reached the
best-response face. \emph{Selection exploitation} earns from where the
learner settles on that face. \emph{Movement exploitation} earns
from how the learner moves along the face after a small score change.
These mechanisms correspond to delay, location, and sensitivity,
respectively, and we study them in that order.

\subsection{Normal Exploitation: Incomplete Localization}
\label{subsec:normal-exploitation}

Normal exploitation is the area under the learner's localization tail.
When the optimizer repeatedly plays a fixed max--min strategy
\(x^\star\), every unit of probability on an inferior action
\(j\notin K\) earns exactly its gap \(\delta_j\):
\[
\Surplus((x^\star)_{\times T})
=
\sum_{t=0}^{T-1}\sum_{j\notin K}
\delta_j\,\mathbb E[y_{t,j}].
\]
The optimizer therefore profits only while localization is incomplete;
the slower the off-face mass disappears, the larger the accumulated
normal surplus.

With noiseless, constant-rate FTRL, the gap \(\eta t\delta_j\)
suppresses an inferior action on a time scale proportional to
\(1/\eta\).
This gives \(\Theta(N_{\rm sub}/\eta)\) surplus for
\(N_{\rm sub}=m-|K|\), up to gap- and regularizer-dependent constants
\citep{su2026nocoin}.
With a decaying rate, the clock becomes \(r_t=t\eta_t\), so the full
response tail must be summed.

To bound this response series, write \(k=|K|\) and,
when \(N_{\rm sub}>0\), define the smallest and largest positive
payoff gaps \(\delta_{\min}=\min_{j\notin K}\delta_j\) and
\(\delta_{\max}=\max_{j\notin K}\delta_j\).

\begin{proposition}[Bounds on normal surplus]
\label{prop:normal-exploitation-informal}
Let $h$ be separable with kernel $\theta$
and truncated inverse derivative
$\phi$,
and fix a max--min strategy $x^\star$.
If $N_{\rm sub}=0$,
then
$\Surplus((x^\star)_{\times T})=0$ for every $T$.
Otherwise,
for any positive learning rate and any
noise process,
write
\(\mathcal S_T^\star:=\Surplus((x^\star)_{\times T})\).
\begin{align*}
\mathcal S_T^\star &\ge N_{\rm sub}\delta_{\min}\sum_{t<T}\!
  \mathbb E\,\phi\!\bigl(\theta'(\tfrac1m)-\delta_{\max}r_t-\spnorm{\Xi_t}\bigr),\\
\mathcal S_T^\star &\le N_{\rm sub}\delta_{\max}\sum_{t<T}\!
  \mathbb E\,\phi\!\bigl(\theta'(\tfrac1k)-\delta_{\min}r_t+\spnorm{\Xi_t}\bigr).
\end{align*}
\end{proposition}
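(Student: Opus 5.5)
The plan is to start from the exact identity displayed just before the statement,
\[
\mathcal S_T^\star=\sum_{t<T}\sum_{j\notin K}\delta_j\,\mathbb E[y_{t,j}],
\]
which follows from \(x^{\star\top}Ay_t=\val+\delta^\top y_t\) together with \(\delta_j=0\) on \(K\). If \(N_{\rm sub}=0\), then \(\delta\equiv0\) and the surplus vanishes identically, which settles the first claim. Otherwise, sandwiching \(\delta_j\) between \(\delta_{\min}\) and \(\delta_{\max}\) reduces both inequalities to pointwise (in \(\omega\) and \(t\)) bounds on each off-face coordinate \(y_{t,j}\). The target is
\[
\phi\bigl(\theta'(\tfrac1m)-\delta_{\max}r_t-\spnorm{\Xi_t}\bigr)\le y_{t,j}\le \phi\bigl(\theta'(\tfrac1k)-\delta_{\min}r_t+\spnorm{\Xi_t}\bigr),
\]
after which summing over the \(N_{\rm sub}\) indices \(j\notin K\), taking expectations and summing over \(t\) gives the proposition. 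No assumption on the step sizes or the noise enters, since everything is pathwise.

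For the pointwise bounds I use the reference-score form \eqref{eq:main-fixed-response}, \(y_t=Q_h(u)\) with \(u=-r_t\delta+\Xi_t\), and the separable representation \eqref{eq:separable-choice-map}: \(y_j=\phi(u_j+\lambda)\), with \(\phi\) nondecreasing. The task is then to bracket the normalizer \(\lambda\).

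\textbf{Upper bound.} Since the entries of \(y\) sum to one, some index \(i\in K\) satisfies \(y_i\ge 1/m\), but that is too weak; I want \(1/k\). Instead, suppose \(\lambda>\theta'(1/k)-\min_{i\in K}u_i\). Then every \(i\in K\) has \(u_i+\lambda>\theta'(1/k)\), so \(y_i>1/k\) (as \(\theta'(1/k)\) lies in the range of \(\theta'\)). Summing over \(K\) gives total mass exceeding one, a contradiction. Hence \(\lambda\le\theta'(1/k)-\min_{i\in K}u_i\).

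For \(j\notin K\), this yields
\[
u_j+\lambda\le\theta'(\tfrac1k)-r_t\delta_j+\Xi_{t,j}-\min_{i\in K}\Xi_{t,i}\le\theta'(\tfrac1k)-\delta_{\min}r_t+\spnorm{\Xi_t},
\]
using \(u_i=\Xi_{t,i}\) on \(K\). Monotonicity of \(\phi\) then gives the upper bound.

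\textbf{Lower bound.} The argument is symmetric. If \(\lambda<\theta'(1/m)-\max_i u_i\), then every coordinate satisfies \(y_i<1/m\), so the total mass is below one, a contradiction. Hence \(\lambda\ge\theta'(1/m)-\max_i u_i\), and for \(j\notin K\),
\[
u_j+\lambda\ge\theta'(\tfrac1m)-r_t\delta_j+\Xi_{t,j}-\max_i(-r_t\delta_i+\Xi_{t,i}).
\]
Since \(-r_t\delta_i\le0\), the max is at most \(\max_i\Xi_{t,i}\). Using \(\delta_j\le\delta_{\max}\), the right-hand side is at least \(\theta'(1/m)-\delta_{\max}r_t-\spnorm{\Xi_t}\).

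\textbf{Main obstacle.} The only delicate step is justifying these contradiction arguments through the truncated inverse \(\phi\) at the boundary. On an inactive coordinate we only know \(u_j+\lambda\le\theta'(0^+)\), with \(y_j=0=\phi(\cdot)\), and for a non-steep kernel \(\phi\) is flat there. The implications I need are strict: \(u_i+\lambda>\theta'(1/k)\Rightarrow y_i>1/k\), and \(u_i+\lambda<\theta'(1/m)\Rightarrow y_i<1/m\). These hold because \(\theta'\) is strictly increasing and continuous on \((0,1]\) (as \(\theta''>0\)), and because the arguments \(1/k,1/m\) lie in \((0,1]\), hence in the untruncated range. I will check both implications carefully, including the edge case \(k=1\), where \(\theta'(1)\) sits at the upper truncation point and \(\phi\equiv1\) above it; there the upper bound is trivially valid anyway.
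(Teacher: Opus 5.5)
Your argument is correct and is essentially the paper's proof: you bracket the KKT multiplier via normalization (over \(K\) for the upper bound, over \([m]\) for the lower), push it through the monotone \(\phi\), and sandwich \(\delta_j\) between \(\delta_{\min}\) and \(\delta_{\max}\) before summing. The only differences are cosmetic. The paper picks a coordinate \(i\in K\) with \(y_{t,i}\le 1/k\) (resp.\ \(\ell\) with \(y_{t,\ell}\ge 1/m\)) and reads off its KKT condition directly, which avoids your strictness check through \(\phi\). For \(k=1\), the precise fix is not that the upper bound is trivial, but that \(u_i+\lambda>\theta'(1)\) forces \(y_t=e_i\), so every off-face coordinate is zero and the bound holds.
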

In \appref{sec:passive}, we obtain general fixed-strategy bounds
by summing the payoff-weighted coordinate estimates in
\cref{lem:pointwise-noisy-ftrl-bound}. Their max--min specialization
gives the inequalities above (\cref{prop:normal-surplus-bounds}).

Localization alone makes the per-round gain vanish.
To bound cumulative gain from a fixed strategy, we use summability of both the response tail determined by the regularizer and learning-rate schedule (Eq. \eqref{eqn:summable-response-tail}) and the probabilities of large-noise events (Eq. \eqref{eqn:summable-noise-tail}).

\begin{theorem}
\label{thm:main-normal-bounded-surplus}
Let \(h\) be separable with kernel \(\theta\) and truncated
inverse derivative \(\phi\). Suppose the optimizer is playing a constant strategy \(\hat x\).
For this strategy, write \(K=\br(\hat x)\), \(k=|K|\),
\(N_{\rm sub}=m-k\), and
\(\delta_j=(A^\top\hat x)_j-\min_\ell(A^\top\hat x)_\ell\).
If \(N_{\rm sub}>0\), let \(\delta_{\min}=\min_{j\notin K}\delta_j\)
and suppose that for some \(\alpha\in(0,1)\),
\begin{align}
  \label{eqn:summable-response-tail}
  \sum_{t=0}^{\infty}
\phi\!\left(\theta'(1/k)-\alpha\delta_{\min}r_t\right)&<\infty,\\
  \label{eqn:summable-noise-tail}\sum_{t=0}^{\infty}\Pr\!\left(
\spnorm{\Xi_t}>(1-\alpha)\delta_{\min}r_t
\right)&<\infty.
\end{align}
Then there exists a finite \(C_{\hat x}\)
independent of \(T\)
such that
\[
[\Surplus(\hat x_{\times T})]_+
\le C_{\hat x}
\qquad\text{for every }T.
\]
\end{theorem}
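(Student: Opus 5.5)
The plan is to reduce the claim to summability of the expected off-face mass, and then bound that mass coordinatewise from the separable KKT conditions.

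\textbf{Step 1 (decomposition).} Write \(\hat v:=\min_\ell(A^\top\hat x)_\ell\le\val\). Since \(y_t\in\Delta_m\) and \(\delta_j=0\) on \(K\),
\[
\Surplus(\hat x_{\times T})=T(\hat v-\val)+\sum_{t<T}\sum_{j\notin K}\delta_j\,\mathbb E[y_{t,j}].
\]
If \(N_{\rm sub}=0\), the second term vanishes and the surplus is \(\le 0\). Otherwise, since \(T(\hat v-\val)\le 0\), the positive part is at most \(\delta_{\max}\sum_{t=0}^{\infty}\mathbb E\bigl[\sum_{j\notin K}y_{t,j}\bigr]\). It therefore suffices to show this series is finite; its value is a valid \(C_{\hat x}\), independent of \(T\).

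\textbf{Step 2 (pointwise bound).} Under constant play, \(S_t=-tA^\top\hat x\). Shift invariance of \(Q_h\) gives \(y_t=Q_h(u)\) with \(u=-r_t\delta+\Xi_t\). Use \eqref{eq:separable-choice-map} with multiplier \(\lambda\), which is what \cref{lem:pointwise-noisy-ftrl-bound} encodes:
\begin{itemize}
\item Since \(\sum_{\ell\in K}y_\ell\le 1\), some \(\ell\in K\) has \(y_\ell\le 1/k\).
\item If \(y_\ell>0\), then \(u_\ell+\lambda=\theta'(y_\ell)\le\theta'(1/k)\). If \(y_\ell=0\), then \(u_\ell+\lambda\le\theta'(0^+)\le\theta'(1/k)\).
\item For \(j\notin K\), \(u_j-u_\ell=-r_t\delta_j+(\Xi_{t,j}-\Xi_{t,\ell})\le -r_t\delta_{\min}+\spnorm{\Xi_t}\).
\end{itemize}
Since \(\phi\) is nondecreasing, this gives
\[
y_{t,j}=\phi(u_j+\lambda)\le\phi\bigl(\theta'(1/k)-\delta_{\min}r_t+\spnorm{\Xi_t}\bigr).
\]

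\textbf{Step 3 (splitting on the noise event).} Let \(E_t=\{\spnorm{\Xi_t}\le(1-\alpha)\delta_{\min}r_t\}\). On \(E_t\), monotonicity of \(\phi\) gives \(y_{t,j}\le\phi(\theta'(1/k)-\alpha\delta_{\min}r_t)\). On \(E_t^c\), use the trivial bound \(y_{t,j}\le 1\). Hence
\[
\mathbb E\Bigl[\sum_{j\notin K}y_{t,j}\Bigr]\le N_{\rm sub}\,\phi\bigl(\theta'(1/k)-\alpha\delta_{\min}r_t\bigr)+\Pr(E_t^c).
\]
Summing over \(t\), the first term is finite by \eqref{eqn:summable-response-tail} and the second by \eqref{eqn:summable-noise-tail}. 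This yields
\[
C_{\hat x}=\delta_{\max}\Bigl(N_{\rm sub}\sum_t\phi\bigl(\theta'(1/k)-\alpha\delta_{\min}r_t\bigr)+\sum_t\Pr(E_t^c)\Bigr).
\]

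\textbf{Main obstacle.} The only delicate step is Step 2. It requires choosing the reference coordinate \(\ell\in K\) with mass at most \(1/k\), so that the constant is \(\theta'(1/k)\) rather than something depending on \(\lambda\). It also requires handling the case where that coordinate is inactive, which uses \(\theta'(0^+)\le\theta'(1/k)\) by convexity of \(\theta\). Once that bound holds, no independence of the noise is needed, because only the marginal tail probabilities \(\Pr(E_t^c)\) enter.
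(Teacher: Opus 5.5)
Your proposal is correct and follows the paper's proof essentially step for step: the same decomposition of surplus into a nonpositive value gap plus the off-face gain, the same pointwise KKT bound as in \cref{lem:pointwise-noisy-ftrl-bound} via a coordinate \(\ell\in K\) with \(y_{t,\ell}\le 1/k\), and the same split on the common event \(E_t\). The differences are cosmetic. You replace \(\delta_j\) by \(\delta_{\min}\) before applying \(\phi\), which gives a slightly looser constant than the paper's \(\sum_{j\notin K}\delta_j\sum_t\phi(\theta'(1/k)-\alpha\delta_jr_t)\). Also, the bare \(\Pr(E_t^c)\) term in Step 3 actually uses \(\sum_{j\notin K}y_{t,j}\le 1\), not the per-coordinate bound \(y_{t,j}\le 1\) you cite, which would give \(N_{\rm sub}\Pr(E_t^c)\).
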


The first condition is equivalent to
\cref{ass:fixed-response-summability}.
The second is implied by \cref{ass:passive-linear-tail}.
The bound is proved in
\cref{thm:unknown-horizon-pointwise}.
\appref{app:normal-standard-regularizer-example} gives the proof and criteria for
standard
regularizers.

\begin{corollary}
\label{cor:main-normal-maxmin-bounded-surplus}
Under the conditions of \cref{thm:main-normal-bounded-surplus},
if \(\hat x=x^\star\) is max--min, then
\[
0\le\Surplus((x^\star)_{\times T})\le C_{x^\star}
\qquad\text{for every }T.
\]
\end{corollary}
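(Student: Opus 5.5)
The corollary follows by combining \cref{thm:main-normal-bounded-surplus} with the exact max--min identity stated before \cref{prop:normal-exploitation-informal}. The plan has two steps: a lower bound from that identity, and an upper bound obtained by removing the positive part in the theorem.

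\textbf{Lower bound.} When \(\hat x=x^\star\), we have \(A^\top x^\star\ge\val\mathbf1\), with equality exactly on \(K=\br(x^\star)\). Set \(\delta=A^\top x^\star-\val\mathbf 1\), which is nonnegative. For every round,
\[
x^{\star\top}Ay_t=\val+\langle\delta,y_t\rangle=\val+\sum_{j\notin K}\delta_j y_{t,j},
\]
since \(y_t\in\Delta_m\). Taking expectations and summing over \(t<T\) gives
\[
\Surplus((x^\star)_{\times T})=\sum_{t<T}\sum_{j\notin K}\delta_j\,\mathbb E[y_{t,j}]\ge0,
\]
because every \(\delta_j\) and every \(y_{t,j}\) is nonnegative. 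If \(N_{\rm sub}=0\), the sum is empty and the surplus is identically zero, as already noted in \cref{prop:normal-exploitation-informal}. In that case any \(C_{x^\star}\ge0\) works.

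\textbf{Upper bound.} Now assume \(N_{\rm sub}>0\). The hypotheses of \cref{thm:main-normal-bounded-surplus} are imposed with \(\hat x=x^\star\), so they are stated relative to \(K=\br(x^\star)\) and the gaps \(\delta_j=(A^\top x^\star)_j-\min_\ell(A^\top x^\star)_\ell\).

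One point needs checking: that \(\min_\ell(A^\top x^\star)_\ell=\val\). This holds precisely because \(x^\star\) attains the max--min value, since \(\min_{y\in\Delta_m}x^{\star\top}Ay\) is minimized at a vertex. Hence these theorem gaps coincide with the \(\delta\) defined above.

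The theorem then yields \([\Surplus]_+\le C_{x^\star}\). By the lower bound the surplus is nonnegative, so it equals its positive part, and we conclude
\[
0\le\Surplus((x^\star)_{\times T})\le C_{x^\star}\qquad\text{for every }T.
\]

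No step is a real obstacle. The only care needed is the gap identification just described, which ensures that the theorem's normalization of \(\delta\) is the same as the one in the exact surplus identity.
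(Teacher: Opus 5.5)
Your proposal is correct and takes essentially the paper's route: for max--min play the value-gap term vanishes, so $\Surplus((x^\star)_{\times T})=\sum_{t<T}\sum_{j\notin K}\delta_j\,\mathbb E[y_{t,j}]\ge 0$, and the constant $C_{x^\star}$ from the theorem then bounds the surplus itself. The only cosmetic difference is that the paper bounds this nonnegative gain by $C_{x^\star}$ directly, whereas you apply the theorem's bound on $[\Surplus((x^\star)_{\times T})]_+$ and then remove the positive part.
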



Hence $\eta_T\to0$ does not by itself recover
the
$\Theta(1/\eta_T)$ surplus suggested by the constant-rate theory:
under these
conditions,
normal surplus remains $O_{x^\star}(1)$.
\begin{remark}
\Cref{thm:main-normal-bounded-surplus} bounds the positive surplus
from repeating a fixed \(\hat x\), with a constant \(C_{\hat x}\).
It does not rule out growing surplus
from a time-varying $x_{<T}$.
\end{remark}

\subsection{Selection Exploitation: A Profitable Limiting Mean Response}
\label{subsec:main-new-selection}

Localization answers \emph{which} actions survive, but not \emph{where}
the learner settles on the surviving face.
Let \(y_t^\star=Q_h(-r_t\delta+\Xi_t)\) be the response to repeated
\(x^\star\), and suppose
\(\mathbb E[y_t^\star]\to\mu\in\Delta_K\).
Although every point of \(\Delta_K\) is a best response to
\(x^\star\), the selected mean \(\mu\) need not be minimax against all
optimizer strategies.
Its exploitable gap is
$
g_{\rm mean}(x^\star;\mu)
  :=\max_{x\in\Delta_n}x^\top A\mu-\val.
$

When this gap is positive, a best response \(x^\dagger\) would profit
against \(\mu\); the difficulty is that playing \(x^\dagger\) also
changes future learner scores and may destroy \(\mu\).
Selection exploitation therefore faces a preserve-while-extracting
tradeoff.

We move only partway toward \(x^\dagger\):
\(\widehat x_t=(1-\varepsilon_t)x^\star+\varepsilon_t x^\dagger\).
Let \(a=A^\top(x^\dagger-x^\star)\) be the payoff direction and
\(b_t:=\sum_{\tau<t}\varepsilon_\tau\) the cumulative intervention.
Each past deviation adds \(-\varepsilon_\tau a\) to the scores, so
\(y_t=Q_h(-r_t\delta+\Xi_t-\eta_tb_ta)\) see Equation \eqref{eq:new-selection-score} in \appref{app:effect-past-deviations}.

The response change is governed by \(\eta_tb_t\).
Uniform stability of the choice map ensures that keeping this quantity
small preserves a positive fraction of
\(\langle a,\mu\rangle=g_{\rm mean}\), even at large random scores. See \cref{lem:new-selection-uniform-stability} in \appref{app:uniform-stability}.
As \(\eta_t\) decreases, \(b_t\) can grow while its effect stays bounded.
We therefore spend the one-round increase of the reciprocal learning rate:
\begin{equation}
  \varepsilon_t
    =c\left(\frac1{\eta_{t+1}}-\frac1{\eta_t}\right).
\label{eq:old-main-new-selection-weights}
\end{equation}
The sum telescopes to
\(b_t=c(1/\eta_t-1/\eta_0)\),
so
\(\eta_tb_t=c(1-\eta_t/\eta_0)\le c\).
The model's nonincreasing learning rates ensure
\(\varepsilon_t\ge0\). For this construction, we additionally
require bounded reciprocal increments, which allow us to choose
a fixed \(c>0\) small enough that \(\varepsilon_t\le1\)
for every \(t\).
This gives a strategy trajectory using the
known deterministic
schedule,
without knowing the horizon or observing the
noise.

\begin{theorem}
\label{thm:main-selection-exploitation}
Suppose
\(\mathbb E[y_t^\star]\to\mu\in\Delta_K\) with
\(g_{\rm mean}>0\).
Assume additionally that
\(\sup_t(1/\eta_{t+1}-1/\eta_t)<\infty\).
For every sufficiently small fixed \(c>0\),
the sequence above
satisfies,
for some finite \(C\ge0\) independent of \(T\),
\[
  \Surplus(\widehat x_{<T})
       \ge\frac{c\,g_{\rm mean}}{2\eta_T}-C
       \quad \text{for all } T\ge1.
\]
If \(\eta_T\to0\),
its expected surplus is
\(\Omega(1/\eta_T)\).
\end{theorem}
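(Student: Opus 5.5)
The plan is to decompose the per-round payoff around the max--min baseline. With \(\widehat x_t=(1-\varepsilon_t)x^\star+\varepsilon_t x^\dagger\) and \(A^\top x^\star=\val\mathbf 1+\delta\), we have
\[
\widehat x_t^\top A y_t-\val=(1-\varepsilon_t)\langle\delta,y_t\rangle+\varepsilon_t\bigl(\langle\delta,y_t\rangle+\langle a,y_t\rangle\bigr)\ \ge\ \varepsilon_t\langle a,y_t\rangle,
\]
because \(\delta\ge0\) and \(y_t\ge0\), so the normal term can only help and may be discarded. Hence \(\Surplus(\widehat x_{<T})\ge\sum_{t<T}\varepsilon_t\,\mathbb E\langle a,y_t\rangle\). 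Moreover \(\langle a,\mu\rangle=(x^\dagger)^\top A\mu-(x^\star)^\top A\mu=g_{\rm mean}\), using that \(\mu\in\Delta_K\) gives \((x^\star)^\top A\mu=\val\). The task therefore reduces to showing \(\mathbb E\langle a,y_t\rangle\ge g_{\rm mean}/2\) for all large \(t\) once \(c\) is small. Then telescoping gives \(\sum_{t<T}\varepsilon_t=c(1/\eta_T-1/\eta_0)\), and the finitely many early rounds (where \(\langle a,y_t\rangle\ge-\|a\|_\infty\) and \(\varepsilon_t\le1\)) are absorbed into the constant \(C\). The same holds for the \(-c\,g_{\rm mean}/(2\eta_0)\) term.

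The core step is comparing \(y_t=Q_h(-r_t\delta+\Xi_t-\eta_tb_ta)\) with the reference \(y_t^\star=Q_h(-r_t\delta+\Xi_t)\). By the telescoping identity, the perturbation is \(\eta_tb_ta\) with \(\eta_tb_t\le c\), so it has sup-norm at most \(c\|a\|_\infty\), deterministically and uniformly in \(t\) and in the noise realization. I would invoke the uniform stability result, \cref{lem:new-selection-uniform-stability}. This should provide a modulus \(\omega\) with \(\omega(s)\to0\) as \(s\to0\) and \(\|Q_h(u+v)-Q_h(u)\|_1\le\omega(\|v\|)\) for all \(u\). Such a modulus is plausible because the choice map is the gradient of the convex conjugate of a strictly convex \(h\) on a compact set, so it is uniformly continuous on \(\mathbb R^m\) modulo shifts along \(\mathbf 1\). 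Taking expectations then gives
\[
\mathbb E\langle a,y_t\rangle\ge\mathbb E\langle a,y_t^\star\rangle-\|a\|_\infty\,\omega(c\|a\|_\infty).
\]
Since \(\mathbb E[y_t^\star]\to\mu\), the first term tends to \(g_{\rm mean}\). Choosing \(c\) small enough that \(\|a\|_\infty\omega(c\|a\|_\infty)\le g_{\rm mean}/4\), and also \(c\sup_t(1/\eta_{t+1}-1/\eta_0... )\) small enough that \(\varepsilon_t\le1\), gives \(\mathbb E\langle a,y_t\rangle\ge g_{\rm mean}/2\) for \(t\ge t_0\).

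The main obstacle is this uniform-in-score stability. The scores \(-r_t\delta+\Xi_t\) are unbounded and random, so only a Lipschitz or continuity bound valid over all of \(\mathbb R^m\) works. This is precisely why the deviation budget is measured by \(\eta_tb_t\) and not by \(b_t\). If the cited lemma gives stability only on compacts, I would first reduce by compactness of \(\Delta_m\): the map \(v\mapsto Q_h(u+v)\) is controlled through the strict-convexity gap of \(h\) on \(\Delta_m\), which is uniform. Concretely, optimality of each response for its own score gives \(\langle v,y'-y\rangle\ge D_h(y,y')+D_h(y',y)\)-type gaps, and a uniform modulus of strict convexity then turns \(\|v\|\le c\|a\|_\infty\) into a small \(\|y'-y\|_1\). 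The final claim then follows: if \(\eta_T\to0\), the bound \(c\,g_{\rm mean}/(2\eta_T)-C\) is \(\Omega(1/\eta_T)\).
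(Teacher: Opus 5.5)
Your proposal is correct and follows the paper's proof essentially step for step. The shared steps are: the payoff decomposition discarding \(\langle\delta,y_t\rangle\); the telescoping budget \(\eta_tb_t\le c\); score-uniform stability of \(Q_h\), comparing \(y_t\) with \(y_t^\star\) under the same noise (the paper derives this stability from a midpoint strict-convexity gap rather than Bregman divergences, since \(h\) need not be differentiable); a late time \(t_0\) after which \(\mathbb E\langle a,y_t\rangle\ge g_{\rm mean}/2\); and absorbing the early rounds into \(C\). Your garbled feasibility condition should read \(c\sup_t(1/\eta_{t+1}-1/\eta_t)\le1\), which is exactly where the bounded-reciprocal-increment hypothesis is used.
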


The payoff decomposition \eqref{eq:new-selection-payoff} gives
\(\widehat x_t^\top Ay_t-\val\ge\varepsilon_t\langle a,y_t\rangle\).
Convergence of \(\mathbb E[y_t^\star]\) to \(\mu\), together with
uniform stability, ensures
\(\mathbb E\langle a,y_t\rangle\ge g_{\rm mean}/2\) after some
time \(t_0\).
Each later round therefore earns at least
\(\varepsilon_tg_{\rm mean}/2\) in expectation.
Summing yields a gain proportional to \(b_T\), while the initial
rounds cost only a constant.
\Cref{thm:new-selection-every-horizon} gives the proof for every $T$.

The profitability condition concerns the limiting mean and applies
in all three response regimes (\cref{cor:new-selection-regimes}).
For \(\eta_t=\eta_0(t+1)^{-\rho}\), \(0<\rho<1\), deviations decay
as \(t^{\rho-1}\) and the surplus guarantee grows as \(T^\rho\)
(\cref{cor:new-selection-power-law}).
When \(g_{\rm mean}=0\), we instead turn to the learner's movement.

\subsection{Movement Exploitation: Alternating Play}
\label{sec:movement-exploitation}

Selection exploitation uses where the learner sits, while movement
exploitation uses how the learner reacts.
It can work even when the limiting mean response is minimax.
The optimizer plays a push--pull pair \((x',x'')\): the first play
tilts the learner's relative scores, the second profits from the induced
response, and the pair averages back to the max--min baseline
\(x^\star\).

\begin{assumption}
  \label{ass:main-new-geometry}
  There exist distinct \(i,j\in \br(x^\star)\) and \(\ell\in\supp(x^\star)\)
  with \(A_{\ell i}\ne A_{\ell j}\).
\end{assumption}
Following \citet{assos2024maximizing} and \citet{su2026nocoin},
this permits a pair \(x',x''\in\Delta_n\) with
\((x'+x'')/2=x^\star\) and \(v'=A^\top x'\) nonconstant on \(K\).
\appref{app:new-movement-construction} gives the construction.
Write \(v=A^\top x^\star=\val\mathbf1+\delta\).

We allow an initial baseline period so that the movement gain
can dominate the errors described below.
The trajectory \(x^{\rm alt}_{<T}\) plays \(x^\star\) for the
first \(\min\{2s_0,T\}\) rounds, then repeats \((x',x'')\),
using \(x^\star\) for any unmatched final round.
Since \(A^\top x''=2v-v'\), pair \(s\) earns surplus
\begin{align*}
    x'^\top Ay_{2s}+x''^\top Ay_{2s+1}-2\val =2\langle\delta,y_{2s+1}\rangle
          +\langle v',y_{2s}-y_{2s+1}\rangle.
\end{align*}
The first term is nonnegative, so we focus on the movement payoff
in the second. Each completed pair returns the cumulative scores
to the baseline path, giving
\(S_{2s}=-2sv\) and \(S_{2s+1}=-2sv-v'\).

To isolate the response to \(x'\), we separate it from changes in
the learning rate and noise. Couple
\(\widehat\Xi_s^0\stackrel d=\Xi_{2s}\) and
\(\widehat\Xi_s^1\stackrel d=\Xi_{2s+1}\).
Under oblivious play, expected payoffs depend only on these marginals,
so the coupling need not match the actual temporal dependence.
Holding the noise fixed, connect the two deterministic scores using
\(\bar\eta_s(r)=(1-r)\eta_{2s}+r\eta_{2s+1}\), \(r\in[0,1]\),
and \(y_s(r;\xi)=Q_h(u_s(r;\xi))\), where
\[
  u_s(r;\xi)=-2s\bar\eta_s(r)v-r\eta_{2s+1}v'+\xi.
\]
The endpoints compare the two rounds at fixed noise; we account
for the noise change at \(r=1\) separately.

For separable \(h\), inverse curvature measures response sensitivity.
Set
\(w_{s,r}(a)=1/\theta''(y_{s,a}(r;\widehat\Xi_s^0))\) on positive
coordinates and zero otherwise. Let
\(W_{s,r}=\sum_a w_{s,r}(a)>0\) and
\(\pi_{s,r}(a)=w_{s,r}(a)/W_{s,r}\).
With \(I\sim\pi_{s,r}\), the resulting gain and error terms are
\begin{align*}
  \mathcal K_s&=\mathbb E\int_0^1
    W_{s,r}\Var_{\pi_{s,r}}(v'_I)\,dr,\\
  \mathcal D_s&=\mathbb E\int_0^1
    W_{s,r}|\Cov_{\pi_{s,r}}(v'_I,v_I)|\,dr,\\
  \mathcal N_s&=\mathbb E\bigl|\bigl\langle v',
    Q_h(u_s(1;\widehat\Xi_s^0))-Q_h(u_s(1;\widehat\Xi_s^1))\bigr\rangle\bigr|.
\end{align*}
The variance in \(v'\) measures responsiveness to the new play,
the covariance with \(v\) controls learning-rate drift, and
\(\mathcal N_s\) accounts for the endpoint noise replacement.
Integrating the response derivative from
\cref{lem:new-movement-derivative} gives
\begin{align*}
  \mathbb E\langle v',y_{2s}-y_{2s+1}\rangle \ge \eta_{2s+1}\mathcal K_s
      -2s|\eta_{2s}-\eta_{2s+1}|\mathcal D_s -\mathcal N_s.
\end{align*}
Thus the movement gain must outweigh both errors.

\begin{theorem}
\label{thm:main-alternating-exploitation}
Under the game geometry above, suppose \(h\) is separable with
\(\sup_{0<u\le1}1/\theta''(u)<\infty\).
Assume there exist \(\kappa>0\) and couplings of adjacent
noise marginals such that
\[
\liminf_{s\to\infty}\mathcal K_s\ge\kappa, \qquad
  2s|\eta_{2s}-\eta_{2s+1}|\mathcal D_s+\mathcal N_s
    =o(\eta_{2s+1}).
\]
There is a finite \(s_0\), such that
for every \(T\ge1\),
\[
  \Surplus(x^{\rm alt}_{<T})
    \ge\frac\kappa2
      \sum_{s=s_0}^{\lfloor T/2\rfloor-1}\eta_{2s+1}.
\]
\end{theorem}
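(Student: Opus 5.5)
The plan is to decompose the total surplus of \(x^{\rm alt}_{<T}\) into three pieces: the baseline prefix, the completed push--pull pairs, and a possible unmatched final round. Then show that only the pairs matter, and that each pair after some index \(s_0\) earns at least \(\tfrac\kappa2\eta_{2s+1}\) in expectation.

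\emph{Baseline prefix.} During the first \(\min\{2s_0,T\}\) rounds the optimizer plays \(x^\star\), so each round earns \(\langle\delta,y_t\rangle\ge0\) above \(\val\) (as in the normal-surplus identity of \cref{subsec:normal-exploitation}). This block therefore contributes nonnegatively and can be dropped.

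\emph{Unmatched final round.} If \(T\) is odd, the last round is played at \(x^\star\). The cumulative scores there are \(S_{2s}=-2sv\) after the last completed pair, so the response is again of the form \eqref{eq:main-fixed-response}. Hence its surplus is \(\langle\delta,y_{T-1}\rangle\ge0\) and can also be dropped.

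\emph{Per-pair bound.} For a completed pair \(s\ge s_0\), the displayed identity gives pair surplus
\[
2\langle\delta,y_{2s+1}\rangle+\langle v',y_{2s}-y_{2s+1}\rangle\ge\langle v',y_{2s}-y_{2s+1}\rangle .
\]
This uses the score bookkeeping \(S_{2s}=-2sv\) and \(S_{2s+1}=-2sv-v'\), which holds because each completed pair averages to \(x^\star\) and so \(A^\top x''=2v-v'\).

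Because play is oblivious, the expectation of each term depends only on the marginal law of \(\Xi_{2s}\), respectively \(\Xi_{2s+1}\). So we may replace them by the coupled \(\widehat\Xi^0_s\) and \(\widehat\Xi^1_s\). By the shift invariance of \(Q_h\), \(y_{2s}\stackrel d=Q_h(u_s(0;\widehat\Xi^0_s))\) and \(y_{2s+1}\stackrel d=Q_h(u_s(1;\widehat\Xi^1_s))\). Write
\[
\langle v',y_{2s}-y_{2s+1}\rangle=\bigl\langle v',\,y_s(0;\widehat\Xi^0_s)-y_s(1;\widehat\Xi^0_s)\bigr\rangle+\bigl\langle v',\,y_s(1;\widehat\Xi^0_s)-Q_h(u_s(1;\widehat\Xi^1_s))\bigr\rangle .
\]
In expectation the second term is at least \(-\mathcal N_s\).

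\emph{Derivative along the path.} For the first term, I would use \cref{lem:new-movement-derivative}, the derivative of a separable choice map. On the active set, \(dy_a=w(a)(du_a-\bar{du})\), where \(\bar{du}\) is the \(\pi\)-average. This gives \(\tfrac{d}{dr}\langle v',y\rangle=W\Cov_\pi(v'_I,\tfrac{du_I}{dr})\). Along the path,
\[
\tfrac{du}{dr}=-2s(\eta_{2s+1}-\eta_{2s})v-\eta_{2s+1}v',
\]
so
\[
-\int_0^1\tfrac{d}{dr}\langle v',y_s(r;\cdot)\rangle\,dr\ge\eta_{2s+1}\int_0^1 W\Var_\pi(v'_I)\,dr-2s|\eta_{2s}-\eta_{2s+1}|\int_0^1W|\Cov_\pi(v'_I,v_I)|\,dr .
\]
Taking expectations yields the displayed inequality \(\ge\eta_{2s+1}\mathcal K_s-2s|\eta_{2s}-\eta_{2s+1}|\mathcal D_s-\mathcal N_s\).

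\emph{Main obstacle: justifying the integration.} The map \(r\mapsto y_s(r;\xi)\) is only piecewise \(C^1\), because the active set can change (non-steep kernels). I would argue that it is Lipschitz: by \eqref{eq:separable-choice-map}, \(\phi\) is Lipschitz with constant \(\sup 1/\theta''<\infty\), and the normalizer \(\lambda\) is Lipschitz in \(u\). Hence the map is absolutely continuous, and the formula holds for a.e.\ \(r\), with boundary coordinates contributing zero weight. The same bound \(\sup1/\theta''<\infty\) makes \(W\) bounded, which justifies Fubini and the finiteness of \(\mathcal K_s\) and \(\mathcal D_s\).

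\emph{Choosing \(s_0\) and summing.} By the assumptions, \(\liminf\mathcal K_s\ge\kappa\) and the error terms are \(o(\eta_{2s+1})\). Pick \(s_0\) such that for all \(s\ge s_0\), \(\mathcal K_s\ge\tfrac34\kappa\) and the errors are at most \(\tfrac14\kappa\eta_{2s+1}\). Then each pair earns at least \(\tfrac\kappa2\eta_{2s+1}\). Summing over \(s=s_0,\dots,\lfloor T/2\rfloor-1\) and adding the nonnegative prefix and tail gives the claim for every \(T\). When \(T\le 2s_0\) the sum is empty and surplus is nonnegative, so the bound holds trivially.
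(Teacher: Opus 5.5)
Your proposal is correct and follows the paper's argument essentially step for step. The paper also discards the nonnegative baseline and unmatched rounds, and it proves the one-pair bound $\eta_{2s+1}\mathcal K_s-2s|\eta_{2s}-\eta_{2s+1}|\mathcal D_s-\mathcal N_s$ by a common-noise interpolation, the covariance derivative of \cref{lem:new-movement-derivative}, and an endpoint noise swap (\cref{thm:new-movement-pair}); it then chooses $s_0$ with exactly your $3\kappa/4$ versus $\kappa/4$ margins (\cref{thm:new-movement-surplus}). The only differences are cosmetic: the paper gets absolute continuity of the response from $1/C_\theta$-strong convexity, which makes $Q_h$ Lipschitz and sidesteps the possible non-uniqueness of the multiplier at vertices, and it notes explicitly that the baseline prefix keeps $S_{2s}=-2sv$, so $\mathcal K_s$, $\mathcal D_s$, and $\mathcal N_s$ do not depend on the choice of $s_0$.
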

Each sufficiently late pair earns at least
\(\kappa\eta_{2s+1}/2\). Baseline and unmatched rounds have
nonnegative surplus. The derivation and proof appear in
\appref{app:new-movement-pair}--\ref{app:new-movement-surplus}.
For \(\eta_t=\eta_0(t+1)^{-\rho}\), \(0<\rho<1\),
the bound is \(\Omega(T^{1-\rho})\) whenever these conditions hold.

\paragraph{When the hypotheses hold.}
The lower bound on \(\mathcal K_s\) asks for useful response
variation with positive probability. For instance, keeping the two
distinguished actions away from zero ensures a positive variance
contribution. The geometry provides their payoff difference; the
regularizer and noise determine whether the learner keeps enough
weight on both. For drift, \(v\) is constant on \(K\), so the
covariance with \(v\) vanishes for a response supported on that face.
The sufficient conditions in \appref{app:new-movement-sufficient}
also control the inverse-curvature weights outside the face, making
this drift negligible after localization.

The noise error concerns the closeness of adjacent marginal laws.
For example, under fresh noise with common marginal
\(z\sim\mathcal D\), one can couple the scaled noises as
\(\widehat\Xi_s^0=\eta_{2s}z\) and
\(\widehat\Xi_s^1=\eta_{2s+1}z\), regardless of their actual temporal
dependence. Under a power-law schedule and finite variance, the
expected span of their difference is \(O(s^{-\rho-1})\), which is
\(o(\eta_{2s+1})\). Payoff stability then controls
\(\mathcal N_s\). Cumulative noise requires a separate check of
this condition, carried out for the models in
\appref{app:movement-noise-column}.

Growing relative noise can instead concentrate the response near
vertices and remove a uniform positive bound on \(\mathcal K_s\).
The matching-pennies example in \appref{app:movement-below-half}
shows the consequence for entropy, Euclidean, and negative Tsallis
regularization: with cumulative i.i.d.\ finite-variance noise of
positive-definite covariance and \(0<\rho<1/2\), alternating play
earns \(\Theta(\sqrt T)\), while every oblivious trajectory earns
at most \(O(\sqrt T)\). Thus the
\(\Omega(T^{1-\rho})\) guarantee depends on persistent
responsiveness as well as the ability to construct an alternating
pair. The next section verifies these conditions for concrete noise
models.


\section{Noise Models and Exploitation Rates}
\label{sec:noise-examples}

We now apply the response limits and exploitation guarantees to concrete
feedback models. \Cref{tab:main-noise-examples} brings the conclusions
together: its rows describe the noise and limiting reference response,
while its columns show the surplus available from normal, selection,
and movement exploitation.

\paragraph{Noise models and common assumptions.}
Fix a max--min strategy \(x^\star\), let \(K=\br(x^\star)\), and use
\[
  \eta_t=\eta_0(t+1)^{-\rho},\qquad
  \eta_0>0,\qquad 0<\rho<1.
\]
The regularizer is negative entropy, Euclidean, or negative
Tsallis-\(q\), \(0<q<1\). The game, regularizer, noise law, and
chosen baseline \(x^\star\) are fixed as \(T\) varies. The common
perturbation law satisfies
\begin{equation*}
  z\sim\mathcal D,\qquad \mathbb E[z]=0,\qquad
  \mathbb E\lVert z\rVert_2^2<\infty.
  \tag{FV}\label{eq:finite-variance-noise}
\end{equation*}
This includes centered Gaussian, bounded, Laplace, and Student-\(t\)
noise with more than two degrees of freedom.

In the \emph{fresh} model, \(Z_t=z_t\), where each \(z_t\) has
marginal law \(\mathcal D\); arbitrary temporal dependence is allowed.
In the \emph{cumulative} model,
\(Z_t=\sum_{\tau<t}z_\tau\), with i.i.d.\ increments of law
\(\mathcal D\). For the cumulative rows, assume additionally
\(\Sigma:=\Cov(z)\succ0\), and let \(G\sim\mathcal N(0,\Sigma)\).
This nondegeneracy excludes ties between limiting Gaussian coordinates.

\paragraph{Reading the table.}
The response column concerns repeated play of \(x^\star\), with
\(y_t^\star=Q_h(-r_t\delta+\Xi_t)\). For the three regularizers,
write \(y_K=|K|^{-1}\sum_{j\in K}e_j\). The normal column uses
the regularizer-dependent scale
\[
  \mathcal N_{h,\rho}(T):=
  \begin{cases}
    1,&h\text{ is entropy or Euclidean},\\
    1,&h\text{ is Tsallis-}q,\ \rho<q,\\
    \log(T+1),&h\text{ is Tsallis-}q,\ \rho=q,\\
    T^{(\rho-q)/(1-q)},&h\text{ is Tsallis-}q,\ \rho>q.
  \end{cases}
\]
Normal entries are upper bounds for a fixed baseline strategy. Selection
and movement entries are lower bounds for the corresponding active
strategies. Every selection entry requires
\(g_{\rm mean}(x^\star;\mu)>0\), where \(\mu\) is the mean of the
limiting response in that row. Every movement entry requires the
game geometry in \cref{ass:main-new-geometry}. The column headings link
to the derivations in the appendix.

\begin{table}[!htbp]
\centering
\small
\setlength{\tabcolsep}{4pt}
\renewcommand{\arraystretch}{1.2}
\begin{tabularx}{\linewidth}{@{}>{\raggedright\arraybackslash}p{0.285\linewidth}>{\raggedright\arraybackslash}p{0.225\linewidth}*{3}{>{\centering\arraybackslash}X}@{}}
\toprule
Noise model
& \hyperref[app:cumulative-finite-variance]{Limiting response}
& \hyperref[app:normal-noise-column]{Normal}
& \hyperref[app:selection-noise-column]{Selection}
& \hyperref[app:movement-noise-column]{Movement}\\
\midrule
Noiseless,
\(0<\rho<1\)
& Uniform on \(K\)
& \(O(\mathcal N_{h,\rho}(T))\)
& \(\Omega(T^\rho)\)
& \(\Omega(T^{1-\rho})\)\\
\addlinespace
Fresh finite-variance,\newline \(0<\rho<1\)
& Uniform on \(K\)
& \(O(\mathcal N_{h,\rho}(T))\)
& \(\Omega(T^\rho)\)
& \(\Omega(T^{1-\rho})\)\\
\addlinespace
Cumulative i.i.d.
finite-variance,
\(\rho>1/2\)
& Uniform on \(K\)
& \(O(\mathcal N_{h,\rho}(T))\)
& \(\Omega(T^\rho)\)
& \(\Omega(T^{1-\rho})^\dagger\)\\
\addlinespace
Cumulative i.i.d.
finite-variance,
\(\rho=1/2\)
& \(Q_{h,K}(\eta_0G_K)\)
& \(O(\mathcal N_{h,\rho}(T))\)
& \(\Omega(\sqrt T)\)
& \(\Omega(\sqrt T)^\dagger\) \\
\addlinespace
Cumulative i.i.d.
finite-variance,
\(0<\rho<1/2\)
& \(e_{\argmax_{j\in K}G_j}\)
& \(O(\mathcal N_{h,\rho}(T))\)
& \(\Omega(T^\rho)\)
& (\hyperref[app:movement-below-half]{\(\star\)})\\
\bottomrule
\end{tabularx}
\caption{Limiting reference responses and expected surplus bounds under the
setup in this section. Normal entries are upper bounds for repeated
\(x^\star\); selection and movement entries are lower bounds for their
respective constructions. Selection requires a profitable limiting mean,
and movement requires \cref{ass:main-new-geometry}.
\(\dagger\): the additional coupling condition stated below.
\(\star\): the nominal \(\Omega(T^{1-\rho})\) movement rate need not
hold; matching pennies has optimal order \(\sqrt T\)
(\appref{app:movement-below-half}). Uniform-response limits hold in
probability; random-response limits hold in distribution.}
\label{tab:main-noise-examples}
\end{table}

\paragraph{Why cumulative noise has a threshold.}
Fresh perturbations vanish after scaling:
\(\Xi_t=\eta_tz_t\pto0\). Consequently, the response converges to
\(y_K\) for every \(0<\rho<1\), just as in the noiseless model.
Cumulative perturbations instead have fluctuation scale
\[
  \eta_t\sqrt t\sim\eta_0t^{1/2-\rho}.
\]
For \(\rho>1/2\), the scaled noise vanishes and the same uniform
response is selected. At \(\rho=1/2\), the central limit theorem
gives \(\Xi_t\dto\eta_0G\), so the response retains the random
mixed limit \(Q_{h,K}(\eta_0G_K)\). For \(0<\rho<1/2\), relative
noise grows and the response approaches the vertex with the largest
Gaussian score. These are \Regimes~\regimeref{I}--\regimeref{III}.
They describe the response under the baseline; the exploitation
columns describe what the optimizer can earn by its chosen strategy.

\newpage
\paragraph{Normal and selection surplus.}
Finite variance makes the large-noise contribution summable for both
models. The remaining normal surplus depends on the regularizer's
response tail: entropy gives a summable exponential tail in \(r_t\),
Euclidean regularization truncates the corresponding response bound,
and Tsallis gives a polynomial tail. Summing it yields the three
Tsallis cases above, including the logarithmic boundary at \(\rho=q\)
(\appref{app:normal-noise-column}).

Selection uses the mean response rather than its realized dispersion.
The response limits in the table give convergence of means because
simplex coordinates are bounded. Whenever the resulting mean is
profitable, \cref{thm:main-selection-exploitation} gives
\(\Omega(1/\eta_T)=\Omega(T^\rho)\) in every row. The noise model
changes which mean is selected, and therefore whether the profitability
condition holds. In particular, a uniform response on \(K\) need not
be minimax, and centered noise alone does not imply a uniform limiting
mean in the two random-response regimes
(\appref{app:selection-noise-column}).

\paragraph{Movement surplus and the two qualifications.}
For noiseless and fresh finite-variance noise, the alternating
construction earns \(\Omega(T^{1-\rho})\) throughout \(0<\rho<1\).
For cumulative noise with \(\rho\ge1/2\), the same rate follows
provided the adjacent scaled noise marginals admit couplings satisfying
\[
  \mathbb E\spnorm{\widehat\Xi_s^0-\widehat\Xi_s^1}
      =o(\eta_{2s+1}).
\]
This is the condition marked by \(\dagger\). Finite variance alone
does not guarantee it. \appref{app:dagger-noise-models} verifies it for
centered Gaussian increments and the specified Gaussian scale-mixture
or independent-coordinate models of symmetric Laplace and
Student-\(t\) increments with more than two degrees of freedom.

For the \(\star\) row, growing relative noise can make the response
nearly pure and less sensitive to alternating play. In matching pennies
with cumulative noise and any of the three regularizers, alternating
play earns \(\Theta(\sqrt T)\), and every oblivious trajectory earns
at most \(O(\sqrt T)\) (\appref{app:movement-below-half}). Here the
limiting mean is minimax, so selection has no positive gap. Since
\(\sqrt T=o(T^{1-\rho})\) for \(0<\rho<1/2\), this example rules
out a general \(\Omega(T^{1-\rho})\) movement guarantee in this row.

\section{Conclusion}
\label{sec:conclusion}
Noisy feedback does not generally eliminate the advantage of
anticipating a learner's update rule. For FTRL with decaying learning
rates, active play can earn growing surplus even when summability
conditions bound the gains from every constant strategy.
A profitable limiting mean response supports
a new selection mechanism, yielding \(\Omega(1/\eta_T)\) surplus
without knowing the horizon or observing the realized noise. Even
when that mean is minimax, alternating play can extract
\(\Omega(\sum_{t<T}\eta_t)\) under suitable game geometry,
responsiveness, and stability conditions.

These mechanisms connect surplus to the inverse-rate and cumulative-rate
scales in classical FTRL regret bounds. Both yield \(\Omega(\sqrt T)\)
at \(\eta_t\propto t^{-1/2}\) under their respective conditions.
Noise can make the learner react less to the optimizer's actions,
reducing the gains from alternating play. We therefore examine
whether the optimizer can profit against the learner's average strategy
or induce and profit from changes in that strategy. The appendix
provides detailed proofs, additional examples, and analysis of
the noise models and surplus rates.

\clearpage
\section*{AI Use Statement}
During the preparation of this work, we used GPT-5.6 Sol,
GPT-6 Sol, and GPT-6.1 Sol primarily to polish the writing and
improve the exposition. GPT-6 Sol also assisted with the simulation code.
GPT-Astra and GPT-6 Pro also helped clarify
proof strategies and organize the presentation, building on the
authors' research notes and manuscript. In selection exploitation,
the authors first developed the argument for a known horizon, and
GPT-6 Pro assisted in extending it to an unknown horizon. GPT-6 Pro
also helped analyze a matching-pennies example showing
that alternative surplus cannot improve beyond the \(\sqrt{T}\) scale
in that setting. The authors take full responsibility for the final
content, including all AI-assisted material, and for verifying the
statements, proofs, and citations before submission.

\printbibliography[heading=bibintoc]
\clearpage
\appendix

\section{A Primer in Convex Analysis, Regularized Learning, and Probability}
\label{app:preliminaries}

\noindent\textit{A short roadmap. Think of this primer as a trail with
three stops: relative scores, regularized responses, and random limits.
Start with \cref{app:preliminary-notation,app:preliminary-noisy-ftrl}
for the game and score conventions; continue through
\cref{app:preliminary-convex-analysis,app:preliminary-kkt,app:preliminary-kernel-examples}
to see how convexity turns scores into strategies; finish with
\cref{app:preliminary-probability} for the probability tools.
Already comfortable with a stop? Follow the references straight to
the tool you need. Definitions from the main text are used here
without being repeated.}

\subsection{Game notation and relative scores}
\label{app:preliminary-notation}

We use the game, simplex, best-response set, and value from
\cref{sec:preliminaries}, and the face and span seminorm from
\cref{sec:learner-dynamics-maxmin}. Write \([d]=\{1,\ldots,d\}\),
\(\langle u,v\rangle=u^\top v\), and \(\mathbf1\) for the all-ones
vector in the relevant dimension.

\begin{definition}[Support]
\label{def:preliminary-support}
The support of \(p\in\Delta_d\) is
\(\supp(p):=\{j\in[d]:p_j>0\}\), the set of actions assigned
positive probability.
\end{definition}
\begin{definition}[Max--min and min--max strategies]
\label{def:preliminary-minimax-strategies}
An optimizer strategy \(x^\star\in\Delta_n\) is max--min if
\(\min_{y\in\Delta_m}x^{\star\top}Ay=\val\).
A learner strategy \(y^\star\in\Delta_m\) is min--max if
\(\max_{x\in\Delta_n}x^\top Ay^\star=\val\).
\end{definition}

For \(u\in\mathbb R^m\), \(u_K=(u_j)_{j\in K}\) restricts \(u\) to
the coordinates in \(K\). The relative interior of the face is
\(\operatorname{ri}(\Delta_K)=\{y\in\Delta_K:y_j>0\text{ for every }j\in K\}\).
Mixed best responses are exactly the strategies supported on
\(\br(x)\): a weighted average of the coordinates of \(A^\top x\)
attains their minimum precisely when it assigns mass only to
minimizing coordinates.

The finite-game minimax theorem gives
\(\val=\min_{y\in\Delta_m}\max_{x\in\Delta_n}x^\top Ay\).
Thus \(x^{\star\top}Ay\ge\val\) for all \(y\), while
\(x^\top Ay^\star\le\val\) for all \(x\). For \(K=\br(x^\star)\),
\(y\in\Delta_K\) implies \(x^{\star\top}Ay=\val\).
Common score shifts leave the choice map unchanged, as shown in
\cref{sec:learner-dynamics-maxmin}. The span seminorm discards these
irrelevant shifts.

\begin{remark}
  The span seminorm is a classical tool in Markov decision process theory, including the analysis of value iteration. See \citet{puterman1994markov}. It measures variation across coordinates while disregarding a common additive constant. Motivated by this use in dynamic programming, we adopt it here to measure relative score perturbations. Indeed, the FTRL choice map satisfies \(Q_h(u+c\mathbf1)=Q_h(u)\), so only differences between score coordinates can affect the learner’s response. The span seminorm measures precisely the largest such difference.
\end{remark}

This is the score range after disregarding common shifts.
The following lemma makes this comparison precise.

\begin{lemma}
\label{lem:preliminary-span-centering}
  For every \(u\in\mathbb R^m\),
\(\inf_{c\in\mathbb R}\|u-c\mathbf1\|_\infty =\frac12\|u\|_{\mathrm{sp}}\).
The infimum is attained at
$c^\star=\frac{\max_j u_j+\min_j u_j}{2}.$
Moreover, for every \(p,q\in\Delta_m\),
\[
|\langle u,p-q\rangle|
\le \frac12\|u\|_{\mathrm{sp}}\|p-q\|_1
\le \|u\|_{\mathrm{sp}}.
\]
\end{lemma}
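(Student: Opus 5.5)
The plan has two parts, one for the centering identity and one for the pairing bound; both are elementary, and the only point needing care is the factor \(\tfrac12\).

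\emph{Centering identity.} Write \(M=\max_j u_j\) and \(m_0=\min_j u_j\), so \(\spnorm{u}=M-m_0\). For any \(c\in\R\), the coordinates attaining \(M\) and \(m_0\) give
\[
\|u-c\mathbf1\|_\infty\ge\max\{M-c,\;c-m_0\}\ge\tfrac12\bigl((M-c)+(c-m_0)\bigr)=\tfrac12(M-m_0).
\]
This is the lower bound. For attainment, take \(c^\star=(M+m_0)/2\). Every coordinate satisfies \(m_0\le u_j\le M\), hence \(|u_j-c^\star|\le(M-m_0)/2\), so the infimum equals \(\tfrac12\spnorm{u}\) and is attained at \(c^\star\).

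\emph{Pairing bound.} Fix \(p,q\in\Delta_m\). Both sum to one, so \(\langle \mathbf1,p-q\rangle=0\), and therefore \(\langle u,p-q\rangle=\langle u-c\mathbf1,p-q\rangle\) for every \(c\). Hölder's inequality gives
\[
|\langle u,p-q\rangle|\le\|u-c\mathbf1\|_\infty\|p-q\|_1
\]
for every \(c\). Choosing \(c=c^\star\) from the first part yields \(\tfrac12\spnorm{u}\|p-q\|_1\). The final inequality follows from the triangle inequality \(\|p-q\|_1\le\|p\|_1+\|q\|_1=2\).

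The key observation is the shift invariance of the pairing against \(p-q\). Without it, Hölder alone would give \(\|u\|_\infty\) in place of half the span.
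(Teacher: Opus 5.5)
Your proof is correct and follows essentially the same route as the paper. The two extreme coordinates give \(\|u\|_{\mathrm{sp}}\le 2\|u-c\mathbf1\|_\infty\) for every \(c\), with equality at \(c^\star\); then \(\langle\mathbf1,p-q\rangle=0\) lets you recenter at \(c^\star\) before applying H\"older's inequality and \(\|p-q\|_1\le2\).
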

\begin{proof}
  Subtracting a common scalar preserves the difference between the largest and smallest coordinates. For any \(c\in\mathbb R\), these two centered coordinates therefore give
\(\|u\|_{\mathrm{sp}}\le 2\|u-c\mathbf1\|_\infty\).
At \(c=c^\star\), every centered coordinate lies between \(-\|u\|_{\mathrm{sp}}/2\) and \(\|u\|_{\mathrm{sp}}/2\), and the extreme coordinates attain these endpoints. Thus equality holds, proving the centering identity.
Since \(p\) and \(q\) are probability vectors, \(\langle\mathbf1,p-q\rangle=0\). Subtracting \(c^\star\mathbf1\) and applying Hölder’s inequality yields
\[
  |\langle u,p-q\rangle| =|\langle u-c^\star\mathbf1,p-q\rangle| \le \|u-c^\star\mathbf1\|_\infty\|p-q\|_1 =\frac12\|u\|_{\mathrm{sp}}\|p-q\|_1.
\]
The final bound follows from \(\|p-q\|_1\le2\).
\end{proof}

\subsection{Regularized learning and the noisy score}
\label{app:preliminary-noisy-ftrl}

Use the cumulative score, scaled noise, and deterministic score scale
from \cref{sec:preliminaries}. The update \eqref{eq:ftrl} can be
written as \(y_t=Q_h(\eta_tS_t+\Xi_t)\).

Because \(\eta_t>0\), multiplying the objective by
\(1/\eta_t\) gives the equivalent FTRL formulation
\[
  y_t=\argmax_{y\in\Delta_m}
       \left\{\sum_{\tau<t}x_\tau^\top By
                    +\langle Z_t,y\rangle-\frac{h(y)}{\eta_t}\right\}.
\]
The current learning rate rescales the entire
cumulative score.
In particular,
the deterministic part of the score entering \(Q_h\) is
\(\eta_t\sum_{\tau<t}B^\top x_\tau\),
rather than a sum with a
different learning rate attached to each past
payoff.

This convention explains the schedule error in
the movement argument.
Writing \(u_t=\eta_tS_t+\Xi_t\) for the actual choice-map input,
the identity \(S_{t+1}=S_t+B^\top x_t\) yields
\begin{equation*}
  u_{t+1}-u_t
    =(\eta_{t+1}-\eta_t)S_t
        +\eta_{t+1}B^\top x_t+\Xi_{t+1}-\Xi_t.
\end{equation*}
The response can change because past payoffs
are rescaled,
a new
payoff is added,
or the scaled perturbation changes.
These are the
three changes treated separately in
\appref{app:new-movement-pair}.

Under constant max--min play, use \(K\), \(\delta\), and the
reference-score representation \eqref{eq:main-fixed-response}.
For \(t\ge1\),
\(\frac{\spnorm{\Xi_t}}{r_t}=\frac{\spnorm{Z_t}}{t}\).
The normal comparison measures cumulative noise against
cumulative
payoff gaps.
The tangential comparison instead measures
\(\Xi_{t,K}\) against the fixed regularizer on \(\Delta_K\).
At \(t=0\),
the update is \(y_0=Q_h(\Xi_0)\); no division by
\(r_0=0\) is needed.

Fresh and cumulative perturbations are specified in
\cref{sec:noise-examples}.
Conditions on their laws are stated where needed, without imposing independence
on the general process. The optimizer trajectories are deterministic
and fixed independently of the realized noise, as in the main text.
The expected external-regret benchmark and its relation to surplus
are given in \appref{app:maxmin-regret}; the sublinear-regret
condition there is an additional requirement on
the learner's
performance under the prescribed schedule and noise.

\subsection{Convex duality and response stability}
\label{app:preliminary-convex-analysis}

The choice map is the optimizer of a regularized linear objective.
Its convex conjugate records the value of that objective and makes
the dependence of the response on scores explicit.

\begin{definition}[Convex conjugate]
\label{def:preliminary-conjugate-choice}
Let \(h\) satisfy \cref{def:regularizer}, and extend it by
\(+\infty\) outside \(\Delta_m\). Its convex conjugate is
\begin{equation}
  h^*(u):=\sup_{y\in\mathbb R^m}\{\langle u,y\rangle-h(y)\}
         =\max_{y\in\Delta_m}\{\langle u,y\rangle-h(y)\}.
  \label{eq:preliminary-conjugate}
\end{equation}
The maximizer is the choice map from \eqref{eqn:choice-map}.
For this extension, the effective domain is
\(\operatorname{dom}(h):=\{y\in\mathbb R^m:h(y)<+\infty\}=\Delta_m\).
\end{definition}

Continuity of \(h\) on the compact simplex gives a finite
minimum. For every \(u\), the regularized linear objective is also
continuous on that simplex, so the displayed maximum is finite
and attained.

\begin{lemma}[Choice map and conjugate]
\label{lem:preliminary-choice-conjugate}
For any regularizer satisfying \cref{def:regularizer},
\(h^*\) is continuously differentiable on \(\mathbb R^m\),
and
\(\nabla h^*(u)=Q_h(u)\).
Moreover,
\(h(y)+h^*(u)-\langle u,y\rangle\ge0\) for
\(y\in\operatorname{dom}(h)\),
with equality exactly when
\(y=Q_h(u)\).
\end{lemma}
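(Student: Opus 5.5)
The statement to prove is \cref{lem:preliminary-choice-conjugate}: \(h^*\) is \(C^1\) with \(\nabla h^*=Q_h\), and the Fenchel--Young gap vanishes exactly at \(Q_h(u)\). I will argue directly from the definition \eqref{eq:preliminary-conjugate} and the facts already recorded: \(h\) is continuous and strictly convex on the compact simplex, and \(Q_h(u)\) is the unique maximizer. The proof has three steps.

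\emph{Step 1: Fenchel--Young.} For every \(y\in\Delta_m\), the definition gives \(h^*(u)\ge\langle u,y\rangle-h(y)\), which is the inequality. Equality holds iff \(y\) attains the maximum, i.e.\ iff \(y=Q_h(u)\) by uniqueness from strict convexity.

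\emph{Step 2: convexity, finiteness, and a subgradient.} \(h^*\) is a pointwise maximum of affine functions of \(u\). It is therefore convex, and it is finite everywhere by the attainment remark before the lemma. Let \(y_u:=Q_h(u)\). For any \(u'\),
\[
h^*(u')\ge\langle u',y_u\rangle-h(y_u)=h^*(u)+\langle u'-u,y_u\rangle ,
\]
so \(y_u\in\partial h^*(u)\).

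\emph{Step 3: uniqueness of the subgradient, hence differentiability.} Take any \(g\in\partial h^*(u)\). I will use the Danskin-type formula: for a maximum over a compact set of functions continuous jointly in \((y,u)\) and linear in \(u\), the directional derivative is
\[
(h^*)'(u;d)=\max_{y\in\operatorname{argmax}}\langle d,y\rangle .
\]
Since the argmax is the singleton \(\{y_u\}\), the directional derivative is linear in \(d\), and this forces \(\partial h^*(u)=\{y_u\}\).

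This Danskin step is the main obstacle. To keep the argument self-contained, I would instead verify it directly as follows.
\begin{itemize}
\item First prove that \(Q_h\) is continuous. Take \(u_k\to u\). By compactness, any subsequence of \(Q_h(u_k)\) has a further subsequence converging to some \(\bar y\). Passing to the limit in \(\langle u_k,Q_h(u_k)\rangle-h(Q_h(u_k))\ge\langle u_k,y\rangle-h(y)\), using continuity of \(h\), shows that \(\bar y\) is a maximizer for \(u\). By uniqueness \(\bar y=y_u\).
\item Then sandwich the increment:
\[
\langle u'-u,y_u\rangle\le h^*(u')-h^*(u)\le\langle u'-u,y_{u'}\rangle .
\]
The upper bound is Step 2 applied at \(u'\). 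Subtracting the lower bound leaves an error of at most \(\|u'-u\|\,\|y_{u'}-y_u\|\), which is \(o(\|u'-u\|)\) by continuity of \(Q_h\).
\end{itemize}
This gives Fréchet differentiability with \(\nabla h^*(u)=Q_h(u)\). Since \(Q_h\) is continuous, \(h^*\) is continuously differentiable.
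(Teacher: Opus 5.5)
Your proof is correct, but it follows a more elementary route than the paper's. The paper relies on two standard convex-analysis facts. First, \(\partial h^*(u)\) equals the set of maximizers in \eqref{eq:preliminary-conjugate}, which is a singleton by strict convexity. Second, a finite convex function whose subdifferential is a singleton everywhere is differentiable and automatically \(C^1\). The Fenchel--Young part is then read off from the definition, exactly as in your Step 1.

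You use only the easy inclusion \(Q_h(u)\in\partial h^*(u)\) from your Step 2. You then prove continuity of \(Q_h\) directly by compactness; this is the same argument the paper uses later for \(Q_{h,K}\) in \cref{lem:continuity}. Differentiability then follows from the two-sided bound \(0\le h^*(u')-h^*(u)-\langle u'-u,Q_h(u)\rangle\le\langle u'-u,Q_h(u')-Q_h(u)\rangle\).

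This sidesteps the reverse inclusion \(\partial h^*(u)\subseteq\operatorname{argmax}\). That inclusion rests either on \(h^{**}=h\), that is, on closedness of \(h\) extended by \(+\infty\), or on a Danskin-type formula for maxima over a compact set. Your route also avoids the theorem that differentiable convex functions are \(C^1\), since continuity of the gradient is exactly the continuity of \(Q_h\) you already established.

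The paper's version is shorter because it leans on cited results. Yours is self-contained and records continuity of \(Q_h\) explicitly along the way. The Danskin remark and the opening of your Step 3, which takes an arbitrary \(g\in\partial h^*(u)\), are superseded by the sandwich argument and can simply be dropped.
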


\begin{proof}
Strict convexity makes the maximizer in
\eqref{eq:preliminary-conjugate} unique.
The subdifferential of the
conjugate consists of the maximizing strategies,
so it is the
singleton \(\{Q_h(u)\}\).
A finite convex function with a singleton
subdifferential everywhere is continuously differentiable,
giving
the gradient identity.
The remaining assertion follows by comparing
the objective at \(y\) with its maximum.
Equality holds precisely
at the unique maximizer.
\end{proof}

Thus the response is continuous under the
general assumptions.
A uniform lower bound on curvature gives
a Lipschitz estimate for
how far it moves.
We use the following definitions,
which also apply when
\(h\) is not differentiable at the boundary
of the simplex.

\begin{definition}[Dual norm]
\label{def:preliminary-dual-norm}
For a norm \(\|\cdot\|\) on \(\mathbb R^m\), its dual norm is
\(\|u\|_*:=\sup_{\|z\|\le1}\langle u,z\rangle\).
\end{definition}

\begin{definition}[Strong convexity]
\label{def:preliminary-curvature}
A convex function \(h\) is \(\alpha\)-strongly convex with respect
to \(\|\cdot\|\), for \(\alpha>0\), if for every
\(p,q\in\operatorname{dom}(h)\) and \(s\in[0,1]\),
\(h((1-s)p+sq) \le(1-s)h(p)+sh(q) -\frac\alpha2s(1-s)\|p-q\|^2\).
\end{definition}

\begin{definition}[Smoothness]
\label{def:preliminary-smoothness}
A differentiable convex function \(f:\mathbb R^m\to\mathbb R\)
is \(\beta\)-smooth with respect to the dual
norm if
\[
  f(u)\le f(v)+\langle\nabla f(v),u-v\rangle
                       +\frac\beta2\|u-v\|_*^2
  \qquad(u,v\in\mathbb R^m).
\]
\end{definition}
\begin{definition}[Bregman divergence]
\label{def:preliminary-bregman}
For a convex function \(f\), finite at \(p\) and differentiable at
\(q\), its Bregman divergence is
\(D_f(p,q):=f(p)-f(q)-\langle\nabla f(q),p-q\rangle\).
\end{definition}

Strong convexity supplies a uniform quadratic cost for changing the
strategy. Smoothness controls first-order error in the score potential.
The standard conjugacy relation connects the two
estimates~\citep{orabona2019modern}. The Bregman divergence records
this first-order error.

It is nonnegative by convexity. For \(f=h\), we use this expression
only where \(h\) is finite at \(p\) and differentiable relative to
the simplex at \(q\). The inner product is independent of the
choice of a gradient representative because \(p-q\) sums to zero.
For \(f=h^*\), \cref{lem:preliminary-choice-conjugate} gives
the globally defined score divergence
\(D_{h^*}(u,v)=h^*(u)-h^*(v)-\langle Q_h(v),u-v\rangle\).
This definition avoids evaluating a primal gradient
at a boundary
response,
which is important for steep regularizers.

\begin{proposition}[Quantitative stability under strong convexity]
\label{prop:preliminary-strong-stability}
If \(h\) satisfies \cref{def:regularizer} and is
\(\alpha\)-strongly convex with respect to \(\|\cdot\|\),
then,
for all \(u,v\in\mathbb R^m\),
\begin{align*}
\|Q_h(u)-Q_h(v)\|&\le\frac1\alpha\|u-v\|_*,\\
0\le D_{h^*}(u,v)&\le\frac1{2\alpha}\|u-v\|_*^2.
\end{align*}
In particular,
\(h^*\) is \(1/\alpha\)-smooth.
If strong convexity
is with respect to \(\|\cdot\|_1\),
shift invariance sharpens the
first bound to
\(\|Q_h(u)-Q_h(v)\|_1 \le\frac1{2\alpha}\spnorm{u-v}\).
\end{proposition}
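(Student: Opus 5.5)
The argument runs through first-order optimality of the two maximization problems defining $Q_h(u)$ and $Q_h(v)$, with strong convexity turning the resulting monotonicity into a Lipschitz bound. Write $p=Q_h(u)$ and $q=Q_h(v)$.

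\textbf{Step 1: optimality with strong convexity.} Since $h$ may not be differentiable at the boundary, I avoid gradients of $h$. Instead I use the following consequence of $\alpha$-strong convexity and optimality of $p$ for the concave objective $F_u(y)=\langle u,y\rangle-h(y)$. For every $y\in\Delta_m$,
\[
F_u(p)-F_u(y)\ge\tfrac\alpha2\|p-y\|^2 .
\]
To prove this, compare $p$ with $(1-s)p+sy$. Optimality of $p$ together with the strong-convexity inequality gives $0\le s\bigl(F_u(p)-F_u(y)\bigr)-\frac\alpha2 s(1-s)\|p-y\|^2$. Dividing by $s$ and letting $s\downarrow0$ yields the claim.

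Apply this at $(u,p,y=q)$ and at $(v,q,y=p)$ and add. The $h$ terms cancel, leaving
\[
\langle u-v,p-q\rangle\ge\alpha\|p-q\|^2 .
\]
Hölder's inequality bounds the left side by $\|u-v\|_*\|p-q\|$, which gives the first bound.

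\textbf{Step 2: the Bregman bound.} Nonnegativity of $D_{h^*}(u,v)$ is convexity of $h^*$, using \cref{lem:preliminary-choice-conjugate}. For the upper bound, set $u_\tau=v+\tau(u-v)$. By \cref{lem:preliminary-choice-conjugate}, $h^*$ is $C^1$ with gradient $Q_h$, so
\[
D_{h^*}(u,v)=\int_0^1\langle Q_h(u_\tau)-Q_h(v),u-v\rangle\,d\tau .
\]
By Step 1 the integrand is at most $\tau\|u-v\|_*^2/\alpha$. Integrating gives $\|u-v\|_*^2/(2\alpha)$, and this is exactly the $1/\alpha$-smoothness of $h^*$.

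\textbf{Step 3: the $\ell_1$/span sharpening.} The dual of $\|\cdot\|_1$ is $\|\cdot\|_\infty$. Since $Q_h(w+c\mathbf1)=Q_h(w)$, we may replace $u-v$ by $u-v-c\mathbf1$ for any $c$. Alternatively, in Step 1 we can use $\langle\mathbf1,p-q\rangle=0$ directly. Either way, \cref{lem:preliminary-span-centering} gives
\[
\langle u-v,p-q\rangle\le\tfrac12\spnorm{u-v}\|p-q\|_1 .
\]
Combining with Step 1 yields $\|p-q\|_1\le\spnorm{u-v}/(2\alpha)$.

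The only delicate point is Step 1: the non-differentiable boundary. The directional-derivative argument along segments handles it cleanly, because it uses only convexity inequalities and not gradients of $h$.
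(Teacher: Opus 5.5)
Your proposal is correct and follows the paper's proof: you add the two optimality inequalities to get $\alpha\|p-q\|^2\le\langle u-v,p-q\rangle$, integrate $Q_h=\nabla h^*$ along the segment for the Bregman bound, and use centering for the span sharpening. Your segment argument for $F_u(p)-F_u(y)\ge\frac{\alpha}{2}\|p-y\|^2$ usefully spells out a step the paper only asserts, and your direct appeal to \cref{lem:preliminary-span-centering} is equivalent to the paper's shifting of $v$ by $c\mathbf{1}$.
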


\begin{proof}
The argument compares each response with the
other response in its
own optimization problem.
Adding the comparisons cancels the
regularizer values and relates the strategy distance
to the score
difference.

Let \(p=Q_h(u)\) and \(q=Q_h(v)\).
Strong convexity and
optimality give
\[
  h(q)-\langle u,q\rangle \ge h(p)-\langle u,p\rangle+\frac\alpha2\|p-q\|^2, \qquad h(p)-\langle v,p\rangle \ge h(q)-\langle v,q\rangle+\frac\alpha2\|p-q\|^2.
\]
Adding and using the dual-norm inequality yields
\(\alpha\|p-q\|^2 \le\langle u-v,p-q\rangle \le\|u-v\|_*\|p-q\|\).
If the responses differ,
divide by \(\|p-q\|\); otherwise the
Lipschitz bound holds directly.

To bound the potential error,
integrate its gradient along the score
segment from \(v\) to \(u\):
\[
  D_{h^*}(u,v) =\int_0^1\langle Q_h(v+s(u-v))-Q_h(v),u-v\rangle\,ds \le\int_0^1\frac{s}{\alpha}\|u-v\|_*^2\,ds =\frac1{2\alpha}\|u-v\|_*^2.
\]
This is the asserted smoothness bound.
For the \(\ell_1\) statement,
apply the Lipschitz bound to \(u\) and
\(v+c\mathbf1\),
which have
the same responses as \(u\) and \(v\).
Minimizing their
\(\ell_\infty\) distance over \(c\) gives half the
span.
\end{proof}


\subsection{Separable choice maps: KKT conditions and the boundary}
\label{app:preliminary-kkt}

For the coordinate bounds used in normal
exploitation,
we need to
turn relative scores into individual probabilities.
Separability
reduces this problem to one-dimensional maximization,
with a common
multiplier enforcing the simplex constraint.

Let \(h(y)=\sum_{j=1}^m\theta(y_j)\) satisfy
\cref{def:separable-regularizer}. We first introduce the
scalar optimization problem that determines each coordinate.

\begin{definition}[Kernel conjugate]
\label{def:preliminary-kernel-conjugate}
The scalar convex conjugate of \(\theta\) on \([0,1]\) is
\(\theta^*(z):=\max_{0\le p\le1}\{zp-\theta(p)\}\).
\end{definition}

This scalar conjugate optimizes one coordinate on \([0,1]\);
the simplex conjugate \(h^*\) additionally couples all coordinates
through their sum. Strict convexity of \(\theta\) makes the scalar
maximizer unique. The following map records that maximizing
probability, including when it is an endpoint.

The truncated inverse derivative from \cref{sec:preliminaries}
also has the scalar representation
\(\phi(z)=\argmax_{0\le p\le1}\{zp-\theta(p)\}\).
In particular,
\begin{equation*}
  \phi(z)=
  \begin{cases}
    0,&z\le\theta'(0^+)\text{ and }\theta'(0^+)>-\infty,\\
    (\theta')^{-1}(z),&\theta'(0^+)<z<\theta'(1),\\
    1,&z\ge\theta'(1).
  \end{cases}
\end{equation*}

The scalar conjugate is differentiable and satisfies
\((\theta^*)'(z)=\phi(z)\).
The lower branch is absent for a steep kernel. The upper branch is
needed because each coordinate is at most one.

Fix \(u\in\mathbb R^m\).
Using the sign convention for \(\lambda\)
in the main text,
the Lagrangian for the normalization constraint is
\[
  \mathcal L(y,\lambda)
    =\sum_{j=1}^m\bigl[(u_j+\lambda)y_j-\theta(y_j)\bigr]-\lambda,
    \qquad y\in[0,1]^m.
\]
For fixed \(\lambda\),
each coordinate maximizes its scalar
objective at \(\phi(u_j+\lambda)\).
Choose \(\lambda\) so these
coordinates sum to one.
Continuity and monotonicity of \(\phi\),
together with its limits zero and one,
ensure that such a multiplier
exists.
The resulting vector is feasible,
and maximizes the
Lagrangian over the box.
Since the Lagrangian equals the original
objective on the simplex,
it is the unique response:
\begin{equation}
  Q_h(u)_j=\phi(u_j+\lambda),
  \qquad \sum_{j=1}^m\phi(u_j+\lambda)=1.
  \label{eq:preliminary-kkt-choice}
\end{equation}

For coordinates strictly between zero and one,
and for inactive
coordinates,
the corresponding KKT conditions read
\[
  0<y_j<1\quad\Longrightarrow\quad \theta'(y_j)=u_j+\lambda, \qquad y_j=0\quad\Longrightarrow\quad u_j+\lambda\le\theta'(0^+).
\]
At \(y_j=1\),
the condition is
\(u_j+\lambda\ge\theta'(1)\).
If the response is a vertex
\(e_j\),
one may choose \(\lambda=\theta'(1)-u_j\),
so the
active-coordinate equality used in the main text
holds there as well.
The multiplier itself can be nonunique when
truncation produces a
vertex; all admissible multipliers give the same
unique response.

In particular,
if two coordinates are positive,
subtracting their
KKT equalities eliminates the normalization multiplier:
\(\theta'(y_i)-\theta'(y_j)=u_i-u_j\).
This is the local comparison behind the
pointwise noisy-FTRL bounds
in \appref{sec:passive}.

\paragraph{Steepness and finite boundary values.}
Whether a finite relative score can eliminate an action depends on
the boundary slope, as in \cref{def:steep}.

For a steep kernel, \(\phi(z)>0\) at every
finite score, so \(Q_h(u)\) has full support. Non-steep kernels
have a finite cutoff and can assign exact zero probability to an
action~\citep{mertikopoulos2016learning}. This cutoff is a property
of the current response: later changes in relative scores can return
an action to the support.

Steepness does not require an infinite value
of \(\theta(0)\).
Negative entropy and negative Tsallis regularization both
have an
infinite boundary slope but finite,
continuous boundary values.
They therefore belong to the model in \cref{def:regularizer}
and can be restricted to \(\Delta_K\), even though their full-simplex
responses remain interior at every finite score.
The restricted
map is a separate optimization problem on
the face,
as defined in
\cref{eq:restricted-choice}.

\subsection{Standard kernels and their response maps}
\label{app:preliminary-kernel-examples}

We give the scalar formulas for the
three examples in the main text.
The inverse derivative controls probability tails in
normal
exploitation,
while inverse curvature controls response movement.

\paragraph{Negative entropy.}
For the kernel in \cref{sec:preliminaries}, take the continuous
extension \(\theta(0)=0\).

Its derivatives and boundary slope are
\[
  \theta'(p)=1+\log p,\qquad
  \theta''(p)=\frac1p,\qquad \theta'(0^+)=-\infty.
\]
Scalar maximization gives
\[
  \theta^*(z)=
  \begin{cases}
    \exp(z-1),&z\le1,\\
    z,&z\ge1,
  \end{cases}
  \qquad
  \phi(z)=\min\{1,\exp(z-1)\}.
\]
Normalization in \eqref{eq:preliminary-kkt-choice} then yields
\[
  Q_h(u)_j=\frac{\exp(u_j)}{\sum_{\ell=1}^m\exp(u_\ell)},
  \qquad h^*(u)=\log\!\left(\sum_{\ell=1}^m\exp(u_\ell)\right).
\]
Substituting \(u=\eta_tS_t+\Xi_t\) gives exponential weights on
the noisy scores~\citep{auer1995gambling}.
Its inverse curvature is
\(1/\theta''(p)=p\),
so responsiveness decreases as a coordinate's
probability approaches zero.

\paragraph{Euclidean regularization.}

Its derivatives and boundary slope are
\[
  \theta'(p)=p,\qquad \theta''(p)=1,\qquad \theta'(0^+)=0.
\]
The scalar conjugate and inverse derivative are
\[
  \theta^*(z)=
  \begin{cases}
    0,&z\le0,\\
    z^2/2,&0\le z\le1,\\
    z-1/2,&z\ge1,
  \end{cases}
  \qquad \phi(z)=\min\{1,\max\{0,z\}\}.
\]
Completing the square in the simplex objective
gives
\[
  Q_h(u)=\argmin_{y\in\Delta_m}\frac12\|y-u\|_2^2,
  \qquad Q_h(u)_j=\max\{0,u_j+\lambda\},
\]
where \(\lambda\) normalizes the coordinates.
This is Euclidean
projection onto the simplex~\citep{mertikopoulos2016learning}.
The kernel is non-steep,
and its inverse curvature is one on
every
positive coordinate.
Inactive coordinates are assigned zero
susceptibility in the movement analysis.

\paragraph{Negative Tsallis regularization.}
For the kernel in \cref{sec:preliminaries}, \(0<q<1\) and
\(\theta(0)=0\).

Its derivatives and boundary slope are
\[
  \theta'(p)=\frac{q p^{q-1}-1}{q-1},\qquad
  \theta''(p)=q p^{q-2},\qquad \theta'(0^+)=-\infty.
\]
The scalar formulas are
\[
\theta^*(z)=
  \begin{cases}
    \left(\dfrac{1+(q-1)z}{q}\right)^{q/(q-1)},&z\le1,\\
    z,&z\ge1,
  \end{cases}
  \qquad
  \phi(z)=
  \begin{cases}
    \left(\dfrac{1+(q-1)z}{q}\right)^{1/(q-1)},&z\le1,\\
    1,&z\ge1.
  \end{cases}
\]
The base of each power is positive
on the stated branch.
Thus \(Q_h(u)_j=\phi(u_j+\lambda)\),
with \(\lambda\) chosen
to enforce normalization.
This kernel gives the standard Tsallis
family~\citep{abernethy2015fighting}; it converges to negative entropy
as \(q\uparrow1\).
Its inverse curvature is
\(1/\theta''(p)=p^{2-q}/q\),
which also vanishes near zero.

All three kernels are finite, continuous, and strictly convex
on \([0,1]\), so their regularizers satisfy \cref{def:regularizer}.
Their inverse-curvature constants in
\cref{ass:new-movement-regularity} are \(C_\theta=1\), \(1\), and
\(1/q\), respectively. Their common kernel across coordinates
also implies that the regularizer minimizer on \(\Delta_K\) is
uniform on \(K\), as proved in
\cref{cor:uniform-separable-selection}.

\subsection{Probability laws,
mean responses,
and couplings}
\label{app:preliminary-probability}

The response analysis distinguishes convergence of a realized
strategy from convergence of its distribution.

\begin{definition}[Convergence in probability]
\label{def:preliminary-convergence-probability}
A sequence of random vectors \(U_t\) converges in probability to
\(U\), written \(U_t\pto U\), if for every \(\varepsilon>0\),
\(\Pr(\|U_t-U\|>\varepsilon)\longrightarrow0\).
\end{definition}

\begin{definition}[Convergence in distribution]
\label{def:preliminary-convergence-distribution}
A sequence of random vectors \(U_t\) converges in distribution to
\(U\), written \(U_t\dto U\), if
\(\mathbb E[f(U_t)]\to\mathbb E[f(U)]\) for every bounded
continuous function \(f\) on the ambient Euclidean space.
\end{definition}

\begin{definition}[Equality in distribution]
\label{def:preliminary-equality-distribution}
Random vectors \(U\) and \(V\) are equal in distribution, written
\(U\stackrel d=V\), if they have the same probability law.
\end{definition}

In finite dimensions, the particular norm used to define convergence in
probability does not affect the conclusion.

Responses take values in a compact simplex,
so their convergence
also controls the means without moment assumptions
on the raw noise.
For example,
if \(Y_t,Y\in\Delta_m\) and \(Y_t\pto Y\),
then
\(\mathbb E\|Y_t-Y\|_1 \le\varepsilon+2\Pr(\|Y_t-Y\|_1>\varepsilon)\)
for every \(\varepsilon>0\).
Taking the limit superior bounds the
expectation by \(\varepsilon\),
and then letting
\(\varepsilon\downarrow0\) gives
\(\mathbb E\|Y_t-Y\|_1\to0\).
If instead \(Y_t\dto Y\),
boundedness and continuity of the coordinate functions
on the
simplex give \(\mathbb E[Y_t]\to\mathbb E[Y]\).
In this case
the limiting response can remain random.
These are the two ways
the response results supply a limiting mean
for selection
exploitation.

The symmetry condition used to identify some of these means is
exchangeability.

\begin{definition}[Exchangeability]
\label{def:preliminary-exchangeability}
A random vector \(U_K\in\mathbb R^K\) has an exchangeable law if
for every permutation \(\sigma\) of \(K\),
\((U_{\sigma(j)})_{j\in K}\stackrel d=U_K\).
\end{definition}

For the symmetric separable regularizers above, permuting the scores
permutes the restricted response. An exchangeable score law therefore
gives an exchangeable response law and a uniform mean on \(K\).
Centering the noise alone does not imply this permutation symmetry,
as discussed in \appref{app:cumulative-finite-variance}.

To compare consecutive noise laws in movement exploitation, we use
couplings.

\begin{definition}[Coupling]
\label{def:preliminary-coupling}
A coupling of two laws is a joint law with those marginals.
Here \((\widehat\Xi_s^0,\widehat\Xi_s^1)\) couples adjacent noise
laws when \(\widehat\Xi_s^0\stackrel d=\Xi_{2s}\) and
\(\widehat\Xi_s^1\stackrel d=\Xi_{2s+1}\).
\end{definition}

The coupling need not reproduce temporal dependence.
For deterministic optimizer play, \(S_t\) is deterministic and
expected round payoffs depend only on the noise marginals.
For bounded measurable \(f_0,f_1\), every coupling satisfies
\(\mathbb E[f_0(\widehat\Xi_s^0)-f_1(\widehat\Xi_s^1)]
=\mathbb E[f_0(\Xi_{2s})]-\mathbb E[f_1(\Xi_{2s+1})]\).
Thus the noise-error bound in \appref{app:new-movement-pair}
may depend on the coupling, while the expected payoff difference does not.

\section{Proofs for Learner Dynamics under Max--Min Play}
\label{app:normal}

This appendix provides the proofs of
\cref{prop:must-be-average-maxmin,prop:main-localization,prop:main-face-reduction,thm:main-noise-regimes}
from \cref{sec:learner-dynamics-maxmin}.
We use the notation and noise regimes defined in the main text.
\appref{app:rps-noise-regimes} records the experimental details behind
\cref{fig:rps-three-triangles} and the accompanying trajectory plots.

\subsection{Regret decomposition and average max--min play}
\label{app:maxmin-regret}

The identity in \cref{prop:must-be-average-maxmin} follows by writing
the fixed-comparator payoff in terms of \(\bar x_T\).
We include the calculation and then prove the asymptotic consequence
used immediately after that proposition.

\begin{proposition}[Restatement of \cref{prop:must-be-average-maxmin}]
\label{prop:appendix-regret-decomposition}
For every optimizer trajectory \(x_{<T}\),
\[
  \Surplus(x_{<T})
    =\operatorname{Reg}_L(x_{<T})
       -T\left(\val-\min_{y\in\Delta_m}\bar x_T^\top Ay\right).
\]
\end{proposition}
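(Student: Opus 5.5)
The identity is purely algebraic, so I will prove it by substituting the definitions and checking that the two sides agree term by term. No convergence, noise, or regularizer assumptions enter; it holds for every oblivious trajectory and every horizon.

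First I write out the right-hand side using \eqref{eq:main-learner-regret}:
\[
  \operatorname{Reg}_L(x_{<T})=\Rew(x_{<T})-T\min_{y\in\Delta_m}\bar x_T^\top Ay .
\]
Subtracting \(T\bigl(\val-\min_{y}\bar x_T^\top Ay\bigr)\) makes the two terms \(\pm T\min_{y}\bar x_T^\top Ay\) cancel. What remains is \(\Rew(x_{<T})-T\val\), which is \(\Surplus(x_{<T})\) by \eqref{eq:discrete-surplus}.

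The only point I need to check is that \eqref{eq:main-learner-regret} is indeed the learner's expected external regret. The learner's payoff is \(x_t^\top By=-x_t^\top Ay\). Its best fixed payoff in hindsight is therefore
\[
  \max_{y}\sum_{t<T}(-x_t^\top Ay)=-T\min_{y}\bar x_T^\top Ay ,
\]
using linearity in the deterministic \(x_t\). Its expected realized payoff is \(-\Rew(x_{<T})\), so the regret is the difference of these two quantities, which is exactly \eqref{eq:main-learner-regret}.

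The interchange of expectation and sum in \(\Rew\) is harmless because payoffs are bounded by \(\max|A_{ij}|\). I expect no real obstacle beyond keeping the sign conventions for \(B=-A\) straight.
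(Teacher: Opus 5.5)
Your proof is correct and follows the paper's argument: both substitute the regret identity \eqref{eq:main-learner-regret} into the definition of surplus, cancel the $\pm T\min_{y}\bar x_T^\top Ay$ terms, and use linearity in the deterministic $x_t$ to identify the best fixed-comparator payoff. Your extra check that this expression is the learner's expected external regret under $B=-A$ is a sound step that the paper takes as given.
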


\begin{proof}
For each fixed learner strategy \(y\), linearity gives
\[
  \sum_{t=0}^{T-1}x_t^\top Ay=T\bar x_T^\top Ay.
\]
Thus the minimum fixed-comparator payoff is
\(T\min_{y\in\Delta_m}\bar x_T^\top Ay\).
Substituting the regret identity \eqref{eq:main-learner-regret} into
\(\Surplus(x_{<T})=\Rew(x_{<T})-T\val\) gives
\begin{align}
  \Surplus(x_{<T})
    &=\operatorname{Reg}_L(x_{<T})
       +T\min_{y\in\Delta_m}\bar x_T^\top Ay-T\val
       \notag\\
    &=\operatorname{Reg}_L(x_{<T})
       -T\left(\val-\min_{y\in\Delta_m}\bar x_T^\top Ay\right).
       \label{eq:regret-surplus-decomposition:b}
\end{align}
The quantity in parentheses is nonnegative by the definition of
\(\val\), and it vanishes exactly when \(\bar x_T\) is max--min.
\end{proof}

\begin{corollary}[Sublinear regret requires asymptotically max--min average play]
\label{cor:average-maxmin}
Suppose the learner has sublinear expected external regret
\(\operatorname{Reg}_L(x_{<T})=o(T)\).
If \(\Surplus(x_{<T})\ge-o(T)\), then
\begin{equation*}
  \val-\min_{y\in\Delta_m}\bar x_T^\top Ay=o(1).
\end{equation*}
\end{corollary}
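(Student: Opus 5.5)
The plan is to read the conclusion off the identity in \cref{prop:appendix-regret-decomposition}. Write
\[
  G_T:=\val-\min_{y\in\Delta_m}\bar x_T^\top Ay\ \ge 0,
\]
which is nonnegative because \(\bar x_T\in\Delta_n\) and \(\val\) is the maximum of \(x\mapsto\min_y x^\top Ay\). Rearranging the identity gives
\[
  T\,G_T=\operatorname{Reg}_L(x_{<T})-\Surplus(x_{<T}).
\]

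First, apply the two hypotheses to the right-hand side. By sublinear regret, \(\operatorname{Reg}_L(x_{<T})\le o(T)\). By assumption, \(-\Surplus(x_{<T})\le o(T)\). Hence \(0\le T G_T\le o(T)\).

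Second, divide by \(T\) to obtain \(0\le G_T\le o(T)/T\to0\), which is the claimed \(G_T=o(1)\). The only care needed is in how the \(o(\cdot)\) terms are interpreted. Both are understood along the family of trajectories indexed by \(T\), meaning there are sequences \(\epsilon_T,\epsilon'_T\to0\) with \(\operatorname{Reg}_L\le\epsilon_T T\) and \(\Surplus\ge-\epsilon'_T T\). The bound is then \(G_T\le\epsilon_T+\epsilon'_T\).

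I do not expect a real obstacle here; the only subtle point is this uniformity convention. I would state explicitly that the trajectory may depend on \(T\) and that both little-o hypotheses refer to the same indexing.
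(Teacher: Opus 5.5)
Your proposal is correct and is essentially the paper's proof: rearrange the identity of \cref{prop:appendix-regret-decomposition} to write the nonnegative gap as $\operatorname{Reg}_L(x_{<T})/T-\Surplus(x_{<T})/T$, then bound both terms by $o(1)$. Your explicit $\epsilon_T,\epsilon'_T$ bookkeeping only spells out the little-o convention that the paper leaves implicit.
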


\begin{proof}
Rearrange \eqref{eq:regret-surplus-decomposition:b} and divide by \(T\):
\[
  0\le\val-\min_{y\in\Delta_m}\bar x_T^\top Ay
    =\frac{\operatorname{Reg}_L(x_{<T})}{T}
       -\frac{\Surplus(x_{<T})}{T}
    \le o(1).
\]
The last inequality uses the assumed regret bound and lower bound
on surplus, proving the claim.
\end{proof}

\subsection{Proof of localization to the best-response face}
\label{app:normal-localization}

For this and the remaining response proofs, fix the max--min strategy
\(x^\star\) from \cref{sec:learner-dynamics-maxmin} and retain
\(K=\br(x^\star)\) and
\(\delta=A^\top x^\star-\val\mathbf1\).
The reference response \eqref{eq:main-fixed-response} is
\begin{equation}
  y_t=Q_h(-r_t\delta+\Xi_t).
  \label{eq:fixed-response}
\end{equation}
The estimates below use the following consequences of
\cref{ass:step-sizes,def:step-size-noise-model}:
\begin{equation}
  r_t\longrightarrow\infty,
  \qquad
  \frac{\spnorm{\Xi_t}}{r_t}
    =\frac{\spnorm{Z_t}}{t}\pto0\qquad(t\ge1).
  \label{eq:normal-condition}
\end{equation}
In particular, the noise need not be bounded on the scale seen by
the choice map. Only its ratio to \(r_t\) must vanish.

\begin{lemma}[Restatement of \cref{prop:main-localization}]
\label{lem:normal}
Let \(h\) satisfy \cref{def:regularizer}. Then
\begin{equation*}
  \sum_{j\notin K}y_{t,j}\pto0.
\end{equation*}
\end{lemma}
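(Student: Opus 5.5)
We argue deterministically, one score at a time, by comparing the response with a fixed interior competitor, and then transfer the bound to the random scores using \eqref{eq:normal-condition}. Let \(u=-r\delta+\xi\) with \(r>0\) and \(\xi\in\mathbb R^m\), and let \(y=Q_h(u)\). If \(N_{\rm sub}=0\) there is nothing to prove, so set \(\delta_{\min}=\min_{j\notin K}\delta_j>0\). By shift invariance we may replace \(\xi\) by \(\xi-c^\star\mathbf1\). Then \cref{lem:preliminary-span-centering} gives \(\|\xi\|_\infty\le\tfrac12\spnorm{\xi}\).

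\textbf{Deterministic bound.} Fix \(\varepsilon\in(0,1)\) and the competitor \(p_\varepsilon:=(1-\varepsilon)y_K+\varepsilon\,\mathbf1/m\), where \(y_K=\argmin_{\Delta_K}h\) (any point of \(\Delta_K\) works). Optimality of \(y\) gives
\[
\langle u,y\rangle-h(y)\ge\langle u,p_\varepsilon\rangle-h(p_\varepsilon).
\]
Two estimates follow. First, since \(\delta_K=0\) and \(\delta\ge0\), we have \(\langle -r\delta,y\rangle\le-r\delta_{\min}\sum_{j\notin K}y_j\). Second, \(\langle -r\delta,p_\varepsilon\rangle\ge-r\varepsilon\|\delta\|_\infty\). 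The noise terms differ by at most \(\spnorm{\xi}\) by \cref{lem:preliminary-span-centering}, and the regularizer terms differ by at most \(\operatorname{osc}(h):=\max_{\Delta_m}h-\min_{\Delta_m}h<\infty\), which is finite by continuity on the compact simplex. Combining,
\[
\sum_{j\notin K}y_j\le\frac{\varepsilon\|\delta\|_\infty}{\delta_{\min}}+\frac{\spnorm{\xi}+\operatorname{osc}(h)}{r\,\delta_{\min}}.
\]
This shows that the interior competitor is not essential. In fact \(\varepsilon=0\) already works, because \(h\) is finite on all of \(\Delta_m\), so the off-face mass is at most \((\spnorm{\xi}+\operatorname{osc}(h))/(r\delta_{\min})\). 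We will use this simpler form.

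\textbf{Probabilistic transfer.} Apply the bound with \(r=r_t\) and \(\xi=\Xi_t\). For \(t\ge1\),
\[
\sum_{j\notin K}y_{t,j}\le\frac{1}{\delta_{\min}}\Bigl(\frac{\spnorm{\Xi_t}}{r_t}+\frac{\operatorname{osc}(h)}{r_t}\Bigr).
\]
The first term converges to zero in probability by \eqref{eq:normal-condition}. The second is deterministic and tends to zero because \(r_t\to\infty\) under \cref{ass:step-sizes}. Hence the off-face mass converges to zero in probability, and since it is bounded by one, also in expectation.

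The only delicate point is the comparison step. It needs \(h\) to be bounded on the whole simplex, including near the boundary, so that the regularizer cost of any competitor is \(O(1)\) on the \(r_t\) scale. For steep kernels such as entropy or Tsallis, this holds because the paper's regularizers are continuous on the closed simplex, not merely on its interior. No strong convexity or separability is needed, which is why the claim holds for every regularizer in \cref{def:regularizer}.
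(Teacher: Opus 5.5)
Your proof is correct and is essentially the paper's comparison argument: optimality of \(y_t\) against a fixed competitor bounds \(\delta_{\min}\sum_{j\notin K}y_{t,j}\) by the relative noise plus a fixed regularization cost, both divided by \(r_t\). The one difference is that you place the competitor on the face itself (\(\varepsilon=0\)), which gives the explicit bound \((\spnorm{\Xi_t}+\max_{\Delta_m}h-\min_{\Delta_m}h)/(r_t\delta_{\min})\) with no two-stage limit; the paper instead keeps the interior comparator \(p_\alpha\) and charges only \(h(p_\alpha)-h_{\min}\), so that the same proof also covers regularizers that may be infinite on the boundary (\cref{rem:localization-weaker-regularizer}).
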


\begin{proof}
If \(K=[m]\), the sum is identically zero.
Otherwise, let
\(\delta_{\min}=\min_{j\notin K}\delta_j>0\) and
\(h_{\min}=\min_{\Delta_m}h\), which is finite by continuity
and compactness.
We bound the off-face mass by comparing \(y_t\) with a strategy
whose payoff loss is arbitrarily small and whose regularization
cost is fixed as \(t\) grows.

Fix \(i\in K\), put \(p=m^{-1}\mathbf1\), and, for \(0<\alpha<1\),
take
\[
  p_\alpha=(1-\alpha)e_i+\alpha p,
  \qquad
  \langle\delta,p_\alpha\rangle
    =\alpha\langle\delta,p\rangle.
\]
This comparator is interior and has finite regularization cost.
Using an interior comparator also gives the extension in
\cref{rem:localization-weaker-regularizer}.
Optimality of \(y_t\) in \eqref{eq:fixed-response} implies
\[
  r_t\langle\delta,y_t\rangle
    \le r_t\alpha\langle\delta,p\rangle
       +\langle\Xi_t,y_t-p_\alpha\rangle
       +h(p_\alpha)-h(y_t).
\]
The noise term is at most \(\spnorm{\Xi_t}\), since it is the
difference of two averages of the same coordinates.
Moreover, \(\delta\) is zero on \(K\) and at least
\(\delta_{\min}\) outside \(K\).
Dividing by \(r_t>0\) therefore gives
\[
  \delta_{\min}\sum_{j\notin K}y_{t,j}
    \le \alpha\langle\delta,p\rangle
       +\frac{\spnorm{\Xi_t}}{r_t}
       +\frac{h(p_\alpha)-h_{\min}}{r_t}.
\]
The three terms are, respectively, the comparator's payoff loss,
the relative noise, and the fixed regularization cost divided by
the growing score scale.

Given \(\varepsilon>0\), first choose \(\alpha\) so that the first
term is at most \(\varepsilon\delta_{\min}/3\).
With that comparator fixed, take \(t\) large enough that the last
term is at most \(\varepsilon\delta_{\min}/3\).
Then \eqref{eq:normal-condition} yields
\[
  \Pr\!\left(\sum_{j\notin K}y_{t,j}>\varepsilon\right)
    \le
    \Pr\!\left(\frac{\spnorm{\Xi_t}}{r_t}
                  >\frac{\varepsilon\delta_{\min}}3\right)
    \longrightarrow0.
\]
This proves localization. The mass is bounded by one, so its
expectation also tends to zero.
\end{proof}

Equivalently, localization gives convergence in distance to the face:
\[
  \dist_1(y_t,\Delta_K)=2\sum_{j\notin K}y_{t,j}.
\]
Indeed, any point on the face removes all mass outside \(K\) and
adds the same amount within \(K\), giving the lower bound.
Moving that mass to any one action in \(K\) attains it.

\begin{remark}[Localization under weaker regularizer conditions]
\label{rem:localization-weaker-regularizer}
The proof of \cref{lem:normal} also applies when
\(h:\Delta_m\to\mathbb R\cup\{+\infty\}\) is proper and lower
semicontinuous, strictly convex on its effective domain
\(\{y\in\Delta_m:h(y)<+\infty\}\), and finite on
\(\operatorname{ri}(\Delta_m)=\{y\in\Delta_m:y_j>0\ \forall j\}\).
Compactness and lower semicontinuity still give a finite
\(h_{\min}\), and each interior comparator \(p_\alpha\) has finite
regularization cost. These are the regularizer properties used in
the comparison, so this localization argument does not require
finite or continuous boundary values. The remaining results use
the regularizer class in \cref{def:regularizer}.
\end{remark}

\subsection{Proof of reduction to the best-response face}
\label{sec:regimes}
\label{app:face-reduction}

We prove \cref{prop:main-face-reduction} using the restricted map
\(Q_{h,K}\) defined in \eqref{eq:restricted-choice}.
Two properties are needed: continuity for the finite-score limits,
and a comparison between the full and restricted responses that
remains valid when relative scores grow.
The continuity argument is given first.

\begin{remark}[Regularity on the face]
\label{rem:finite-regularizer-examples}
\Cref{def:regularizer} supplies finite, continuous, strictly convex
values on every face, including its boundary.
Negative entropy, Euclidean regularization, and negative Tsallis
regularization with \(q\in(0,1)\) all satisfy this condition.
It is also part of the regularizer definitions used by
\citet{mertikopoulos2016learning},
\citealt{mertikopoulos2019learning}, and \citet{su2026nocoin}.
\end{remark}

\begin{lemma}
\label{lem:continuity}
\(Q_{h,K}\) is continuous and satisfies
\(Q_{h,K}(u_K+c\mathbf1)=Q_{h,K}(u_K)\) for every
\(u_K\in\mathbb R^K\) and \(c\in\mathbb R\).
\end{lemma}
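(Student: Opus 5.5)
Shift invariance is the easy half and I will prove it first. For \(y\in\Delta_K\) we have \(\langle\mathbf1,y_K\rangle=1\). So replacing \(u_K\) by \(u_K+c\mathbf1\) adds the constant \(c\) to the objective in \eqref{eq:restricted-choice} at every feasible point. The maximizer therefore does not move, and by strict convexity the argmax is a singleton, so \(Q_{h,K}(u_K+c\mathbf1)=Q_{h,K}(u_K)\).

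For continuity I will follow \cref{lem:preliminary-choice-conjugate}, carried over to the face. Let \(h_K:\mathbb R^K\to\mathbb R\cup\{+\infty\}\) equal \(h\) at the point of \(\Delta_K\) with coordinates \(y_K\) (zeros outside \(K\)), and \(+\infty\) off the simplex \(\Delta_{|K|}\). By \cref{def:regularizer} and \cref{rem:finite-regularizer-examples}, \(h_K\) is finite, continuous and strictly convex on the compact set \(\Delta_{|K|}\). Its conjugate \(h_K^*(u_K)=\max_{y\in\Delta_K}\{\langle u_K,y_K\rangle-h(y)\}\) is then finite and convex on all of \(\mathbb R^K\). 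Its subdifferential at \(u_K\) is the set of maximizers, which is the singleton \(\{Q_{h,K}(u_K)\}\). A finite convex function whose subdifferential is everywhere a singleton is continuously differentiable, so \(Q_{h,K}=\nabla h_K^*\) is continuous.

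I will also give a direct sequential argument as a fallback, to avoid any doubt about applying the subdifferential characterization to an extended-valued \(h_K\). Take \(u^{(n)}\to u\) and set \(y^{(n)}=Q_{h,K}(u^{(n)})\). By compactness of \(\Delta_K\), any subsequence has a further subsequence converging to some \(\bar y\in\Delta_K\). For every fixed \(z\in\Delta_K\), optimality gives
\[
\langle u^{(n)},y^{(n)}\rangle-h(y^{(n)})\ge\langle u^{(n)},z\rangle-h(z).
\]
Passing to the limit uses continuity of \(h\) on the face together with joint continuity of the bilinear term. This gives \(\langle u,\bar y\rangle-h(\bar y)\ge\langle u,z\rangle-h(z)\) for every \(z\), so \(\bar y\) maximizes the limit problem. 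By uniqueness \(\bar y=Q_{h,K}(u)\). Since every subsequence has a further subsequence with this same limit, the whole sequence converges, which proves continuity.

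The only real point to check is that \(h\) is continuous up to the relative boundary of the face. Without this, the limit passage in the optimality inequality could fail. \Cref{def:regularizer} assumes exactly this continuity, so there is no actual obstacle.
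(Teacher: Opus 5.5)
Your proof is correct. Your shift-invariance step, and the sequential argument you call a fallback, are exactly the paper's proof: extract a convergent further subsequence by compactness of \(\Delta_K\), pass the optimality inequality to the limit using continuity of \(h\) on the face, identify the limit by uniqueness of the maximizer, and conclude by the subsequence principle.

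Your primary route, transplanting \cref{lem:preliminary-choice-conjugate} to the face, is also valid but rests on two imported facts of convex analysis.
\begin{enumerate}
\item The identity \(\partial h_K^*(u_K)=\argmax_{y\in\Delta_K}\{\langle u_K,y_K\rangle-h(y)\}\). The inclusion of the maximizers in the subdifferential is immediate. The reverse inclusion needs \(h_K\) to be closed, which holds because \(h\) is continuous on the compact face. Alternatively, Danskin's theorem gives it directly.
\item The fact that a finite convex function on \(\mathbb R^K\) whose subdifferential is everywhere a singleton is continuously differentiable.
\end{enumerate}

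The conjugate route buys the stronger structural statement \(Q_{h,K}=\nabla h_K^*\) with \(h_K^*\in C^1\). The direct route is self-contained and uses only compactness, continuity, and uniqueness, which is all this lemma requires. That is why the paper uses it here.
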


\begin{proof}
A common shift adds \(c\) to every feasible objective, proving
shift invariance.
For continuity, let \(u_K^{(\ell)}\to u_K\) and write
\(q_\ell=Q_{h,K}(u_K^{(\ell)})\).
Compactness of \(\Delta_K\) gives a convergent further subsequence
of every subsequence of \((q_\ell)\).
Let \(\bar q\) be the limit of such a further subsequence.

We identify \(\bar q\) by passing the optimality inequality to the
limit. For every \(z\in\Delta_K\),
\[
  \langle u_K^{(\ell)},(q_\ell)_K\rangle-h(q_\ell)
    \ge \langle u_K^{(\ell)},z_K\rangle-h(z).
\]
Continuity of \(h\) gives, along the chosen subsequence,
\[
  \langle u_K,\bar q_K\rangle-h(\bar q)
    \ge \langle u_K,z_K\rangle-h(z).
\]
Thus \(\bar q\) maximizes the limiting objective, whose maximizer
is unique by strict convexity. Hence \(\bar q=Q_{h,K}(u_K)\).
If the full sequence failed to converge to this point, a subsequence
at a fixed positive distance would have a further convergent
subsequence with a different limit. This contradiction proves
continuity.
\end{proof}

For face reduction, continuity alone is insufficient: the scores
\(\Xi_{t,K}\) need not converge or remain bounded.
The next proof instead controls the objective loss incurred by
moving the off-face mass onto \(\Delta_K\).
A strict-convexity bound independent of the scores then controls
the distance between the two optimizers.

\begin{lemma}[Restatement of \cref{prop:main-face-reduction}]
\label{lem:reduction}
\begin{equation}
  \|y_t-Q_{h,K}(\Xi_{t,K})\|_1\pto0.
  \label{eq:reduction}
\end{equation}
\end{lemma}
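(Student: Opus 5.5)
The plan is to compare \(y_t\) with the restricted response \(q_t:=Q_{h,K}(\Xi_{t,K})\) through objective values, not through continuity of \(Q_{h,K}\). Continuity is unavailable here because \(\Xi_{t,K}\) need not stay bounded. Write
\[
F_t(y)=\langle -r_t\delta+\Xi_t,y\rangle-h(y)
\qquad\text{and}\qquad
G_t(z)=\langle\Xi_{t,K},z_K\rangle-h(z)\quad\text{for }z\in\Delta_K .
\]
Since \(\delta\) vanishes on \(K\), we have \(F_t=G_t\) on \(\Delta_K\). Optimality of \(y_t\) over \(\Delta_m\) therefore gives \(G_t(q_t)=F_t(q_t)\le F_t(y_t)\).

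\emph{Step 1 (repair).} Let \(m_t=\sum_{j\notin K}y_{t,j}\) be the off-face mass and let \(i_t\in\argmax_{j\in K}\Xi_{t,j}\). Define \(\tilde y_t\in\Delta_K\) by keeping \(y_{t,K}\) and moving all of \(m_t\) onto \(e_{i_t}\). Work on the event \(E_t=\{\spnorm{\Xi_t}<\delta_{\min}r_t\}\). By \eqref{eq:normal-condition}, \(\Pr(E_t)\to1\). On \(E_t\), every \(j\notin K\) satisfies \(\Xi_{t,j}-r_t\delta_j\le\Xi_{t,j}-r_t\delta_{\min}<\min_\ell\Xi_{t,\ell}\le\Xi_{t,i_t}\). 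Hence \(\langle-r_t\delta+\Xi_t,y_t\rangle\le\langle\Xi_{t,K},(\tilde y_t)_K\rangle\), and so
\[
G_t(q_t)\le F_t(y_t)\le G_t(\tilde y_t)+h(\tilde y_t)-h(y_t)\le G_t(\tilde y_t)+\omega_h(2m_t).
\]
Here \(\omega_h\) is the modulus of uniform continuity of \(h\) on the compact simplex in \(\ell_1\), and we use \(\|\tilde y_t-y_t\|_1=2m_t\). By \cref{lem:normal}, \(m_t\pto0\), so the objective gap \(G_t(q_t)-G_t(\tilde y_t)\) is at most a quantity tending to zero in probability.

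\emph{Step 2 (uniform strict-convexity gap).} This is the step I expect to be the crux, because the bound must not depend on the size of \(\Xi_{t,K}\). For \(d>0\), define
\[
\gamma(d)=\inf\Bigl\{\tfrac12h(p)+\tfrac12h(z)-h\bigl(\tfrac{p+z}2\bigr):p,z\in\Delta_m,\ \|p-z\|_1\ge d\Bigr\}.
\]
Then \(\gamma(d)>0\), since it is the infimum of a continuous positive function over a compact set, with positivity coming from strict convexity. Let \(q\) maximize a concave objective \(G=\langle\ell,\cdot\rangle-h\) over \(\Delta_K\), and let \(z\in\Delta_K\) be any point. Take \(w=(q+z)/2\). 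The linear part cancels at the midpoint, which gives
\[
G(q)\ge G(w)\ge\tfrac12\bigl(G(q)+G(z)\bigr)+\gamma(\|q-z\|_1).
\]
Rearranging, \(\gamma(\|q-z\|_1)\le\frac12\bigl(G(q)-G(z)\bigr)\). This bound holds uniformly over all linear terms \(\ell\).

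\emph{Step 3 (conclusion).} Combining Steps 1 and 2, on \(E_t\) we get \(\gamma(\|q_t-\tilde y_t\|_1)\le\frac12\omega_h(2m_t)\). Fix \(\varepsilon>0\). Because \(\gamma(\varepsilon)>0\) and \(\omega_h(2m_t)\pto0\), we obtain \(\Pr(\|q_t-\tilde y_t\|_1\ge\varepsilon)\le\Pr(E_t^c)+\Pr\bigl(\omega_h(2m_t)\ge2\gamma(\varepsilon)\bigr)\to0\). Finally, the triangle inequality gives \(\|y_t-q_t\|_1\le2m_t+\|\tilde y_t-q_t\|_1\pto0\), which is \eqref{eq:reduction}.
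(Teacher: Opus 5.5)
Your proposal is correct and follows essentially the same route as the paper. Both arguments repair the off-face mass onto a maximizer of $\Xi_{t,K}$, bound the objective loss on a high-probability event via uniform continuity of $h$, and convert that loss into distance through the score-independent midpoint gap; your $\gamma(\varepsilon)$ is the paper's $c_\varepsilon$. The one difference is cosmetic: you apply the midpoint gap to $q_t$ and $\tilde y_t$ on $\Delta_K$ using optimality of $q_t$, then add $2m_t$ by the triangle inequality, whereas the paper applies it directly to $y_t$ and $q_t$ with the midpoint in $\Delta_m$ using optimality of $y_t$. You should also set aside the trivial case $K=[m]$, where $\delta_{\min}$ is undefined.
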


\begin{proof}
When \(K=[m]\), the two responses coincide.
Otherwise, set
\[
  u_t=-r_t\delta+\Xi_t,
  \qquad F_t(y)=\langle u_t,y\rangle-h(y),
  \qquad q_t=Q_{h,K}(\Xi_{t,K}).
\]
Because \(\delta\) vanishes on \(K\), the vectors \(y_t\) and
\(q_t\) maximize the same objective \(F_t\) on \(\Delta_m\) and
\(\Delta_K\), respectively.

\paragraph{Repairing the off-face mass.}
To compare these two feasible sets, choose
\(i_t\in\argmax_{i\in K}\Xi_{t,i}\), breaking ties by index,
and move all off-face mass to \(i_t\).
Writing \(M_t=\sum_{j\notin K}y_{t,j}\), the repaired vector is
\[
  \widetilde y_t
    =y_t-\sum_{j\notin K}y_{t,j}e_j+M_te_{i_t}
    \in\Delta_K,
  \qquad
  \|\widetilde y_t-y_t\|_1=2M_t\pto0,
\]
where the convergence follows from \cref{lem:normal}.
We must also control the linear score change, since a small mass
could otherwise carry a large score.
Let \(\delta_{\min}=\min_{j\notin K}\delta_j\) and consider
\[
  E_t=\{\spnorm{\Xi_t}\le r_t\delta_{\min}/2\}.
\]
By \eqref{eq:normal-condition}, \(\Pr(E_t)\to1\).
On this event, each \(j\notin K\) satisfies
\[
  u_{t,i_t}-u_{t,j}
    \ge r_t\delta_{\min}-\spnorm{\Xi_t}
    \ge r_t\delta_{\min}/2>0.
\]
Thus the repair does not decrease the linear score:
\(\langle u_t,\widetilde y_t-y_t\rangle\ge0\) on \(E_t\).
The factor \(1/2\) leaves a positive score advantage after
accounting for the noise difference.

\paragraph{Controlling the objective loss.}
Optimality of \(y_t\) on the full simplex and of \(q_t\) on the
face gives, on \(E_t\),
\[
  0\le F_t(y_t)-F_t(q_t)
    \le F_t(y_t)-F_t(\widetilde y_t)
    \le |h(\widetilde y_t)-h(y_t)|.
\]
The final bound uses the nonnegative linear score change.
Uniform continuity of \(h\) on \(\Delta_m\), together with
\(\|\widetilde y_t-y_t\|_1\pto0\), makes this bound vanish
in probability. Since \(\Pr(E_t^c)\to0\), we obtain
\(F_t(y_t)-F_t(q_t)\pto0\).

\paragraph{Converting objective loss to distance.}
To make this last step uniform in \(u_t\), fix
\(0<\varepsilon\le2\) and define
\[
  c_\varepsilon
    :=\min_{\substack{v,w\in\Delta_m\\\|v-w\|_1\ge\varepsilon}}
       \left\{\frac{h(v)+h(w)}2-h\!\left(\frac{v+w}2\right)\right\}.
\]
The minimum is attained on a compact set, and strict convexity
makes its value positive. It measures the regularizer's midpoint
gap between strategies at distance at least \(\varepsilon\).
If \(\|y_t-q_t\|_1\ge\varepsilon\), the linear terms cancel at
the midpoint, yielding
\[
  F_t\!\left(\frac{y_t+q_t}2\right)
    \ge\frac{F_t(y_t)+F_t(q_t)}2+c_\varepsilon.
\]
The midpoint is feasible for the full problem, so optimality of
\(y_t\) implies \(F_t(y_t)-F_t(q_t)\ge2c_\varepsilon\).
Consequently,
\[
  \Pr(\|y_t-q_t\|_1\ge\varepsilon)
    \le\Pr(F_t(y_t)-F_t(q_t)\ge2c_\varepsilon)
    \longrightarrow0.
\]
This proves \eqref{eq:reduction}. Since the distance is bounded
by two, its expectation tends to zero as well.
\end{proof}

The remaining three subsections prove the respective cases of
\cref{thm:main-noise-regimes}. Each applies the noise condition
from \cref{sec:preliminaries} to \(Q_{h,K}\), then uses
\cref{lem:reduction} to transfer the limit to \(y_t\).

\Needspace{14\baselineskip}
\subsection{Proof of the \texorpdfstring{\regime{I}}{Regime I} limit}
\label{app:vanishing-noise}

\begin{proposition}[Vanishing-noise case of \cref{thm:main-noise-regimes}]
\label{prop:vanishing}
Suppose \regime{I} holds on \(K\). Then
\begin{equation*}
  y_t\pto y_K=\argmin_{y\in\Delta_K}h(y),
  \qquad \mathbb E\|y_t-y_K\|_1\longrightarrow0.
\end{equation*}
\end{proposition}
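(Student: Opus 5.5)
The plan is to combine face reduction (\cref{lem:reduction}) with continuity and shift invariance of the restricted map (\cref{lem:continuity}). By the triangle inequality,
\[
  \|y_t-y_K\|_1\le\|y_t-Q_{h,K}(\Xi_{t,K})\|_1+\|Q_{h,K}(\Xi_{t,K})-y_K\|_1 .
\]
The first term tends to zero in probability by \cref{lem:reduction}. So it remains to show the second term also vanishes in probability under \regime{I}.

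First, identify \(y_K\) as \(Q_{h,K}(0)\). At zero score the restricted objective is \(-h(y)\) on \(\Delta_K\), so its unique maximizer is \(\argmin_{\Delta_K}h\). Uniqueness comes from strict convexity and compactness.

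Second, remove the unbounded common component of the scores. Choose the centering \(c_t=(\max_{j\in K}\Xi_{t,j}+\min_{j\in K}\Xi_{t,j})/2\) and set \(\widetilde\Xi_{t,K}=\Xi_{t,K}-c_t\mathbf1\). By \cref{lem:preliminary-span-centering},
\[
  \|\widetilde\Xi_{t,K}\|_\infty=\tfrac12\spnorm{\Xi_{t,K}}\pto0 .
\]
Shift invariance gives \(Q_{h,K}(\Xi_{t,K})=Q_{h,K}(\widetilde\Xi_{t,K})\).

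Third, apply the continuous mapping theorem. Since \(Q_{h,K}\) is continuous at \(0\) by \cref{lem:continuity}, for each \(\varepsilon>0\) there is \(\gamma>0\) such that \(\|u_K\|_\infty\le\gamma\) implies \(\|Q_{h,K}(u_K)-y_K\|_1\le\varepsilon\). Hence
\[
  \Pr\bigl(\|Q_{h,K}(\Xi_{t,K})-y_K\|_1>\varepsilon\bigr)\le\Pr\bigl(\tfrac12\spnorm{\Xi_{t,K}}>\gamma\bigr)\to0 .
\]
Combined with the triangle inequality, this gives \(y_t\pto y_K\).

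The expectation statement follows from boundedness. Since \(\|y_t-y_K\|_1\le2\), we have \(\mathbb E\|y_t-y_K\|_1\le\varepsilon+2\Pr(\|y_t-y_K\|_1>\varepsilon)\), as in \appref{app:preliminary-probability}. Let \(t\to\infty\) and then \(\varepsilon\downarrow0\).

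The main obstacle is the second step. \regime{I} controls only the span of \(\Xi_{t,K}\), not \(\Xi_{t,K}\) itself, which may drift without bound. Continuity therefore cannot be applied directly to \(\Xi_{t,K}\); the shift-invariance recentering is what makes it usable. The rest is routine.
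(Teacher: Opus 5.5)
Your proof is correct and follows essentially the same route as the paper. Both combine face reduction with the triangle inequality, recenter via shift invariance so that continuity of \(Q_{h,K}\) at \(0\) applies, and use boundedness of the \(\ell_1\) distance to get the expectation statement. The only cosmetic difference is the centering: the paper subtracts a fixed coordinate \(\Xi_{t,i}\), giving \(\|v_t\|_\infty\le\spnorm{\Xi_{t,K}}\), while you subtract the midpoint of the extreme coordinates and use the span-centering lemma to gain a factor \(\tfrac12\) that is not needed.
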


\begin{proof}
The regime assumption controls score differences, so first remove
one common coordinate. Fix \(i\in K\) and set
\(v_{t,j}=\Xi_{t,j}-\Xi_{t,i}\) for \(j\in K\).
Then \(\|v_t\|_\infty\le\spnorm{\Xi_{t,K}}\pto0\).
Shift invariance and continuity from \cref{lem:continuity} give
\[
  Q_{h,K}(\Xi_{t,K})=Q_{h,K}(v_t)
    \pto Q_{h,K}(0)=y_K.
\]
The zero-score maximizer minimizes \(h\) on \(\Delta_K\), which
identifies the limit.
By \cref{lem:reduction} and the triangle inequality,
\[
  \|y_t-y_K\|_1
    \le\|y_t-Q_{h,K}(\Xi_{t,K})\|_1
       +\|Q_{h,K}(\Xi_{t,K})-y_K\|_1\pto0.
\]
The distance is bounded by two, so its expectation also vanishes.
\end{proof}

\begin{corollary}[Separable-regularizer case of \cref{thm:main-noise-regimes}]
\label{cor:uniform-separable-selection}
If \(h\) is separable as in \cref{def:separable-regularizer}, then
\[
  y_K=\frac1{|K|}\sum_{j\in K}e_j.
\]
\end{corollary}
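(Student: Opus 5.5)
The plan is to use the uniqueness of the minimizer together with the permutation symmetry of a separable regularizer. By \cref{prop:vanishing}, \(y_K=\argmin_{y\in\Delta_K}h(y)\) is the zero-score maximizer of the restricted problem \eqref{eq:restricted-choice}. Strict convexity of \(h\) on the compact face \(\Delta_K\) makes it the unique minimizer. So it suffices to show that the uniform vector \(\bar y:=|K|^{-1}\sum_{j\in K}e_j\) minimizes \(h\) on \(\Delta_K\).

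\emph{Symmetry step.} For any permutation \(\sigma\) of \(K\) and any \(y\in\Delta_K\), let \(y^\sigma\) be the permuted vector, with coordinates \(y^\sigma_j=y_{\sigma(j)}\) for \(j\in K\) and zero outside \(K\). Then \(y^\sigma\in\Delta_K\). Since \(h(y)=\sum_j\theta(y_j)\) applies the same kernel to every coordinate, \(h(y^\sigma)=h(y)\). Applying this to \(y=y_K\) shows that \(y_K^\sigma\) is also a minimizer on \(\Delta_K\). Uniqueness then forces \(y_K^\sigma=y_K\) for every \(\sigma\), so all coordinates of \(y_K\) in \(K\) are equal. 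Because they sum to one and vanish outside \(K\), each equals \(1/|K|\).

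\emph{Alternative via Jensen.} As a check, one can argue directly. For \(y\in\Delta_K\), convexity of \(\theta\) gives
\(\sum_{j\in K}\theta(y_j)\ge|K|\,\theta(1/|K|)\). The off-face coordinates contribute \((m-|K|)\theta(0)\) for every \(y\in\Delta_K\), so this term is the same for all competitors. Hence \(h(y)\ge h(\bar y)\), and strict convexity of \(\theta\) gives equality only at \(\bar y\). This version needs only continuity and strict convexity of \(\theta\), not the \(C^2\) assumptions.

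\emph{Main obstacle.} There is little difficulty; the only care needed is bookkeeping. The off-face coordinates are fixed at zero, so they contribute a constant \((m-|K|)\theta(0)\) that cannot affect the minimizer. Uniqueness comes from \cref{def:regularizer} restricted to the face.
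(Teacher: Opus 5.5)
Your proof is correct, and your symmetry step is the paper's argument: coordinate permutations within \(K\) preserve \(\Delta_K\) and the separable \(h\), so uniqueness of the strictly convex minimizer forces \(y_K\) to be permutation-invariant and hence uniform on \(K\). Your Jensen alternative is also valid; it is the same computation the paper uses later in the proof of \cref{lem:new-movement-radius}.
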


\begin{proof}
Every permutation of the coordinates in \(K\) preserves
\(\Delta_K\) and \(h\), because all coordinates use the same
kernel \(\theta\). It therefore maps \(y_K\) to another minimizer.
Strict convexity makes the minimizer unique, so \(y_K\) is invariant
under every such permutation. Its coordinates on \(K\) must be
equal and sum to one; its coordinates outside \(K\) are zero.
This gives the displayed formula.
\end{proof}

\Needspace{14\baselineskip}
\subsection{Proof of the \texorpdfstring{\regime{II}}{Regime II} limit}
\label{app:finite-noise}

\begin{proposition}[Finite-noise case of \cref{thm:main-noise-regimes}]
\label{prop:finite}
Suppose \regime{II} holds on \(K\) with limit \(\Xi_{\infty,K}\).
Then
\begin{equation}
  y_t\dto Y_\infty:=Q_{h,K}(\Xi_{\infty,K}).
  \label{eq:finite-limit}
\end{equation}
Moreover,
\begin{equation}
  \mathbb E[y_t]\longrightarrow\mathbb E[Y_\infty],
  \qquad \Cov(y_t)\longrightarrow\Cov(Y_\infty).
  \label{eq:moment-limit}
\end{equation}
\end{proposition}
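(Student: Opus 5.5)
The argument follows the template of \cref{prop:vanishing}. We first establish the limit for the restricted response \(Q_{h,K}(\Xi_{t,K})\) and then transfer it to \(y_t\) using \cref{lem:reduction}.

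\emph{Limit of the restricted response.} Shift invariance in \cref{lem:continuity} gives
\[
Q_{h,K}(\Xi_{t,K})=Q_{h,K}(\Xi_{t,K}-c_t\mathbf1).
\]
The shifts \(c_t\) are almost surely finite random scalars, so this identity holds pointwise on the sample space. By \regime{II}, \(\Xi_{t,K}-c_t\mathbf1\dto\Xi_{\infty,K}\). Since \(Q_{h,K}\) is continuous on all of \(\mathbb R^K\) (\cref{lem:continuity}), the continuous mapping theorem yields
\[
Q_{h,K}(\Xi_{t,K})\dto Q_{h,K}(\Xi_{\infty,K})=Y_\infty.
\]
The limit \(\Xi_{\infty,K}\) is almost surely finite, so \(Y_\infty\) is a well-defined random element of \(\Delta_K\).

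\emph{Transfer to \(y_t\).} By \cref{lem:reduction}, \(\|y_t-Q_{h,K}(\Xi_{t,K})\|_1\pto0\). The converging-together lemma (Slutsky) states that if \(U_t\dto U\) and \(\|V_t-U_t\|\pto0\), then \(V_t\dto U\). Applying it gives \eqref{eq:finite-limit}. For a self-contained version, take any bounded Lipschitz test function \(f\). Then
\[
|\mathbb E f(y_t)-\mathbb E f(q_t)|\le L\varepsilon+2\|f\|_\infty\Pr(\|y_t-q_t\|_1>\varepsilon),
\]
and the right-hand side can be made arbitrarily small. Portmanteau then upgrades convergence against bounded Lipschitz functions to convergence in distribution.

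\emph{Moments.} All responses lie in the compact simplex. Consequently each coordinate function \(y\mapsto y_j\) and each product \(y\mapsto y_jy_\ell\) is bounded and continuous on \(\Delta_m\), and each can be extended to a bounded continuous function on \(\mathbb R^m\) by composing with the Euclidean projection onto \(\Delta_m\). Convergence in distribution therefore gives \(\mathbb E[y_t]\to\mathbb E[Y_\infty]\) and \(\mathbb E[y_ty_t^\top]\to\mathbb E[Y_\infty Y_\infty^\top]\). Since \(\Cov(y)=\mathbb E[yy^\top]-\mathbb E[y]\mathbb E[y]^\top\), the covariance statement in \eqref{eq:moment-limit} follows.

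The only delicate point is that the shifts \(c_t\) are random rather than deterministic. This causes no difficulty, because shift invariance holds pathwise. Continuity is needed everywhere, not only on the support of the limit, and \cref{lem:continuity} provides exactly that.
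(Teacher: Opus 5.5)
Your proposal is correct and follows essentially the same route as the paper. The paper also uses shift invariance, continuity of \(Q_{h,K}\) and the continuous mapping theorem for the restricted response, then transfers the limit to \(y_t\) via face reduction and Slutsky, and obtains the moment limits from bounded continuous coordinate and product functions on the simplex.
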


\begin{proof}
Let \(c_t\) be the scalar shifts in \regime{II}.
Shift invariance and continuity of \(Q_{h,K}\) give, by the
continuous mapping theorem,
\[
  Q_{h,K}(\Xi_{t,K})
    =Q_{h,K}(\Xi_{t,K}-c_t\mathbf1)
    \dto Q_{h,K}(\Xi_{\infty,K})=Y_\infty.
\]
Face reduction gives \(y_t-Q_{h,K}(\Xi_{t,K})\pto0\).
Slutsky's theorem therefore yields \eqref{eq:finite-limit}; no
independence between these two quantities is required.

To obtain the moments, the functions \(y\mapsto y_i\) and
\(y\mapsto y_i y_j\) are bounded and continuous on the simplex.
Their expectations consequently converge:
\[
  \mathbb E[y_{t,i}]\longrightarrow\mathbb E[Y_{\infty,i}],
  \qquad
  \mathbb E[y_{t,i}y_{t,j}]
    \longrightarrow\mathbb E[Y_{\infty,i}Y_{\infty,j}].
\]
Substituting these limits into
\(\Cov(y_t)_{ij}=\mathbb E[y_{t,i}y_{t,j}]
-\mathbb E[y_{t,i}]\mathbb E[y_{t,j}]\)
proves \eqref{eq:moment-limit}.
\end{proof}

\Needspace{14\baselineskip}
\subsection{Proof of the \texorpdfstring{\regime{III}}{Regime III} limit}
\label{app:growing-noise}

\begin{theorem}[Growing-noise case of \cref{thm:main-noise-regimes}]
\label{thm:diverging}
Suppose \regime{III} holds on \(K\), with scale \(a_t\), shifts
\(c_t\), and limit \(\Xi_{\infty,K}\). Define
\[
  J_\infty=\argmax_{j\in K}\Xi_{\infty,j},
  \qquad J_t=\min\operatorname*{arg\,max}_{j\in K}\Xi_{t,j}.
\]
Then
\[
  y_t\dto e_{J_\infty},
  \qquad \mathbb E\|y_t-e_{J_t}\|_1\longrightarrow0.
\]
In particular,
\(\mathbb E[y_t]\longrightarrow
\sum_{j\in K}\Pr(J_\infty=j)e_j\).
\end{theorem}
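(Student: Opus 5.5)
The plan is to transfer everything to the restricted map through \cref{lem:reduction}, so that it suffices to show \(\mathbb E\|Q_{h,K}(\Xi_{t,K})-e_{J_t}\|_1\to0\). The distributional claim then follows in three steps. First, the \regime{III} assumption and the continuous mapping theorem give \(J_t\dto J_\infty\). Here the argmax is continuous at every vector with a unique largest coordinate, and the limit has such a coordinate almost surely. Shifting by \(c_t\) and dividing by \(a_t>0\) do not change the argmax. Since \(e_{J_t}\dto e_{J_\infty}\), Slutsky's theorem with \(y_t-e_{J_t}\pto0\) gives \(y_t\dto e_{J_\infty}\). The mean limit follows from bounded coordinates.

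The core step is a deterministic estimate. It bounds \(1-Q_{h,K}(u)_{j^\ast}\) in terms of the gap \(\gamma(u)=u_{j^\ast}-\max_{j\ne j^\ast}u_j\), where \(j^\ast\) is the top coordinate of \(u\in\mathbb R^K\). Let \(q=Q_{h,K}(u)\) and write \(M=1-q_{j^\ast}\) for its mass off \(j^\ast\). Move all that mass onto \(j^\ast\). This raises the linear term by at least \(\gamma M\) and changes \(h\) by at most \(\operatorname{osc}(h):=\max_{\Delta_K}h-\min_{\Delta_K}h<\infty\). Optimality of \(q\) then gives \(\gamma M\le\operatorname{osc}(h)\), so
\[
  \|q-e_{j^\ast}\|_1=2M\le \min\{2,\,2\operatorname{osc}(h)/\gamma(u)\}.
\]
This is the sense in which the gap overwhelms the bounded regularizer. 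Only finiteness of \(h\) on the face is used, which \cref{def:regularizer} supplies.

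It remains to show \(\gamma(\Xi_{t,K})\pto\infty\). Note \(\gamma(\Xi_{t,K})=a_t\,\gamma\bigl((\Xi_{t,K}-c_t\mathbf1)/a_t\bigr)\). The map \(\gamma\) is continuous, so the rescaled gap converges in distribution to \(\gamma(\Xi_{\infty,K})\), which is almost surely positive. Hence for any \(L\) and \(\epsilon>0\), choose \(\beta>0\) with \(\Pr(\gamma(\Xi_{\infty,K})\le\beta)<\epsilon\), using the portmanteau theorem on the closed set \(\{\gamma\le\beta\}\). Once \(a_t\beta>L\), we get \(\limsup_t\Pr(\gamma(\Xi_{t,K})\le L)\le\epsilon\).

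Combining the two pieces, split on the event \(\{\gamma\le L\}\):
\[
  \mathbb E\|Q_{h,K}(\Xi_{t,K})-e_{J_t}\|_1\le 2\operatorname{osc}(h)/L+2\Pr(\gamma(\Xi_{t,K})\le L).
\]
Ties in \(\Xi_{t,K}\) correspond to \(\gamma=0\) and are absorbed by this event, so the index convention in \(J_t\) is harmless. Letting \(t\to\infty\), then \(L\to\infty\), then \(\epsilon\to0\) finishes the proof. The triangle inequality with \cref{lem:reduction}, whose error is bounded by two, yields \(\mathbb E\|y_t-e_{J_t}\|_1\to0\). I expect the main obstacle to be this gap-divergence step, specifically getting the correct portmanteau direction and handling the near-tie event. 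The regularizer estimate itself is routine.
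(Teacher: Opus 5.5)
Your proposal is correct and follows essentially the same route as the paper. Both arguments compare the restricted response with the current winning vertex, which bounds the misplaced mass by the oscillation of $h$ divided by the top-score gap. Both then exclude near ties by applying the portmanteau theorem to the rescaled gap, and transfer the result to $y_t$ via \cref{lem:reduction}, the continuous mapping theorem, and Slutsky's theorem. The differences are cosmetic: you threshold the unnormalized gap $a_t\gamma\bigl((\Xi_{t,K}-c_t\mathbf1)/a_t\bigr)$ at $L$ instead of fixing a threshold for the normalized gap. You should also state the trivial case $|K|=1$ separately, since $\gamma$ is undefined there, but the restricted response is then the vertex itself.
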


\begin{proof}
Use the normalized scores from \regime{III} and write the restricted
response as
\[
  \widehat\Xi_{t,K}=\frac{\Xi_{t,K}-c_t\mathbf1}{a_t},
  \qquad
  q_t=Q_{h,K}(\Xi_{t,K})=Q_{h,K}(a_t\widehat\Xi_{t,K}).
\]
Centering and dividing by a positive scalar preserve the maximizing
indices, so \(J_t\) also maximizes \(\widehat\Xi_{t,K}\).
If \(|K|=1\), the restricted response is always the sole vertex
of \(\Delta_K\), and all conclusions follow from
\cref{lem:reduction}. Hence assume \(|K|\ge2\).

\paragraph{Bounding the mass away from the current winner.}
We compare \(q_t\) with \(e_{J_t}\). Any regularization saving is
bounded by
\(H_h=\max_{\Delta_m}h-\min_{\Delta_m}h<\infty\).
Optimality of \(q_t\) in the restricted objective gives
\begin{equation*}
  0\le\widehat\Xi_{t,J_t}
         -\sum_{j\in K}\widehat\Xi_{t,j}q_{t,j}
    \le\frac{h(e_{J_t})-h(q_t)}{a_t}
    \le\frac{H_h}{a_t}.
\end{equation*}
Thus the normalized score loss relative to the winning vertex is
at most \(H_h/a_t\).
To translate score loss into misplaced mass, let \(\Gamma_t\) be
the difference between the largest and second-largest coordinates
of \(\widehat\Xi_{t,K}\), with \(\Gamma_t=0\) when the maximum
is tied. Every unit of mass away from \(J_t\) loses at least
\(\Gamma_t\), so
\[
  \widehat\Xi_{t,J_t}
       -\sum_{j\in K}\widehat\Xi_{t,j}q_{t,j}
    =\sum_{j\in K}q_{t,j}
       (\widehat\Xi_{t,J_t}-\widehat\Xi_{t,j})
    \ge\Gamma_t(1-q_{t,J_t}).
\]
Since \(\|q_t-e_{J_t}\|_1=2(1-q_{t,J_t})\), for every fixed
\(\gamma>0\) we have, on \(\{\Gamma_t>\gamma\}\),
\begin{equation}
  \|q_t-e_{J_t}\|_1\le\frac{2H_h}{a_t\gamma}.
  \label{eq:current-winner-bound}
\end{equation}
This controls the response whenever the normalized top scores
are separated. It remains to bound the probability of a near tie.

\paragraph{Excluding near ties in probability.}
Let \(\Gamma_\infty\) be the corresponding top-score gap of
\(\Xi_{\infty,K}\).
The largest and second-largest coordinates are continuous functions
of a finite score vector, so
\(\Gamma_t\dto\Gamma_\infty\).
The unique-maximum assumption gives
\(\Gamma_\infty>0\) almost surely.
For fixed \(\varepsilon,\gamma>0\), take \(t\) large enough that
\(2H_h/(a_t\gamma)<\varepsilon\).
Then \eqref{eq:current-winner-bound} and the Portmanteau theorem yield
\[
  \limsup_{t\to\infty}
    \Pr(\|q_t-e_{J_t}\|_1>\varepsilon)
    \le\limsup_{t\to\infty}\Pr(\Gamma_t\le\gamma)
    \le\Pr(\Gamma_\infty\le\gamma).
\]
Letting \(\gamma\downarrow0\) makes the right-hand side vanish.
Therefore \(\|q_t-e_{J_t}\|_1\pto0\), and boundedness by two
also gives convergence in expectation.

\paragraph{Transferring the response and mean limits.}
Face reduction and the triangle inequality give
\[
  \|y_t-e_{J_t}\|_1
    \le\|y_t-q_t\|_1+\|q_t-e_{J_t}\|_1\pto0.
\]
Both terms on the right also vanish in expectation, proving the
claimed approximation by the current winning vertex.
The map from a score vector to its winning vertex is continuous
where the largest coordinate is unique.
The continuous mapping theorem thus gives
\(e_{J_t}\dto e_{J_\infty}\).
Together with the vanishing response distance, Slutsky's theorem
proves \(y_t\dto e_{J_\infty}\).
Finally, boundedness and continuity of each coordinate on the
simplex give
\[
  \mathbb E[y_{t,j}]\longrightarrow
    \mathbb E[(e_{J_\infty})_j]=\Pr(J_\infty=j)
    \qquad(j\in K),
\]
with limit zero outside \(K\).
\end{proof}

This completes the proof of \cref{thm:main-noise-regimes}.
The growing-noise argument controls the response at a late round;
it does not assert convergence along a sample path to a fixed
vertex, since \(J_t\) may keep changing.

\subsection{Details of the numerical illustration}
\label{app:rps-noise-regimes}

This subsection gives the setup and noise-condition checks for
\cref{fig:rps-three-triangles}. The supplementary
\cref{fig:rps-noise-regimes} includes sample trajectories and
pointwise empirical percentiles.

\paragraph{Experimental setting.}
The optimizer plays \(x^\star=(1/3,1/3,1/3)\) at every round in
rock--paper--scissors, with payoff matrix
\[
  A=\begin{pmatrix}0&-1&1\\1&0&-1\\-1&1&0\end{pmatrix}.
\]
Here \(A^\top x^\star=0\), \(\val=0\), \(K=[3]\), and
\(S_t=0\). The learner uses negative entropy
\(h(y)=\sum_{j=1}^3y_j\log y_j\), with \(0\log0=0\), and
\[
  \eta_t=(t+1)^{-\rho},\qquad
  Z_t=\sum_{s=0}^{t-1}z_s,\qquad
  z_s\overset{\mathrm{iid}}{\sim}\mathcal N(0,I_3),\qquad Z_0=0.
\]
Thus \(y_t=\operatorname{softmax}(\eta_tZ_t)\), starting from
\(\bar y=(1/3,1/3,1/3)\).
We simulate \(5000\) independent runs through \(T=10^6\) for
\(\rho\in\{0.75,0.50,0.25\}\).
Within each run, the three schedules use the same Gaussian
increments, so their learning rates are the only changing parameter.
The simulations were run on a Mac equipped with an Apple M1 chip.

\paragraph{Verification of the hypotheses.}
Since \(Z_t\sim\mathcal N(0,tI_3)\), for \(t\ge1\) we have
the exact marginal representation
\[
  \Xi_t\overset d=a_tG,
  \qquad G\sim\mathcal N(0,I_3),
  \qquad a_t=\frac{\sqrt t}{(t+1)^\rho}\sim t^{1/2-\rho}.
\]
For every \(0<\rho<1\), this gives
\[
  r_t=\frac{t}{(t+1)^\rho}\longrightarrow\infty,
  \qquad
  \frac{\spnorm{\Xi_t}}{r_t}
    \overset d=\frac{\spnorm{G}}{\sqrt t}\pto0.
\]
Hence \eqref{eq:normal-condition} holds for all three schedules.
The relative-noise conditions and response limits are as follows.
\begin{itemize}
  \item For \(\rho=0.75\), \(a_t\to0\), so
  \(\spnorm{\Xi_t}\overset d=a_t\spnorm{G}\pto0\).
  This is \regime{I};
  \cref{prop:vanishing,cor:uniform-separable-selection} give
  \(y_t\pto\bar y\).

  \item For \(\rho=0.50\),
  \(a_t=\sqrt{t/(t+1)}\to1\), so \(\Xi_t\dto G\).
  This is \regime{II} with \(c_t=0\), and
  \cref{prop:finite} gives
  \(y_t\dto\operatorname{softmax}(G)\).

  \item For \(\rho=0.25\), \(a_t\to\infty\) and
  \(\Xi_t/a_t\overset d=G\).
  For each \(i\ne j\), \(G_i-G_j\sim\mathcal N(0,2)\), so
  a tie has probability zero. A finite union over pairs shows that
  the largest coordinate is almost surely unique.
  Thus \regime{III} holds with \(c_t=0\), and
  \cref{thm:diverging} gives
  \(y_t\dto e_{\argmax_jG_j}\).
\end{itemize}
The threshold \(\rho=1/2\) follows from the exponent in
\(a_t\sim t^{1/2-\rho}\).
At \(T=10^6\), the three noise scales are approximately
\(0.03162\), \(1\), and \(31.62277\).
These checks use marginal distributions only, so the dependence
across rounds created by cumulative noise is permitted.

\begin{figure}[!t]
    \centering
    \includegraphics[width=\textwidth]{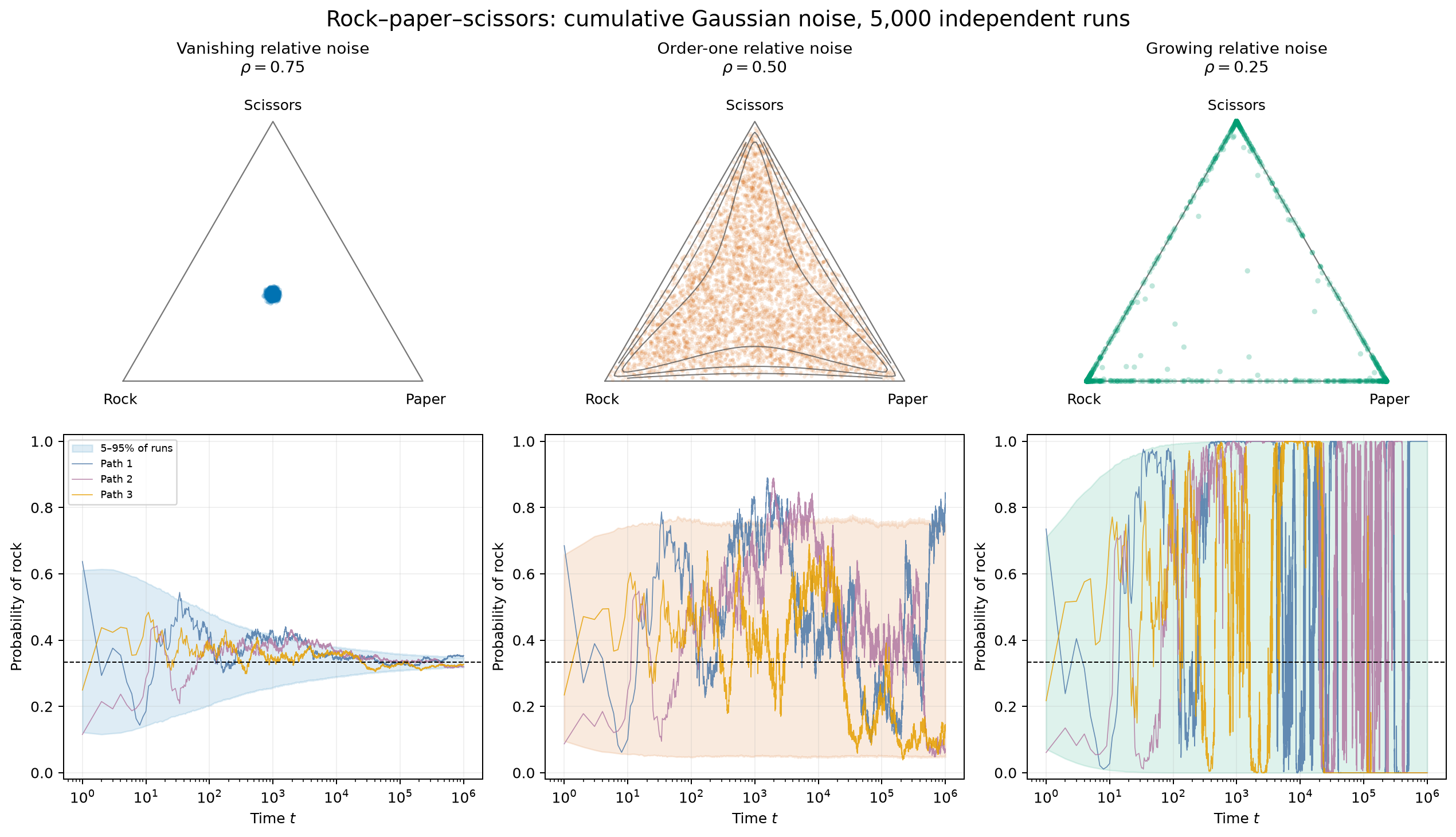}
    \caption{Three relative-noise regimes in rock--paper--scissors,
    with \(\rho=0.75\),
    \(0.50\),
    and \(0.25\) from left to right.
    Top: terminal responses from \(5000\) runs at
    \(T=10^6\); gray contours show the limiting logistic-normal
    density.
    Bottom: rock probabilities along three sample trajectories,
    using the same runs and colors across
    panels.
    Bands show pointwise empirical 5th--95th percentiles; dashed
    lines mark \(1/3\).}
    \label{fig:rps-noise-regimes}
\vspace{-1.5em}
\end{figure}

\paragraph{Supplementary measurements and trajectory plots.}
The trajectories in the bottom row of \cref{fig:rps-noise-regimes}
were chosen independently of their outcomes; matching colors show
the same runs across schedules.
The bands are pointwise empirical 5th--95th percentiles across all
\(5000\) runs and describe the spread of individual responses.
For \regime{I}, the mean terminal \(\ell_1\)-distance from
\(\bar y\) is \(0.02050\).
For \regime{II}, the sample variance of the terminal rock probability
is \(0.04790\), compared with the limiting variance \(0.04941\)
evaluated by Gaussian quadrature.
For \regime{III}, \(4410/5000=88.20\%\) of terminal responses
satisfy \(\max_j y_{T,j}\ge0.99\); the trajectories may still switch
between near-pure actions, consistently with the marginal limit
proved above.

In this experiment, the Gaussian law is invariant under coordinate
permutations, and softmax commutes with those permutations.
The three response coordinates therefore have equal expectations
and sum to one, giving \(\mathbb E[y_t]=\bar y\) at every time
for all three schedules. This verifies that the different response
distributions in the figures have the same mean.

\section{Proofs for Normal Exploitation: Incomplete Localization}
\label{sec:passive}

This appendix proves
\cref{prop:normal-exploitation-informal,thm:main-normal-bounded-surplus,cor:main-normal-maxmin-bounded-surplus}
from \cref{subsec:normal-exploitation} and derives the criteria for
standard regularizers cited there.
The coordinate estimates are stated for an arbitrary constant
strategy \(\hat x\), as required by the bounded-surplus theorem.
Specializing to a max--min strategy gives the normal-surplus bounds.

Throughout, \(h\) is separable as in
\cref{def:separable-regularizer}, with kernel \(\theta\) and
truncated inverse derivative \(\phi\).
The learning rates \(\eta_t>0\) are deterministic, and the noise
may have arbitrary dependence across coordinates and rounds.
Fix \(\hat x\in\Delta_n\) independently of the horizon and repeat
it at every round.

For a strategy that need not be max--min, we measure payoff gaps
from its own best-response payoff. Write
\[
  v=A^\top\hat x,\qquad u_o^\ast=\min_jv_j,\qquad
  K=\br(\hat x),\qquad k=|K|,
\]
and let \(\delta_j=v_j-u_o^\ast\) and \(N_{\rm sub}=m-k\).
Thus \(\delta_j=0\) on \(K\) and \(\delta_j>0\) outside \(K\).
When \(\hat x=x^\star\), we have \(u_o^\ast=\val\), recovering
the gaps in \cref{subsec:normal-exploitation}.
Since \(S_t=-tv\), removing the common score shift gives
\[
  y_t=Q_h(-r_t\delta+\Xi_t),
  \qquad r_t=t\eta_t,\qquad \Xi_t=\eta_tZ_t.
\]

To pass from response bounds to surplus, separate the cost of the
fixed strategy from the gain due to the learner's off-face mass.
The identity
\(\hat x^\top Ay_t=u_o^\ast+\sum_{j\notin K}\delta_jy_{t,j}\)
gives, after taking expectations and summing,
\begin{equation}
  \Surplus(\hat x_{\times T})
    =\vagd(\hat x_{\times T},T)+\lagd(\hat x_{\times T},T),
  \label{eq:fixed-surplus-decomposition-discrete}
\end{equation}
where, following the VAG/LAG notation of
\citet[Definition~2]{su2026nocoin},
\[
  \begin{aligned}
    \vagd(\hat x_{\times T},T)
      &:=T(u_o^\ast-\val)\le0,\\
    \lagd(\hat x_{\times T},T)
      &:=\sum_{t=0}^{T-1}\sum_{j\notin K}
           \delta_j\mathbb E[y_{t,j}]\ge0.
  \end{aligned}
\]
The value gap \(\vagd\) vanishes exactly for max--min play.
The learner-approximation gain \(\lagd\) therefore bounds surplus
from above for every fixed strategy and equals surplus for
\(\hat x=x^\star\).
If \(N_{\rm sub}=0\), then \(\lagd=0\) and surplus is exactly
\(T(u_o^\ast-\val)\).
Whenever \(N_{\rm sub}>0\), we use
\[
  \delta_{\min}=\min_{j\notin K}\delta_j,
  \qquad \delta_{\max}=\max_{j\notin K}\delta_j.
\]

\subsection{Proof of the normal-surplus bounds}

To prove \cref{prop:normal-exploitation-informal}, we first bound
each learner coordinate using the KKT representation from
\appref{app:preliminary-kkt}.
Summing these bounds with their payoff gaps will give the desired
surplus estimates.

\begin{lemma}[Pointwise noisy FTRL bounds under fixed play]
\label{lem:pointwise-noisy-ftrl-bound}
For every round \(t\ge0\), every \(j\in[m]\), and every
realization of \(\Xi_t\),
\begin{subequations}
\begin{align}
  y_{t,j}
    &\ge\phi\!\left(\theta'(1/m)-r_t\delta_j
                              -\spnorm{\Xi_t}\right),
      \label{eq:pointwise-noisy-ftrl-bound:a}\\
  y_{t,j}
    &\le\phi\!\left(\theta'(1/k)-r_t\delta_j
                              +\spnorm{\Xi_t}\right).
      \label{eq:pointwise-noisy-ftrl-bound:b}
\end{align}
\end{subequations}
No distributional assumption on the noise is required.
\end{lemma}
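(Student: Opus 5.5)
The argument uses only the separable KKT representation \eqref{eq:preliminary-kkt-choice}. Write \(u=-r_t\delta+\Xi_t\), so \(y_{t,j}=\phi(u_j+\lambda)\) for a normalizing multiplier \(\lambda\). Since \(\phi\) is nondecreasing, both bounds reduce to locating \(\lambda\) relative to \(\theta'(1/m)\) or \(\theta'(1/k)\) after subtracting a suitable reference score. Common shifts do not change the response, so I would shift so that \(\max_i\Xi_{t,i}\) or \(\min_i\Xi_{t,i}\) plays the role of the reference.

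\emph{Lower bound \eqref{eq:pointwise-noisy-ftrl-bound:a}.} Some coordinate has \(y_{t,i}\ge 1/m\) because the coordinates sum to one. Fix such an \(i\). Then \(\phi(u_i+\lambda)\ge1/m\). Since \(\phi\) is the truncated inverse of the increasing \(\theta'\), this gives \(u_i+\lambda\ge\theta'(1/m)\). The reason is that \(\phi(z)\ge p\in(0,1)\) forces \(z\ge\theta'(p)\); this holds on the interior branch, and in the upper truncated branch \(z\ge\theta'(1)\ge\theta'(p)\). Consequently
\[
u_j+\lambda\ge \theta'(1/m)+u_j-u_i=\theta'(1/m)-r_t\delta_j+r_t\delta_i+\Xi_{t,j}-\Xi_{t,i}\ge\theta'(1/m)-r_t\delta_j-\spnorm{\Xi_t},
\]
using \(\delta_i\ge0\) and \(\Xi_{t,j}-\Xi_{t,i}\ge-\spnorm{\Xi_t}\). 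Monotonicity of \(\phi\) yields (a). The case \(m=1\) is trivial.

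\emph{Upper bound \eqref{eq:pointwise-noisy-ftrl-bound:b}.} Among the \(k\) coordinates of \(K\), the ones with \(\delta=0\), some \(i\in K\) has \(y_{t,i}\le1/k\) if the \(K\)-mass is at most one. That mass is at most one always, since the full sum is one. Hence \(\phi(u_i+\lambda)\le1/k\).

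This is the step I expect to be the main obstacle. We need \(u_i+\lambda\le\theta'(1/k)\), but the truncation of \(\phi\) makes the inversion delicate in two places.
\begin{itemize}
\item If \(k=1\), then \(\theta'(1)\) is the threshold at which \(\phi\) saturates, and the target bound \(\phi(\cdot)\le1\) is trivial anyway.
\item If \(k\ge2\) and \(\phi(u_i+\lambda)\le 1/k<1\), then either \(u_i+\lambda\) lies on the interior branch, giving \(u_i+\lambda=\theta'(y_{t,i})\le\theta'(1/k)\), or \(y_{t,i}=0\), giving \(u_i+\lambda\le\theta'(0^+)\le\theta'(1/k)\) for a non-steep kernel.
\end{itemize}
So in every case \(u_i+\lambda\le\theta'(1/k)\), or the claim holds trivially. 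Then
\[
u_j+\lambda\le\theta'(1/k)+u_j-u_i=\theta'(1/k)-r_t\delta_j+\Xi_{t,j}-\Xi_{t,i}\le\theta'(1/k)-r_t\delta_j+\spnorm{\Xi_t},
\]
using \(\delta_i=0\). Applying the monotone \(\phi\) gives (b).

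Nothing probabilistic enters: both inequalities hold realization by realization, as the lemma states.
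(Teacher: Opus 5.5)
Your route is the paper's. For (a) you take a coordinate with $y_{t,\ell}\ge 1/m$, and for (b) a coordinate $i\in K$ with $y_{t,i}\le 1/k$. You turn each into a bound on the multiplier and transfer it to $j$ through $u_j-u_i$. Your inversion of the truncated $\phi$ is correct for any admissible multiplier in (a), and in (b) when $k\ge 2$.

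\emph{The gap is the $k=1$ bullet.} There (b) is not trivial. For $j\notin K$ it says $y_{t,j}\le\phi\bigl(\theta'(1)-r_t\delta_j+\spnorm{\Xi_t}\bigr)$. This is precisely the off-face tail bound behind the normal-surplus estimates when the best response is unique, and $\phi\le 1$ says nothing about it.

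Nor does the needed inequality $u_i+\lambda\le\theta'(1)$ hold automatically. Suppose the response is the vertex $e_i$, which can happen for non-steep kernels such as the Euclidean one. Then every $\lambda$ with $u_i+\lambda\ge\theta'(1)$ and $u_b+\lambda\le\theta'(0^+)$ for $b\ne i$ is a valid normalizing multiplier. So $u_i+\lambda$ can exceed $\theta'(1)$, and your chain of inequalities fails for that choice.

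\emph{The repair is short.} If $y_{t,i}<1$, then $\phi(u_i+\lambda)<1$ forces $u_i+\lambda<\theta'(1)$, and your $k\ge 2$ argument goes through unchanged. If $y_{t,i}=1$, then $y_{t,j}=0$ for every $j\ne i$, so (b) holds because $\phi\ge 0$. At $j=i$ the right-hand side is $\phi(\theta'(1)+\spnorm{\Xi_t})=1$.

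The paper avoids the case split by fixing the multiplier once. It takes $\lambda_t$ with $u_{t,a}+\lambda_t=\theta'(y_{t,a})$ on every positive coordinate. At a vertex $e_i$ this means $\lambda_t=\theta'(1)-u_{t,i}$. That value is admissible, since it is the least value meeting the active condition and lowering $\lambda$ preserves the inactive conditions. With this convention, $u_{t,i}+\lambda_t\le\theta'(1/k)$ holds for every $k$, directly from monotonicity of $\theta'$ or from the inactive-coordinate inequality.
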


\begin{proof}
Fix \(t\) and a noise realization, and put
\(u_t=-r_t\delta+\Xi_t\).
Choose the common multiplier \(\lambda_t\) in
\eqref{eq:preliminary-kkt-choice} so that
\[
  y_{t,j}=\phi(u_{t,j}+\lambda_t),\qquad
  \begin{cases}
    u_{t,j}+\lambda_t=\theta'(y_{t,j}),&y_{t,j}>0,\\
    u_{t,j}+\lambda_t\le\theta'(0^+),&y_{t,j}=0.
  \end{cases}
\]
Such a choice is available even when the response is a vertex,
as explained in \appref{app:preliminary-kkt}.
We bound \(\lambda_t\) from above and below by choosing a
coordinate whose probability is controlled by normalization.

\paragraph{Upper coordinate bound.}
Since \(\sum_{i\in K}y_{t,i}\le1\), some \(i\in K\) has
\(y_{t,i}\le1/k\).
If this coordinate is positive, monotonicity of \(\theta'\) gives
\(u_{t,i}+\lambda_t\le\theta'(1/k)\).
If it is zero, the inactive-coordinate inequality gives the same
conclusion. Thus \(\lambda_t\le\theta'(1/k)-u_{t,i}\).
Because \(\delta_i=0\), for every \(j\) we obtain
\[
  u_{t,j}+\lambda_t
    \le\theta'(1/k)+u_{t,j}-u_{t,i}
    \le\theta'(1/k)-r_t\delta_j+\spnorm{\Xi_t}.
\]
Applying the nondecreasing map \(\phi\) proves
\eqref{eq:pointwise-noisy-ftrl-bound:b}, including at the boundary.

\paragraph{Lower coordinate bound.}
Some coordinate \(\ell\in[m]\) has \(y_{t,\ell}\ge1/m>0\).
Its KKT equality implies
\[
  \lambda_t=\theta'(y_{t,\ell})-u_{t,\ell}
    \ge\theta'(1/m)-u_{t,\ell}.
\]
Here \(\delta_\ell\ge0\), so
\(u_{t,j}-u_{t,\ell}
=-r_t\delta_j+r_t\delta_\ell+\Xi_{t,j}-\Xi_{t,\ell}
\ge-r_t\delta_j-\spnorm{\Xi_t}\).
Consequently,
\[
  u_{t,j}+\lambda_t
    \ge\theta'(1/m)-r_t\delta_j-\spnorm{\Xi_t}.
\]
Applying \(\phi\) now proves
\eqref{eq:pointwise-noisy-ftrl-bound:a}.
Both comparisons hold for each realization and each round,
including \(t=0\).
\end{proof}

The next proposition sums these coordinate estimates for
\(\lagd\). We retain the general fixed-strategy form because it
is also used in the bounded-surplus argument below.

\begin{proposition}[Bounds on the learner-approximation gain]
\label{prop:fixed-lag-bounds}
If \(N_{\rm sub}=0\), then
\(\lagd(\hat x_{\times T},T)=0\) for every \(T\).
Otherwise, for every integer \(T\ge1\),
\begin{subequations}
\begin{align}
  \lagd(\hat x_{\times T},T)
    &\ge N_{\rm sub}\delta_{\min}\sum_{t=0}^{T-1}
       \mathbb E\!\left[\phi\!\left(
         \theta'(1/m)-\delta_{\max}r_t-\spnorm{\Xi_t}
       \right)\right],\label{eq:normal-surplus-lower}\\
  \lagd(\hat x_{\times T},T)
    &\le N_{\rm sub}\delta_{\max}\sum_{t=0}^{T-1}
       \mathbb E\!\left[\phi\!\left(
         \theta'(1/k)-\delta_{\min}r_t+\spnorm{\Xi_t}
       \right)\right].\label{eq:normal-surplus-upper}
\end{align}
\end{subequations}
No distributional assumption on the noise is required.
\end{proposition}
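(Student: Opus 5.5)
The plan is to sum the pointwise coordinate bounds of \cref{lem:pointwise-noisy-ftrl-bound} against the payoff gaps, using only monotonicity of \(\phi\). No distributional input is needed, because the lemma holds realization by realization and \(\phi\) takes values in \([0,1]\), so every expectation below is finite.

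\textbf{Case \(N_{\rm sub}=0\).} Here \(K=[m]\), so the sum over \(j\notin K\) defining \(\lagd\) is empty. Hence \(\lagd(\hat x_{\times T},T)=0\) for every \(T\), directly from the definition.

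\textbf{Case \(N_{\rm sub}>0\), upper bound.} Fix \(t\), a realization, and \(j\notin K\). Inequality \eqref{eq:pointwise-noisy-ftrl-bound:b} gives \(y_{t,j}\le\phi(\theta'(1/k)-r_t\delta_j+\spnorm{\Xi_t})\). Since \(\delta_j\ge\delta_{\min}\), \(r_t\ge0\), and \(\phi\) is nondecreasing, this is at most \(\phi(\theta'(1/k)-\delta_{\min}r_t+\spnorm{\Xi_t})\). Multiplying by \(\delta_j\le\delta_{\max}\) and summing over the \(N_{\rm sub}\) indices \(j\notin K\) bounds \(\sum_{j\notin K}\delta_jy_{t,j}\) by \(N_{\rm sub}\delta_{\max}\) times that \(\phi\)-term. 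Taking expectations and summing over \(t<T\) gives \eqref{eq:normal-surplus-upper}.

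\textbf{Case \(N_{\rm sub}>0\), lower bound.} Symmetrically, \eqref{eq:pointwise-noisy-ftrl-bound:a} with \(\delta_j\le\delta_{\max}\) gives \(y_{t,j}\ge\phi(\theta'(1/m)-\delta_{\max}r_t-\spnorm{\Xi_t})\ge0\). Multiplying by \(\delta_j\ge\delta_{\min}>0\), summing over \(j\notin K\), taking expectations, and summing over \(t\) yields \eqref{eq:normal-surplus-lower}.

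The only delicate points are to check that \(\phi\) is nondecreasing, including its truncated branches at \(0\) and \(1\), and that the \(\phi\)-terms are measurable and bounded, so that expectations commute with the finite sums. Both follow from the scalar representation \(\phi=(\theta^*)'\) in \appref{app:preliminary-kkt}: \(\theta^*\) is convex, so its derivative is monotone, and \(\phi\) is continuous. I do not expect any genuine obstacle here. The substantive work was already done in the pointwise lemma, so this proof is pure bookkeeping. Finally, by \eqref{eq:fixed-surplus-decomposition-discrete}, for \(\hat x=x^\star\) we have \(\vagd=0\), and these bounds become \cref{prop:normal-exploitation-informal}.
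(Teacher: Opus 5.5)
Your proposal is correct and matches the paper's proof essentially step for step. Like the paper, you apply the pointwise bounds of \cref{lem:pointwise-noisy-ftrl-bound} and use monotonicity of \(\phi\) to replace \(\delta_j\) inside its argument by \(\delta_{\max}\) (lower bound) or \(\delta_{\min}\) (upper bound), then bound the nonnegative prefactor \(\delta_j\) the other way, take expectations, and sum over the \(N_{\rm sub}\) off-face coordinates and \(t<T\).
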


\begin{proof}
If \(N_{\rm sub}=0\), the sum defining \(\lagd\) is empty.
Otherwise, fix \(j\notin K\).
For the lower bound, \cref{lem:pointwise-noisy-ftrl-bound} gives
\[
  \delta_jy_{t,j}
    \ge\delta_j\phi\!\left(
         \theta'(1/m)-r_t\delta_j-\spnorm{\Xi_t}\right)
    \ge\delta_{\min}\phi\!\left(
         \theta'(1/m)-r_t\delta_{\max}-\spnorm{\Xi_t}\right).
\]
The second inequality uses two monotonicities: increasing the gap
inside \(\phi\) decreases its argument, and decreasing the
nonnegative prefactor decreases the product.
For the upper bound, the reverse choices yield
\[
  \delta_jy_{t,j}
    \le\delta_j\phi\!\left(
         \theta'(1/k)-r_t\delta_j+\spnorm{\Xi_t}\right)
    \le\delta_{\max}\phi\!\left(
         \theta'(1/k)-r_t\delta_{\min}+\spnorm{\Xi_t}\right).
\]
There are \(N_{\rm sub}\) positive-gap coordinates.
Taking expectations and summing these inequalities over those
coordinates and \(t<T\) gives the two bounds on \(\lagd\).
\end{proof}

\begin{corollary}[\cref{prop:normal-exploitation-informal} as a special case]
\label{prop:normal-surplus-bounds}
For a fixed max--min strategy \(x^\star\),
\begin{equation}
  \Surplus((x^\star)_{\times T})
    =\lagd((x^\star)_{\times T},T).
  \label{eq:normal-surplus-identity}
\end{equation}
Consequently, \(\Surplus((x^\star)_{\times T})=0\) if
\(N_{\rm sub}=0\).
Otherwise, \eqref{eq:normal-surplus-lower} and
\eqref{eq:normal-surplus-upper} hold with
\(\lagd(\hat x_{\times T},T)\) replaced by
\(\Surplus((x^\star)_{\times T})\).
\end{corollary}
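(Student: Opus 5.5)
The corollary follows by specializing the fixed-strategy decomposition \eqref{eq:fixed-surplus-decomposition-discrete} to \(\hat x=x^\star\). For a max--min strategy, \(\min_jv_j=\min_{y\in\Delta_m}x^{\star\top}Ay=\val\) by \cref{def:preliminary-minimax-strategies}, since a linear function on the simplex attains its minimum at a vertex. Hence \(u_o^\ast=\val\) and the value gap vanishes:
\[
  \vagd((x^\star)_{\times T},T)=T(u_o^\ast-\val)=0 .
\]
Substituting into \eqref{eq:fixed-surplus-decomposition-discrete} yields the identity \eqref{eq:normal-surplus-identity}, namely \(\Surplus((x^\star)_{\times T})=\lagd((x^\star)_{\times T},T)\).

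In this specialization the gaps \(\delta_j=v_j-u_o^\ast\) equal \(A^\top x^\star-\val\mathbf1\), which is exactly the \(\delta\) of \cref{sec:learner-dynamics-maxmin}. Likewise \(K=\br(x^\star)\), \(k\), \(N_{\rm sub}\), \(\delta_{\min}\) and \(\delta_{\max}\) coincide with the quantities in \cref{prop:normal-exploitation-informal}. The statements about \(\lagd\) therefore transfer verbatim to \(\Surplus\).

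If \(N_{\rm sub}=0\), \cref{prop:fixed-lag-bounds} gives \(\lagd=0\), and hence zero surplus. Otherwise, the two-sided bounds \eqref{eq:normal-surplus-lower} and \eqref{eq:normal-surplus-upper} hold for \(\lagd\) and therefore for \(\Surplus\) by the identity.

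There is no real obstacle. The only point requiring care is confirming that the decomposition \eqref{eq:fixed-surplus-decomposition-discrete} is an exact equality. This rests on the pointwise identity \(x^{\star\top}Ay_t=\val+\sum_{j\notin K}\delta_jy_{t,j}\), which holds because \(y_t\in\Delta_m\) and \(\delta_j=0\) on \(K\). That exactness is what upgrades the upper bound on surplus to an equality.
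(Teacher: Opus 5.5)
Your proposal is correct and follows the paper's proof. Setting \(\hat x=x^\star\) gives \(u_o^\ast=\val\), so the value-gap term in \eqref{eq:fixed-surplus-decomposition-discrete} vanishes, and the identity, the zero case, and both bounds then transfer directly from \cref{prop:fixed-lag-bounds}.
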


\begin{proof}
For \(\hat x=x^\star\), max--min optimality gives
\(u_o^\ast=\val\). Hence \(\vagd=0\) in
\eqref{eq:fixed-surplus-decomposition-discrete}, proving
\eqref{eq:normal-surplus-identity}.
Substituting this identity into \cref{prop:fixed-lag-bounds}
gives exactly the two inequalities and the zero-gain case in
\cref{prop:normal-exploitation-informal}.
\end{proof}

\subsection{Separating the response series from the noise}

To use the summability conditions in
\cref{thm:main-normal-bounded-surplus}, we need to remove the
random noise from the argument of \(\phi\).
The next estimate does this for each coordinate by retaining a
fixed fraction of its deterministic gap on a suitable event and
charging the complementary event through its probability.
The lower estimate records the corresponding loss from
exceptional noise events.

\begin{proposition}[Direct bounds on the learner-approximation gain]
\label{thm:fixed-noisy-lag}
For every fixed \(\hat x\), every \(\alpha\in(0,1)\), every
\(\beta>0\), and every \(T\ge1\),
\begin{align}
  \lagd(\hat x_{\times T},T)
    \le\sum_{j\notin K}\delta_j\sum_{t=0}^{T-1}
       \phi\!\left(\theta'(1/k)-\alpha\delta_jr_t\right)+\sum_{j\notin K}\delta_j\sum_{t=0}^{T-1}
       \Pr\!\left(\spnorm{\Xi_t}>(1-\alpha)\delta_jr_t\right),
       \label{eq:fixed-lag-master-upper}\\
  \lagd(\hat x_{\times T},T)
    \ge\sum_{j\notin K}\delta_j\sum_{t=0}^{T-1}
       \phi\!\left(\theta'(1/m)-(1+\beta)\delta_jr_t\right)-\sum_{j\notin K}\delta_j\sum_{t=0}^{T-1}
       \Pr\!\left(\spnorm{\Xi_t}>\beta\delta_jr_t\right).
       \notag
\end{align}
\end{proposition}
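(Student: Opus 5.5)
The plan is to start from the definition \(\lagd(\hat x_{\times T},T)=\sum_{t<T}\sum_{j\notin K}\delta_j\,\mathbb E[y_{t,j}]\) and bound each expectation \(\mathbb E[y_{t,j}]\) separately, using the pointwise estimates of \cref{lem:pointwise-noisy-ftrl-bound}. Those estimates hold for every realization, but \(\spnorm{\Xi_t}\) sits inside \(\phi\). To remove it, fix \(j\notin K\) and \(t\), and split the expectation according to whether the noise span is small relative to a fixed fraction of the deterministic gap \(\delta_jr_t\).

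\textbf{Upper bound.} Let \(E_{t,j}=\{\spnorm{\Xi_t}\le(1-\alpha)\delta_jr_t\}\). On this event, \eqref{eq:pointwise-noisy-ftrl-bound:b} and the monotonicity of \(\phi\) give
\[
y_{t,j}\le\phi\bigl(\theta'(1/k)-r_t\delta_j+(1-\alpha)\delta_jr_t\bigr)=\phi\bigl(\theta'(1/k)-\alpha\delta_jr_t\bigr).
\]
On the complement, use only the trivial bound \(y_{t,j}\le1\). This yields
\[
\mathbb E[y_{t,j}]\le\phi\bigl(\theta'(1/k)-\alpha\delta_jr_t\bigr)\Pr(E_{t,j})+\Pr(E_{t,j}^c).
\]
Replacing \(\Pr(E_{t,j})\) by \(1\), multiplying by \(\delta_j\ge0\), and summing over \(t<T\) and \(j\notin K\) gives \eqref{eq:fixed-lag-master-upper}. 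At \(t=0\) we have \(r_0=0\), so the event is \(\{\spnorm{\Xi_0}\le0\}\) and the bound still holds. No independence is needed, since each round is handled through its marginal law.

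\textbf{Lower bound.} Let \(F_{t,j}=\{\spnorm{\Xi_t}\le\beta\delta_jr_t\}\). On this event, \eqref{eq:pointwise-noisy-ftrl-bound:a} and the monotonicity of \(\phi\) give
\[
y_{t,j}\ge\phi\bigl(\theta'(1/m)-(1+\beta)\delta_jr_t\bigr)=:p_{t,j}.
\]
Since \(y_{t,j}\ge0\) everywhere,
\[
\mathbb E[y_{t,j}]\ge p_{t,j}\Pr(F_{t,j})=p_{t,j}-p_{t,j}\Pr(F_{t,j}^c)\ge p_{t,j}-\Pr(F_{t,j}^c),
\]
where the last step uses \(p_{t,j}\le1\). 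Weighting by \(\delta_j\) and summing gives the second inequality.

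The only care needed is measurability of the events and monotonicity of \(\phi\), both of which are immediate. There is no real obstacle: the main step is simply choosing the split threshold so that the surviving deterministic gap is exactly \(\alpha\delta_jr_t\) in the upper bound and \((1+\beta)\delta_jr_t\) in the lower bound.
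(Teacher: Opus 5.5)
Your proposal is correct and follows the paper's proof essentially step for step. It uses the same events \(E_{t,j}\) and \(F_{t,j}\), the pointwise bounds of \cref{lem:pointwise-noisy-ftrl-bound} on those events, the trivial bounds \(0\le y_{t,j}\le1\) on the complements, and \(0\le\phi\le1\) to drop the probability factors before weighting by \(\delta_j\) and summing using only the marginal laws.
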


\begin{proof}
Fix \(j\notin K\).
For the upper bound, retaining at least the fraction \(\alpha\)
of its gap requires
\(-r_t\delta_j+\spnorm{\Xi_t}\le-\alpha\delta_jr_t\).
This motivates the event
\[
  E_{t,j}:=\{\spnorm{\Xi_t}\le(1-\alpha)\delta_jr_t\}.
\]
On \(E_{t,j}\), \eqref{eq:pointwise-noisy-ftrl-bound:b} gives
\(y_{t,j}\le\phi(\theta'(1/k)-\alpha\delta_jr_t)\).
On \(E_{t,j}^c\), use \(y_{t,j}\le1\).
Splitting the expectation therefore yields
\[
  \begin{aligned}
    \mathbb E[y_{t,j}]
      &\le\phi\!\left(\theta'(1/k)-\alpha\delta_jr_t\right)
            \Pr(E_{t,j})+\Pr(E_{t,j}^c)\\
      &\le\phi\!\left(\theta'(1/k)-\alpha\delta_jr_t\right)
            +\Pr(E_{t,j}^c).
  \end{aligned}
\]
The first term bounds the response when the required fraction of
the gap survives, and the second bounds the contribution of the
exceptional event.

For the lower bound, limit the additional disadvantage caused by
noise to \(\beta\delta_jr_t\), using
\[
  F_{t,j}:=\{\spnorm{\Xi_t}\le\beta\delta_jr_t\}.
\]
On this event, \eqref{eq:pointwise-noisy-ftrl-bound:a} gives
\(y_{t,j}\ge\phi(\theta'(1/m)-(1+\beta)\delta_jr_t)\).
On the complement, the coordinate is nonnegative.
Since \(0\le\phi\le1\),
\[
  \begin{aligned}
    \mathbb E[y_{t,j}]
      &\ge\phi\!\left(\theta'(1/m)-(1+\beta)\delta_jr_t\right)
            \Pr(F_{t,j})\\
      &\ge\phi\!\left(\theta'(1/m)-(1+\beta)\delta_jr_t\right)
            -\Pr(F_{t,j}^c).
  \end{aligned}
\]
Multiply the upper and lower expected-coordinate bounds by
\(\delta_j\), then sum over \(j\notin K\) and \(t<T\).
This gives the proposition. If \(N_{\rm sub}=0\), both sides
are zero by empty summation.
Only marginal event probabilities enter the argument, so no
temporal independence is needed.
\end{proof}

\subsection{Proof of bounded surplus for a fixed strategy}

We now prove \cref{thm:main-normal-bounded-surplus}.
Using one event for all positive-gap coordinates lets us charge
an exceptional round by \(\delta_{\max}\), since their total
probability is at most one.
For \(N_{\rm sub}>0\) and \(\alpha\in(0,1)\), define
\begin{equation*}
  E_t:=\{\spnorm{\Xi_t}\le(1-\alpha)\delta_{\min}r_t\}
      =\{\spnorm{Z_t}\le(1-\alpha)\delta_{\min}t\}.
\end{equation*}
On \(E_t\), noise removes at most a fraction \(1-\alpha\)
of the smallest positive gap, and therefore of every positive gap.
The equality of the two event descriptions uses only
\(\Xi_t=\eta_tZ_t\), \(r_t=t\eta_t\), and \(\eta_t>0\).

We first state the summability conditions in the form used for
the more detailed fixed-strategy bound.
The response condition sums the payoff-weighted coordinate
bounds on \(E_t\).
\begin{assumption}[Summable response series]
\label{ass:fixed-response-summability}
For the fixed strategy \(\hat x\), there exists
\(\alpha\in(0,1)\) such that
\begin{equation*}
  \sum_{j\notin K}\delta_j\sum_{t=0}^{\infty}
    \phi\!\left(\theta'(1/k)-\alpha\delta_jr_t\right)<\infty.
\end{equation*}
\end{assumption}

The following noise condition is uniform over fixed positive
thresholds, so it can be applied to different choices of \(\hat x\).
\begin{assumption}[Summable noise tails]
\label{ass:passive-linear-tail}
For every constant \(c>0\),
\begin{equation*}
  \sum_{t=1}^{\infty}\Pr\!\left(\spnorm{Z_t}>ct\right)<\infty.
\end{equation*}
\end{assumption}
Equivalently, \(\sum_{t=1}^{\infty}
\Pr(\spnorm{\Xi_t}>cr_t)<\infty\) for every \(c>0\).
After proving the bound under these assumptions, we verify it
directly under the strategy-specific conditions in the main text.

\begin{theorem}[Bounded gain for a fixed strategy]
\label{thm:unknown-horizon-pointwise}
Suppose the optimizer plays a fixed strategy \(\hat x_{\times T}\).
If \(N_{\rm sub}=0\), then
\(\lagd(\hat x_{\times T},T)=0\) for every \(T\).
Otherwise, suppose
\cref{ass:fixed-response-summability,ass:passive-linear-tail} hold,
and choose \(\alpha\) as in \cref{ass:fixed-response-summability}.
Then a finite constant \(C_{\hat x}\), independent of \(T\),
satisfies
\(0\le\lagd(\hat x_{\times T},T)\le C_{\hat x}\)
for every \(T\ge1\). One may take
\begin{equation}
  C_{\hat x}
    :=\sum_{j\notin K}\delta_j\sum_{t=0}^{\infty}
        \phi\!\left(\theta'(1/k)-\alpha\delta_jr_t\right)
      +\delta_{\max}\sum_{t=0}^{\infty}\Pr(E_t^c).
  \label{eq:fixed-lag-constant}
\end{equation}
With \(C_{\hat x}=0\) in the case \(N_{\rm sub}=0\), both cases give
\begin{equation}
  \Surplus(\hat x_{\times T})
    =\begin{cases}
       O_{\hat x}(1),&\hat x\text{ is max--min},\\
       -(\val-u_o^\ast)T+O_{\hat x}(1),
         &\hat x\text{ is not max--min},
     \end{cases}
  \label{eq:fixed-strategy-dichotomy}
\end{equation}
and \([\Surplus(\hat x_{\times T})]_+\le C_{\hat x}\)
for every \(T\ge1\).
\end{theorem}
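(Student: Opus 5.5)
The plan is to run the event-splitting argument of \cref{thm:fixed-noisy-lag} once more, but with the single shared event \(E_t\) for all positive-gap coordinates. The case \(N_{\rm sub}=0\) is immediate: the sum defining \(\lagd\) is empty, so \(\lagd=0\), and the decomposition \eqref{eq:fixed-surplus-decomposition-discrete} gives \(\Surplus(\hat x_{\times T})=T(u_o^\ast-\val)\le0\), which is consistent with \(C_{\hat x}=0\).

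For \(N_{\rm sub}>0\), fix \(\alpha\) from \cref{ass:fixed-response-summability} and a round \(t\).

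\textbf{On \(E_t\).} For every \(j\notin K\) we have \(\spnorm{\Xi_t}\le(1-\alpha)\delta_{\min}r_t\le(1-\alpha)\delta_jr_t\). Hence \eqref{eq:pointwise-noisy-ftrl-bound:b} and monotonicity of \(\phi\) give
\[
y_{t,j}\le\phi\bigl(\theta'(1/k)-\alpha\delta_jr_t\bigr).
\]

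\textbf{On \(E_t^c\).} Bound the whole payoff-weighted off-face mass at once:
\[
\sum_{j\notin K}\delta_jy_{t,j}\le\delta_{\max}\sum_{j\notin K}y_{t,j}\le\delta_{\max}.
\]

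Splitting the expectation over \(E_t\) and \(E_t^c\) then yields
\[
\mathbb E\sum_{j\notin K}\delta_jy_{t,j}\le\sum_{j\notin K}\delta_j\phi\bigl(\theta'(1/k)-\alpha\delta_jr_t\bigr)+\delta_{\max}\Pr(E_t^c).
\]
Summing over \(t<T\) and extending both nonnegative series to infinity gives \(0\le\lagd\le C_{\hat x}\) with \(C_{\hat x}\) as in \eqref{eq:fixed-lag-constant}. This step uses only marginal probabilities, so no independence of the noise is needed.

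The remaining task is to check that \(C_{\hat x}<\infty\). The first series is finite by \cref{ass:fixed-response-summability}. For the second, note that \(E_t^c=\{\spnorm{Z_t}>(1-\alpha)\delta_{\min}t\}\) for \(t\ge1\). Applying \cref{ass:passive-linear-tail} with \(c=(1-\alpha)\delta_{\min}>0\) shows \(\sum_{t\ge1}\Pr(E_t^c)<\infty\), and the \(t=0\) term is at most one.

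To connect with the main-text hypotheses \eqref{eqn:summable-response-tail}–\eqref{eqn:summable-noise-tail}, compare termwise. Since \(\delta_j\ge\delta_{\min}\) and \(\phi\) is nondecreasing, we have \(\phi(\theta'(1/k)-\alpha\delta_jr_t)\le\phi(\theta'(1/k)-\alpha\delta_{\min}r_t)\). So \eqref{eqn:summable-response-tail} implies \cref{ass:fixed-response-summability}, with a factor \(\sum_j\delta_j\). Moreover, \eqref{eqn:summable-noise-tail} is exactly \(\sum_t\Pr(E_t^c)<\infty\). Thus the same argument goes through under the strategy-specific conditions, without needing the uniform-in-\(c\) version of \cref{ass:passive-linear-tail}.

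Finally, substitute into \eqref{eq:fixed-surplus-decomposition-discrete}:
\[
\Surplus(\hat x_{\times T})=-(\val-u_o^\ast)T+\lagd,\qquad \lagd\in[0,C_{\hat x}].
\]
If \(\hat x\) is max--min, then \(u_o^\ast=\val\) and the surplus is \(O_{\hat x}(1)\). Otherwise it is \(-(\val-u_o^\ast)T+O_{\hat x}(1)\). In both cases \(\vagd\le0\), so \([\Surplus]_+\le\lagd\le C_{\hat x}\).

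The only delicate point is the \(E_t^c\) accounting. Using per-coordinate events, as in \cref{thm:fixed-noisy-lag}, would cost a factor \(\sum_j\delta_j\) rather than \(\delta_{\max}\). The shared event \(E_t\), tied to \(\delta_{\min}\), avoids this, because the off-face mass sums to at most one.
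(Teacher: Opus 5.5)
Your proposal is correct and follows essentially the same argument as the paper. Both use a single event \(E_t\) keyed to \(\delta_{\min}\), apply the coordinate bound \eqref{eq:pointwise-noisy-ftrl-bound:b} on \(E_t\), charge \(\delta_{\max}\) uniformly on \(E_t^c\), sum the two nonnegative series, and use the VAG/LAG decomposition for the surplus claims. Your check against \eqref{eqn:summable-response-tail}--\eqref{eqn:summable-noise-tail} likewise matches the paper's subsequent proof of \cref{thm:main-normal-bounded-surplus}, which needs only the direction you verify.
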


\begin{proof}
If \(N_{\rm sub}=0\), the gain is zero and
\(\Surplus(\hat x_{\times T})=T(u_o^\ast-\val)\le0\), proving
all claims with \(C_{\hat x}=0\).
Hence assume \(N_{\rm sub}>0\).
Taking \(c=(1-\alpha)\delta_{\min}>0\) in
\cref{ass:passive-linear-tail} gives
\(\sum_{t=0}^{\infty}\Pr(E_t^c)<\infty\); the extra
\(t=0\) term is at most one.

\paragraph{Controlling each round on the common event.}
For every \(j\notin K\), on \(E_t\) we have
\[
  -r_t\delta_j+\spnorm{\Xi_t}
    \le-r_t\delta_j+(1-\alpha)\delta_{\min}r_t
    \le-\alpha\delta_jr_t.
\]
The final inequality uses \(\delta_{\min}\le\delta_j\), showing
that at least the fraction \(\alpha\) of each individual gap
survives. The upper coordinate estimate
\eqref{eq:pointwise-noisy-ftrl-bound:b} therefore gives
\[
  \sum_{j\notin K}\delta_jy_{t,j}
    \le\sum_{j\notin K}\delta_j
      \phi\!\left(\theta'(1/k)-\alpha\delta_jr_t\right)
    \qquad\text{on }E_t.
\]
On \(E_t^c\), normalization gives the uniform bound
\[
  0\le\sum_{j\notin K}\delta_jy_{t,j}
    \le\delta_{\max}\sum_{j\notin K}y_{t,j}
    \le\delta_{\max}.
\]
Thus one exceptional round costs at most \(\delta_{\max}\),
regardless of how many coordinates lie outside \(K\).

\paragraph{Summing the expected gain.}
Split the expectation according to \(E_t\) and its complement:
\[
  \mathbb E\!\left[\sum_{j\notin K}\delta_jy_{t,j}\right]
    \le\sum_{j\notin K}\delta_j
      \phi\!\left(\theta'(1/k)-\alpha\delta_jr_t\right)
      \Pr(E_t)+\delta_{\max}\Pr(E_t^c).
\]
Using \(\Pr(E_t)\le1\) and summing over \(t<T\) bounds
\(\lagd(\hat x_{\times T},T)\) by the corresponding partial
sums in \eqref{eq:fixed-lag-constant}.
The response assumption makes the first infinite sum finite,
and the noise assumption makes the second finite.
All terms are nonnegative, so the same constant bounds every
partial sum and hence every horizon.

\paragraph{Returning to surplus.}
The decomposition \eqref{eq:fixed-surplus-decomposition-discrete}
now reads
\[
  \Surplus(\hat x_{\times T})
    =-(\val-u_o^\ast)T+\lagd(\hat x_{\times T},T).
\]
For max--min \(\hat x\), the first term is zero, giving
\(0\le\Surplus(\hat x_{\times T})\le C_{\hat x}\).
For non-max--min \(\hat x\), the coefficient
\(\val-u_o^\ast\) is positive and the gain term is bounded,
which proves the second case of \eqref{eq:fixed-strategy-dichotomy}.
Since the first term is always nonpositive, the positive part of
surplus is at most \(C_{\hat x}\) in either case.
\end{proof}

\begin{proof}[Proof of \cref{thm:main-normal-bounded-surplus,cor:main-normal-maxmin-bounded-surplus}]
If \(N_{\rm sub}=0\), the zero-gain case just proved applies.
Otherwise, fix the \(\alpha\) in the two main-text summability
conditions. We first relate the response condition
\eqref{eqn:summable-response-tail} to
\cref{ass:fixed-response-summability}.
For each \(t\), nonnegativity and monotonicity of \(\phi\) give
\[
  \begin{aligned}
    \delta_{\min}\phi\!\left(
      \theta'(1/k)-\alpha\delta_{\min}r_t\right)
      &\le\sum_{j\notin K}\delta_j
        \phi\!\left(\theta'(1/k)-\alpha\delta_jr_t\right),\\
    \sum_{j\notin K}\delta_j
      \phi\!\left(\theta'(1/k)-\alpha\delta_jr_t\right)
      &\le\left(\sum_{j\notin K}\delta_j\right)
        \phi\!\left(\theta'(1/k)-\alpha\delta_{\min}r_t\right).
  \end{aligned}
\]
The first inequality retains a coordinate attaining
\(\delta_{\min}\); the second replaces every gap inside
\(\phi\) by \(\delta_{\min}\).
Since the prefactors are positive constants, summing over \(t\)
proves the stated equivalence of the response conditions.

The main-text noise condition \eqref{eqn:summable-noise-tail}
is exactly \(\sum_{t=0}^{\infty}\Pr(E_t^c)<\infty\).
The preceding proof uses the stronger
\cref{ass:passive-linear-tail} only to obtain this series.
Thus both terms in \eqref{eq:fixed-lag-constant} are finite under
the main-text hypotheses, and the same round-by-round estimate
gives
\[
  [\Surplus(\hat x_{\times T})]_+
    \le\lagd(\hat x_{\times T},T)\le C_{\hat x}
    \qquad\text{for every }T.
\]
This proves \cref{thm:main-normal-bounded-surplus}.
For \(\hat x=x^\star\), \eqref{eq:normal-surplus-identity}
identifies surplus with the nonnegative gain, proving
\cref{cor:main-normal-maxmin-bounded-surplus}.
\end{proof}


\subsection{Derivation of the standard-regularizer bounds}
\label{app:normal-standard-regularizer-example}

We now verify the standard-regularizer criteria cited in
\cref{subsec:normal-exploitation}.
The partial-sum bound \eqref{eq:fixed-lag-master-upper} also
gives rates when the full response series is not summable,
as needed for the normal-surplus rates in
\cref{sec:noise-examples}.
Use the power-law schedule
\begin{equation}
  \eta_t=\eta_0(t+1)^{-\rho},
  \qquad \eta_0>0,\qquad 0<\rho<1,
  \label{eq:polynomial-schedule}
\end{equation}
so \(r_t=t\eta_t=\Theta(t^{1-\rho})\) for \(t\ge1\).
Under \cref{ass:passive-linear-tail}, the noise contribution
is bounded independently of the horizon; the proof below
evaluates the remaining response sums for each kernel.

\begin{corollary}
\label{cor:fixed-standard-schedule-criteria}
Under \eqref{eq:polynomial-schedule} and
\cref{ass:passive-linear-tail}, every optimizer strategy
\(\hat x\) fixed independently of \(T\) satisfies
\(\lagd(\hat x_{\times T},T)=O_{\hat x}(1)\)
for negative entropy and Euclidean regularization.
For negative Tsallis regularization with \(q\in(0,1)\),
\begin{equation}
  \lagd(\hat x_{\times T},T)=
  \begin{cases}
    O_{\hat x}(1),&\rho<q,\\
    O_{\hat x}(\log T),&\rho=q,\\
    O_{\hat x}(T^{(\rho-q)/(1-q)}),&\rho>q.
  \end{cases}
  \label{eq:tsallis-polynomial-lag-rates}
\end{equation}
For max--min \(\hat x=x^\star\), these are also upper bounds on
\(\Surplus((x^\star)_{\times T})\).
For a fixed non-max--min \(\hat x\),
\(\Surplus(\hat x_{\times T})
=-(\val-u_o^\ast)T+o(T)=-\Theta(T)\).
If \(N_{\rm sub}=0\), then \(\lagd=0\) for every \(T\).
\end{corollary}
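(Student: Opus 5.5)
The plan is to apply the partial-sum bound \eqref{eq:fixed-lag-master-upper} from \cref{thm:fixed-noisy-lag} with a fixed \(\alpha\in(0,1)\), say \(\alpha=1/2\).

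\emph{Noise term.} For each \(j\notin K\), take \(c=(1-\alpha)\delta_j>0\) in \cref{ass:passive-linear-tail}. Together with the bounded \(t=0\) term, this makes the noise term at most a constant \(\sum_{j\notin K}\delta_j(1+\sum_{t\ge1}\Pr(\spnorm{Z_t}>ct))=O_{\hat x}(1)\). This step is uniform in \(T\).

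\emph{Response series.} It then remains to bound, kernel by kernel, the partial sums \(\sum_{t<T}\phi(\theta'(1/k)-\alpha\delta_jr_t)\). Under \eqref{eq:polynomial-schedule}, \(r_t\ge c_0t^{1-\rho}\) for \(t\ge1\), with \(c_0>0\) depending only on \(\eta_0,\rho\).
\begin{itemize}
\item \textbf{Negative entropy.} Here \(\phi(z)\le e^{z-1}\), so each term is at most \(k^{-1}\exp(-\alpha\delta_jc_0t^{1-\rho})\). A stretched exponential is summable because \(1-\rho>0\).
\item \textbf{Euclidean.} Here \(\phi(z)=0\) for \(z\le0\), so terms vanish once \(\alpha\delta_jr_t\ge1/k\). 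Only finitely many \(t\) contribute, each at most one.
\item \textbf{Negative Tsallis.} For \(z\to-\infty\), \(\phi(z)=((1+(q-1)z)/q)^{1/(q-1)}\asymp(1-q)^{1/(q-1)}|z|^{-1/(1-q)}/q^{1/(q-1)}\). So for large \(t\) the term is \(\Theta\bigl(r_t^{-1/(1-q)}\bigr)=\Theta\bigl(t^{-(1-\rho)/(1-q)}\bigr)\). Comparing with \(\sum t^{-p}\), where \(p=(1-\rho)/(1-q)\): if \(\rho<q\) then \(p>1\) and the series is bounded; if \(\rho=q\) then \(p=1\) and the sum is \(O(\log T)\); if \(\rho>q\) then \(p<1\) and the sum is \(O(T^{1-p})=O(T^{(\rho-q)/(1-q)})\).
\end{itemize}
Only finitely many early rounds fall outside the asymptotic regime, each contributing at most one. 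This gives \eqref{eq:tsallis-polynomial-lag-rates}.

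\emph{Surplus consequences.} For \(\hat x=x^\star\), \cref{prop:normal-surplus-bounds} identifies surplus with \(\lagd\), so the same bounds apply to surplus. For non-max--min \(\hat x\), the decomposition \eqref{eq:fixed-surplus-decomposition-discrete} gives surplus \(=-(\val-u_o^\ast)T+\lagd\). All the rates above are \(o(T)\): in the last Tsallis case the exponent \((\rho-q)/(1-q)<1\) because \(\rho<1\). Hence surplus is \(-\Theta(T)\). If \(N_{\rm sub}=0\), the sum is empty and \(\lagd=0\).

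\emph{Main obstacle.} The only delicate step is the Tsallis asymptotic: we need the precise tail exponent of \(\phi\) and a uniform lower bound \(r_t\gtrsim t^{1-\rho}\). The rest is bookkeeping.
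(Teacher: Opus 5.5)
Your proposal is correct and follows essentially the same route as the paper: apply \eqref{eq:fixed-lag-master-upper} with $\alpha=1/2$, bound the noise series through \cref{ass:passive-linear-tail} (with the $t=0$ term at most one), sum the kernel-specific response tails by comparison with $\sum t^{-p}$, and read off the surplus consequences from the VAG/LAG decomposition. The only difference is cosmetic. The paper inverts $\theta'(u)=\theta'(1/k)-s$ in closed form for each kernel, giving for example $k^{-1}e^{-s}$ for entropy and $k^{-1}\bigl(1+\tfrac{1-q}{q}k^{q-1}s\bigr)^{-1/(1-q)}$ for Tsallis. You instead use the bound $\phi(z)\le e^{z-1}$ and the $z\to-\infty$ asymptotic of $\phi$, which yield the same rates.
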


\begin{proof}
The case \(N_{\rm sub}=0\) follows from the empty sum defining
\(\lagd\). Suppose \(N_{\rm sub}>0\).
Apply \eqref{eq:fixed-lag-master-upper} with \(\alpha=1/2\),
retaining half of each deterministic gap.
For every \(j\notin K\), \cref{ass:passive-linear-tail} gives
\[
  \sum_{t=0}^{\infty}
    \Pr\!\left(\spnorm{\Xi_t}>\delta_jr_t/2\right)<\infty,
\]
where the single \(t=0\) term is at most one.
It therefore remains to estimate, for each positive gap,
\[
  \sum_{t=0}^{T-1}
    \phi\!\left(\theta'(1/k)-\delta_jr_t/2\right).
\]
We derive the scalar response for each kernel before summing it.
In the inversions below, \(a\in(0,1]\) and \(s\ge0\).

\paragraph{Negative entropy.}
Since \(\theta'(u)=1+\log u\), solving
\(\theta'(u)=\theta'(a)-s\) gives \(u=ae^{-s}\).
This lies in \((0,a]\), so truncation does not change it.
Hence
\[
  \phi\!\left(\theta'(1/k)-\delta_jr_t/2\right)
    =\frac1k e^{-\delta_jr_t/2}.
\]
Because \(r_t=\Theta(t^{1-\rho})\) with \(1-\rho>0\),
we have \(r_t/\log(t+1)\to\infty\).
For all sufficiently large \(t\), therefore,
\(\delta_jr_t/2\ge2\log(t+1)\), and the response is at most
\(k^{-1}(t+1)^{-2}\).
Its infinite sum is finite. Summing over the finitely many
positive gaps and adding the bounded noise term proves
\(\lagd=O_{\hat x}(1)\).

\paragraph{Euclidean regularization.}
Here \(\theta'(u)=u\), so the truncated inverse gives
\(\phi(\theta'(a)-s)=[a-s]_+\).
The response term becomes
\[
  \phi\!\left(\theta'(1/k)-\delta_jr_t/2\right)
    =\left[\frac1k-\frac{\delta_jr_t}{2}\right]_+.
\]
Since \(r_t\to\infty\), this term is zero for all sufficiently
large \(t\), once \(\delta_jr_t/2\ge1/k\).
Only finitely many response terms contribute for each fixed
gap. Adding the bounded noise term again gives
\(\lagd=O_{\hat x}(1)\).

\paragraph{Negative Tsallis regularization.}
The derivative is
\(\theta'(u)=(qu^{q-1}-1)/(q-1)\).
Solving \(\theta'(u)=\theta'(a)-s\) yields
\[
  u^{q-1}=a^{q-1}+\frac{1-q}{q}s,
  \qquad
  u=a\left(1+\frac{1-q}{q}a^{1-q}s\right)^{-1/(1-q)}.
\]
This solution lies in \((0,a]\), so it is also the truncated
inverse response. Taking \(a=1/k\) and \(s=\delta_jr_t/2\),
we obtain
\[
  \phi\!\left(\theta'(1/k)-\delta_jr_t/2\right)
    =\frac1k\left(
      1+\frac{1-q}{2q}k^{q-1}\delta_jr_t
      \right)^{-1/(1-q)}.
\]
All coefficients are positive and fixed, and
\(r_t=\Theta(t^{1-\rho})\).
Thus, for \(t\ge1\), the response has order
\(t^{-(1-\rho)/(1-q)}\).
The cumulative bound depends on whether this exponent is above,
equal to, or below one.

To evaluate the sum, put \(p=(1-\rho)/(1-q)>0\).
Monotonicity of \(u^{-p}\) gives
\[
  \sum_{t=0}^{T-1}(t+1)^{-p}
    \le1+\int_1^T u^{-p}\,du
    =\begin{cases}
      1+\log T,&p=1,\\[1mm]
      1+\dfrac{T^{1-p}-1}{1-p},&p\ne1.
    \end{cases}
\]
For \(p>1\), the integral is bounded independently of \(T\);
for \(p=1\), it grows logarithmically; and for \(p<1\), it
has order \(T^{1-p}\).
These cases correspond to \(\rho<q\), \(\rho=q\), and
\(\rho>q\), respectively, with
\(1-p=(\rho-q)/(1-q)\).
Summing over the positive gaps and adding the bounded noise
contribution proves \eqref{eq:tsallis-polynomial-lag-rates}.

Finally, each gain bound is \(o(T)\).
In the growing Tsallis case this uses
\((\rho-q)/(1-q)<1\), which follows from \(\rho<1\).
For max--min play, \eqref{eq:normal-surplus-identity} identifies
\(\lagd\) with surplus.
For non-max--min play, substituting \(\lagd=o(T)\) into
\eqref{eq:fixed-surplus-decomposition-discrete} gives
\[
  \Surplus(\hat x_{\times T})
    =-(\val-u_o^\ast)T+o(T)=-\Theta(T),
\]
since \(\val-u_o^\ast>0\) is fixed.
\end{proof}

\begin{remark}[Typical Tsallis tuning]
The negative \(1/2\)-Tsallis regularizer with a learning rate
of order \(t^{-1/2}\) is a standard bandit choice, as in
Tsallis-INF~\citep{abernethy2015fighting,zimmert2021tsallis}.
For \(q=\rho=1/2\) in \eqref{eq:polynomial-schedule},
\cref{cor:fixed-standard-schedule-criteria} gives
\(\lagd(\hat x_{\times T},T)=O_{\hat x}(\log T)\)
under its noise-tail assumption.
For max--min play, the same bound applies to
\(\Surplus((x^\star)_{\times T})\).
\end{remark}

\section{Proofs for Selection Exploitation: A Profitable Limiting Mean Response}
\label{subsec:tangential-location}

This appendix proves \cref{thm:main-selection-exploitation} and its
specializations to the three noise regimes and power-law schedules.
We use the setup of \cref{subsec:main-new-selection}, first deriving
the score and payoff identities needed in the proof, then establishing
uniform stability and the surplus bound for every horizon.
\appref{app:constant-rate-selection} gives a separate construction
for the noiseless, constant-rate setting.

\subsection{Score and payoff identities for the selection trajectory}
\label{app:effect-past-deviations}

Fix the max--min strategy \(x^\star\) from the main text, and retain
\(K=\br(x^\star)\), \(\delta=A^\top x^\star-\val\mathbf1\),
\(r_t=t\eta_t\), and \(\Xi_t=\eta_tZ_t\).
The reference response under repeated max--min play is
\(y_t^\star=Q_h(-r_t\delta+\Xi_t)\).
The learning-rate schedule is deterministic and known, and the
exogenous noise process has the same law under every optimizer
trajectory and at every horizon. The optimizer does not observe
its realizations. No temporal independence or moment assumption
is needed in the selection proofs below.

We record the mean-response hypothesis from
\cref{thm:main-selection-exploitation} for use in the intermediate
results.
\begin{assumption}[Profitable limiting mean response]
\label{ass:new-selection-mean}
There exists \(\mu\in\Delta_K\) such that
\begin{equation*}
  \mathbb E[y_t^\star]\longrightarrow\mu\in\Delta_K,
  \qquad
  g_{\rm mean}(x^\star;\mu)
    =\max_{x\in\Delta_n}x^\top A\mu-\val>0.
\end{equation*}
\end{assumption}
Write \(g_{\rm mean}\) for this positive gap, choose
\(x^\dagger\in\argmax_{x\in\Delta_n}x^\top A\mu\), and set
\(a=A^\top(x^\dagger-x^\star)\), as in the main text.
Since \(\mu\) is supported on \(K\),
\(x^{\star\top}A\mu=\val\), whereas
\(x^{\dagger\top}A\mu=\val+g_{\rm mean}\).
Consequently,
\begin{align}
  \langle a,\mu\rangle
    &=x^{\dagger\top}A\mu-x^{\star\top}A\mu
      =g_{\rm mean},\notag\\
  \mathbb E\langle a,y_t^\star\rangle
    &=\sum_{j=1}^m a_j\mathbb E[y_{t,j}^\star]
      \longrightarrow\sum_{j=1}^m a_j\mu_j
      =g_{\rm mean}.\label{eq:new-selection-direction:b}
\end{align}
This identifies the reference directional payoff that the actual
trajectory must preserve. Only the mean response must converge;
the response itself may remain random.

For deterministic weights \(0\le\varepsilon_t\le1\), consider
\(\widehat x_t=(1-\varepsilon_t)x^\star+\varepsilon_t x^\dagger\)
and let
\[
  b_t:=\sum_{\tau=0}^{t-1}\varepsilon_\tau,\qquad b_0=0.
\]
This cumulative weight measures the effect of all deviations
strictly before round \(t\).
Indeed, \(A^\top\widehat x_\tau=A^\top x^\star+\varepsilon_\tau a\),
so, using \(B=-A\),
\[
  \begin{aligned}
    S_t
      &=-\sum_{\tau=0}^{t-1}
          (A^\top x^\star+\varepsilon_\tau a)
        =-tA^\top x^\star-b_ta,\\
    \eta_tS_t+\Xi_t
      &=-r_t\val\mathbf1-r_t\delta-\eta_tb_ta+\Xi_t.
  \end{aligned}
\]
Removing the common score shift gives the exact response identity
\begin{equation}
  y_t=Q_h(-r_t\delta-\eta_tb_ta+\Xi_t).
  \label{eq:new-selection-score}
\end{equation}
Thus the score change relative to \(y_t^\star\) is governed by
\(\eta_tb_t\). Controlling the current weight alone would not
control the cumulative score change.

The corresponding round payoff separates into a nonnegative
max--min contribution and a directional contribution:
\begin{subequations}\label{eq:new-selection-payoff}
\begin{align}
  \widehat x_t^\top Ay_t-\val
    &=\langle A^\top x^\star+\varepsilon_ta
                -\val\mathbf1,y_t\rangle
      \\
    &=\langle\delta,y_t\rangle
        +\varepsilon_t\langle a,y_t\rangle
      \ge\varepsilon_t\langle a,y_t\rangle.
      \label{eq:new-selection-payoff:b}
\end{align}
\end{subequations}
Here \(\delta\ge0\), so the first term cannot reduce surplus.
It remains to show that a small accumulated score change preserves
a positive expected payoff in direction \(a\).

\subsection{Uniform stability of the response and directional payoff}
\label{app:uniform-stability}

The reference scores \(-r_t\delta+\Xi_t\) may be large or random.
The proof therefore needs one score-tilt threshold that works for
every reference score. Pointwise continuity alone does not supply
that threshold. The following argument obtains it from compactness
and strict convexity under \cref{def:regularizer}.

\begin{lemma}[Uniform stability of the choice map]
\label{lem:new-selection-uniform-stability}
Let \(h\) be a regularizer as in \cref{def:regularizer}.
For every fixed \(a\in\mathbb R^m\),
\begin{equation*}
  \lim_{c\downarrow0}
  \sup_{u\in\mathbb R^m}\sup_{0\le\lambda\le c}
    \|Q_h(u-\lambda a)-Q_h(u)\|_1=0.
\end{equation*}
\end{lemma}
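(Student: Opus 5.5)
The plan is to reuse the midpoint strict-convexity gap from the proof of \cref{lem:reduction}. Because that gap does not depend on the score vector, it yields a threshold on the tilt size that is uniform in \(u\).

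Fix \(0<\varepsilon\le2\) and define, exactly as before,
\[
  c_\varepsilon
    :=\min_{\substack{v,w\in\Delta_m\\\|v-w\|_1\ge\varepsilon}}
       \left\{\frac{h(v)+h(w)}2-h\!\left(\frac{v+w}2\right)\right\}>0.
\]
Positivity follows from continuity and strict convexity of \(h\) on the compact set of admissible pairs.

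Fix \(u\) and \(\lambda\ge0\). Write \(u'=u-\lambda a\), \(F_u(y)=\langle u,y\rangle-h(y)\), \(p=Q_h(u)\) and \(q=Q_h(u')\). Suppose \(\|p-q\|_1\ge\varepsilon\). The linear parts cancel at the midpoint \((p+q)/2\), which is feasible. So optimality of \(p\) for \(F_u\) gives \(F_u(p)-F_u(q)\ge2c_\varepsilon\), and optimality of \(q\) for \(F_{u'}\) gives \(F_{u'}(q)-F_{u'}(p)\ge2c_\varepsilon\).

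Adding the two inequalities cancels the regularizer values. Since \(F_u(y)-F_{u'}(y)=\lambda\langle a,y\rangle\), we obtain
\[
  \lambda\langle a,p-q\rangle\ge4c_\varepsilon .
\]
By \cref{lem:preliminary-span-centering}, the left side is at most \(\lambda\spnorm{a}\).

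Hence, if \(\spnorm{a}=0\), then \(Q_h(u-\lambda a)=Q_h(u)\) by shift invariance, and there is nothing to prove. Otherwise, every \(0\le\lambda\le c\) with \(c<4c_\varepsilon/\spnorm{a}\) forces \(\|Q_h(u-\lambda a)-Q_h(u)\|_1<\varepsilon\), simultaneously for all \(u\in\mathbb R^m\). Since \(\varepsilon\) was arbitrary, the double supremum tends to zero as \(c\downarrow0\).

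The only delicate point is uniformity in \(u\), because pointwise continuity from \cref{lem:preliminary-choice-conjugate} is not enough. That uniformity is exactly what the score-free constant \(c_\varepsilon\) supplies, and no differentiability or steepness of \(h\) is used.
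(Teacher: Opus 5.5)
Your proof is correct and is essentially the paper's argument. You use the same score-independent midpoint gap \(c_\varepsilon\), the same two midpoint comparisons added to cancel the regularizer and leave \(\lambda\langle a,p-q\rangle\ge4c_\varepsilon\), and the same span bound from \cref{lem:preliminary-span-centering}. The differences are cosmetic: you route the comparison through the objective gaps \(F_u(p)-F_u(q)\ge2c_\varepsilon\), as in \cref{lem:reduction}, and you allow the larger threshold \(4c_\varepsilon/\spnorm{a}\) rather than the paper's \(2c_\varepsilon/\spnorm{a}\). The degenerate case \(m=1\), where the set defining \(c_\varepsilon\) is empty, is covered by your \(\spnorm{a}=0\) branch.
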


\begin{proof}
The case \(m=1\) is immediate. Otherwise, fix
\(0<\epsilon\le2\).
To obtain a bound independent of the scores, measure the
strict-convexity gap between responses at distance at least
\(\epsilon\):
\[
  c_\epsilon
    :=\min_{\substack{p,q\in\Delta_m\\\|p-q\|_1\ge\epsilon}}
       \left\{\frac{h(p)+h(q)}2-h\!\left(\frac{p+q}2\right)\right\}.
\]
The constraint set is compact, the objective is continuous, and
strict convexity makes it positive at every feasible pair.
Hence \(c_\epsilon>0\).

Now take arbitrary \(u,v\in\mathbb R^m\) and put
\(p=Q_h(u)\), \(q=Q_h(v)\).
Each optimizer can be compared with their feasible midpoint:
\[
  \begin{aligned}
    \langle u,p\rangle-h(p)
      &\ge\left\langle u,\frac{p+q}2\right\rangle
          -h\!\left(\frac{p+q}2\right),\\
    \langle v,q\rangle-h(q)
      &\ge\left\langle v,\frac{p+q}2\right\rangle
          -h\!\left(\frac{p+q}2\right).
  \end{aligned}
\]
Adding cancels the common score contribution and gives
\[
  \langle u-v,p-q\rangle
    \ge4\left\{\frac{h(p)+h(q)}2
                    -h\!\left(\frac{p+q}2\right)\right\}.
\]
If \(\|p-q\|_1\ge\epsilon\), the right-hand side is at least
\(4c_\epsilon\). The left-hand side, however, depends only on
the relative score difference. Subtracting the midpoint of the
largest and smallest coordinates of \(u-v\) leaves it unchanged,
because \(\sum_j(p_j-q_j)=0\). Thus
\[
  \langle u-v,p-q\rangle
    \le\tfrac12\spnorm{u-v}\|p-q\|_1
    \le\spnorm{u-v}.
\]
Responses at distance at least \(\epsilon\) therefore require
\(\spnorm{u-v}\ge4c_\epsilon\).

For \(v=u-\lambda a\), this span equals
\(\lambda\spnorm{a}\).
If \(\spnorm{a}>0\), choose
\(0<c\le2c_\epsilon/\spnorm{a}\).
Then every \(u\) and \(0\le\lambda\le c\) satisfy
\(\spnorm{u-v}\le2c_\epsilon<4c_\epsilon\), forcing
\(\|Q_h(u-\lambda a)-Q_h(u)\|_1<\epsilon\).
If \(\spnorm{a}=0\), shift invariance makes the responses
identical. Since \(\epsilon\) is arbitrary, the uniform limit
follows.
\end{proof}

For \eqref{eq:new-selection-payoff}, the relevant consequence is
stability of the directional payoff. We choose the response
tolerance so that the induced payoff error uses at most one
quarter of the profitable gap.

\begin{corollary}[Uniform stability of the directional payoff]
\label{cor:new-selection-payoff-stability}
Suppose \cref{ass:new-selection-mean} holds.
Let \(x^\dagger\in\argmax_{x\in\Delta_n}x^\top A\mu\) and
\(a=A^\top(x^\dagger-x^\star)\).
There exists \(c_h>0\) such that, for every \(u\in\mathbb R^m\)
and \(0\le\lambda\le c_h\),
\begin{equation}
  \bigl|\langle a,Q_h(u-\lambda a)-Q_h(u)\rangle\bigr|
    \le\frac{g_{\rm mean}}4.
  \label{eq:new-selection-payoff-stability}
\end{equation}
\end{corollary}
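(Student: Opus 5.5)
This corollary follows directly from \cref{lem:new-selection-uniform-stability} combined with \cref{lem:preliminary-span-centering}. Fix \(x^\dagger\) and \(a=A^\top(x^\dagger-x^\star)\) as in the statement. If \(\spnorm{a}=0\), shift invariance of \(Q_h\) gives \(Q_h(u-\lambda a)=Q_h(u)\) for every \(u\) and \(\lambda\), so the left side of \eqref{eq:new-selection-payoff-stability} vanishes and any \(c_h>0\) works. This case cannot actually occur when \(g_{\rm mean}>0\): since \(\langle a,\mu\rangle=g_{\rm mean}>0\) while \(\langle a,y\rangle\) must be compared through differences of probability vectors, a constant \(a\) would be harmless in any event. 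The substantive case is therefore \(\spnorm{a}>0\).

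For any \(p,q\in\Delta_m\), \cref{lem:preliminary-span-centering} gives
\(|\langle a,p-q\rangle|\le\tfrac12\spnorm{a}\|p-q\|_1\).
It therefore suffices to make the response displacement \(\|Q_h(u-\lambda a)-Q_h(u)\|_1\) at most \(\epsilon:=g_{\rm mean}/(2\spnorm{a})\), uniformly in \(u\) and \(\lambda\). If \(\epsilon\ge2\), this bound holds trivially, because the \(\ell_1\) diameter of the simplex is \(2\), and any \(c_h>0\) works. Otherwise, \cref{lem:new-selection-uniform-stability} applies to the fixed direction \(a\): the supremum over \(u\in\mathbb R^m\) and \(0\le\lambda\le c\) tends to zero as \(c\downarrow0\), so some \(c_h>0\) makes it at most \(\epsilon\). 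Concretely, the proof of that lemma supplies \(c_h=2c_\epsilon/\spnorm{a}\), where \(c_\epsilon\) is the midpoint strict-convexity gap. Combining the two bounds gives
\(|\langle a,Q_h(u-\lambda a)-Q_h(u)\rangle|\le\tfrac12\spnorm{a}\,\epsilon=g_{\rm mean}/4\).

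The only point needing care is uniformity: the threshold \(c_h\) must not depend on the reference score \(u\), which may be arbitrarily large or random. \cref{lem:new-selection-uniform-stability} is stated with the supremum over all \(u\) for exactly this reason, so no separate argument is required here. The rest is bookkeeping with the span seminorm.
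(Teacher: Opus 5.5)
Your argument is correct and is essentially the paper's: you invoke \cref{lem:new-selection-uniform-stability} with a tolerance chosen so that a H\"older-type bound turns the $\ell_1$ response displacement into at most $g_{\rm mean}/4$. The only difference is that the paper uses $\|a\|_\infty$, which is positive since $\langle a,\mu\rangle=g_{\rm mean}>0$, where you use $\tfrac12\spnorm{a}$; this is why you need the separate $\spnorm{a}=0$ case, which your shift-invariance argument handles correctly. Your aside on why that case cannot occur is not a real argument, though. The actual reason is that $a=c\mathbf1$ forces $c=\langle a,\mu\rangle=g_{\rm mean}>0$, which would give $\min_j(A^\top x^\dagger)_j=\val+c>\val$.
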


\begin{proof}
Since \(\langle a,\mu\rangle=g_{\rm mean}>0\), we have
\(\|a\|_\infty>0\).
By \cref{lem:new-selection-uniform-stability}, choose \(c_h>0\)
so that
\[
  \|Q_h(u-\lambda a)-Q_h(u)\|_1
    \le\frac{g_{\rm mean}}{4\|a\|_\infty}
\]
for every \(u\) and \(0\le\lambda\le c_h\).
The estimate
\(|\langle a,p-q\rangle|\le\|a\|_\infty\|p-q\|_1\)
then gives \eqref{eq:new-selection-payoff-stability}.
\end{proof}

This choice of \(c_h\) depends on \(h\), \(a\), and
\(g_{\rm mean}\), but not on \(u\).
Substituting \(u=-r_t\delta+\Xi_t\) and
\(\lambda=\eta_tb_t\) therefore gives the same payoff bound
at every round and for every noise realization whenever
\(\eta_tb_t\le c_h\).

\subsection{Proof of selection surplus at every horizon}
\label{app:selection-schedule}

We now prove \cref{thm:main-selection-exploitation}.
The stability bound permits a cumulative deviation of size
\(c/\eta_t\) for any fixed \(c\le c_h\).
To enforce \(b_0=0\), the main-text construction uses
\(b_t=c(1/\eta_t-1/\eta_0)\). Its successive differences
give the weights below. Nonincreasing rates make these weights
nonnegative, and bounded reciprocal increments allow one fixed
\(c>0\) to keep them at most one.

\begin{theorem}[Formal statement of \cref{thm:main-selection-exploitation}]
\label{thm:new-selection-every-horizon}
Suppose \cref{ass:new-selection-mean} holds.
Assume \(\eta_t>0\) is nonincreasing and
\(D=\sup_{t\ge0}(1/\eta_{t+1}-1/\eta_t)<\infty\).
There exists \(c_0>0\) such that, for every fixed
\(c\in(0,c_0]\), the deterministic sequence
\begin{equation}
  \varepsilon_t=c\left(\frac1{\eta_{t+1}}-\frac1{\eta_t}\right),
  \qquad
  \widehat x_t=(1-\varepsilon_t)x^\star
                       +\varepsilon_t x^\dagger
  \label{eq:new-selection-strategy}
\end{equation}
satisfies \(0\le\varepsilon_t\le1\) and
\(\widehat x_t\in\Delta_n\) for every \(t\ge0\).
There is a finite constant \(C_0\ge0\), independent of \(T\),
such that
\begin{equation}
  \Surplus(\widehat x_{<T})
    \ge\frac{c\,g_{\rm mean}}2
       \left(\frac1{\eta_T}-\frac1{\eta_0}\right)-C_0
    \qquad\text{for all }T\ge1.
  \label{eq:new-selection-prefix-bound}
\end{equation}
If \(\eta_T\to0\), this one sequence earns
\(\Omega(1/\eta_T)\) expected surplus for all sufficiently
large \(T\).
\end{theorem}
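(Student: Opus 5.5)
The argument follows the sketch in the main text and proceeds in three steps.

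\emph{Admissibility of the weights.} Take \(c_0:=\min\{c_h\eta_0,\,1/D\}\) (with \(1/D=+\infty\) if \(D=0\)), where \(c_h\) is the constant from \cref{cor:new-selection-payoff-stability}. Nonincreasing rates give \(\varepsilon_t\ge0\), and \(c\le1/D\) gives \(\varepsilon_t\le cD\le1\). Hence \(\widehat x_t\) is a convex combination of \(x^\star\) and \(x^\dagger\) and lies in \(\Delta_n\). The sum telescopes to \(b_t=c(1/\eta_t-1/\eta_0)\), so
\[
\lambda_t:=\eta_tb_t=c(1-\eta_t/\eta_0)\le c\le c_h
\]
for every \(t\), where the last inequality needs \(c\le c_h\). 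The definition of \(c_0\) must therefore ensure \(c_0\le c_h\). Using \(c_0\le c_h\eta_0\) is not enough unless \(\eta_0\le1\), so I will simply define \(c_0=\min\{c_h,1/D\}\).

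\emph{Per-round directional payoff.} By \eqref{eq:new-selection-score}, \(y_t=Q_h(u_t-\lambda_ta)\) with \(u_t=-r_t\delta+\Xi_t\), and \(y_t^\star=Q_h(u_t)\). Since \(0\le\lambda_t\le c_h\) holds pointwise for every noise realization, \cref{cor:new-selection-payoff-stability} gives \(\langle a,y_t\rangle\ge\langle a,y_t^\star\rangle-g_{\rm mean}/4\) surely. Taking expectations and using \eqref{eq:new-selection-direction:b}, there is a \(t_0\) (depending only on the noise law, \(h\), \(a\), and \(\mu\), not on \(c\) or \(T\)) such that \(\mathbb E\langle a,y_t^\star\rangle\ge3g_{\rm mean}/4\) for \(t\ge t_0\). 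Hence \(\mathbb E\langle a,y_t\rangle\ge g_{\rm mean}/2\) for all \(t\ge t_0\). For \(t<t_0\), use the crude bound \(\langle a,y_t\rangle\ge-\|a\|_\infty\).

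\emph{Summation.} By \eqref{eq:new-selection-payoff:b},
\[
\Surplus(\widehat x_{<T})\ge\sum_{t<T}\varepsilon_t\,\mathbb E\langle a,y_t\rangle .
\]
Split the sum at \(t_0\). The early rounds contribute at least \(-\|a\|_\infty\sum_{t<t_0}\varepsilon_t=-\|a\|_\infty b_{t_0}\), a finite constant. The late rounds \(t_0\le t<T\) contribute at least
\[
\frac{g_{\rm mean}}2\,(b_T-b_{t_0})=\frac{c\,g_{\rm mean}}2\left(\frac1{\eta_T}-\frac1{\eta_0}\right)-\frac{g_{\rm mean}}2\,b_{t_0},
\]
valid when \(T>t_0\). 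For \(T\le t_0\), bound the whole sum below by \(-\|a\|_\infty b_T\), and bound the target expression above by \(\frac{c\,g_{\rm mean}}{2}\cdot\frac1{\eta_{t_0}}\). Setting
\[
C_0:=\Bigl(\|a\|_\infty+\tfrac{g_{\rm mean}}2\Bigr)b_{t_0}+\frac{c\,g_{\rm mean}}2\cdot\frac1{\eta_{t_0}}
\]
covers both cases and gives \eqref{eq:new-selection-prefix-bound}. The \(\Omega(1/\eta_T)\) claim follows because \(1/\eta_T\to\infty\) absorbs the constants \(c\,g_{\rm mean}/(2\eta_0)\) and \(C_0\).

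The only delicate point is that the perturbation bound must hold uniformly over the unbounded random reference scores \(u_t\); pointwise continuity would not give it. That uniformity is exactly what \cref{lem:new-selection-uniform-stability} supplies. The rest is bookkeeping, with care that \(t_0\) does not depend on \(T\) and that \(C_0\) absorbs the finitely many early rounds.
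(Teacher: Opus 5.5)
Your proof is correct and follows the paper's argument step for step: the telescoping budget $\eta_t b_t\le c\le c_h$, the two $g_{\rm mean}/4$ allowances (one for uniform payoff stability, one for convergence of the reference mean past a $T$-independent $t_0$), and the split of the sum at $t_0$. The differences are cosmetic. The paper takes $c_0=\min\{c_h,1/(1+D)\}$ to avoid dividing by $D=0$. It also notes that $C_0=(g_{\rm mean}/2+\|a\|_\infty)b_{t_0}$ already covers $T<t_0$ via $b_T\le b_{t_0}$, so your extra term $c\,g_{\rm mean}/(2\eta_{t_0})$ is harmless but unnecessary.
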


\begin{proof}
We first bound the accumulated score change, then use payoff
stability to sum the gain after the reference mean is sufficiently
close to its limit.

\paragraph{Feasibility and control of the accumulated score change.}
The scale \(c\) must satisfy two requirements: the weights must
lie in \([0,1]\), and the score tilt must remain within the
payoff-stability range. Choose
\[
  c_0=\min\left\{c_h,\frac1{1+D}\right\},
\]
where \(c_h\) is from \cref{cor:new-selection-payoff-stability}.
For \(0<c\le c_0\),
\[
  0\le\varepsilon_t
    \le cD\le\frac{D}{1+D}\le1.
\]
Thus each \(\widehat x_t\) is a feasible convex combination.
The cumulative weight telescopes:
\[
  \begin{aligned}
    b_t
      =c\sum_{\tau=0}^{t-1}
        \left(\frac1{\eta_{\tau+1}}-\frac1{\eta_\tau}\right)=c\left(\frac1{\eta_t}-\frac1{\eta_0}\right).
  \end{aligned}
\]
The same identity holds at \(t=0\), when both sides are zero.
Since \(0<\eta_t\le\eta_0\),
\begin{equation}
  \eta_tb_t=c\left(1-\frac{\eta_t}{\eta_0}\right),
  \qquad 0\le\eta_tb_t\le c\le c_h.
  \label{eq:new-selection-budget}
\end{equation}
Hence the entire past intervention remains within the stability
range at every round.

\paragraph{Preserving a positive directional payoff.}
Compare \(y_t\) and \(y_t^\star\) using the same noise vector
\(\Xi_t\). This gives the correct marginal law for each response
because the noise is exogenous and its law is fixed across
strategies.
In \eqref{eq:new-selection-payoff-stability}, substitute
\(u=-r_t\delta+\Xi_t\) and \(\lambda=\eta_tb_t\).
By \eqref{eq:new-selection-score} and
\eqref{eq:new-selection-budget},
\[
  |\langle a,y_t-y_t^\star\rangle|\le\frac{g_{\rm mean}}4
\]
for every realization. The directional payoffs are bounded by
\(\|a\|_\infty\), so taking expectations gives
\[
  \left|\mathbb E\langle a,y_t\rangle
         -\mathbb E\langle a,y_t^\star\rangle\right|
    \le\mathbb E|\langle a,y_t-y_t^\star\rangle|
    \le\frac{g_{\rm mean}}4.
\]

By \eqref{eq:new-selection-direction:b}, the reference directional
payoff converges to \(g_{\rm mean}\).
Choose a finite integer \(t_0\ge1\) such that its distance from
\(g_{\rm mean}\) is at most \(g_{\rm mean}/4\) for every
\(t\ge t_0\). Then
\begin{align*}
  \mathbb E\langle a,y_t^\star\rangle
    &\ge g_{\rm mean}-\frac{g_{\rm mean}}4,
      \\
  \mathbb E\langle a,y_t\rangle
    &\ge\mathbb E\langle a,y_t^\star\rangle
            -\frac{g_{\rm mean}}4
      \ge\frac{g_{\rm mean}}2
      \qquad(t\ge t_0).
\end{align*}
One allowance of \(g_{\rm mean}/4\) covers convergence of the
reference mean; the other covers the score tilt.
No convergence rate is required: slower convergence only changes
the finite index \(t_0\).

\paragraph{Summing the gain, including the initial rounds.}
For \(t\ge t_0\), determinism and nonnegativity of
\(\varepsilon_t\) allow us to combine
\eqref{eq:new-selection-payoff:b} with the preceding bound:
\[
  \mathbb E[\widehat x_t^\top Ay_t-\val]
    \ge\varepsilon_t\mathbb E\langle a,y_t\rangle
    \ge\frac{g_{\rm mean}}2\varepsilon_t.
\]
For the earlier rounds, \(\langle a,y_t\rangle\ge-\|a\|_\infty\)
for every simplex response, so their expected surplus is at least
\(-\varepsilon_t\|a\|_\infty\).
For \(T\ge t_0\), split the sum at \(t_0\):
\begin{align*}
  \Surplus(\widehat x_{<T})
    &=\sum_{t=0}^{t_0-1}\mathbb E[\widehat x_t^\top Ay_t-\val]
      +\sum_{t=t_0}^{T-1}\mathbb E[\widehat x_t^\top Ay_t-\val]
      \\
    &\ge-\|a\|_\infty\sum_{t=0}^{t_0-1}\varepsilon_t
      +\frac{g_{\rm mean}}2\sum_{t=t_0}^{T-1}\varepsilon_t
      \\
    &=-\|a\|_\infty b_{t_0}
      +\frac{g_{\rm mean}}2(b_T-b_{t_0})
      \\
    &=\frac{g_{\rm mean}}2b_T
      -\left(\frac{g_{\rm mean}}2+\|a\|_\infty\right)b_{t_0}.
\end{align*}
The initial rounds cost at most
\[
  C_0:=\left(\frac{g_{\rm mean}}2+\|a\|_\infty\right)b_{t_0},
\]
which is finite and independent of \(T\).
This choice also covers shorter prefixes. If \(T<t_0\),
then \(b_T\le b_{t_0}\) and
\[
  \begin{aligned}
    \Surplus(\widehat x_{<T})
      &\ge-\|a\|_\infty b_T=\frac{g_{\rm mean}}2b_T
        -\left(\frac{g_{\rm mean}}2+\|a\|_\infty\right)b_T\\
      &\ge\frac{g_{\rm mean}}2b_T-C_0.
  \end{aligned}
\]
Thus the same lower bound holds for every prefix.
Substituting the formula for \(b_T\) proves
\eqref{eq:new-selection-prefix-bound}.

Finally, if \(\eta_T\to0\), then for all sufficiently large \(T\),
\[
  \frac{c\,g_{\rm mean}}{4\eta_T}
    \ge C_0+\frac{c\,g_{\rm mean}}{2\eta_0}.
\]
The prefix bound is then at least
\(c\,g_{\rm mean}/(4\eta_T)\), proving the claimed
\(\Omega(1/\eta_T)\) guarantee along the same trajectory.
\end{proof}

Taking \(C=C_0+c\,g_{\rm mean}/(2\eta_0)\) in
\eqref{eq:new-selection-prefix-bound} gives exactly
\cref{thm:main-selection-exploitation}.
The constants may depend on the game, regularizer, schedule, and
reference noise law, but not on the terminal horizon.
The strategy uses the known \(\eta_t,\eta_{t+1}\), a fixed
profitable direction, and a fixed \(c\).
The index \(t_0\) is only used to bound the initial cost; this
construction does not wait for \(t_0\) or observe the noise.

\paragraph{Applying the three response limits.}
The following corollary verifies the profitable-mean hypothesis
using the limits proved in \appref{app:normal}.

\begin{corollary}[Unknown-horizon selection in the three noise regimes]
\label{cor:new-selection-regimes}
Under the assumptions on \(h\) and the learning rates in
\cref{thm:new-selection-every-horizon}, fix a max--min strategy
\(x^\star\). Consider the following possibilities for the
reference response:
\begin{enumerate}
  \item \textbf{\regime{I}.} Under the hypotheses of
  \cref{prop:vanishing}, let
  \(\mu=y_K=\argmin_{y\in\Delta_K}h(y)\).
  \item \textbf{\regime{II}.} Under the hypotheses of
  \cref{prop:finite}, let
  \(\mu=\mathbb E[Q_{h,K}(\Xi_{\infty,K})]\).
  \item \textbf{\regime{III}.} Under the hypotheses of
  \cref{thm:diverging}, let
  \(\mu=\sum_{j\in K}\Pr(J_\infty=j)e_j\).
\end{enumerate}
If \(g_{\rm mean}>0\) for the corresponding \(\mu\), there
exists \(c_0>0\) such that, for every fixed \(c\in(0,c_0]\),
the sequence in \eqref{eq:new-selection-strategy} is feasible and
\[
  \Surplus(\widehat x_{<T})
    \ge\frac{c\,g_{\rm mean}}2
      \left(\frac1{\eta_T}-\frac1{\eta_0}\right)-C_0
    \qquad\text{for all }T\ge1,
\]
where \(C_0\ge0\) is finite and independent of \(T\).
If also \(\eta_T\to0\), this single sequence earns
\(\Omega(1/\eta_T)\) expected surplus.
\end{corollary}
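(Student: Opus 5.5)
The corollary follows from \cref{thm:new-selection-every-horizon} once \cref{ass:new-selection-mean} is verified in each regime. Every other ingredient is already in place. The learning-rate hypotheses are shared, and the regularizer class is that of \cref{def:regularizer}. The score identity \eqref{eq:new-selection-score}, the uniform payoff stability of \cref{cor:new-selection-payoff-stability}, and the telescoping budget \eqref{eq:new-selection-budget} do not depend on which regime the noise falls into. So the plan is to show, case by case, that \(\mathbb E[y_t^\star]\to\mu\) for the stated \(\mu\) and that \(\mu\in\Delta_K\). Then we invoke the theorem with \(g_{\rm mean}(x^\star;\mu)>0\) as assumed.

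In \regime{I}, \cref{prop:vanishing} gives \(\mathbb E\|y_t^\star-y_K\|_1\to0\). Since \(\|\mathbb E[y_t^\star]-y_K\|_1\le\mathbb E\|y_t^\star-y_K\|_1\), the means converge to \(\mu=y_K\), and \(y_K\in\Delta_K\) by definition. In \regime{II}, \eqref{eq:moment-limit} of \cref{prop:finite} gives \(\mathbb E[y_t^\star]\to\mathbb E[Q_{h,K}(\Xi_{\infty,K})]\). The limit lies in \(\Delta_K\) because \(Q_{h,K}\) takes values in the convex, closed face, so its expectation does too. In \regime{III}, \cref{thm:diverging} gives \(\mathbb E[y_t^\star]\to\sum_{j\in K}\Pr(J_\infty=j)e_j\). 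This is a convex combination of vertices of \(\Delta_K\), since \(J_\infty\in K\) almost surely.

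In each case the reference process in \cref{ass:new-selection-mean} is exactly the process treated in the corresponding proposition, namely \(y_t^\star=Q_h(-r_t\delta+\Xi_t)\) under repeated \(x^\star\). Hence the assumption holds with the given \(\mu\). \Cref{thm:new-selection-every-horizon} then supplies \(c_0=\min\{c_h,1/(1+D)\}\), the feasibility \(0\le\varepsilon_t\le1\), and the prefix bound with a \(T\)-independent \(C_0\). It also gives the \(\Omega(1/\eta_T)\) conclusion when \(\eta_T\to0\).

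There is no real obstacle. The only point needing care is that the regime propositions require the regime hypotheses only on \(K=\br(x^\star)\), together with \cref{ass:step-sizes,ass:normal}. These are exactly the hypotheses invoked, so the regime results apply to the baseline process without any independence assumption.
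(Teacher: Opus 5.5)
Your proposal is correct and follows essentially the same route as the paper: you verify \cref{ass:new-selection-mean} in each regime from the mean-convergence conclusions of \cref{prop:vanishing}, \cref{prop:finite}, and \cref{thm:diverging}, and then invoke \cref{thm:new-selection-every-horizon}. The paper phrases the \regime{II} and \regime{III} cases as convergence in distribution together with boundedness and continuity of the simplex coordinates, which is exactly what those results' mean statements encode, and your explicit check that each $\mu$ lies in $\Delta_K$ is a point the paper asserts without comment.
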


\begin{proof}
In \regime{I}, \cref{prop:vanishing} gives
\(\mathbb E[y_t^\star]\to y_K\).
In \Regimes~\regimeref{II} and~\regimeref{III},
\cref{prop:finite} and \cref{thm:diverging} give convergence
in distribution to \(Q_{h,K}(\Xi_{\infty,K})\) and
\(e_{J_\infty}\), respectively.
Each coordinate is bounded and continuous on the simplex, so its
expectation converges to the corresponding coordinate of the
stated \(\mu\).
In all three cases, \(\mathbb E[y_t^\star]\to\mu\in\Delta_K\).
Together with \(g_{\rm mean}>0\), this verifies
\cref{ass:new-selection-mean};
\cref{thm:new-selection-every-horizon} gives the conclusion.
\end{proof}

\paragraph{Evaluating the power-law schedule.}
For the schedule used in \cref{sec:noise-examples}, we verify
the bounded-increment condition and compute the weights and
surplus explicitly.

\begin{corollary}[Selection surplus under power-law learning rates]
\label{cor:new-selection-power-law}
Under the assumptions of \cref{thm:new-selection-every-horizon},
let \(\eta_t=\eta_0(t+1)^{-\rho}\), with \(\eta_0>0\) and
\(0<\rho<1\).
There exists \(c_0>0\) such that, for every fixed \(c\in(0,c_0]\),
the sequence
\[
  \varepsilon_t=\frac c{\eta_0}
    \bigl((t+2)^\rho-(t+1)^\rho\bigr),
  \qquad
  \widehat x_t=(1-\varepsilon_t)x^\star+\varepsilon_t x^\dagger
\]
satisfies \(0\le\varepsilon_t\le1\) and
\(\widehat x_t\in\Delta_n\) for every \(t\ge0\).
There is a finite \(C_0\ge0\), independent of \(T\), such that
\begin{equation}
  \Surplus(\widehat x_{<T})
    \ge\frac{c\,g_{\rm mean}}{2\eta_0}
      \bigl((T+1)^\rho-1\bigr)-C_0
    \qquad\text{for all }T\ge1.
  \label{eq:new-selection-power-law-surplus}
\end{equation}
This single sequence, chosen independently of the horizon,
earns \(\Omega(T^\rho)\) expected surplus, with weights
\(\varepsilon_t=\Theta((t+1)^{\rho-1})\).
\end{corollary}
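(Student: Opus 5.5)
This corollary follows by specializing \cref{thm:new-selection-every-horizon} to the power-law schedule. The only new work is to verify the bounded-increment hypothesis and to evaluate the telescoped quantities explicitly.

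\textbf{Checking the theorem's hypotheses.} The schedule \(\eta_t=\eta_0(t+1)^{-\rho}\) is positive and nonincreasing. Its reciprocal increments are
\[
  \frac1{\eta_{t+1}}-\frac1{\eta_t}=\frac{(t+2)^\rho-(t+1)^\rho}{\eta_0}.
\]
By the mean value theorem, this increment equals \(\rho\xi^{\rho-1}/\eta_0\) for some \(\xi\in(t+1,t+2)\). Since \(\rho<1\) and \(\xi\ge1\), it is at most \(\rho/\eta_0\), so \(D\le\rho/\eta_0<\infty\). The same bound, together with \(\xi\le t+2\le2(t+1)\), gives
\[
  \rho\,2^{\rho-1}(t+1)^{\rho-1}\le\eta_0\Bigl(\frac1{\eta_{t+1}}-\frac1{\eta_t}\Bigr)\le\rho\,(t+1)^{\rho-1}.
\]
Hence \(\varepsilon_t=\Theta((t+1)^{\rho-1})\). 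The remaining hypotheses (\cref{ass:new-selection-mean} and the conditions on \(h\)) are inherited verbatim.

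\textbf{Applying the theorem.} Substituting into \eqref{eq:new-selection-strategy} yields exactly the displayed weights \(\varepsilon_t=c((t+2)^\rho-(t+1)^\rho)/\eta_0\). For \(c\in(0,c_0]\), with \(c_0=\min\{c_h,1/(1+D)\}\) as in the theorem, these weights satisfy \(0\le\varepsilon_t\le1\), and each \(\widehat x_t\) lies in \(\Delta_n\). Next, \(1/\eta_T-1/\eta_0=((T+1)^\rho-1)/\eta_0\), so \eqref{eq:new-selection-prefix-bound} becomes \eqref{eq:new-selection-power-law-surplus}, with the same \(C_0\), which is independent of \(T\).

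\textbf{The \(\Omega(T^\rho)\) claim.} For large \(T\), the right-hand side of \eqref{eq:new-selection-power-law-surplus} is at least \(c\,g_{\rm mean}T^\rho/(4\eta_0)\). Equivalently, \(\eta_T\to0\), so the last sentence of \cref{thm:new-selection-every-horizon} applies directly.

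There is no genuine obstacle here. The only point requiring care is the two-sided mean-value estimate for \((t+2)^\rho-(t+1)^\rho\), which delivers both the bound on \(D\) and the \(\Theta\) rate of the weights.
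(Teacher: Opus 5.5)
Your proposal is correct and follows the paper's proof. You verify the bounded reciprocal increments so that \cref{thm:new-selection-every-horizon} applies, sandwich $(t+2)^\rho-(t+1)^\rho$ between constant multiples of $(t+1)^{\rho-1}$, and telescope to $b_T=c\bigl((T+1)^\rho-1\bigr)/\eta_0$. The paper obtains the sandwich from $\rho\int_{t+1}^{t+2}s^{\rho-1}\,ds$, the integral form of your mean-value estimate, and bounds $D\le(2^\rho-1)/\eta_0$ by concavity rather than your $D\le\rho/\eta_0$; either bound suffices.
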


\begin{proof}
The schedule is positive and nonincreasing.
Since \(s^\rho\) is increasing and concave for \(0<\rho<1\),
\[
  0\le\frac1{\eta_{t+1}}-\frac1{\eta_t}
    =\frac{(t+2)^\rho-(t+1)^\rho}{\eta_0}
    \le\frac{2^\rho-1}{\eta_0}.
\]
Thus \cref{thm:new-selection-every-horizon} applies, and its
feasible weights are exactly those in the statement.
To evaluate their size, write
\[
  \varepsilon_t=\frac{c\rho}{\eta_0}
    \int_{t+1}^{t+2}s^{\rho-1}\,ds.
\]
The integrand is decreasing, so
\[
  \frac{c\rho}{\eta_0}(t+2)^{\rho-1}
    \le\varepsilon_t
    \le\frac{c\rho}{\eta_0}(t+1)^{\rho-1}.
\]
Because \(t+1\le t+2\le2(t+1)\), these bounds establish
\(\varepsilon_t=\Theta((t+1)^{\rho-1})\).

The cumulative weight telescopes to
\[
  b_T=\frac c{\eta_0}\bigl((T+1)^\rho-1\bigr).
\]
Substitution into the bound
\(\Surplus(\widehat x_{<T})\ge g_{\rm mean}b_T/2-C_0\)
from \cref{thm:new-selection-every-horizon} gives
\eqref{eq:new-selection-power-law-surplus}.
Since \(c\,g_{\rm mean}>0\), \(C_0\) is fixed, and
\((T+1)^\rho-1\sim T^\rho\), the surplus is
\(\Omega(T^\rho)\).
\end{proof}

\subsection{Selection exploitation with a constant learning rate}
\label{app:constant-rate-selection}

We now use the noiseless setting of \citet{su2026nocoin}:
for each horizon \(T\),
\[
  Z_t=0,\qquad \eta(T)>0,
  \qquad \eta(T)\to0,\qquad T\eta(T)\to\infty.
\]
For each fixed \(T\), the learning rate is \(\eta_t\equiv\eta(T)\)
throughout the interaction; its value may depend on \(T\).
Although the difference weights in \eqref{eq:new-selection-strategy}
are zero, selection exploitation remains possible: wait for a
profitable reference response, then spend a finite intervention
budget allowed by uniform stability.

\begin{theorem}[Selection exploitation in the noiseless constant-rate setting]
\label{thm:constant-rate-selection}
Under this setting, let \(h\) satisfy \cref{def:regularizer},
fix a max--min strategy \(x^\star\), and suppose
\[
  y_K=\argmin_{y\in\Delta_K}h(y),\qquad
  g_{\rm mean}:=g_{\rm mean}(x^\star;y_K)>0.
\]
There exists \(c_h>0\) such that, for every fixed
\(c\in(0,c_h]\) and all sufficiently large \(T\), a
deterministic optimizer trajectory satisfies
\begin{equation*}
  \Surplus(\widehat x_{<T})
    \ge\frac{c\,g_{\rm mean}}{2\eta(T)}
    =\Omega\!\left(\frac1{\eta(T)}\right).
\end{equation*}
If \(K=[m]\), the same construction satisfies
\begin{equation}
  \Surplus(\widehat x_{<T})
    =\Theta\!\left(\frac1{\eta(T)}\right).
  \label{eq:constant-rate-selection-full-face}
\end{equation}
The constants are independent of \(T\).
\end{theorem}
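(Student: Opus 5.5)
The plan is to mimic the decaying-rate construction of \cref{thm:new-selection-every-horizon}, with one change. Because \(\eta_t\equiv\eta(T)\) is constant, we first wait with repeated max--min play until the reference response is close to \(y_K\). Only then do we spend a cumulative intervention budget of size \(c/\eta(T)\).

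\textbf{Reference response in the noiseless constant-rate case.} With \(Z_t=0\) and \(\eta_t\equiv\eta:=\eta(T)\), repeated \(x^\star\) gives \(y_t^\star=Q_h(-t\eta\,\delta)\). This depends on \(t\) and \(T\) only through the scalar \(r=t\eta\). Applying \cref{lem:reduction} and \cref{prop:vanishing} to the deterministic family \(r\mapsto Q_h(-r\delta)\) with zero noise, which is \regime{I}, gives \(Q_h(-r\delta)\to y_K\) as \(r\to\infty\). A direct argument also works: the comparison in \cref{lem:normal} with \(\Xi=0\) followed by the midpoint gap \(c_\varepsilon\) from \cref{lem:reduction}.

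With \(a=A^\top(x^\dagger-x^\star)\) and \(\langle a,y_K\rangle=g_{\rm mean}\), fix a deterministic \(R<\infty\) such that \(\langle a,Q_h(-r\delta)\rangle\ge\tfrac34g_{\rm mean}\) for all \(r\ge R\). The threshold \(R\) is independent of \(T\), which is the point of reducing to the single scalar \(r\).

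\textbf{Construction.} Let \(c_h\) be as in \cref{cor:new-selection-payoff-stability}, fix \(c\in(0,c_h]\), and set \(t_0=\lceil R/\eta\rceil\). The trajectory plays \(x^\star\) for \(t<t_0\). For \(t_0\le t<T\) it plays \(\widehat x_t=(1-\varepsilon)x^\star+\varepsilon x^\dagger\) with the constant weight \(\varepsilon=c/(\eta(T-t_0))\).

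Since \(T\eta\to\infty\), for large \(T\) we have \(t_0\le R/\eta+1\le T/2\). Then \(\varepsilon\le 2c/(\eta T)\le1\), so the trajectory is feasible. The cumulative weight satisfies \(b_t\le b_T=c/\eta\), so \(\eta b_t\le c\le c_h\) for every \(t\).

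\textbf{Per-round estimate.} By the score identity \eqref{eq:new-selection-score} with \(\Xi=0\), \(y_t=Q_h(-t\eta\delta-\eta b_ta)\). For \(t\ge t_0\) we have \(t\eta\ge R\), so \cref{cor:new-selection-payoff-stability} gives
\[
\langle a,y_t\rangle\ge\langle a,y_t^\star\rangle-\tfrac14 g_{\rm mean}\ge\tfrac12 g_{\rm mean}.
\]
By the payoff decomposition \eqref{eq:new-selection-payoff}, each round \(t\ge t_0\) therefore earns at least \(\varepsilon g_{\rm mean}/2\). Rounds \(t<t_0\) play \(x^\star\) and earn \(\langle\delta,y_t\rangle\ge0\). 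Summing over rounds gives
\[
\Surplus\ge \tfrac{g_{\rm mean}}2\,\varepsilon\,(T-t_0)=\frac{c\,g_{\rm mean}}{2\eta(T)}.
\]

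\textbf{Matching upper bound when \(K=[m]\).} In this case \(\delta=0\), so every round's surplus equals \(\varepsilon_t\langle a,y_t\rangle\) exactly. This is at most \(\|a\|_\infty\varepsilon_t\), and summing gives
\[
\Surplus\le\|a\|_\infty\, c/\eta(T).
\]
Together with the lower bound, this yields \eqref{eq:constant-rate-selection-full-face}.

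\textbf{Main obstacle.} The delicate step is making the waiting time and the approximation threshold uniform in \(T\) even though the rate \(\eta(T)\) changes with the horizon. The reduction to the single scalar \(r=t\eta\) resolves this. The hypothesis \(T\eta\to\infty\) then ensures the waiting period \(R/\eta\) is at most half the horizon and that the weights stay at most one.
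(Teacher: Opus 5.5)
Your proof is correct and follows the paper's argument. It uses the same $T$-independent threshold $R$ from the scalar family $r\mapsto Q_h(-r\delta)$, the same waiting time $t_0=\lceil R/\eta\rceil$, the budget condition $\eta b_t\le c\le c_h$ combined with payoff stability, and the same $\|a\|_\infty c/\eta(T)$ upper bound when $K=[m]$.

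The only difference is how the budget is spent. You spread $c/\eta$ uniformly over $[t_0,T)$ with weight $c/(\eta(T-t_0))$. The paper front-loads it with weights $\min\{1,[c/\eta-(t-t_0)]_+\}$, so the budget is exhausted by round $t_0+\lceil c/\eta\rceil=o(T)$. Both choices are feasible for large $T$ because $T\eta(T)\to\infty$, and they give the identical bound.
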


\begin{proof}
Write \(\eta=\eta(T)\), choose
\(x^\dagger\in\argmax_{x\in\Delta_n}x^\top Ay_K\), and set
\(a=A^\top(x^\dagger-x^\star)\).
By the noiseless limit in \cref{prop:vanishing},
\(Q_h(-s\delta)\to y_K\) as \(s\to\infty\).
Since \(\langle a,y_K\rangle=g_{\rm mean}\), there is a
constant \(R\ge0\), independent of \(T\) and \(\eta\), such that
\[
  \langle a,Q_h(-s\delta)\rangle
    \ge\frac{3g_{\rm mean}}4\qquad(s\ge R).
\]
Thus waiting until \(t_0=\lceil R/\eta\rceil\) makes the
reference response profitable at every later round.

Take \(c_h\) from \cref{cor:new-selection-payoff-stability};
it depends only on \(h,a,g_{\rm mean}\).
For fixed \(0<c\le c_h\), play
\(\widehat x_t=(1-\varepsilon_t)x^\star+\varepsilon_t x^\dagger\), where
\begin{equation*}
  \varepsilon_t=
  \begin{cases}
    0,&t<t_0,\\
    \min\{1,[c/\eta-(t-t_0)]_+\},&t\ge t_0,
  \end{cases}
  \qquad [z]_+:=\max\{z,0\}.
\end{equation*}
These weights lie in \([0,1]\), spend total mass \(c/\eta\)
in \(N=\lceil c/\eta\rceil\) rounds, and are then zero.
The final weight is \(c/\eta-(N-1)\in(0,1]\), so fractional
budgets are included. In particular, \(\eta b_t\le c\).
The score identity \eqref{eq:new-selection-score} and payoff
stability therefore give
\[
  \langle a,y_t\rangle
    \ge\langle a,Q_h(-t\eta\delta)\rangle-\frac{g_{\rm mean}}4
    \ge\frac{g_{\rm mean}}2\qquad(t\ge t_0).
\]
Every intervention round earns at least
\(\varepsilon_tg_{\rm mean}/2\), while all other rounds use
\(x^\star\) and have nonnegative surplus by
\eqref{eq:new-selection-payoff}.

It remains to check that the budget is spent within the horizon.
The assumption \(T\eta(T)\to\infty\) gives
\[
  t_0+N\le\frac{R+c}{\eta(T)}+2=o(T).
\]
Hence, for all sufficiently large \(T\), summing the completed
intervention yields
\[
  \Surplus(\widehat x_{<T})
    \ge\frac{g_{\rm mean}}2\sum_{t<T}\varepsilon_t
    =\frac{c\,g_{\rm mean}}{2\eta(T)}.
\]
If \(K=[m]\), then \(\delta=0\) and \(Q_h(0)=y_K\), so
we may take \(R=t_0=0\).
The payoff identity becomes
\(\widehat x_t^\top Ay_t-\val=\varepsilon_t\langle a,y_t\rangle\),
which also bounds total surplus by \(c\|a\|_\infty/\eta(T)\).
Together with the lower bound, this proves
\eqref{eq:constant-rate-selection-full-face}.
\end{proof}

For \(\eta(T)=\Theta(T^{-\rho})\), \(0<\rho<1\), the guarantee
is \(\Omega(T^\rho)\).
When \(K=[m]\), selection earns \(\Theta(T^\rho)\) even though
repeated max--min play has zero normal surplus
by \cref{prop:normal-surplus-bounds}.

\section{Proofs for Movement Exploitation: Alternating Play}
\label{sec:active}

This appendix proves \cref{thm:main-alternating-exploitation} from
\cref{sec:movement-exploitation}.
We verify the alternating construction, prove the response derivative
used in the main text, and derive the one-pair and cumulative surplus
bounds. The final subsection gives sufficient conditions for checking
the theorem's responsiveness and error hypotheses.
All notation follows \cref{sec:movement-exploitation}.

\subsection{Verification of the alternating construction}
\label{app:new-movement-construction}

We verify the pair asserted after \cref{ass:main-new-geometry}.
The same geometric condition appears in
\citet[Condition~1]{assos2024maximizing} and
\citet[Assumption~1]{su2026nocoin}.

\begin{assumption}[Restatement of the geometry in \cref{ass:main-new-geometry}]
\label{ass:new-movement-geometry}
There exist a max--min strategy \(x^\star\), distinct learner best
responses \(i,j\in\br(x^\star)\), and an action
\(\ell\in\supp(x^\star)\) such that \(A_{\ell i}\ne A_{\ell j}\).
\end{assumption}

Fix \(x^\star,i,j,\ell\) satisfying this condition and relabel
\(i,j\), if necessary, so that \(A_{\ell i}>A_{\ell j}\).
Following Equation~(40) of \citet{su2026nocoin}, take
\begin{equation}
  x'=(1-x_\ell^\star)x^\star+x_\ell^\star e_\ell,
  \qquad
  x''=(1+x_\ell^\star)x^\star-x_\ell^\star e_\ell.
  \label{eq:new-movement-strategies}
\end{equation}
The first vector is a convex combination of feasible strategies.
For the second, its \(\ell\)-th coordinate is
\((1+x_\ell^\star)x_\ell^\star-x_\ell^\star
=(x_\ell^\star)^2\ge0\), and every other coordinate is
\((1+x_\ell^\star)x_a^\star\ge0\).
Its coordinates sum to \((1+x_\ell^\star)-x_\ell^\star=1\).
Thus both vectors belong to \(\Delta_n\), and direct addition gives
\((x'+x'')/2=x^\star\).
These are the feasibility and averaging properties recorded in
Lemma~F.6 of \citet{su2026nocoin}.

Write \(v'=A^\top x'\), \(v''=A^\top x''\), and
\(v=A^\top x^\star=\val\mathbf1+\delta\), as in the main text.
Since \(i,j\in K=\br(x^\star)\), we have \(v_i=v_j=\val\).
The common \(x^\star\) component therefore cancels in the
coordinate difference:
\[
  v'_i-v'_j=x_\ell^\star(A_{\ell i}-A_{\ell j})>0,
  \qquad
  v''_i-v''_j=-(v'_i-v'_j)<0.
\]
The strict inequalities use \(x_\ell^\star>0\).
Thus the two plays distinguish the tied best responses in opposite
directions while preserving their max--min average.

\paragraph{Scores along the trajectory.}
Use the finite-horizon trajectory from
\cref{sec:movement-exploitation}: play \(x^\star\) for the first
\(\min\{2s_0,T\}\) rounds, then complete pairs \((x',x'')\),
and use \(x^\star\) for any unmatched final round.
Each complete pair adds
\(B^\top x'+B^\top x''=-2v\) to the cumulative scores,
exactly as two baseline plays would.
Consequently, before any complete pair \(s\ge s_0\) and after
its first play,
\begin{subequations}\label{eq:new-movement-scores}
\begin{align}
  S_{2s}&=-2s_0v-2(s-s_0)v=-2sv,
    \\
  S_{2s+1}&=S_{2s}-v'=-2sv-v'.
\end{align}
\end{subequations}
Substituting these scores into noisy FTRL gives
\begin{subequations}\label{eq:new-movement-responses}
\begin{align}
  y_{2s}&=Q_h(-2s\eta_{2s}v+\Xi_{2s}),
    \\
  y_{2s+1}&=Q_h(-2s\eta_{2s+1}v-\eta_{2s+1}v'+\Xi_{2s+1}).
\end{align}
\end{subequations}
The starting time changes the number of alternating pairs but
does not change these formulas for the pairs that are played.

\paragraph{Payoff of one pair.}
The averaging identity gives \(v''=2v-v'\).
Using \(\langle\mathbf1,y_{2s+1}\rangle=1\), we obtain
\begin{align}
  x'^\top Ay_{2s}+x''^\top Ay_{2s+1}-2\val
    &=\langle v',y_{2s}\rangle
       +\langle2v-v',y_{2s+1}\rangle-2\val
       \notag\\
    &=\langle v',y_{2s}-y_{2s+1}\rangle
       +2\langle v-\val\mathbf1,y_{2s+1}\rangle
       \notag\\
    &=\langle v',y_{2s}-y_{2s+1}\rangle
       +2\langle\delta,y_{2s+1}\rangle.
       \label{eq:new-movement-pair-payoff:c}
\end{align}
The off-face term is nonnegative because \(\delta\ge0\).
It therefore suffices to lower-bound the movement term
\(\langle v',y_{2s}-y_{2s+1}\rangle\), as claimed in the main text.

\subsection{Proof of the response derivative and payoff stability}
\label{app:new-movement-derivative}

The interpolation in the main text requires differentiating the
response even when its support changes.
The inverse-curvature hypothesis gives absolute continuity of the
response path, so the derivative can be integrated across those
changes.

\begin{assumption}[Inverse-curvature bound]
\label{ass:new-movement-regularity}
For a separable regularizer in \cref{def:separable-regularizer}, assume
\(C_\theta:=\sup_{0<u\le1}1/\theta''(u)<\infty\).
\end{assumption}

\begin{remark}
The formulas in \appref{app:preliminary-kernel-examples} give
\(C_\theta=1\) for negative entropy and Euclidean regularization,
and \(C_\theta=1/q\) for negative Tsallis regularization with
\(q\in(0,1)\). Thus all three regularizers satisfy this hypothesis.
\end{remark}

For an arbitrary response path \(y(r)=Q_h(u(r))\), use the
susceptibility weights from \cref{sec:movement-exploitation}:
\begin{equation}
  w_r(a)=
  \begin{cases}
    1/\theta''(y_a(r)),&y_a(r)>0,\\
    0,&y_a(r)=0,
  \end{cases}
  \qquad W_r=\sum_a w_r(a),\qquad \pi_r(a)=\frac{w_r(a)}{W_r}.
  \label{eq:new-movement-weights}
\end{equation}
At least one coordinate is positive, so \(W_r>0\).
We write \(I\sim\pi_r\) when forming the weighted covariance.
The normalization by \(W_r\) expresses the common adjustment
needed to keep the response coordinates summing to one.

\begin{lemma}[Response derivative and payoff stability]
\label{lem:new-movement-derivative}
Under \cref{ass:new-movement-regularity}, an absolutely continuous
score path \(u:[0,1]\to\mathbb R^m\) has an absolutely continuous
response \(y=Q_h\circ u\).
For every fixed \(q\in\mathbb R^m\), at almost every \(r\),
\begin{equation}
  \frac{d}{dr}\langle q,y(r)\rangle
    =W_r\Cov_{\pi_r}(q_I,\dot u_I(r)).
  \label{eq:new-movement-covariance}
\end{equation}
Moreover, for any scores \(u,\widetilde u\),
\begin{equation}
  |\langle q,Q_h(u)-Q_h(\widetilde u)\rangle|
    \le\frac{mC_\theta}{2}\spnorm{q}\spnorm{u-\widetilde u}.
  \label{eq:new-movement-lipschitz}
\end{equation}
\end{lemma}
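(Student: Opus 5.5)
The plan is to work entirely through the separable KKT representation \eqref{eq:preliminary-kkt-choice}: $y_a=\phi(u_a+\lambda(u))$, where $\lambda(u)$ is the normalizing multiplier. First I show that $\phi$ is $C_\theta$-Lipschitz. On the open branch $\theta'(0^+)<z<\theta'(1)$ we have $\phi'(z)=1/\theta''(\phi(z))\le C_\theta$, and on the truncated branches $\phi$ is constant. Since $\phi$ is continuous and monotone, it is Lipschitz with constant $C_\theta$.

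Next I show that $\lambda$ can be chosen $1$-Lipschitz in $\|\cdot\|_\infty$. Suppose $\|u-\widetilde u\|_\infty\le\epsilon$. Then $\sum_a\phi(\widetilde u_a+\lambda(u)+\epsilon)\ge1$ and $\sum_a\phi(\widetilde u_a+\lambda(u)-\epsilon)\le1$. By monotonicity, an admissible multiplier for $\widetilde u$ therefore lies in $[\lambda(u)-\epsilon,\lambda(u)+\epsilon]$. If the multiplier is nonunique (vertex case), I select it measurably, for instance as the smallest admissible value; this does not affect $y$. Hence every coordinate of $Q_h$ is $2C_\theta$-Lipschitz in $\|\cdot\|_\infty$. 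Composing with an absolutely continuous path $u(r)$ then makes $y(r)$ and $\lambda(r)$ absolutely continuous (the selection can be made absolutely continuous along the path).

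For the derivative, fix $r$ at which $u$, $\lambda$, and every $y_a$ are differentiable; this holds almost everywhere. Write $g_a=u_a+\lambda$.
\begin{itemize}
\item If $g_a(r)$ lies in the open branch, the chain rule gives $\dot y_a=w_r(a)(\dot u_a+\dot\lambda)$.
\item If $g_a(r)$ lies in the interior of a flat branch, then $\dot y_a=0$. This is consistent with $w_r(a)=0$ when $y_a=0$.
\item The delicate set is where $g_a(r)$ equals a kink value $\theta'(0^+)$ or $\theta'(1)$. Here I use the standard fact that for absolutely continuous $g$, $\dot g=0$ almost everywhere on a level set $\{g=c\}$. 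So, up to a null set, $\dot g_a=0$ there, and the Lipschitz composite $\phi\circ g_a$ also has derivative zero almost everywhere on that set. Both sides of the formula $\dot y_a=w_r(a)\dot g_a$ therefore vanish.
\end{itemize}
Thus, almost everywhere, $\dot y_a=w_r(a)(\dot u_a+\dot\lambda)$ for every $a$. Differentiating $\sum_a y_a=1$ gives $\dot\lambda=-\mathbb E_{\pi_r}[\dot u_I]$, so that $\dot y_a=w_r(a)(\dot u_a-\mathbb E_{\pi_r}\dot u_I)$. Then
\[
\tfrac{d}{dr}\langle q,y\rangle=W_r\,\mathbb E_{\pi_r}\bigl[q_I(\dot u_I-\mathbb E_{\pi_r}\dot u_I)\bigr]=W_r\Cov_{\pi_r}(q_I,\dot u_I),
\]
which is \eqref{eq:new-movement-covariance}.

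The one point that needs a short argument is the vertex case $y_a=1$, where $\theta'(1)$ is the relevant kink. On the set where the response stays at a vertex with the other coordinates inactive, $\pi_r$ is a point mass, so the covariance is zero; the derivative of $y$ is also zero almost everywhere there, by the same level-set argument.

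For \eqref{eq:new-movement-lipschitz}, I integrate along the straight segment $u(r)=\widetilde u+r(u-\widetilde u)$. Covariance is invariant under constant shifts of either argument, so
\[
|\Cov_{\pi}(q_I,d_I)|\le\sqrt{\Var q_I\Var d_I}\le\tfrac14\spnorm{q}\spnorm{d},
\]
using that a variable with range $R$ has variance at most $R^2/4$. Combined with $W_r\le mC_\theta$, this gives the bound with constant $mC_\theta/4$, which is stronger than the stated $mC_\theta/2$.

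The main obstacle is making the chain rule rigorous at the kinks of $\phi$, together with the measurable selection of $\lambda$ at vertices. The level-set fact for absolutely continuous functions is the tool I would use to dispose of both.
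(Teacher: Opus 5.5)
Your proof is correct and follows essentially the same route as the paper. You use the separable KKT representation and an almost-everywhere chain rule through support changes, relying on the fact that an absolutely continuous function has zero derivative almost everywhere on its level sets. You then eliminate $\dot\lambda$ by differentiating $\sum_a y_a=1$ and integrate the covariance identity along a straight segment, exactly as the paper does.

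The differences are minor. You get absolute continuity from the Lipschitz property of $\phi$ and of the smallest admissible multiplier, where the paper uses $1/C_\theta$-strong convexity of $h$. Your Cauchy--Schwarz/Popoviciu estimate $|\Cov_{\pi_r}(q_I,d_I)|\le\tfrac14\spnorm{q}\spnorm{d}$ gives the sharper constant $mC_\theta/4$, compared with the paper's $mC_\theta/2$ from the pairwise expansion.
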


\begin{proof}
We first justify differentiating the response, then eliminate the
KKT multiplier using the simplex constraint.
Integrating the resulting identity along a score segment will
give the finite-change bound.

\paragraph{Absolute continuity of the response.}
The bound \(\theta''\ge1/C_\theta\) makes \(h\)
\(1/C_\theta\)-strongly convex in the Euclidean norm on
\(\Delta_m\), including its boundary by continuity.
For \(y=Q_h(u)\) and \(\widetilde y=Q_h(\widetilde u)\), strong
convexity and optimality give
\[
  \begin{aligned}
    h(\widetilde y)-\langle u,\widetilde y\rangle
      &\ge h(y)-\langle u,y\rangle
          +\frac1{2C_\theta}\|y-\widetilde y\|_2^2,\\
    h(y)-\langle\widetilde u,y\rangle
      &\ge h(\widetilde y)-\langle\widetilde u,\widetilde y\rangle
          +\frac1{2C_\theta}\|y-\widetilde y\|_2^2.
  \end{aligned}
\]
Adding cancels the regularizer values and yields
\[
  C_\theta^{-1}\|y-\widetilde y\|_2^2
    \le\langle u-\widetilde u,y-\widetilde y\rangle
    \le\|u-\widetilde u\|_2\|y-\widetilde y\|_2.
\]
Thus \(Q_h\) is \(C_\theta\)-Lipschitz, which is the specialization
of \cref{prop:preliminary-strong-stability} used here.
Its composition with an absolutely continuous score path is
therefore absolutely continuous.

\paragraph{Differentiating through changes of support.}
Choose the KKT multiplier with equality on positive coordinates,
using the endpoint convention in \appref{app:preliminary-kkt}.
Near each parameter value, some coordinate remains positive by
continuity. On such a neighborhood,
\(\lambda(r)=\theta'(y_a(r))-u_a(r)\) for that coordinate.
Since its probability stays away from zero locally and \(\theta'\)
is continuously differentiable there, \(\lambda\) is locally
absolutely continuous.
At almost every parameter value, differentiation on each positive
coordinate gives
\[
  \dot y_a(r)=w_r(a)(\dot u_a(r)+\dot\lambda(r)).
\]
For an inactive coordinate, its derivative is zero almost everywhere
on its zero set: this holds for any nonnegative absolutely
continuous function. With the zero weights in
\eqref{eq:new-movement-weights}, the same identity consequently
holds almost everywhere for all coordinates, even when the support
changes.

Differentiate \(\sum_a y_a(r)=1\) and sum these identities:
\[
  0=\sum_a\dot y_a(r)
    =\sum_a w_r(a)\dot u_a(r)+W_r\dot\lambda(r),
  \qquad
  \dot\lambda(r)=-\sum_a\pi_r(a)\dot u_a(r).
\]
The common multiplier subtracts the susceptibility-weighted average
score change. Multiplying each coordinate derivative by \(q_a\)
and summing now gives
\[
  \begin{aligned}
    \frac{d}{dr}\langle q,y(r)\rangle
      &=\sum_a w_r(a)q_a\dot u_a(r)
        -\frac{\bigl(\sum_a w_r(a)q_a\bigr)
                    \bigl(\sum_b w_r(b)\dot u_b(r)\bigr)}{W_r}\\
      &=W_r\Cov_{\pi_r}(q_I,\dot u_I(r)).
  \end{aligned}
\]
This proves \eqref{eq:new-movement-covariance}.

\paragraph{Integrating a finite score change.}
Take the segment \(u(r)=u+r(\widetilde u-u)\), with constant
derivative \(d=\widetilde u-u\).
To bound its covariance by relative score differences, expand it as
\[
  W_r\Cov_{\pi_r}(q_I,d_I)
    =\frac1{2W_r}\sum_{a,b}w_r(a)w_r(b)
       (q_a-q_b)(d_a-d_b).
\]
Each difference is bounded by the span of its vector.
Since \(\sum_{a,b}w_r(a)w_r(b)=W_r^2\) and
\(W_r\le mC_\theta\),
\[
  |W_r\Cov_{\pi_r}(q_I,d_I)|
    \le\frac{W_r}{2}\spnorm{q}\spnorm{d}
    \le\frac{mC_\theta}{2}\spnorm{q}\spnorm{d}.
\]
Integrating the derivative over \([0,1]\) proves
\eqref{eq:new-movement-lipschitz}.
\end{proof}

\subsection{Derivation of the one-pair surplus bound}
\label{app:new-movement-pair}

We now establish the movement inequality used in
\cref{sec:movement-exploitation}.
Fix a coupling
\(\widehat\Xi_s^0\stackrel d=\Xi_{2s}\) and
\(\widehat\Xi_s^1\stackrel d=\Xi_{2s+1}\).
The optimizer's trajectory and hence its cumulative scores are
deterministic, so each expected round payoff depends only on the
corresponding noise marginal. The coupling need not reproduce the
actual temporal dependence.

To apply \cref{lem:new-movement-derivative}, change the deterministic
score while holding noise fixed, then replace the noise at the odd
endpoint. For fixed \(\xi\), the interpolation from the main text is
\begin{subequations}\label{eq:new-movement-interpolation}
\begin{align}
  \bar\eta_s(r)&=(1-r)\eta_{2s}+r\eta_{2s+1},
    \\
  u_s(r;\xi)&=-2s\bar\eta_s(r)v-r\eta_{2s+1}v'+\xi,
    \\
  y_s(r;\xi)&=Q_h(u_s(r;\xi)),\qquad r\in[0,1].
\end{align}
\end{subequations}
Its endpoints agree with \eqref{eq:new-movement-responses} when
the appropriate noise marginal is inserted.
Following the interpolation argument of \citet{su2026nocoin},
absolute continuity of the response gives
\begin{equation}
  \langle v',y_s(0;\xi)\rangle-\langle v',y_s(1;\xi)\rangle
    =-\int_0^1\frac{d}{dr}\langle v',y_s(r;\xi)\rangle\,dr.
  \label{eq:new-movement-integral}
\end{equation}
This represents the movement payoff with the noise held fixed.

Along the path with \(\xi=\widehat\Xi_s^0\), use the weights
\(w_{s,r},W_{s,r},\pi_{s,r}\) from \eqref{eq:new-movement-weights}.
The score derivative is
\[
  \frac{d}{dr}u_s(r;\widehat\Xi_s^0)
    =2s(\eta_{2s}-\eta_{2s+1})v-\eta_{2s+1}v'.
\]
The first term rescales the preceding \(2s\) rounds' payoffs;
the second is the score change caused by the new play \(x'\).
Applying \eqref{eq:new-movement-covariance} with \(q=v'\) gives
\begin{align}
  \frac{d}{dr}\langle v',y_s(r;\widehat\Xi_s^0)\rangle
    &=W_{s,r}\Cov_{\pi_{s,r}}\!\left(
        v'_I,\frac{d}{dr}u_{s,I}(r;\widehat\Xi_s^0)\right)
        \notag\\
    &=W_{s,r}\Cov_{\pi_{s,r}}\!\left(
        v'_I,2s(\eta_{2s}-\eta_{2s+1})v_I-\eta_{2s+1}v'_I\right)
        \notag\\
    &=2s(\eta_{2s}-\eta_{2s+1})W_{s,r}\Cov_{\pi_{s,r}}(v'_I,v_I)
       -\eta_{2s+1}W_{s,r}\Cov_{\pi_{s,r}}(v'_I,v'_I)
        \notag\\
    &=2s(\eta_{2s}-\eta_{2s+1})W_{s,r}\Cov_{\pi_{s,r}}(v'_I,v_I)
       -\eta_{2s+1}W_{s,r}\Var_{\pi_{s,r}}(v'_I).
        \label{eq:new-movement-path-derivative:d}
\end{align}
The minus sign in \eqref{eq:new-movement-integral} turns the variance
term into positive movement gain. The covariance can have either
sign, so its absolute value bounds the possible schedule loss.
The main-text quantities are therefore
\begin{align*}
  \mathcal K_s&=\mathbb E\int_0^1
    W_{s,r}\Var_{\pi_{s,r}}(v'_I)\,dr,
    \\
  \mathcal D_s&=\mathbb E\int_0^1
    W_{s,r}|\Cov_{\pi_{s,r}}(v'_I,v_I)|\,dr.
\end{align*}
Replacing the noise at the odd endpoint contributes
\begin{equation*}
  \mathcal N_s=\mathbb E\left|
    \left\langle v',Q_h(u_s(1;\widehat\Xi_s^0))
                    -Q_h(u_s(1;\widehat\Xi_s^1))\right\rangle
    \right|.
\end{equation*}
All three quantities are finite: \(W_{s,r}\le mC_\theta\),
the payoff vectors are fixed, and the absolute payoff difference in
\(\mathcal N_s\) is at most \(\spnorm{v'}\).
Thus no moment assumption on the noise is needed for this bound.
The choice of coupling may affect \(\mathcal N_s\).

\begin{theorem}[Expected movement from one pair]
\label{thm:new-movement-pair}
Suppose \cref{ass:new-movement-geometry,ass:new-movement-regularity}
hold and use the alternating trajectory \(x^{\rm alt}_{<T}\).
For any positive learning rates and any coupling of the adjacent
noise marginals, every complete pair satisfies
\begin{equation}
  \mathbb E\langle v',y_{2s}-y_{2s+1}\rangle
    \ge\eta_{2s+1}\mathcal K_s
      -2s|\eta_{2s}-\eta_{2s+1}|\mathcal D_s-\mathcal N_s.
  \label{eq:new-movement-pair-bound}
\end{equation}
Consequently, for every horizon \(T\ge1\),
\begin{equation}
  \Surplus(x^{\rm alt}_{<T})
    \ge\sum_{s=s_0}^{\lfloor T/2\rfloor-1}
      \left[\eta_{2s+1}\mathcal K_s
       -2s|\eta_{2s}-\eta_{2s+1}|\mathcal D_s-\mathcal N_s\right].
  \label{eq:new-movement-total-bound}
\end{equation}
\end{theorem}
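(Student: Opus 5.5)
The proof has two parts: a one-pair bound obtained by integrating the response derivative along the interpolation and then replacing the noise, and a summation over pairs.

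\emph{One-pair bound.} Fix $s\ge s_0$ with $2s+1<T$ and a coupling $(\widehat\Xi_s^0,\widehat\Xi_s^1)$. The optimizer trajectory is deterministic, so by \eqref{eq:new-movement-responses} we have
\[
\mathbb E\langle v',y_{2s}\rangle=\mathbb E\langle v',Q_h(u_s(0;\widehat\Xi_s^0))\rangle,
\qquad
\mathbb E\langle v',y_{2s+1}\rangle=\mathbb E\langle v',Q_h(u_s(1;\widehat\Xi_s^1))\rangle.
\]
We add and subtract $\mathbb E\langle v',Q_h(u_s(1;\widehat\Xi_s^0))\rangle$, which writes the movement payoff as a fixed-noise interpolation term minus a noise-replacement term. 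The second term is bounded below by $-\mathcal N_s$ directly from its definition.

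For the first term, fix a realization $\xi=\widehat\Xi_s^0$. The path $r\mapsto u_s(r;\xi)$ is affine, hence absolutely continuous, so \cref{lem:new-movement-derivative} applies. Identity \eqref{eq:new-movement-integral} combined with \eqref{eq:new-movement-path-derivative:d} gives
\[
\langle v',y_s(0)-y_s(1)\rangle
=\eta_{2s+1}\int_0^1 W_{s,r}\Var_{\pi_{s,r}}(v'_I)\,dr
-2s(\eta_{2s}-\eta_{2s+1})\int_0^1 W_{s,r}\Cov_{\pi_{s,r}}(v'_I,v_I)\,dr .
\]
We bound the second integral in absolute value, take expectations, and obtain \eqref{eq:new-movement-pair-bound}.

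To exchange expectation and integral, we need measurability of $(r,\xi)\mapsto W_{s,r}\Var_{\pi_{s,r}}$. This holds because $Q_h$ is continuous and the weights are Borel functions of $y$. The integrands are bounded by constants times $mC_\theta$, so Fubini applies. Measurability is the only delicate point here, and it is routine.

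\emph{Summation over pairs.} The rounds split into three groups:
\begin{itemize}
\item The baseline rounds $t<\min\{2s_0,T\}$ and any unmatched final round play $x^\star$, so $x^{\star\top}Ay_t-\val=\langle\delta,y_t\rangle\ge0$.
\item Every complete pair $s_0\le s\le\lfloor T/2\rfloor-1$ has payoff given by \eqref{eq:new-movement-pair-payoff:c}. We drop the nonnegative term $2\langle\delta,y_{2s+1}\rangle$ and apply \eqref{eq:new-movement-pair-bound}.
\item If $T<2s_0$ the sum is empty and the bound is trivial.
\end{itemize}
Summing these contributions gives \eqref{eq:new-movement-total-bound}.

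For \cref{thm:main-alternating-exploitation}, the hypotheses give an $s_1$ such that, for all $s\ge s_1$, we have $\mathcal K_s\ge 3\kappa/4$ and $2s|\eta_{2s}-\eta_{2s+1}|\mathcal D_s+\mathcal N_s\le\kappa\eta_{2s+1}/4$. We take $s_0=s_1$. Each bracket in \eqref{eq:new-movement-total-bound} is then at least $\kappa\eta_{2s+1}/2$, which gives the stated bound.
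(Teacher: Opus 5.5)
Your proposal is correct and follows the paper's proof essentially step for step. You add and subtract the common-noise odd endpoint $Q_h(u_s(1;\widehat\Xi_s^0))$, integrate the covariance derivative along the affine fixed-noise path, and bound the schedule covariance in absolute value. You then charge the endpoint replacement to $\mathcal N_s$ and sum using the pair-payoff identity, with nonnegative off-face, baseline, and unmatched-round terms. Your joint-measurability argument for exchanging expectation and the $r$-integral is a slightly more careful version of the paper's one-line appeal to bounded inverse curvatures.
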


\begin{proof}
We first integrate the deterministic score change under a common
noise realization, then account for the endpoint noise replacement.
Substitute \eqref{eq:new-movement-path-derivative:d} into
\eqref{eq:new-movement-integral} with \(\xi=\widehat\Xi_s^0\).
Taking expectations gives
\[
  \begin{aligned}
    &\mathbb E\langle v',y_s(0;\widehat\Xi_s^0)
                           -y_s(1;\widehat\Xi_s^0)\rangle\\
    &\quad=\eta_{2s+1}\mathcal K_s
       -2s(\eta_{2s}-\eta_{2s+1})
          \mathbb E\int_0^1W_{s,r}\Cov_{\pi_{s,r}}(v'_I,v_I)\,dr\\
    &\quad\ge\eta_{2s+1}\mathcal K_s
       -2s|\eta_{2s}-\eta_{2s+1}|\mathcal D_s.
  \end{aligned}
\]
Bounded inverse curvatures and fixed payoff vectors justify exchanging
expectations and integrals. The absolute covariance bound allows
arbitrary positive schedules, including schedules that increase.

Under the chosen coupling, add and subtract the common-noise odd
endpoint to write the actual expected movement as
\[
  \begin{aligned}
    \mathbb E\langle v',y_{2s}-y_{2s+1}\rangle
      &=\mathbb E\langle v',y_s(0;\widehat\Xi_s^0)
                              -y_s(1;\widehat\Xi_s^0)\rangle\\
      &\quad+\mathbb E\langle v',y_s(1;\widehat\Xi_s^0)
                              -y_s(1;\widehat\Xi_s^1)\rangle.
  \end{aligned}
\]
The prescribed marginals justify this equality. The second expectation
is bounded below by \(-\mathcal N_s\), proving
\eqref{eq:new-movement-pair-bound} without any temporal-independence
assumption.

Now sum over the complete pairs and use
\eqref{eq:new-movement-pair-payoff:c}.
The off-face terms are nonnegative.
Every initial or unmatched round plays \(x^\star\), whose payoff
is at least \(\val\) against every learner response, so those
rounds also have nonnegative surplus.
This proves \eqref{eq:new-movement-total-bound}.
If there are no complete pairs, the sum is empty and the same
baseline argument gives the bound.
\end{proof}

\subsection{Proof of the alternating-surplus theorem}
\label{app:new-movement-surplus}

The remaining step in \cref{thm:main-alternating-exploitation} is
to choose a finite starting time after which the movement gain
dominates both errors. The quantities below are exactly those in
the main-text theorem.

\begin{theorem}[Formal statement of \cref{thm:main-alternating-exploitation}]
\label{thm:new-movement-surplus}
Under the setup of \cref{thm:new-movement-pair}, suppose there are
\(\kappa>0\) and adjacent marginal couplings such that
\begin{equation}
  \liminf_{s\to\infty}\mathcal K_s\ge\kappa,
  \qquad
  2s|\eta_{2s}-\eta_{2s+1}|\mathcal D_s+\mathcal N_s
    =o(\eta_{2s+1}).
  \label{eq:new-movement-positive-condition}
\end{equation}
There is a finite \(s_0\), independent of \(T\), such that
\begin{equation}
  \Surplus(x^{\rm alt}_{<T})
    \ge\frac\kappa2
      \sum_{s=s_0}^{\lfloor T/2\rfloor-1}\eta_{2s+1}
    \qquad\text{for every }T\ge1.
  \label{eq:new-movement-surplus}
\end{equation}
\end{theorem}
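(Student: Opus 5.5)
The plan is to start from the cumulative bound \eqref{eq:new-movement-total-bound} of \cref{thm:new-movement-pair} and choose \(s_0\) so that every summand with \(s\ge s_0\) is at least \(\kappa\eta_{2s+1}/2\). The trajectory \(x^{\rm alt}_{<T}\) itself depends on \(s_0\), so \(s_0\) must be fixed before invoking that theorem. This causes no circularity, because \(\mathcal K_s\), \(\mathcal D_s\) and \(\mathcal N_s\) are defined through the interpolation \eqref{eq:new-movement-interpolation}. The scores there are \(-2sv\) and \(-2sv-v'\), and by \eqref{eq:new-movement-scores} these formulas hold for every pair index \(s\) regardless of when alternation begins. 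So these quantities depend only on \(s\), the schedule, the noise marginals and the chosen coupling, and not on \(s_0\) or \(T\).

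\textbf{Step 1: choose \(s_0\).} From \(\liminf_s\mathcal K_s\ge\kappa\), pick \(s_1\) with \(\mathcal K_s\ge 3\kappa/4\) for all \(s\ge s_1\). From the error condition \(2s|\eta_{2s}-\eta_{2s+1}|\mathcal D_s+\mathcal N_s=o(\eta_{2s+1})\), pick \(s_2\) with
\[
2s|\eta_{2s}-\eta_{2s+1}|\mathcal D_s+\mathcal N_s\le \tfrac{\kappa}{4}\eta_{2s+1}
\]
for all \(s\ge s_2\). Set \(s_0=\max\{s_1,s_2\}\). This is finite and does not depend on \(T\).

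\textbf{Step 2: bound each summand.} For \(s\ge s_0\) the summand in \eqref{eq:new-movement-total-bound} satisfies
\[
\eta_{2s+1}\mathcal K_s-\bigl(2s|\eta_{2s}-\eta_{2s+1}|\mathcal D_s+\mathcal N_s\bigr)\ge \tfrac{3\kappa}{4}\eta_{2s+1}-\tfrac{\kappa}{4}\eta_{2s+1}=\tfrac{\kappa}{2}\eta_{2s+1}.
\]
Here \(\eta_{2s+1}>0\) lets us multiply the bound on \(\mathcal K_s\) through.

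\textbf{Step 3: sum.} Summing over \(s_0\le s\le\lfloor T/2\rfloor-1\) gives \eqref{eq:new-movement-surplus}. When \(T<2s_0+2\), the range is empty and the right-hand side is zero. In that case \cref{thm:new-movement-pair}, through the nonnegativity of baseline and unmatched rounds, still gives nonnegative surplus, so the bound holds for every \(T\ge1\).

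The only delicate point is the one noted at the start: confirming that the per-pair quantities do not depend on \(s_0\) or on the horizon, so that \(s_0\) can be chosen uniformly before the trajectory is defined. This follows directly from \eqref{eq:new-movement-responses}. All remaining steps are elementary limit arguments.
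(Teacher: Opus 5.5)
Your proposal is correct and essentially matches the paper's proof. Like the paper, you split \(\kappa\) into a \(3\kappa/4\) lower bound on \(\mathcal K_s\) and a \(\kappa/4\) error budget, apply this termwise to \eqref{eq:new-movement-total-bound}, and note that \(\mathcal K_s,\mathcal D_s,\mathcal N_s\) depend only on \(s\), the schedule, and the noise marginals, so \(s_0\) can be fixed independently of \(T\) before the trajectory is defined.
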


\begin{proof}
To leave a margin for the errors, take a late enough starting time
that \(\mathcal K_s\ge3\kappa/4\) and the combined error is at
most \(\kappa\eta_{2s+1}/4\) whenever \(s\ge s_0\).
The liminf and little-oh conditions in
\eqref{eq:new-movement-positive-condition} make such a finite
\(s_0\) possible.
Then each complete pair after the starting time has lower bound
\[
  \eta_{2s+1}\mathcal K_s
    -2s|\eta_{2s}-\eta_{2s+1}|\mathcal D_s-\mathcal N_s
    \ge\frac{3\kappa}{4}\eta_{2s+1}
       -\frac\kappa4\eta_{2s+1}
    =\frac\kappa2\eta_{2s+1}.
\]
This explains the fractions: the lower bound on responsiveness and
the reserved error budget leave half of \(\kappa\) as net gain.
Substitution into \eqref{eq:new-movement-total-bound} proves the
claimed inequality for every horizon, including horizons with no
complete pair.

Finally, delaying alternation does not alter the response formulas
for the later pairs. Each complete pair and two baseline plays
produce the same cumulative score increment \(-2v\), so
\eqref{eq:new-movement-scores} holds with the chosen \(s_0\).
Thus the starting time is independent of \(T\), completing the
proof of \cref{thm:main-alternating-exploitation}.
\end{proof}

\begin{remark}[Noiseless learning with a constant step size]
\label{rem:new-movement-noiseless-constant}
For \(\eta_t\equiv\eta>0\), the schedule-drift contribution is
identically zero, even if \(\mathcal D_s\) is not.
With no noise, \(\widehat\Xi_s^0=\widehat\Xi_s^1=0\), so
\(\mathcal N_s=0\) as well.
The error condition in \eqref{eq:new-movement-positive-condition}
therefore holds automatically; only a positive lower bound on
\(\mathcal K_s\) remains to be checked.

\smallskip
Under the sufficiently small step-size condition of
\citet{su2026nocoin}, their Lemma~F.9 and Corollary~F.10 keep
the distinguished coordinates uniformly away from zero and one
along the noiseless interpolation.
Their curvatures are consequently bounded above by a fixed
\(M<\infty\), uniformly in \(s\), \(r\in[0,1]\), and all
sufficiently small \(\eta\).
Their Lemma~F.8 gives
\[
  \mathcal K_s=\int_0^1W_{s,r}\Var_{\pi_{s,r}}(v'_I)\,dr
    \ge\frac{(v'_i-v'_j)^2}{2M}>0.
\]
The bound holds for every pair. Taking \(s_0=0\) directly in
\eqref{eq:new-movement-total-bound} yields
\[
  \Surplus(x^{\rm alt}_{<T})
    \ge\frac{\eta(v'_i-v'_j)^2}{2M}
         \left\lfloor\frac T2\right\rfloor
    =\Omega(\eta T).
\]
\smallskip
Fixed \(\eta\) gives \(\Omega(T)\); a constant rate \(\eta(T)\)
gives \(\Omega(\eta(T)T)\), as in \citet{su2026nocoin}.
\end{remark}

The theorem concerns expected payoffs; it does not require any
particular visit frequencies along individual response trajectories.
The power-law rate stated after the main-text theorem follows by
evaluating the odd-round learning-rate sum.

\begin{corollary}[Power-law alternating surplus]
\label{cor:new-movement-power-law}
If \cref{thm:new-movement-surplus} applies and
\(\eta_t=\eta_0(t+1)^{-\rho}\), with \(\eta_0>0\) and
\(0<\rho<1\), then
\(\Surplus(x^{\rm alt}_{<T})=\Omega(T^{1-\rho})\).
\end{corollary}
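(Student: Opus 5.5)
We start from the lower bound \eqref{eq:new-movement-surplus} of \cref{thm:new-movement-surplus}, which gives, for every \(T\ge1\),
\[
  \Surplus(x^{\rm alt}_{<T})\ge\frac\kappa2\sum_{s=s_0}^{\lfloor T/2\rfloor-1}\eta_{2s+1},
\]
with \(\kappa>0\) and \(s_0\) finite and independent of \(T\). The whole argument therefore reduces to a deterministic estimate of this odd-index partial sum under the power-law schedule \(\eta_{2s+1}=\eta_0(2s+2)^{-\rho}\).

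Because \(u\mapsto(2u+2)^{-\rho}\) is decreasing, we compare the sum with an integral:
\[
  \sum_{s=s_0}^{S-1}(2s+2)^{-\rho}\ge\int_{s_0}^{S}(2u+2)^{-\rho}\,du
  =\frac{(2S+2)^{1-\rho}-(2s_0+2)^{1-\rho}}{2(1-\rho)},
\]
where \(S=\lfloor T/2\rfloor\). Since \(S\ge T/2-1\), we have \(2S+2\ge T\). The right-hand side is therefore at least \(\bigl(T^{1-\rho}-(2s_0+2)^{1-\rho}\bigr)/(2(1-\rho))\).

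The subtracted term is a constant independent of \(T\), and \(1-\rho>0\). Hence for all \(T\ge 2(2s_0+2)\) the bracket is at least \(\tfrac12T^{1-\rho}\) up to a fixed factor. This gives
\[
  \Surplus(x^{\rm alt}_{<T})\ge\frac{\kappa\eta_0}{8(1-\rho)}\,c_\rho\,T^{1-\rho}
\]
for a constant \(c_\rho>0\). For smaller \(T\), the bound is nonnegative, and the asymptotic \(\Omega\) claim only concerns large \(T\).

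There is no genuine obstacle here. The only point requiring care is that \(s_0\) and \(\kappa\) come from \cref{thm:new-movement-surplus} independently of \(T\), so the subtracted constant does not grow with the horizon.
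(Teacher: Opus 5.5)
Your proof is correct and follows the paper's route: apply the lower bound \eqref{eq:new-movement-surplus} and estimate the odd-index power-law sum by an integral comparison, using that \(s_0\) does not depend on \(T\). The paper derives the two-sided asymptotic \(\frac{\eta_0}{2(1-\rho)}T^{1-\rho}+O(1)\), whereas your one-sided bound (valid for \(T\ge 2(2s_0+2)\), where one may take \(c_\rho=1-2^{\rho-1}\), with prefactor \(\kappa\eta_0 c_\rho/(4(1-\rho))\) rather than your \(8\)) is all the \(\Omega\) claim needs.
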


\begin{proof}
Put \(N=\lfloor T/2\rfloor\).
The odd-round rate is \(\eta_{2s+1}=\eta_0[2(s+1)]^{-\rho}\).
The decreasing function \(u^{-\rho}\) satisfies
\[
  \int_1^{N+1}u^{-\rho}\,du
    \le\sum_{j=1}^N j^{-\rho}
    \le1+\int_1^N u^{-\rho}\,du.
\]
Evaluating the integrals gives
\(\sum_{j=1}^N j^{-\rho}=N^{1-\rho}/(1-\rho)+O(1)\).
Removing the fixed initial segment changes only the additive
constant. Hence
\begin{equation}
  \begin{aligned}
    \sum_{s=s_0}^{N-1}\eta_{2s+1}
      &=\eta_0 2^{-\rho}\sum_{j=s_0+1}^N j^{-\rho}=\frac{\eta_0 2^{-\rho}}{1-\rho}N^{1-\rho}+O(1)\\
      &=\frac{\eta_0}{2(1-\rho)}T^{1-\rho}+O(1).
  \end{aligned}
  \label{eq:new-movement-odd-sum}
\end{equation}
Applying \eqref{eq:new-movement-surplus} proves the corollary.
\end{proof}

\subsection{Sufficient conditions for positive movement}
\label{app:new-movement-sufficient}

To apply \cref{thm:new-movement-surplus}, we need a positive
lower bound on \(\mathcal K_s\) and errors smaller than
\(\eta_{2s+1}\).
The variance contribution is positive whenever the two useful
coordinates \(i,j\) have inverse curvatures bounded away from zero.
We first establish this for the restricted response near equal
scores. The proposition below then transfers the bound to the
full response on an event of positive probability.

We use the restricted choice map \(Q_{h,K}\) with regularizer
\(\sum_{a\in K}\theta(y_a)\).
For finite \(\theta(0)\), this agrees with
\eqref{eq:restricted-choice}: all omitted terms outside \(K\)
are constant.

\begin{lemma}[Restricted response near equal scores]
\label{lem:new-movement-radius}
Let \(h\) be a separable regularizer as in
\cref{def:separable-regularizer}, satisfying
\cref{ass:new-movement-regularity}.
There exists \(a_{\rm alt}>0\) such that, for every score vector
\(u_K\),
\begin{equation}
  \spnorm{u_K}\le2a_{\rm alt}
  \quad\Longrightarrow\quad
  Q_{h,K}(u_K)_i,\ Q_{h,K}(u_K)_j
       \in\left[\frac{3}{4k},1-\frac{3}{4k}\right].
  \label{eq:new-movement-radius}
\end{equation}
\end{lemma}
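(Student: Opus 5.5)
The plan is to combine three facts established earlier: the restricted map at zero score is the uniform point on \(K\), \(Q_{h,K}\) is Lipschitz with a constant that does not depend on the score, and \(Q_{h,K}\) is invariant under common shifts. The hypotheses of the lemma presuppose distinct \(i,j\in K\), so \(k=|K|\ge2\). For \(k\ge2\) the target interval \([\tfrac{3}{4k},1-\tfrac{3}{4k}]\) contains the whole window \([\tfrac1k-\tfrac1{4k},\,\tfrac1k+\tfrac1{4k}]\). The lower end of the window equals \(\tfrac3{4k}\). The upper end satisfies \(\tfrac5{4k}\le1-\tfrac3{4k}\), which is equivalent to \(k\ge2\). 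It therefore suffices to show that \(|Q_{h,K}(u_K)_a-\tfrac1k|\le\tfrac1{4k}\) for \(a\in\{i,j\}\) whenever \(\spnorm{u_K}\le2a_{\rm alt}\).

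\emph{Center.} A zero score reduces the restricted objective to minimizing \(\sum_{a\in K}\theta(y_a)\) on \(\Delta_K\). By the permutation argument of \cref{cor:uniform-separable-selection}, the minimizer is unique and uniform, so \(Q_{h,K}(0)=\tfrac1k\mathbf1_K\).

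\emph{Lipschitz bound.} \Cref{ass:new-movement-regularity} gives \(\theta''\ge1/C_\theta\) on \((0,1]\). Hence the restricted regularizer is \(1/C_\theta\)-strongly convex in the Euclidean norm on \(\Delta_K\), and continuity of \(\theta\) at \(0\) extends this to the boundary of the face. Rerunning the two-inequality argument from the proof of \cref{lem:new-movement-derivative} (equivalently, \cref{prop:preliminary-strong-stability} on the face) then gives \(\|Q_{h,K}(u_K)-Q_{h,K}(\widetilde u_K)\|_2\le C_\theta\|u_K-\widetilde u_K\|_2\).

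\emph{Shift reduction.} Given \(\spnorm{u_K}\le2a_{\rm alt}\), I would apply \cref{lem:preliminary-span-centering} and subtract the midrange \(c^\star\). This leaves the response unchanged by \cref{lem:continuity} and yields \(\|u_K-c^\star\mathbf1\|_\infty\le a_{\rm alt}\), hence \(\|u_K-c^\star\mathbf1\|_2\le\sqrt k\,a_{\rm alt}\). Combining with the center and the Lipschitz bound,
\[
  \bigl|Q_{h,K}(u_K)_a-\tfrac1k\bigr|
  \le\|Q_{h,K}(u_K-c^\star\mathbf1)-Q_{h,K}(0)\|_2
  \le C_\theta\sqrt k\,a_{\rm alt}.
\]
I would then set \(a_{\rm alt}:=1/(4C_\theta k^{3/2})>0\), which makes the right-hand side at most \(\tfrac1{4k}\) and proves \eqref{eq:new-movement-radius}.

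The only step I expect to need care is the strong convexity of the restricted regularizer up to the boundary of \(\Delta_K\). The curvature bound \(\theta''\ge1/C_\theta\) is stated only on \((0,1]\), while for steep kernels the restricted response itself may lie on the boundary. I would handle this as the paper does in the proof of \cref{lem:new-movement-derivative}. First, \(\theta-\tfrac{1}{2C_\theta}u^2\) is convex on \((0,1]\) because its second derivative is nonnegative there. Second, this convexity extends to \([0,1]\) because \(\theta\) is continuous at \(0\). The midpoint strong-convexity inequality then holds on the whole face.
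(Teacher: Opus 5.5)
Your proof is correct and follows essentially the paper's route: identify \(Q_{h,K}(0)\) as the uniform vector on \(K\), apply a score-independent Lipschitz estimate around zero, choose the radius, and use \(k\ge2\) for the interval inclusion. The only difference is in the Lipschitz step: you combine the Euclidean strong-convexity bound with span centering, whereas the paper applies the span bound \eqref{eq:new-movement-lipschitz} from \cref{lem:new-movement-derivative} directly with \(q=e_a\); your version gives the slightly larger radius \(a_{\rm alt}=1/(4C_\theta k^{3/2})\) in place of the paper's \(1/(4k^2C_\theta)\).
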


\begin{proof}
At zero scores, the restricted response minimizes
\(\sum_{a\in K}\theta(y_a)\).
Jensen's inequality gives
\[
  \sum_{a\in K}\theta(y_a)\ge k\theta(1/k),
\]
with equality only at the uniform vector by strict convexity.
Thus \(Q_{h,K}(0)_a=1/k\) for every \(a\in K\).

We choose the score radius so that the response cannot move
more than \(1/(4k)\) from this vector.
Apply \eqref{eq:new-movement-lipschitz} on the simplex over
\(K\), with \(q=e_a\) and comparison scores zero. For
\(a\in\{i,j\}\),
\[
  \left|Q_{h,K}(u_K)_a-\frac1k\right|
       \le\frac{kC_\theta}{2}\spnorm{u_K}.
\]
Taking \(a_{\rm alt}=1/(4k^2C_\theta)\) makes the right-hand
side at most \(1/(4k)\) whenever
\(\spnorm{u_K}\le2a_{\rm alt}\). Therefore
\[
  Q_{h,K}(u_K)_i,\ Q_{h,K}(u_K)_j
       \in\left[\frac{3}{4k},\frac{5}{4k}\right]
       \subseteq\left[\frac{3}{4k},1-\frac{3}{4k}\right],
\]
where the inclusion uses \(k\ge2\).
\end{proof}

The interval in \eqref{eq:new-movement-radius} leaves room
for the error incurred when passing to the full response.
Define
\[
  \underline y=\frac1{2k},\qquad
  M_{\rm alt}
       =\sup_{u\in[\underline y,1-\underline y]}\theta''(u)
       <\infty.
\]
The curvature bound is finite because the interval is compact
and lies inside \((0,1)\).
If the full and restricted responses differ by at most
\(1/(4k)\) in each useful coordinate, both full coordinates
lie in \([\underline y,1-\underline y]\).
Their inverse curvatures are then at least \(1/M_{\rm alt}\).

Controlling the error from a changing learning rate requires
an additional boundary property. Off-face probabilities
must have small inverse curvatures as they approach zero.
To express this property, define
\[
  \omega(z)=\sup_{0<u\le z}\frac1{\theta''(u)},
  \qquad 0<z\le1.
\]
For a steep regularizer, we require \(\omega(z)\to0\) as
\(z\downarrow0\).
For a non-steep regularizer, the proof instead makes the
off-face coordinates exactly zero.

\begin{proposition}[Conditions ensuring the alternating rate]
\label{prop:new-movement-sufficient}
Suppose \cref{ass:new-movement-geometry,ass:new-movement-regularity}
hold, and fix the alternating trajectory \(x^{\rm alt}_{<T}\).
Assume the following conditions.
\begin{enumerate}
  \item If the regularizer is steep, \(\omega(z)\to0\) as
  \(z\downarrow0\).
  \item The positive learning rates are nonincreasing and
  satisfy
  \begin{equation}
    \eta_t\to0,\qquad t\eta_t\to\infty,\qquad
    t(\eta_t-\eta_{t+1})=O(\eta_{t+1}).
    \label{eq:new-movement-schedule}
  \end{equation}
  \item The even-round relative noise is tight and has a
  uniformly positive probability of lying in the neighborhood
  from \cref{lem:new-movement-radius}:
    \begin{align*}
      &\lim_{R\to\infty}\limsup_{s\to\infty}
           \Pr(\spnorm{\Xi_{2s}}>R)=0,
        \\
      &\liminf_{s\to\infty}
           \Pr(\spnorm{\Xi_{2s}}\le a_{\rm alt})
                 \ge p_{\rm alt}>0.
    \end{align*}
  \item There are couplings of the adjacent marginals such that
  \begin{equation}
    \mathbb E\spnorm{\widehat\Xi_s^0-\widehat\Xi_s^1}
                     =o(\eta_{2s+1}).
    \label{eq:new-movement-coupling}
  \end{equation}
\end{enumerate}
Then
\begin{equation}
  \begin{aligned}
    \liminf_{s\to\infty}\mathcal K_s
       &\ge\frac{p_{\rm alt}(x_\ell^\star(A_{\ell i}-A_{\ell j}))^2}
                    {2M_{\rm alt}},\\
    \mathcal D_s&=o(1),\qquad
    \mathcal N_s=o(\eta_{2s+1}).
  \end{aligned}
  \label{eq:new-movement-three-bounds}
\end{equation}
For some finite \(s_0\), independent of \(T\),
\begin{equation}
  \Surplus(x^{\rm alt}_{<T})
     \ge\frac{p_{\rm alt}(x_\ell^\star(A_{\ell i}-A_{\ell j}))^2}
                   {8M_{\rm alt}}
             \sum_{s=s_0}^{\lfloor T/2\rfloor-1}\eta_{2s+1}
     \qquad\text{for every }T\ge1.
  \label{eq:new-movement-sufficient-surplus}
\end{equation}
\end{proposition}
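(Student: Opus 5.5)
The plan is to verify the three bounds in \eqref{eq:new-movement-three-bounds} and then invoke \cref{thm:new-movement-surplus} with \(\kappa\) equal to the stated lower bound on \(\liminf\mathcal K_s\). That theorem yields \eqref{eq:new-movement-sufficient-surplus} with constant \(\kappa/2\); the stated constant \(\kappa/4\) follows because we may weaken \(\kappa\) by a factor \(1/2\). The error hypothesis of \cref{thm:new-movement-surplus} follows from \(\mathcal D_s=o(1)\), \(\mathcal N_s=o(\eta_{2s+1})\), and the schedule condition \(2s(\eta_{2s}-\eta_{2s+1})=O(\eta_{2s+1})\) in \eqref{eq:new-movement-schedule}.

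\textbf{Step 1: the noise error.} This is the easiest bound. By \eqref{eq:new-movement-lipschitz} with \(q=v'\),
\[
\mathcal N_s\le \tfrac{mC_\theta}{2}\spnorm{v'}\,\mathbb E\spnorm{\widehat\Xi_s^0-\widehat\Xi_s^1},
\]
which is \(o(\eta_{2s+1})\) by \eqref{eq:new-movement-coupling}.

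\textbf{Step 2: the responsiveness bound.} Along the interpolation path, the score is
\[
u_s(r;\xi)=-2s\bar\eta_s(r)\delta-r\eta_{2s+1}v'+\xi,
\]
up to a common shift. Its off-face gaps are of order \(2s\bar\eta_s(r)\gtrsim r_{2s}\to\infty\). Its tangential part is \(\xi_K-r\eta_{2s+1}v'_K\), where the \(v'\)-term is \(O(\eta_{2s+1})=o(1)\).

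On the event \(\{\spnorm{\Xi_{2s}}\le a_{\rm alt}\}\), for large \(s\) the restricted scores have span at most \(2a_{\rm alt}\). So \cref{lem:new-movement-radius} places the restricted coordinates \(i,j\) in \([3/(4k),1-3/(4k)]\), uniformly in \(r\).

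I then need a quantitative, uniform-in-\(r\) version of face reduction (\cref{lem:reduction}) showing that the full response is within \(1/(4k)\) of the restricted one. I would redo that repair argument: with noise bounded by \(a_{\rm alt}\) and off-face gap \(\ge c\,r_{2s}\), the off-face mass is deterministically small by the comparison in \cref{lem:normal}. The midpoint-gap constant \(c_\varepsilon\) then converts the objective loss into an \(\ell_1\) distance. This is the main obstacle, though on a bounded-noise event it is purely deterministic.

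Given this, \(w(i),w(j)\ge 1/M_{\rm alt}\). A variance lower bound in the style of Lemma F.8 of \citet{su2026nocoin} restricted to the pair \(\{i,j\}\) gives
\[
W\Var_\pi(v'_I)\ge \frac{w(i)w(j)}{w(i)+w(j)}\,(v'_i-v'_j)^2\ge \frac{(v'_i-v'_j)^2}{2M_{\rm alt}}.
\]
I take expectations over the event, whose probability is eventually at least \(p_{\rm alt}-o(1)\), and substitute \(v'_i-v'_j=x^\star_\ell(A_{\ell i}-A_{\ell j})\).

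\textbf{Step 3: the drift term \(\mathcal D_s\).} Since \(v=\val\mathbf1+\delta\) and covariance ignores constants, \(\Cov_\pi(v'_I,v_I)=\Cov_\pi(v'_I,\delta_I)\). This quantity is bounded by \(\spnorm{v'}\spnorm{\delta}\) times the \(\pi\)-mass off \(K\), so
\[
|W\Cov|\lesssim \sum_{a\notin K}w(a).
\]

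By tightness, fix \(R\) so that \(\Pr(\spnorm{\Xi_{2s}}>R)\) is eventually small. On the complement, the pointwise bound \eqref{eq:pointwise-noisy-ftrl-bound:b} applied along the path gives off-face coordinates at most \(\phi(\theta'(1/k)-c\,r_{2s}+R+O(1))\to0\) uniformly in \(r\). In the steep case, \(w(a)\le\omega(y_a)\to0\). In the non-steep case, these coordinates are exactly zero once \(s\) is large, so \(w(a)=0\).

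Since \(W\le mC_\theta\) bounds the contribution of the bad event, \(\mathcal D_s\) is at most a vanishing term plus \(O(\Pr(\spnorm{\Xi_{2s}}>R))\). Letting \(s\to\infty\) and then \(R\to\infty\) gives \(\mathcal D_s=o(1)\). Combining the three steps with \cref{thm:new-movement-surplus} gives the conclusion.
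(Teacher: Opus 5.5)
Your proposal is correct and follows the paper's overall plan. You use the Lipschitz estimate \eqref{eq:new-movement-lipschitz} for $\mathcal N_s$, as the paper does. For $\mathcal D_s$ you use the same tightness-plus-KKT argument to make off-face susceptibilities vanish. Your identity $\Cov_\pi(v'_I,v_I)=\Cov_\pi(v'_I,\delta_I)$, bounded by the $\pi$-mass off $K$, gives exactly the paper's estimate $\spnorm{v'}\spnorm{v}\sum_{b\notin K}w_{s,r}(b)$, which the paper obtains from the pairwise expansion. Finally, you use the same two-coordinate variance bound on $\{\spnorm{\widehat\Xi_s^0}\le a_{\rm alt}\}$.

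The step you handle differently is the uniform comparison of the full and restricted responses along the interpolation. The paper uses separability: $y_{s,K}$ solves the restricted problem at mass $\beta=\sum_{a\in K}y_{s,a}$. It then gets uniform convergence to $Q_{h,K}$ as $\beta\to1$ by a compactness-and-contradiction argument. You instead make the repair-and-midpoint argument of \cref{lem:reduction} deterministic:
\begin{itemize}
\item On the bounded-noise event, the off-face mass $M$ is uniformly small.
\item You move that mass to a maximizer of the interpolated score on $K$. This maximizer must include the $-r\eta_{2s+1}v'$ term, so that the linear score does not decrease once $L_s(a_{\rm alt})>0$. The move costs at most the modulus of continuity of $h$ at $2M$.
\item Once that cost is below $2c_{1/(4k)}$, the midpoint gap forces $\|y-q\|_1<1/(4k)$.
\end{itemize}
Your version is more quantitative. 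It gives an explicit threshold on $s$, uniform in $r$ and in the noise, with no contradiction argument and no use of separability at this step. The paper's version exhibits the structure of the $K$-block, but only as a qualitative limit.

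Your constant bookkeeping is also fine. \Cref{thm:new-movement-surplus} needs only $\liminf_s\mathcal K_s\ge\kappa$, so you may take $\kappa$ equal to the full lower bound. This gives twice the stated coefficient; the paper conservatively uses half of that bound.
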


\begin{proof}
We first show that accumulating payoff gaps remove probability
and inverse curvature outside \(K\). This makes the schedule
drift negligible.
On the small-noise event, we then compare the full response
with the restricted response and obtain a positive variance
contribution from \(i,j\).
The coupling condition controls the remaining noise error.

\paragraph{The deterministic gap removes off-face probability.}
Fix \(R<\infty\) and work on the event
\(\spnorm{\widehat\Xi_s^0}\le R\).
For \(a\in K\), \(b\notin K\), and \(r\in[0,1]\),
the scores in \eqref{eq:new-movement-interpolation} satisfy
\[
  \begin{aligned}
    u_{s,a}(r;\widehat\Xi_s^0)-u_{s,b}(r;\widehat\Xi_s^0)
       &=2s\bar\eta_s(r)\delta_b
          -r\eta_{2s+1}(v'_a-v'_b)
          +\widehat\Xi_{s,a}^0-\widehat\Xi_{s,b}^0\\
       &\ge2s\eta_{2s+1}\delta_b
          -\eta_{2s+1}\spnorm{v'}-R.
  \end{aligned}
\]
The inequality uses monotonicity of the learning rates,
which gives \(\bar\eta_s(r)\ge\eta_{2s+1}\).
If \(K\ne[m]\), set
\[
  L_s(R)=2s\eta_{2s+1}\delta_{\min}
                 -\eta_{2s+1}\spnorm{v'}-R.
\]
This common lower bound tends to infinity by
\eqref{eq:new-movement-schedule}.
The separation is uniform in \(r\) and in the noise vector
on the bounded-noise event.

Write \(y=y_s(r;\widehat\Xi_s^0)\) for this calculation.
At least one coordinate \(a\in K\) satisfies \(y_a\le1/k\),
since \(\sum_{a\in K}y_a\le1\).
The optimality conditions from \appref{app:preliminary-kkt}
give
\(u_{s,a}+\lambda\le\theta'(1/k)\): at a positive coordinate
this follows from \(u_{s,a}+\lambda=\theta'(y_a)\), and at
an inactive coordinate it follows from the inequality
\(u_{s,a}+\lambda\le\theta'(0^+)\).
Thus, for each \(b\notin K\),
\[
  u_{s,b}+\lambda\le\theta'(1/k)-L_s(R),
  \qquad
  y_b\le\phi\bigl(\theta'(1/k)-L_s(R)\bigr)\longrightarrow0.
\]
This is the same score-gap estimate used in
\cref{lem:pointwise-noisy-ftrl-bound}.
For a non-steep kernel, the argument eventually lies below
\(\theta'(0^+)\), so every off-face coordinate is exactly zero.
For a steep kernel, its weight is at most \(\omega(y_b)\),
which tends to zero by condition (i).
If \(K=[m]\), there are no off-face coordinates; the
probability and weight conclusions hold without a gap estimate.

\paragraph{Vanishing off-face weights remove schedule drift.}
For each fixed \(R\), the preceding calculation makes
\(\sum_{b\notin K}w_{s,r}(b)\) tend to zero uniformly in
\(r\) on \(\{\spnorm{\widehat\Xi_s^0}\le R\}\).
On the complementary event, it is at most \(mC_\theta\).
Splitting the expectation over these two events therefore gives
\[
  \limsup_{s\to\infty}\mathbb E\int_0^1
               \sum_{b\notin K}w_{s,r}(b)\,dr
     \le mC_\theta\limsup_{s\to\infty}
                  \Pr(\spnorm{\Xi_{2s}}>R).
\]
The probability on the right uses the even-round marginal,
which is the law of \(\widehat\Xi_s^0\).
Letting \(R\to\infty\) and applying tightness proves
\begin{equation*}
  \mathbb E\int_0^1\sum_{b\notin K}w_{s,r}(b)\,dr
       \longrightarrow0.
\end{equation*}
This controls the schedule term because \(v\) is constant
on \(K\). Indeed, expanding the covariance gives
\[
  W_{s,r}\Cov_{\pi_{s,r}}(v'_I,v_I)
    =\frac1{2W_{s,r}}\sum_{a,b}w_{s,r}(a)w_{s,r}(b)
                          (v'_a-v'_b)(v_a-v_b).
\]
Every summand with \(a,b\in K\) vanishes.
In the remaining terms, at least one index lies outside
\(K\); counting both possibilities bounds their total weight
by \(2W_{s,r}\sum_{b\notin K}w_{s,r}(b)\).
Consequently,
\[
  W_{s,r}|\Cov_{\pi_{s,r}}(v'_I,v_I)|
      \le\spnorm{v'}\spnorm{v}
                          \sum_{b\notin K}w_{s,r}(b).
\]
Taking expectations and integrating gives \(\mathcal D_s=o(1)\).
Condition (ii) also gives
\(2s(\eta_{2s}-\eta_{2s+1})=O(\eta_{2s+1})\).
Thus the schedule error in
\eqref{eq:new-movement-pair-bound} is \(o(\eta_{2s+1})\).

\paragraph{Small relative noise keeps two coordinates responsive.}
Consider the event
\(G_s=\{\spnorm{\widehat\Xi_s^0}\le a_{\rm alt}\}\).
Because \(v\) is constant on \(K\), only the alternating
perturbation and the noise contribute to the span of the
restricted scores:
\[
  \spnorm{u_s(r;\widehat\Xi_s^0)_K}
       \le a_{\rm alt}+\eta_{2s+1}\spnorm{v'_K}
       \le2a_{\rm alt}
\]
for all sufficiently large \(s\), uniformly in \(r\).
The restricted response therefore satisfies
\eqref{eq:new-movement-radius}.
It remains to justify that the full response is close enough
to inherit its coordinate bounds.

Let \(\beta=\sum_{a\in K}y_{s,a}\).
The off-face probability estimate with \(R=a_{\rm alt}\)
shows that \(\beta\to1\) uniformly on \(G_s\) and along
the interpolation.
By separability, holding the other coordinates fixed makes
\(y_{s,K}\) a minimizer of
\[
  \sum_{a\in K}\theta(z_a)-\langle u_{s,K},z\rangle,
  \qquad z_a\ge0,\quad\sum_{a\in K}z_a=\beta.
\]
Subtracting a common score on \(K\) leaves this minimizer
unchanged, since every feasible vector has mass \(\beta\).
After centering, the restricted scores lie in a compact set.

To compare the problems at masses \(\beta\) and one, take
any sequence with \(\beta\to1\) and centered scores tending
to \(\bar u_K\).
Pass to a subsequence on which \(y_{s,K}\to p\); then
\(p\in\Delta_K\).
For any \(z\in\Delta_K\) positive on \(K\), the vector
\(\beta z\) is feasible for the problem at mass \(\beta\).
Optimality against this vector, followed by continuity of
\(\theta\) on \([0,1]\), gives
\[
  \sum_{a\in K}\theta(p_a)-\langle\bar u_K,p\rangle
      \le\sum_{a\in K}\theta(z_a)-\langle\bar u_K,z\rangle.
\]
For a boundary competitor, take mixtures with the uniform
vector and let the mixing weight go to zero.
Continuity extends the same inequality to every
\(z\in\Delta_K\).
Hence \(p=Q_{h,K}(\bar u_K)\) by uniqueness of the restricted
minimizer.

This comparison is uniform over the bounded set of relative
scores. Otherwise, a sequence staying a fixed positive
distance from its restricted response would have convergent
subsequences as above.
Continuity of \(Q_{h,K}\), as in \cref{lem:continuity},
would then force both responses to the same limit, a
contradiction.
For all large \(s\), the full and restricted coordinates
are therefore within \(1/(4k)\), uniformly on \(G_s\)
and in \(r\).
The margin chosen after \cref{lem:new-movement-radius} now gives
\[
  y_{s,i},y_{s,j}\in[\underline y,1-\underline y],
  \qquad w_{s,r}(i),w_{s,r}(j)\ge1/M_{\rm alt}.
\]

To turn these coordinate bounds into a variance bound, use
the variational formula
\[
  W_{s,r}\Var_{\pi_{s,r}}(v'_I)
       =\min_{c\in\mathbb R}\sum_a w_{s,r}(a)(v'_a-c)^2.
\]
Deleting every term except \(i,j\) can only decrease this
minimum. For those two terms, the minimizing center is
\[
  c=\frac{w_{s,r}(i)v'_i+w_{s,r}(j)v'_j}
          {w_{s,r}(i)+w_{s,r}(j)}.
\]
Substitution yields
\[
  \begin{aligned}
    W_{s,r}\Var_{\pi_{s,r}}(v'_I)
       &\ge\frac{w_{s,r}(i)w_{s,r}(j)}{w_{s,r}(i)+w_{s,r}(j)}
                     (v'_i-v'_j)^2\\
       &=\frac{(v'_i-v'_j)^2}
                   {\theta''(y_{s,i})+\theta''(y_{s,j})}\\
       &\ge\frac{(x_\ell^\star(A_{\ell i}-A_{\ell j}))^2}
                    {2M_{\rm alt}}.
  \end{aligned}
\]
This lower bound holds for every \(r\) on \(G_s\), and
the variance is nonnegative elsewhere. Therefore
\[
  \mathcal K_s
       \ge\Pr(G_s)
          \frac{(x_\ell^\star(A_{\ell i}-A_{\ell j}))^2}
               {2M_{\rm alt}}.
\]
Since \(\Pr(G_s)=\Pr(\spnorm{\Xi_{2s}}\le a_{\rm alt})\),
condition (iii) gives the lower bound on
\(\liminf_s\mathcal K_s\) in
\eqref{eq:new-movement-three-bounds}.

\paragraph{The coupling controls the adjacent noise error.}
At the odd endpoint, the two responses used in
\(\mathcal N_s\) have score difference exactly
\(\widehat\Xi_s^0-\widehat\Xi_s^1\).
Applying \eqref{eq:new-movement-lipschitz} with \(q=v'\) gives
\[
  \mathcal N_s
       \le\frac{mC_\theta}{2}\spnorm{v'}\,
                \mathbb E\spnorm{\widehat\Xi_s^0-\widehat\Xi_s^1}
       =o(\eta_{2s+1})
\]
by condition (iv).
Together with the schedule estimate, this verifies that
both errors are smaller than the movement term.
Set
\[
  \kappa=\frac{p_{\rm alt}(x_\ell^\star(A_{\ell i}-A_{\ell j}))^2}
                   {4M_{\rm alt}}.
\]
This is half the lower bound on \(\liminf_s\mathcal K_s\),
so \cref{thm:new-movement-surplus} applies.
Its coefficient \(\kappa/2\) is exactly the coefficient in
\eqref{eq:new-movement-sufficient-surplus}.
\end{proof}

\begin{remark}[Scope of the assumptions]
For entropy, \(\omega(z)=z\); for negative Tsallis with
parameter \(0<q<1\), \(\omega(z)=z^{2-q}/q\).
Both satisfy the boundary condition in (i).
Euclidean regularization is non-steep, so that condition
places no additional restriction on it.

\smallskip
The schedules \(\eta_t=\eta_0(t+1)^{-\rho}\) satisfy (ii)
for every \(0<\rho<1\).
The first two limits make individual perturbations small
while the cumulative payoff gaps grow; the last bound
controls the change between adjacent rates.
The requirement \(\eta_t\to0\) belongs to this sufficient
criterion. For a constant rate \(\eta_t=\eta\), the schedule
term in \eqref{eq:new-movement-pair-bound} is zero.
With no noise, both errors are zero, and a positive lower
bound on \(\mathcal K_s\) gives \(\Omega(T)\) surplus
by \cref{thm:new-movement-pair}.

\smallskip
The two conditions in (iii) serve different purposes:
tightness controls the tails, while the probability bound
supplies an event on which the useful coordinates remain
responsive. Tightness alone need not supply this event.
The noise need not vanish; a fixed centered Gaussian law
satisfies both requirements.

\smallskip
Condition (iv) requires an adjacent-marginal coupling whose
expected span difference is \(o(\eta_{2s+1})\).
Convergence of the marginal laws alone does not ensure this
rate. Identical adjacent marginals can be coupled equally,
giving zero error regardless of their actual temporal
dependence.
For cumulative Gaussian noise, \appref{app:movement-noise-column}
verifies conditions (ii)--(iv) for power-law schedules with
\(1/2\le\rho<1\).
\end{remark}

\section{Noise Models and Exploitation Rates}
\label{app:noise-models-rates}

This appendix supplies the noise-specific verifications used in
\cref{sec:noise-examples}. We apply the response results of
\cref{sec:learner-dynamics-maxmin} and the surplus bounds of
\cref{subsec:normal-exploitation,subsec:main-new-selection,sec:movement-exploitation}
to obtain each entry of \cref{tab:main-noise-examples}. The proofs
below identify the hypothesis needed for each column, verify it for
the indicated noise model, and evaluate the resulting rate.

We retain the setup of \cref{sec:noise-examples}:
\(h\) is negative entropy, Euclidean, or negative Tsallis with
\(q\in(0,1)\), and
\[
  \eta_t=\eta_0(t+1)^{-\rho},
  \qquad \eta_0>0,\quad 0<\rho<1.
\]
Condition~\eqref{eq:finite-variance-noise} assumes
\[
  z\sim\mathcal D,
  \qquad \mathbb E[z]=0,
  \qquad \mathbb E\lVert z\rVert_2^2<\infty.
\]
Fresh noise has \(Z_t=z_t\), with common marginal law
\(\mathcal D\) and arbitrary temporal dependence. Cumulative noise
has \(Z_t=\sum_{\tau<t}z_\tau\), with i.i.d.\ increments of law
\(\mathcal D\). The game, regularizer, noise law, and chosen
max--min strategy \(x^\star\) are fixed independently of \(T\);
the constants below may depend on them.

\begin{remark}[Examples of finite-variance noise]
\label{rem:finite-variance-examples}
Condition~\eqref{eq:finite-variance-noise} covers the following
increments, in the order of heavier tails:
\begin{itemize}
  \item sub-Gaussian increments, which include centered Gaussian,
  Rademacher, and centered bounded distributions;
  \item Laplace increments, which have an exponential tail and
  variance \(2b^2\) for scale \(b\);
  \item Student-\(t\) increments with \(\nu>2\) degrees of freedom,
  which have a polynomial tail and variance \(\nu/(\nu-2)\).
\end{itemize}
Each of them can be used in the fresh model and in the cumulative
i.i.d.\ model.
\end{remark}

For the cumulative rows, let \(\Sigma:=\Cov(z)\succ0\) and
\(G\sim\mathcal N(0,\Sigma)\), as in \cref{sec:noise-examples}.
The positive-definiteness assumption ensures that the limiting
Gaussian scores have no ties.

The normal column uses the following common notation for the
regularizer-dependent bound. Its four cases come from summing the
response tails in \appref{app:normal-noise-column}:
\begin{equation}
  \mathcal N_{h,\rho}(T):=
  \begin{cases}
    1,&h\text{ is entropy or Euclidean},\\
    1,&h\text{ is Tsallis-}q,\ \rho<q,\\
    \log(T+1),&h\text{ is Tsallis-}q,\ \rho=q,\\
    T^{(\rho-q)/(1-q)},&h\text{ is Tsallis-}q,\ \rho>q.
  \end{cases}
  \label{eq:main-normal-rate}
\end{equation}

The summary table in \cref{tab:main-noise-examples} presents the
limiting reference responses and surplus bounds under this setup.

For a direct check of any entry, the limiting-response calculation
is in \appref{app:cumulative-finite-variance}; the normal bounds,
including each exponent in \eqref{eq:main-normal-rate}, are in
\appref{app:normal-noise-column}; the selection sum is in
\appref{app:selection-noise-column}; and the movement sum and its
additional noise condition are in \appref{app:movement-noise-column}.
The single-star entry
(\hyperref[tab:main-noise-examples]{\(\star\)}) is proved in
\appref{app:movement-below-half}.

\paragraph{Fresh noise and temporal dependence.}
\label{app:fresh-finite-variance}
For a deterministic optimizer trajectory,
\(S_t=\sum_{\tau<t}B^\top x_\tau\) is deterministic.
The expected reward at round \(t\) therefore depends only on the
marginal law of \(Q_h(\eta_t(S_t+z_t))\). Changing the dependence
across rounds while keeping these marginals leaves every expected
cumulative reward unchanged. No temporal independence is needed
in the fresh model.

\subsection{Verification of the limiting-response column}
\label{app:cumulative-finite-variance}

Fix a max--min strategy \(x^\star\) and let \(K=\br(x^\star)\),
as in \cref{sec:learner-dynamics-maxmin}. We first verify
localization to \(\Delta_K\), then apply the corresponding response
regime to identify the mixed strategy on this face.

For the power-law schedule,
\(r_t=t\eta_t\sim\eta_0t^{1-\rho}\longrightarrow\infty\).
The normal noise comparison uses \(\Xi_t=\eta_tZ_t\). Since
\(\spnorm{u}\le2\lVert u\rVert_2\), for \(t\ge1\),
\[
  \mathbb E\!\left[
    \left(\frac{\spnorm{\Xi_t}}{r_t}\right)^2\right]
  =\mathbb E\!\left[
    \left(\frac{\spnorm{Z_t}}{t}\right)^2\right]
  \le\frac{4\mathbb E\lVert Z_t\rVert_2^2}{t^2}.
\]
Here \(\mathbb E\lVert Z_t\rVert_2^2\) is zero in the noiseless model and equals
\(\mathbb E\lVert z\rVert_2^2\) in the fresh model. In the cumulative
model, independence and \(\mathbb E[z]=0\) give
\(\mathbb E\lVert Z_t\rVert_2^2=t\mathbb E\lVert z\rVert_2^2\).
Thus the displayed bound tends to zero in every model, proving
\(\spnorm{\Xi_t}/r_t\pto0\). This verifies
\cref{def:step-size-noise-model} for every row. The three
regularizers satisfy \cref{def:regularizer}, as recorded in
\cref{rem:finite-regularizer-examples}.
\cref{lem:reduction} in \appref{app:face-reduction}
therefore gives
\(\lVert y_t-Q_{h,K}(\Xi_{t,K})\rVert_1\pto0\).
It remains to identify the limit of the restricted response in each row.

\emph{Noiseless and fresh noise, \(0<\rho<1\).}
In the noiseless model \(\Xi_t=0\). In the fresh model,
\[
  \mathbb E\lVert\Xi_t\rVert_2^2
    =\eta_t^2\mathbb E\lVert z\rVert_2^2\longrightarrow0.
\]
In both cases \(\spnorm{\Xi_{t,K}}\pto0\).
\cref{prop:vanishing} therefore gives convergence to
\(y_K\), and \cref{cor:uniform-separable-selection}
identifies this minimizer for the three separable regularizers:
\(y_t\pto y_K=\frac1{|K|}\sum_{j\in K}e_j\).
These are the first two entries of the limiting-response column;
the cited results appear in \appref{app:vanishing-noise}.

\emph{Cumulative noise, \(\rho>1/2\).}
The second-moment calculation now gives
\[
  \mathbb E\lVert\Xi_t\rVert_2^2
    =t\eta_t^2\mathbb E\lVert z\rVert_2^2
    =O(t^{1-2\rho})\longrightarrow0.
\]
Thus \(\spnorm{\Xi_{t,K}}\pto0\), and the same proposition and
corollary give
\(y_t\pto |K|^{-1}\sum_{j\in K}e_j\), proving the third entry.

\emph{Cumulative noise, \(\rho=1/2\).}
The multivariate central limit theorem and
\(\eta_t\sqrt t=\eta_0\sqrt{t/(t+1)}\to\eta_0\) give, by
Slutsky's theorem,
\[
  \frac{Z_t}{\sqrt t}\dto G,
  \qquad
  \Xi_{t,K}
    =(\eta_t\sqrt t)\frac{Z_{t,K}}{\sqrt t}
      \dto\eta_0G_K.
\]
This is the distributional convergence required by
\cref{prop:finite} in \appref{app:finite-noise},
with \(c_t=0\) and \(\Xi_{\infty,K}=\eta_0G_K\). It yields
\(y_t\dto Q_{h,K}(\eta_0G_K)\),
which is the fourth entry.

\emph{Cumulative noise, \(0<\rho<1/2\).}
To apply the diverging-noise theorem, normalize by the cumulative
noise scale \(a_t=\eta_t\sqrt t\). Then
\[
  a_t\sim\eta_0t^{1/2-\rho}\longrightarrow\infty,
  \qquad
  \frac{\Xi_{t,K}}{a_t}
    =\frac{Z_{t,K}}{\sqrt t}\dto G_K.
\]
For distinct \(i,j\in K\), positive definiteness of \(\Sigma\)
gives
\(\Var(G_i-G_j) =(e_i-e_j)^\top\Sigma(e_i-e_j)>0\).
Each difference is therefore a nondegenerate Gaussian, so
\(\Pr(G_i=G_j)=0\). Since \(K\) is finite, the largest coordinate
of \(G_K\) is unique almost surely. \cref{thm:diverging}
in \appref{app:growing-noise} now applies with \(c_t=0\)
and \(\Xi_{\infty,K}=G_K\), giving the final entry:
\(y_t\dto e_{\argmax_{j\in K}G_j}\).

Thus the reference response under cumulative noise satisfies
\begin{equation}
  y_t^\star\dto
  \begin{cases}
    y_K,&\rho>1/2,\\
    Q_{h,K}(\eta_0G_K),&\rho=1/2,\\
    e_{\argmax_{j\in K}G_j},&0<\rho<1/2.
  \end{cases}
  \label{eq:main-finite-variance-response-limits}
\end{equation}
The first convergence also holds in probability.

The limiting mean responses in the three cumulative regimes are,
respectively,
\[
  \mu=y_K,\qquad
  \mu=\mathbb E[Q_{h,K}(\eta_0G_K)],\qquad
  \mu=\sum_{j\in K}\Pr(J=j)e_j,
  \quad J=\argmax_{j\in K}G_j.
\]
The last two means need not be uniform: centering the increments
does not imply exchangeability of the Gaussian scores.
Selection depends on the profitability of these means, while
movement depends on the susceptibility of the realized responses.

\subsection{Derivation of the normal-surplus column}
\label{app:normal-noise-column}
\label{app:finite-variance-rate-conditions}

To obtain the normal column, it suffices to verify the summable
tail condition and evaluate the response series of
\appref{sec:passive}. For the fixed max--min strategy
\(x^\star\), retain \(K=\br(x^\star)\) and
\(\delta_j=(A^\top x^\star)_j-\val\). The identity
\eqref{eq:fixed-surplus-decomposition-discrete} gives
\[
  \Surplus((x^\star)_{\times T})
    =\lagd((x^\star)_{\times T},T)
    =\sum_{t=0}^{T-1}\sum_{j\notin K}
       \delta_j\mathbb E[y_{t,j}].
\]
Thus only the coordinates with a positive payoff gap contribute.
When \(K=[m]\), this sum is empty and normal surplus is zero.
For the other cases, \cref{cor:fixed-standard-schedule-criteria}
bounds this sum under the power-law schedule
\eqref{eq:polynomial-schedule} and
\cref{ass:passive-linear-tail}. The schedule is the one
used in the table. We verify the noise-tail assumption for every
\(c>0\).

\emph{Noiseless and fresh noise.}
In the noiseless model,
\(\Pr(\spnorm{Z_t}>ct)=0\). In the fresh model,
\(Z_t=z_t\) has the same marginal law as \(z\). The bound
\(\spnorm{z_t}\le2\lVert z_t\rVert_2\) and Markov's inequality
give
\[
  \sum_{t=1}^{\infty}\Pr(\spnorm{Z_t}>ct) \le\sum_{t=1}^{\infty} \Pr\!\left(\lVert z_t\rVert_2>\frac{ct}{2}\right) \le\frac{4\mathbb E\lVert z\rVert_2^2}{c^2} \sum_{t=1}^{\infty}\frac1{t^2}<\infty.
\]
Thus both models satisfy \cref{ass:passive-linear-tail}.
The fresh-noise bound requires no independence across rounds and
also applies to varying marginals with uniformly bounded second moments.

\emph{Cumulative i.i.d.\ noise.}
For every coordinate \(j\), the increments \(z_{\tau,j}\) are
i.i.d., have mean zero, and satisfy
\(\mathbb E[z_{\tau,j}^2]\le\mathbb E\lVert z\rVert_2^2<\infty\).
The complete-convergence theorem of
\citet[Theorem~1]{hsu1947complete}, applied with threshold \(c/2\),
therefore gives
\[
  \sum_{t=1}^{\infty}
    \Pr\!\left(\left|\sum_{\tau=0}^{t-1}z_{\tau,j}\right|
                      >\frac{ct}{2}\right)<\infty.
\]
Since \(\spnorm{Z_t}\le2\max_{j\in[m]}|Z_{t,j}|\), a union
bound over the finitely many coordinates yields
\[
  \sum_{t=1}^{\infty}\Pr(\spnorm{Z_t}>ct)
    \le\sum_{j=1}^m\sum_{t=1}^{\infty}
       \Pr\!\left(|Z_{t,j}|>\frac{ct}{2}\right)<\infty.
\]
This verifies \cref{ass:passive-linear-tail} for the
cumulative model in all three ranges of \(\rho\).

The preceding bounds make the noise contribution finite in every
row, independently of \(\rho\). To see the source of the four rates
in \eqref{eq:main-normal-rate}, apply
\eqref{eq:fixed-lag-master-upper} with \(\alpha=1/2\). For each
\(j\notin K\), it remains to sum
\[
  \phi\!\left(\theta'(1/|K|)-\delta_jr_t/2\right),
  \qquad r_t=t\eta_t=\Theta(t^{1-\rho}).
\]
The scalar inversions established in
\appref{app:normal-standard-regularizer-example} give, respectively,
\[
  \frac1{|K|}e^{-\delta_jr_t/2},
  \qquad
  \left[\frac1{|K|}-\frac{\delta_jr_t}{2}\right]_+,
  \qquad
  \frac1{|K|}\left(
    1+\frac{1-q}{2q}|K|^{q-1}\delta_jr_t
  \right)^{-1/(1-q)}.
\]
The entropy terms have a summable stretched-exponential tail.
The Euclidean terms become zero once
\(\delta_jr_t/2\ge1/|K|\), so only finitely many contribute.
For Tsallis, the displayed term has order
\(t^{-(1-\rho)/(1-q)}\). The integral comparison therefore yields
\[
  \sum_{t=1}^{T}t^{-(1-\rho)/(1-q)}
   =\begin{cases}
      O(1),&(1-\rho)/(1-q)>1,\\
      O(\log(T+1)),&(1-\rho)/(1-q)=1,\\
      O\!\left(T^{1-(1-\rho)/(1-q)}\right),
        &(1-\rho)/(1-q)<1.
     \end{cases}
\]
These three conditions are exactly \(\rho<q\), \(\rho=q\),
and \(\rho>q\). In the last case the exponent is
\(1-(1-\rho)/(1-q)=(\rho-q)/(1-q)\). Adding the bounded noise
contribution and the finitely many positive-gap coordinates gives
\cref{cor:fixed-standard-schedule-criteria} with
\(\hat x=x^\star\):
\[
  \Surplus((x^\star)_{\times T})
    =\begin{cases}
       O(1),&h\text{ is negative entropy or Euclidean},\\
       O(1),&h\text{ is negative Tsallis-}q,\ \rho<q,\\
       O(\log(T+1)),&h\text{ is negative Tsallis-}q,\ \rho=q,\\
       O(T^{(\rho-q)/(1-q)}),
          &h\text{ is negative Tsallis-}q,\ \rho>q.
      \end{cases}
\]
These are the \(O(\mathcal N_{h,\rho}(T))\) bounds in every
row of the normal column.

\subsection{Derivation of the selection-surplus column}
\label{app:selection-noise-column}
The selection rates follow by applying
\cref{thm:new-selection-every-horizon} to the limiting mean response
in each row. Let \(y_t^\star\) denote the reference response under
repeated play of \(x^\star\). The convergence established above
also gives convergence of its mean: each coordinate is a bounded
continuous function on the simplex, so convergence in distribution
implies convergence of coordinate expectations. Thus
\[
  \mathbb E[y_t^\star]\longrightarrow\mu\in\Delta_K,
\]
where \(\mu=y_K\) in the noiseless and fresh-noise models and in
the cumulative model with \(\rho>1/2\). In the remaining
cumulative regimes,
\[
  \mu=
  \begin{cases}
    \mathbb E[Q_{h,K}(\eta_0G_K)],&\rho=1/2,\\[2mm]
    \displaystyle\sum_{j\in K}\Pr(J=j)e_j,
      \quad J=\argmax_{j\in K}G_j,&0<\rho<1/2.
  \end{cases}
\]
These means determine the available selection gain. Fix a row for
which
\[
  g_{\rm mean}(x^\star;\mu)
    =\max_{x\in\Delta_n}x^\top A\mu-\val>0,
\]
and choose
\(x^\dagger\in\argmax_{x\in\Delta_n}x^\top A\mu\).
Write \(g_{\rm mean}\) for this positive gap.

All three regularizers are finite, continuous, and strictly convex
on the simplex. The power-law learning rates are positive and
nonincreasing, and concavity of \(s^\rho\) gives
\[
  0\le\frac1{\eta_{t+1}}-\frac1{\eta_t}
    =\frac{(t+2)^\rho-(t+1)^\rho}{\eta_0}
    \le\frac{2^\rho-1}{\eta_0}.
\]
Thus the reciprocal increments are uniformly bounded, verifying
the remaining hypotheses of
\cref{thm:new-selection-every-horizon}.

The resulting strategy moves toward \(x^\dagger\) with weights
\[
  \widehat x_t=(1-\varepsilon_t)x^\star
                      +\varepsilon_t x^\dagger,
  \qquad
  \varepsilon_t=\frac c{\eta_0}
    \bigl((t+2)^\rho-(t+1)^\rho\bigr),
\]
where \(c>0\) is a sufficiently small fixed constant. The increment
bound ensures that these weights lie in \([0,1]\). Their cumulative
effect satisfies
\[
  b_t:=\sum_{\tau<t}\varepsilon_\tau
    =c\left(\frac1{\eta_t}-\frac1{\eta_0}\right),
  \qquad
  \eta_tb_t=c\left(1-\frac{\eta_t}{\eta_0}\right)\le c.
\]
The total deviation therefore grows at rate \(t^\rho\), while its
effect on the learner's scaled score remains uniformly small.

To see how this preserves the gain, write
\(a=A^\top(x^\dagger-x^\star)\). By
\eqref{eq:new-selection-score}, the scaled score differs from its
reference value by \(-\eta_tb_ta\).
\cref{cor:new-selection-payoff-stability} therefore bounds the
resulting change in expected directional payoff by
\(g_{\rm mean}/4\) when \(c\) is sufficiently small. Since
\(\mathbb E\langle a,y_t^\star\rangle\to
\langle a,\mu\rangle=g_{\rm mean}\), the actual response satisfies
\[
  \mathbb E\langle a,y_t\rangle\ge\frac{g_{\rm mean}}2
\]
for all sufficiently large \(t\). Max--min optimality of
\(x^\star\) then gives
\[
  \mathbb E[\widehat x_t^\top Ay_t-\val]
    =\mathbb E[(x^\star)^\top Ay_t-\val]
       +\varepsilon_t\mathbb E\langle a,y_t\rangle
    \ge\frac{\varepsilon_tg_{\rm mean}}2
\]
on each such round. Summing and absorbing the finite initial period
into a constant gives the bound of
\cref{cor:new-selection-power-law}:
\[
  \Surplus(\widehat x_{<T})
    \ge\frac{c\,g_{\rm mean}}{2\eta_0}
       \bigl((T+1)^\rho-1\bigr)-C_0,
  \qquad T\ge1,
\]
where \(C_0\ge0\) is independent of \(T\). This proves the
\(\Omega(T^\rho)\) entries throughout the selection column,
including \(\Omega(\sqrt T)\) at \(\rho=1/2\). The same
deterministic strategy sequence applies at every horizon.

\subsection{Derivation of the movement-surplus column}
\label{app:movement-noise-column}
The movement column requires a noise-specific check of the
alternating-surplus theorem. Use the pair \((x',x'')\) from
\eqref{eq:new-movement-strategies}, with
\(x^\star,i,j,\ell\) satisfying
\cref{ass:new-movement-geometry}. In this construction,
\(v'=A^\top x'\) distinguishes the two best responses:
\(v'_i-v'_j=x_\ell^\star(A_{\ell i}-A_{\ell j})>0\).
The bound already proved in \cref{thm:new-movement-pair} is
\[
  \mathbb E[x'^\top Ay_{2s}+x''^\top Ay_{2s+1}-2\val]
    \ge\eta_{2s+1}\mathcal K_s
       -2s(\eta_{2s}-\eta_{2s+1})\mathcal D_s-\mathcal N_s.
\]
For the desired rate, \(\mathcal K_s\) must remain positive and
both error terms must be \(o(\eta_{2s+1})\). We verify these
conditions using \cref{prop:new-movement-sufficient}, then sum the
learning rates. The cumulative rows need the additional
adjacent-law assumption stated explicitly in the corollary.

\begin{corollary}[Movement rates for the noise models]
\label{cor:movement-noise-models}
Under the setup of \cref{tab:main-noise-examples} and
\cref{ass:new-movement-geometry}, a finite initial period of
max--min play, chosen independently of \(T\), gives
\(\Surplus(x^{\rm alt}_{<T})=\Omega(T^{1-\rho})\) in either of
the following cases:
\begin{enumerate}
  \item noiseless or fresh finite-variance noise, with
  \(0<\rho<1\);
  \item cumulative i.i.d.\ finite-variance noise, with
  \(1/2\le\rho<1\), whose adjacent scaled marginals admit
  couplings satisfying \eqref{eq:new-movement-coupling}.
\end{enumerate}
\end{corollary}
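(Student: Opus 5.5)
The plan is to verify the four hypotheses of \cref{prop:new-movement-sufficient} in each case, and then evaluate the resulting sum with \cref{cor:new-movement-power-law}. \Cref{ass:new-movement-regularity} holds for all three regularizers. Hypothesis (i) holds by the remark after the proposition: \(\omega(z)=z\) for entropy, \(\omega(z)=z^{2-q}/q\) for Tsallis, and nothing is needed for the Euclidean kernel. Hypothesis (ii) holds for \(\eta_t=\eta_0(t+1)^{-\rho}\). Indeed \(\eta_t\to0\) and \(t\eta_t\to\infty\), and the mean value theorem gives \(t(\eta_t-\eta_{t+1})\le\rho\eta_0t(t+1)^{-\rho-1}=O(\eta_{t+1})\). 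Once (iii) and (iv) are checked, \eqref{eq:new-movement-sufficient-surplus} together with \eqref{eq:new-movement-odd-sum} gives \(\Omega(T^{1-\rho})\), with \(s_0\) independent of \(T\).

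\emph{Case (i).} In the noiseless model \(\Xi_t=0\). Tightness is then trivial, \(p_{\rm alt}=1\), and the coupling \(\widehat\Xi^0_s=\widehat\Xi^1_s=0\) gives (iv). For fresh noise, \(\spnorm{\Xi_{2s}}\le2\eta_{2s}\|z\|_2\) and \(\mathbb E\|z\|_2^2<\infty\). Hence \(\spnorm{\Xi_{2s}}\pto0\), which gives tightness and \(\Pr(\spnorm{\Xi_{2s}}\le a_{\rm alt})\to1\). For (iv), couple \(\widehat\Xi^0_s=\eta_{2s}z\) and \(\widehat\Xi^1_s=\eta_{2s+1}z\) with a single \(z\sim\mathcal D\). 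This is legitimate because only marginals matter, as noted after \cref{def:preliminary-coupling}. Then
\[
\mathbb E\spnorm{\widehat\Xi^0_s-\widehat\Xi^1_s}\le2(\eta_{2s}-\eta_{2s+1})\,\mathbb E\|z\|_2=O(s^{-\rho-1})=o(\eta_{2s+1}).
\]

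\emph{Case (ii).} Condition (iv) is assumed. For (iii), I would use the second-moment bound \(\mathbb E\|\Xi_{2s}\|_2^2=2s\eta_{2s}^2\mathbb E\|z\|_2^2=O(s^{1-2\rho})\), which stays bounded for \(\rho\ge1/2\). Markov's inequality then gives tightness. The positive-probability condition is the main point. For \(\rho>1/2\) the bound tends to zero, so \(\spnorm{\Xi_{2s}}\pto0\) and \(p_{\rm alt}=1\) works. At \(\rho=1/2\) the CLT argument of \appref{app:cumulative-finite-variance} gives \(\Xi_{2s}\dto\eta_0G\), and the Portmanteau theorem on the closed set \(\{\spnorm{u}\le a_{\rm alt}\}\) gives only an upper bound. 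I would therefore apply Portmanteau to the open set \(\{\spnorm{u}<a_{\rm alt}\}\):
\[
\liminf_s\Pr(\spnorm{\Xi_{2s}}\le a_{\rm alt})\ge\Pr(\spnorm{\eta_0G}<a_{\rm alt}).
\]
The right-hand side is positive because \(G\) is a nondegenerate Gaussian (\(\Sigma\succ0\)), so it charges every neighborhood of the origin. Set \(p_{\rm alt}\) equal to this value.

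I expect this \(\rho=1/2\) step to be the only delicate one: the open-set form of Portmanteau is needed, and positivity of the limiting probability uses \(\Sigma\succ0\). Everything else is a direct substitution into \cref{prop:new-movement-sufficient}.
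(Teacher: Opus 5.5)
Your proposal is correct and follows the paper's proof essentially step for step. Like the paper, you verify the hypotheses of \cref{prop:new-movement-sufficient}: the boundary condition on the regularizer, the power-law schedule via the mean value theorem, the common-$z$ coupling for fresh noise, vanishing second moments for $\rho>1/2$, and the CLT combined with the open-set Portmanteau bound at $\rho=1/2$. You then sum the odd-round learning rates via \eqref{eq:new-movement-odd-sum}, as the paper does.

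The only differences are cosmetic. You obtain tightness at $\rho=1/2$ from the bounded second moment rather than from the CLT. You also invoke $\Sigma\succ0$ for positivity of the Gaussian small-ball probability, which in fact holds for any centered Gaussian.
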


The mark \(\dagger\) denotes the additional coupling condition in
the second case. The finite-variance assumption verifies the other
noise conditions but does not, by this argument, verify that
coupling. After the proof we give the exact Gaussian, symmetric
Laplace, and Student-\(t\) models for which it is established.

\begin{proof}
We apply \cref{prop:new-movement-sufficient}. Small relative noise
keeps the two distinguished best responses in the interior, where
their inverse curvatures give a positive movement gain. Localization
removes the schedule error, and a coupling of adjacent noise laws
controls the remaining error.

The regularizer and schedule conditions hold throughout the table.
For negative entropy, Euclidean, and negative Tsallis, respectively,
the inverse curvatures are \(p\), \(1\), and \(p^{2-q}/q\),
so they are bounded. The first and third vanish as \(p\to0\),
which verifies the boundary condition for the two steep regularizers.
Euclidean regularization is non-steep. Also,
\(\eta_t\to0\), \(t\eta_t\to\infty\), and the mean value
theorem gives
\[
  0\le\eta_t-\eta_{t+1}
       \le\rho\eta_0(t+1)^{-\rho-1},
  \qquad
  t(\eta_t-\eta_{t+1})=O(\eta_{t+1}).
\]
Thus the schedule satisfies \eqref{eq:new-movement-schedule}.
It remains to verify the noise conditions in each row.

\emph{Noiseless learning.}
Here \(\Xi_t=0\). The small-noise event has probability one,
and coupling both adjacent marginals at zero gives
\(\mathcal N_s=0\).

\emph{Fresh finite-variance noise.}
The bound
\(\mathbb E\lVert\Xi_t\rVert_2^2
  =\eta_t^2\mathbb E\lVert z\rVert_2^2\to0\)
shows that the scaled noise is tight and that
\(\Pr(\spnorm{\Xi_{2s}}\le a_{\rm alt})\to1\).
For adjacent marginals, use a common \(z\sim\mathcal D\):
\[
  \widehat\Xi_s^0=\eta_{2s}z,
  \qquad
  \widehat\Xi_s^1=\eta_{2s+1}z.
\]
Finite variance gives \(\mathbb E\spnorm{z}<\infty\), and hence
\[
  \mathbb E\spnorm{\widehat\Xi_s^0-\widehat\Xi_s^1}
    =(\eta_{2s}-\eta_{2s+1})\mathbb E\spnorm{z}
    =O(s^{-\rho-1})=o(\eta_{2s+1}).
\]
This comparison uses only the common marginal law of the fresh
perturbations, so it also applies when they are dependent across rounds.

\emph{Cumulative noise with \(\rho>1/2\).}
The second-moment bound above gives
\(\mathbb E\lVert\Xi_t\rVert_2^2=O(t^{1-2\rho})\to0\).
Thus tightness and the small-noise probability condition hold as
in the fresh model. The adjacent-law condition is the hypothesis
marked by \(\dagger\).

\emph{Cumulative noise with \(\rho=1/2\).}
The central limit theorem gives \(\Xi_t\dto\eta_0G\).
This implies tightness. Applying the Portmanteau theorem to the
open neighborhood \(\{u:\spnorm{u}<a_{\rm alt}\}\) also gives
\[
  \liminf_{s\to\infty}
    \Pr(\spnorm{\Xi_{2s}}\le a_{\rm alt})
    \ge\Pr(\spnorm{\eta_0G}<a_{\rm alt})>0.
\]
The last inequality holds because a centered Gaussian has positive
probability in every neighborhood of zero. Thus the response remains
sufficiently responsive at the threshold as well; the additional
hypothesis again controls the adjacent noise laws.

In each case, \cref{prop:new-movement-sufficient} now gives
\(\liminf_s\mathcal K_s>0\), \(\mathcal D_s=o(1)\), and
\(\mathcal N_s=o(\eta_{2s+1})\). Together with the schedule bound,
these are the conditions of \cref{thm:new-movement-surplus}.
For some \(c>0\) and finite \(s_0\), its conclusion is
\[
  \Surplus(x^{\rm alt}_{<T})
    \ge c\sum_{s=s_0}^{\lfloor T/2\rfloor-1}\eta_{2s+1}.
\]
The sum is the source of the exponent in the movement column:
\[
  \sum_{s=s_0}^{\lfloor T/2\rfloor-1}\eta_{2s+1}
    =\eta_0 2^{-\rho}
      \sum_{s=s_0}^{\lfloor T/2\rfloor-1}(s+1)^{-\rho}
    =\frac{\eta_0}{2(1-\rho)}T^{1-\rho}+O(1),
\]
by the integral comparison in \eqref{eq:new-movement-odd-sum}.
This gives \(\Omega(T^{1-\rho})\) in the first three rows
and \(\Omega(\sqrt T)\) in the fourth.
\end{proof}

\begin{remark}[Why finite variance does not supply the coupling]
  A centered finite-variance law may retain an arithmetic obstruction
between adjacent sums. For example, let
\(z=(R,2R')\), where \(R,R'\) are independent Rademacher variables.
Then \(\Sigma=\operatorname{diag}(1,4)\succ0\). Each increment
of the score difference is an odd integer. Consequently,
\(D_t:=Z_{t,1}-Z_{t,2}\) has the parity of \(t\), so every
coupling of \(D_t,D_{t+1}\) satisfies
\(|D_t-D_{t+1}|\ge1\). By the reverse triangle inequality,
\[
\begin{aligned}
  \mathbb E\bigl|\eta_tD_t-\eta_{t+1}D_{t+1}\bigr|
   &\ge\eta_t-(\eta_t-\eta_{t+1})\mathbb E|D_{t+1}|\\
   &\ge\eta_t-(\eta_t-\eta_{t+1})\sqrt{5(t+1)}
    =\eta_t\bigl(1-O(t^{-1/2})\bigr).
\end{aligned}
\]
In two dimensions, the span of a vector is the absolute difference
of its coordinates. Thus no adjacent coupling for this noise law
satisfies \eqref{eq:new-movement-coupling}. This explains why the
\(\dagger\) condition must be checked separately; failure of this
sufficient condition by itself does not determine the surplus rate.
\end{remark}

\subsubsection{The noise models covered by \texorpdfstring{\(\dagger\)}{the dagger}}
\label{app:dagger-noise-models}
The verification of \(\dagger\) concerns the following specified
vector distributions: centered Gaussian noise; symmetric Laplace
noise with a common exponential Gaussian mixing variable or with
independent Laplace coordinates; and Student-\(t\) noise with
\(\nu>2\) with a common inverse-chi-square Gaussian mixing variable
or with independent Student-\(t\) coordinates. Fixed invertible
linear transformations of the independent-coordinate models are
also allowed. Merely specifying finite variance or the univariate
coordinate marginals does not impose these joint-law assumptions.

The common-mixing models admit one construction: conditional on a random
variance, each increment is Gaussian. Conditional variances add under
cumulative noise. We can therefore couple two adjacent sums by using
the same Gaussian vector with their respective standard deviations.
The following lemma quantifies how little those standard deviations
change when one more increment is added.

\begin{lemma}[Adjacent cumulative laws for Gaussian scale mixtures]
\label{lem:movement-scale-mixture-coupling}
Suppose the increments are i.i.d.\ copies of
\[
  z=\sqrt V\,H,
  \qquad H\sim\mathcal N(0,\Sigma_0),
  \qquad V>0,
\]
where \(H\) and \(V\) are independent, \(\Sigma_0\succ0\),
\(\mathbb E[V]<\infty\), and
\(\mathbb E[V^{-1/2}]<\infty\).
For \(\eta_t=\eta_0(t+1)^{-\rho}\) with \(0<\rho<1\),
the adjacent scaled cumulative marginals have couplings satisfying
\begin{equation}
  \mathbb E\spnorm{\widehat\Xi_s^0-\widehat\Xi_s^1}
      =O(s^{-\rho-1/2})=o(\eta_{2s+1}).
  \label{eq:movement-scale-mixture-coupling}
\end{equation}
\end{lemma}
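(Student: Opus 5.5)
We construct the coupling explicitly from the mixture representation. Let \(V_0,V_1,\ldots\) be i.i.d.\ copies of \(V\), and put \(U_t:=\sum_{\tau<t}V_\tau\). Conditional on the mixing variables, the cumulative score \(Z_t=\sum_{\tau<t}\sqrt{V_\tau}H_\tau\) is Gaussian with covariance \(U_t\Sigma_0\). Hence \(Z_t\stackrel d=\sqrt{U_t}\,H\), with \(H\sim\mathcal N(0,\Sigma_0)\) independent of \((V_\tau)\). For \(t=2s\) we use a single sequence \((V_\tau)\) and a single \(H\), and set
\[
  \widehat\Xi_s^0=\eta_{2s}\sqrt{U_{2s}}\,H,\qquad
  \widehat\Xi_s^1=\eta_{2s+1}\sqrt{U_{2s+1}}\,H .
\]
Both marginals are correct, since each has the law of the corresponding scaled cumulative sum. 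Independence of \(H\) from the mixing variables then factors the expected span:
\[
  \mathbb E\spnorm{\widehat\Xi_s^0-\widehat\Xi_s^1}
   =\mathbb E\spnorm{H}\cdot
    \mathbb E\bigl|\eta_{2s}\sqrt{U_{2s}}-\eta_{2s+1}\sqrt{U_{2s+1}}\bigr|.
\]
Here \(\mathbb E\spnorm{H}<\infty\) because \(H\) is Gaussian.

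We bound the scalar factor with the triangle inequality. Writing \(t=2s\), we have
\[
  \bigl|\eta_t\sqrt{U_t}-\eta_{t+1}\sqrt{U_{t+1}}\bigr|
   \le(\eta_t-\eta_{t+1})\sqrt{U_t}
     +\eta_{t+1}\bigl(\sqrt{U_{t+1}}-\sqrt{U_t}\bigr).
\]
For the first term, the mean value theorem gives \(\eta_t-\eta_{t+1}=O(t^{-\rho-1})\), and Jensen gives \(\mathbb E\sqrt{U_t}\le\sqrt{t\,\mathbb E V}\). Together these yield \(O(t^{-\rho-1/2})\).

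The second term is the main obstacle. The increment \(\sqrt{U_{t+1}}-\sqrt{U_t}=V_t/(\sqrt{U_{t+1}}+\sqrt{U_t})\) is small only if \(U_t\) is not small. We bound it by \(V_t/(2\sqrt{U_t})\), and \(V_t\) is independent of \(U_t\), so the expectation is at most \(\tfrac12\mathbb E V\cdot\mathbb E[U_t^{-1/2}]\). It remains to show \(\mathbb E[U_t^{-1/2}]=O(t^{-1/2})\), and this is where \(\mathbb E[V^{-1/2}]<\infty\) is used.

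To prove this, we first apply the AM--GM inequality, \(U_t\ge t\,(\prod_{\tau<t}V_\tau)^{1/t}\). By independence this gives \(\mathbb E[U_t^{-1/2}]\le t^{-1/2}\bigl(\mathbb E[V^{-1/(2t)}]\bigr)^t\). Next, Jensen (concavity of \(x\mapsto x^{1/t}\)) gives \(\mathbb E[V^{-1/(2t)}]\le(\mathbb E V^{-1/2})^{1/t}\). Combining the two, \(\mathbb E[U_t^{-1/2}]\le t^{-1/2}\,\mathbb E[V^{-1/2}]\). This is clean and avoids any concentration argument. As a fallback, one could split on \(\{U_t\ge t\,\mathbb E V/2\}\) and control the complement by a lower-tail bound on sums of positive i.i.d.\ variables, but the AM--GM route should suffice.

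The second term is therefore \(O(\eta_{t+1}t^{-1/2})=O(t^{-\rho-1/2})\). Adding the two terms gives the bound \(O(s^{-\rho-1/2})\). Since \(\eta_{2s+1}\asymp s^{-\rho}\), this is \(o(\eta_{2s+1})\), which proves \eqref{eq:movement-scale-mixture-coupling}.
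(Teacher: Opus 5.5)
Your proposal is correct and follows the paper's proof. It uses the same common-$H$ coupling $\sqrt{U_t}\,H$ versus $\sqrt{U_{t+1}}\,H$, the same triangle-inequality split into a learning-rate term and a variance-increment term, and the same rationalization giving $\tfrac12\mathbb E[V]\,\mathbb E[U_t^{-1/2}]$. The only difference is in bounding $\mathbb E[U_t^{-1/2}]\le t^{-1/2}\mathbb E[V^{-1/2}]$: you go through AM--GM, independence of the $V_\tau$, and Jensen for $x\mapsto x^{1/t}$, whereas the paper applies convexity of $u\mapsto u^{-1/2}$ directly to the average $\frac1t\sum_{\tau<t}V_\tau$, which gives the same constant in one line without needing independence.
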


\begin{proof}
Let \(V_0,V_1,\ldots\) be i.i.d.\ copies of \(V\), and write
\(R_t=\sum_{\tau<t}V_\tau\) for the accumulated variance factor.
Conditional on these factors, \(Z_t\) is Gaussian with covariance
\(R_t\Sigma_0\). Using one \(H\sim\mathcal N(0,\Sigma_0)\)
independent of all the factors therefore gives the coupling
\[
  \widetilde Z_t=\sqrt{R_t}\,H,
  \qquad
  \widetilde Z_{t+1}=\sqrt{R_t+V_t}\,H.
\]
Each variable has the required cumulative marginal law.

We first bound the change in standard deviation. Since
\(u\mapsto u^{-1/2}\) is convex, Jensen's inequality gives
\[
  \mathbb E[R_t^{-1/2}]
    =\frac1{\sqrt t}\,
      \mathbb E\left[\left(\frac1t\sum_{\tau<t}V_\tau
                        \right)^{-1/2}\right]
    \le\frac{\mathbb E[V^{-1/2}]}{\sqrt t}.
\]
Rationalizing the difference of square roots and using independence
of \(V_t\) and \(R_t\) now yields
\[
\begin{aligned}
  \mathbb E[\sqrt{R_t+V_t}-\sqrt{R_t}]
    &=\mathbb E\left[\frac{V_t}{\sqrt{R_t+V_t}+\sqrt{R_t}}\right]\\
    &\le\frac12\mathbb E[V]\mathbb E[R_t^{-1/2}]
      =O(t^{-1/2}).
\end{aligned}
\]
Multiplying by \(\mathbb E\spnorm{H}<\infty\) gives
\(\mathbb E\spnorm{\widetilde Z_{t+1}-\widetilde Z_t}
  =O(t^{-1/2})\).
Also, \(\mathbb E[\sqrt{R_t}]\le\sqrt{t\mathbb E[V]}\), so
\(\mathbb E\spnorm{\widetilde Z_t}=O(\sqrt t)\).

The scaled comparison has one contribution from this change in
variance and one from the learning rate:
\[
\begin{aligned}
  \mathbb E\spnorm{\eta_t\widetilde Z_t
                       -\eta_{t+1}\widetilde Z_{t+1}}
  &\le(\eta_t-\eta_{t+1})
         \mathbb E\spnorm{\widetilde Z_t}
       +\eta_{t+1}
         \mathbb E\spnorm{\widetilde Z_{t+1}-\widetilde Z_t}\\
  &=O(t^{-\rho-1}\sqrt t)
       +O(t^{-\rho}t^{-1/2})
    =O(t^{-\rho-1/2}).
\end{aligned}
\]
Set \(t=2s\),
\(\widehat\Xi_s^0=\eta_{2s}\widetilde Z_{2s}\), and
\(\widehat\Xi_s^1=\eta_{2s+1}\widetilde Z_{2s+1}\).
This proves \eqref{eq:movement-scale-mixture-coupling}.
\end{proof}

To apply the lemma to the named examples, take
\[
  V=
  \begin{cases}
    1,&\text{Gaussian},\\
    E,\quad E\sim\operatorname{Exp}(1),
      &\text{symmetric multivariate Laplace},\\
    \nu/U,\quad U\sim\chi^2_\nu,
      &\text{multivariate Student-\(t\), }\nu>2.
  \end{cases}
\]
These are the standard Gaussian scale-mixture representations
\citep{andrews1974scale,hintz2020normal}.
For the Laplace choice, coordinate \(j\) has scale
\(\sqrt{(\Sigma_0)_{jj}/2}\). The covariance of the increment
vector is \(\Sigma=\mathbb E[V]\Sigma_0\) in every case.
For Gaussian and Laplace mixing, \(\mathbb E[V]=1\), while
\(\mathbb E[V^{-1/2}]\) equals \(1\) and \(\sqrt\pi\),
respectively. For Student-\(t\) mixing,
\[
  \mathbb E[V]=\frac{\nu}{\nu-2}<\infty,
  \qquad
  \mathbb E[V^{-1/2}]=\mathbb E\sqrt{U/\nu}\le1.
\]
Thus all three examples satisfy the lemma's moment conditions.

The conclusion also holds for independent Laplace coordinates or
independent Student-\(t\) coordinates with \(\nu>2\), and for fixed
invertible linear transforms of these vectors. Indeed, apply the same
scalar coupling independently to each
coordinate. The proof bounds its expected absolute displacement by
\(O(t^{-\rho-1/2})\). For a fixed matrix \(L\),
\(\spnorm{Lu}\le2\lVert L\rVert_{2\to2}\sum_j|u_j|\), so summing
the coordinate bounds preserves this order after the transformation.
Consequently, \cref{cor:movement-noise-models} gives both
\(\dagger\) entries for these distributions: \(\Omega(T^{1-\rho})\)
when \(\rho>1/2\), and \(\Omega(\sqrt T)\) when \(\rho=1/2\).

For cumulative noise with \(0<\rho<1/2\), the coupling estimate
above still holds for these models, but it no longer ensures a
positive susceptibility. The calculation for
(\hyperref[tab:main-noise-examples]{\(\star\)}) below proves a
smaller rate in matching pennies, including an upper bound for
every trajectory independent of the noise.

\subsection{Movement exploitation below \texorpdfstring{\(\rho=1/2\)}{rho=1/2}: the entry \texorpdfstring{(\hyperref[tab:main-noise-examples]{\(\star\)})}{(star)}}
\label{app:movement-below-half}

We now justify the entry
(\hyperref[tab:main-noise-examples]{\(\star\)}) by proving that
\(\Omega(T^{1-\rho})\) cannot hold throughout the final row.
The missing hypothesis is the positive-susceptibility bound
\eqref{eq:new-movement-positive-condition}: when
\(\eta_t\sqrt t\to\infty\), increasingly large relative noise can
make the response insensitive to the alternating score shift.
In matching pennies we compute the alternating surplus as
\(\sqrt T/(\sigma\sqrt{2\pi})+o(\sqrt T)\) and prove an
\(O(\sqrt T)\) upper bound for every trajectory fixed independently
of the noise. Since \(\sqrt T=o(T^{1-\rho})\) for
\(0<\rho<1/2\), this example rules out the nominal movement rate
under the final row's assumptions alone.

\paragraph{The game and the noise.}
Consider matching pennies,
\[
  A=\begin{pmatrix}1&-1\\-1&1\end{pmatrix},\qquad
  \val=0,\qquad x^\star=(\tfrac12,\tfrac12).
\]
Here \(v=A^\top x^\star=0\), so \(K=\{1,2\}\) and \(\delta=0\).
\cref{ass:new-movement-geometry} holds with \(i=1\),
\(j=2\), and \(\ell=1\), and \eqref{eq:new-movement-strategies} gives
\[
  x'=(\tfrac34,\tfrac14),\qquad x''=(\tfrac14,\tfrac34),\qquad
  v'=A^\top x'=(\tfrac12,-\tfrac12).
\]
The trajectory \(x^{\rm alt}_{<T}\) plays \(x'\) on even rounds and
\(x''\) on odd rounds, starting at round zero, and plays \(x^\star\)
on an unmatched final round.

The noise is cumulative, \(Z_t=\sum_{\tau<t}z_\tau\), with i.i.d.\
increments that satisfy \eqref{eq:finite-variance-noise} and
\(\Sigma\succ0\). With two actions only the difference of the two
noise coordinates affects the response. We write
\[
  \xi_\tau:=z_{\tau,1}-z_{\tau,2},\qquad
  W_t:=Z_{t,1}-Z_{t,2}=\sum_{\tau<t}\xi_\tau,\qquad
  \sigma^2:=\Var(\xi_\tau).
\]
The increments \(\xi_\tau\) are i.i.d.\ with mean zero, and
\(0<\sigma^2<\infty\) because \(\Sigma\succ0\).

The limiting response of this row is \(e_{\argmax_jG_j}\), whose
mean is \(\mu=(\tfrac12,\tfrac12)\) because \(G_1-G_2\) is a
centered Gaussian. Hence \(g_{\rm mean}(x^\star;\mu)=0\), so the
positive-gap selection theorem supplies no
\(\Omega(T^\rho)\) guarantee in this game. For asymmetric
increments the finite-time mean need not equal \(\mu\); the proof
below explicitly controls the resulting adjacent-round bias.

\begin{proposition}[Matching pennies with \(0<\rho<1/2\)]
\label{prop:movement-below-half}
In matching pennies, let \(h\) be negative entropy, Euclidean, or
negative Tsallis with \(q\in(0,1)\), and let
\(\eta_t=\eta_0(t+1)^{-\rho}\) with \(\eta_0>0\) and
\(0<\rho<1/2\). Let the noise be cumulative with i.i.d.\ increments
that satisfy \eqref{eq:finite-variance-noise} and \(\Sigma\succ0\).
Then
\begin{equation}
  \Surplus(x^{\rm alt}_{<T})
     =\frac{\sqrt T}{\sigma\sqrt{2\pi}}+o(\sqrt T).
  \label{eq:movement-below-half-alt}
\end{equation}
Moreover, there is a constant \(C<\infty\) such that every trajectory
\(x_{<T}\) fixed independently of the realized noise satisfies
\begin{equation}
  \Surplus(x_{<T})\le C\sqrt T
  \qquad\text{for every }T\ge1.
  \label{eq:movement-below-half-upper}
\end{equation}
\end{proposition}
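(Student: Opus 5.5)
The plan is to reduce everything to a scalar problem. With two actions, every regularizer in the statement is symmetric under swapping the coordinates, so \(Q_h(u)_1=g(u_1-u_2)\) for a nondecreasing \(g:\mathbb R\to[0,1]\) with \(g(-w)=1-g(w)\). For entropy \(g\) is the logistic function, and for the Euclidean and Tsallis regularizers it is a clipped or algebraic analogue. In every case \(1-g(w)\) decays at least polynomially as \(w\to\infty\), as the scalar inversions of \appref{app:preliminary-kernel-examples} show. Put \(\psi_t(w):=2g(\eta_tw)-1\), which is odd and nondecreasing with values in \([-1,1]\).

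For an arbitrary oblivious trajectory write \(c_t:=2x_{t,1}-1\in[-1,1]\) and \(C_t:=\sum_{\tau<t}c_\tau\). Then \((A^\top x_t)_1=-(A^\top x_t)_2=c_t\) and \(S_{t,1}-S_{t,2}=-2C_t\), so
\[
  \Surplus(x_{<T})=\sum_{t<T}c_t\,m_t(C_t),
  \qquad
  m_t(C):=\mathbb E\,\psi_t(W_t-2C).
\]
Because \(\eta_t\sqrt t\to\infty\) when \(\rho<1/2\), the transition window of \(\psi_t\) has width \(O(1/\eta_t)=o(\sqrt t)\). Its contribution is therefore controlled by the polynomial tails of \(1-g\) together with the concentration-function bound
\[
  \sup_a\Pr(W_t\in[a,a+L])\le C_0(L+1)/\sqrt t ,
\]
which is the Kolmogorov--Rogozin inequality for i.i.d.\ sums with \(\sigma^2>0\). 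Consequently \(m_t(C)=\mathbb E\operatorname{sign}(W_t-2C)+o(t^{-1/2})\), uniformly in \(C\) whenever \(C\) is confined to a bounded set.

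\emph{Alternating lower bound.} Here \(C_{2s}=0\), \(C_{2s+1}=1/2\), \(c_{2s}=1/2\), \(c_{2s+1}=-1/2\). Pair \(s\) earns \(\tfrac12[m_{2s}(0)-m_{2s+1}(1/2)]\). I would couple \(W_{2s+1}=W_{2s}+\xi_{2s}\). Replacing each \(\psi\) by a sign function then gives
\[
  \tfrac12\,\mathbb E\bigl[\operatorname{sign}(W_{2s})-\operatorname{sign}(W_{2s}+\xi_{2s}-1)\bigr]
  =\mathbb E\bigl[\Pr(W_{2s}\in\text{interval of length }|1-\xi_{2s}|\mid\xi_{2s})\bigr]
\]
up to sign conventions, where the conditional probability uses the independence of \(\xi_{2s}\) and \(W_{2s}\). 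A local limit theorem evaluates this as \(\mathbb E(1-\xi)/(\sigma\sqrt{2\pi\cdot2s})+o(s^{-1/2})=1/(\sigma\sqrt{4\pi s})+o(s^{-1/2})\). I would use Stone's theorem, or Shepp's version for lattice laws, applied to intervals of fixed length. It is applied after truncating \(|\xi|\) at a large level and controlling the remainder with finite variance and the concentration bound. Summing over \(s<T/2\) gives \(\sum_s(4\pi s)^{-1/2}/\sigma=\sqrt T/(\sigma\sqrt{2\pi})+o(\sqrt T)\). The window errors are \(o(s^{-1/2})\) per pair and so total \(o(\sqrt T)\), and the unmatched final round contributes \(O(1)\). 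This gives \eqref{eq:movement-below-half-alt}.

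\emph{Universal upper bound.} This is the main obstacle, because \(C_t\) is arbitrary and may be unbounded. I would first try a potential argument. Let \(\Phi_t(C):=\int_0^C m_t\), which is concave because \(m_t\) is nonincreasing. Taylor's theorem gives \(c_tm_t(C_t)\le\Phi_t(C_{t+1})-\Phi_t(C_t)+c_t^2\,\mathrm{Lip}(m_t)\). The concentration bound yields \(\mathrm{Lip}(m_t)\lesssim t^{-1/2}\) when measured over unit increments, which is all that is needed since \(|c_t|\le1\); summing these remainders gives \(O(\sqrt T)\). After Abel summation the telescoped part is \(\Phi_T(C_T)+\sum_t[\Phi_t-\Phi_{t+1}](C_{t+1})\).

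The term \(\Phi_T(C_T)\) is \(O(\sqrt T)\): \(m_T(C)\) has the sign of \(-C\) once \(|C|\gtrsim\sqrt T\), while the bias \(|m_T(0)|\le1\) can only act over a range of length \(O(\sqrt T)\). The drift terms \(\Phi_t-\Phi_{t+1}\) are the delicate part. I would bound them by writing \(W_{t+1}=W_t+\xi_t\) and using the change \(\eta_t\to\eta_{t+1}\), each of which costs an \(O(1)\) shift or an \(O(1/t)\) relative rescaling. If this bookkeeping becomes unwieldy, the fallback is the budget identity of \cref{prop:must-be-average-maxmin}, which gives \(\Surplus\le\operatorname{Reg}_L\). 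One then bounds the learner's expected regret by a be-the-leader / follow-the-perturbed-leader argument. The perturbation term is \(\mathbb E\max_\tau|W_\tau|/\ldots=O(\sqrt T)\) by Doob's inequality. The regularization term is \(O(1/\eta_T)=O(T^\rho)=o(\sqrt T)\). The per-round stability term is bounded by the probability that the noisy leader lies within \(O(1)\) of a tie, which the concentration bound makes \(O(t^{-1/2})\). Summing all three contributions gives the claimed bound \(C\sqrt T\).
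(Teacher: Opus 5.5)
Your alternating lower bound has a genuine gap at the step $m_t(C)=\mathbb E\operatorname{sign}(W_t-2C)+o(t^{-1/2})$, and the statement is false as written for lattice increments. A transition window of width $O(1/\eta_t)=o(\sqrt t)$ only makes the window's probability $o(1)$. Since $|\psi_t-\operatorname{sign}|$ is of order one on that window, the concentration bound gives an error of order $1/(\eta_t\sqrt t)\asymp t^{\rho-1/2}$. The local limit theorem shows this is the true size of $\mathbb E|\psi_t-\operatorname{sign}|(W_t-2C)$. That is a factor $t^\rho$ larger than the $o(t^{-1/2})$ you need. Anything better would have to come from cancellation in the odd function $\psi_t-\operatorname{sign}$, which needs local-density information finer than finite variance provides.

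For lattice laws the claim fails outright. Take $z=(R+R',R'-R)$ with $R,R'$ independent Rademacher variables. Then $\Sigma=2I\succ0$, $\xi=2R$, $\sigma=2$, $W_{2s}\in4\mathbb Z$ and $W_{2s+1}\in4\mathbb Z+2$. Because $|W_{2s+1}|\ge2$, we have $\operatorname{sign}(W_{2s+1}-1)=\operatorname{sign}(W_{2s+1})$. So your sign-based pair gain equals $\tfrac12\operatorname{sign}(0)\Pr(W_{2s}=0)$. This is zero with the odd convention, and depends on the convention at order $s^{-1/2}$ otherwise. The true pair gain is $(1+o(1))/(\sigma\sqrt{2\pi(2s+1)})$. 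Likewise, at even $t$ one has $m_t(1/2)=-(1+o(1))/\sqrt{2\pi t}$, whereas $\mathbb E\operatorname{sign}(W_t-1)=-\Pr(W_t=0)=-(2+o(1))/\sqrt{2\pi t}$. Appealing to Shepp's theorem on intervals of fixed length cannot repair this: for lattice laws such probabilities count lattice points, not length.

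The missing idea is never to pass to the sign function. The growing width $1/\eta_t$, which is where $\rho>0$ enters, is exactly what averages out lattice structure. The paper splits the pair gain as $\tfrac12[m_{2s}(0)-m_{2s+1}(0)]+\tfrac12\int_{-1}^{0}\chi_{2s+1}(v)\,dv$, with expected sensitivity $\chi_t(v):=\mathbb E\,\psi_t'(W_t+v)$. Since $\psi_t'$ is even and unimodal, a layer-cake decomposition writes $\chi_t(v)$ as an average of window probabilities $\Pr(|W_t+v|<r/\eta_t)$. These widths tend to infinity but are $o(\sqrt t)$. For such windows the local limit theorem, lattice or nonlattice, gives $\sqrt t\,\chi_t(v)\to2/(\sigma\sqrt{2\pi})$ uniformly on compacts, with Kolmogorov--Rogozin supplying domination. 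The adjacent-round term has two parts:
\begin{itemize}
\item an $O(1/t)$ contribution from the rate change, using $2|u|g'(u)\le1$;
\item the increment term $\mathbb E\int_0^{\xi}(\chi_t(v)-\chi_t(0))\,dv=o(t^{-1/2})$, obtained after subtracting $\chi_t(0)\mathbb E\xi=0$.
\end{itemize}

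For the upper bound, your potential is the paper's in concave form. However, the drift step you defer to bookkeeping is where the same cancellation is indispensable: charging each noise shift $O(1)$ sums to $O(T)$. You need a second-order expansion, in which $\mathbb E\xi_t=0$ leaves only $O(\sigma^2\sup_v\chi_t)$. The rate change is handled by the monotonicity of $|\psi(\eta u)|$ in $\eta$. The paper uses the nonnegative convex potential $v\mapsto\mathbb E\int_0^{v-W_t}\psi(\eta_tu)\,du$, which vanishes at $t=v=0$. This makes the boundary term free and removes the need for your separate estimate $\Phi_T(C_T)=O(\sqrt T)$.
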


The first statement is the exact first-order surplus of the
alternating strategy. The constant depends on the noise only through
\(\sigma\). It does not depend on \(\rho\), on \(\eta_0\), or on the
regularizer. The second statement says that \(\sqrt T\) is the best
order in this game. Together they give
\[
  \sup_{x_{<T}}\Surplus(x_{<T})=\Theta(\sqrt T),
  \qquad
  \Surplus(x^{\rm alt}_{<T})=o(T^{1-\rho}).
\]
Here the supremum is over trajectories fixed independently of the noise.

\paragraph{Proof of \cref{prop:movement-below-half}: mechanism and required estimates.}
With two actions the response is a function of one number, the
difference of the two scores. Playing \(x'\) shifts this difference
by one unit, and the pair gains from the change of the response under
this shift. The response changes only when the noisy score difference
is within distance of order \(1/\eta_t\) of a tie. The noise \(W_t\)
is spread over a range of order \(\sqrt t\), so this happens with
probability of order \(1/(\eta_t\sqrt t)\). On this event a unit
shift moves the response by order \(\eta_t\). The product is of order
\(1/\sqrt t\), and the learning rate cancels. Summing \(1/\sqrt t\)
over the pairs gives \(\sqrt T\).

Two steps need an argument under \eqref{eq:finite-variance-noise}.
First, the noise need not have a density, so we cannot integrate
against one. We use a local limit theorem to count the probability
that \(W_t\) falls in a window of width \(1/\eta_t\). Second, the
noise need not be symmetric, so the expected payoff on an even round
is not zero. We show that this payoff is almost the same on two
adjacent rounds, so it cancels inside each pair. This step uses only
\(\mathbb E\xi_\tau=0\). The upper bound
\eqref{eq:movement-below-half-upper} uses a potential function.

\paragraph{Step 1: the response is a scalar function.}
The response to a score \((u_1,u_2)\) does not change when the same
constant is added to both coordinates. So the difference of its two
coordinates is a function of \(u_1-u_2\). We write
\(\psi(u_1-u_2):=Q_h(u_1,u_2)_1-Q_h(u_1,u_2)_2 \).
When both coordinates of \(y=Q_h(u_1,u_2)\) are positive, the KKT
conditions of \appref{app:preliminary-kkt} give
\(\theta'(y_1)-\theta'(y_2)=u_1-u_2\). With \(y_1=(1+\psi)/2\) and
\(y_2=(1-\psi)/2\), this equation determines \(\psi\). For negative
entropy and for Euclidean,
\[
  \psi(u)=\tanh(u/2)
  \qquad\text{and}\qquad
  \psi(u)=\max\{-1,\min\{u,1\}\},
\]
respectively. For negative Tsallis, \(\psi(u)\) is the solution of
\(\theta'\bigl(\tfrac{1+\psi}2\bigr)
  -\theta'\bigl(\tfrac{1-\psi}2\bigr)=u\).

The gain of a pair is a change of \(\psi\) under a small shift of
its argument. So we need the derivative \(\psi'\). For Euclidean,
\(\psi\) has a kink at \(u=\pm1\), and we set
\(\psi'(u)=\mathbf 1\{|u|\le1\}\).

\begin{lemma}[The scalar response]
\label{lem:mp-scalar-response}
For each of the three regularizers, \(\psi\) is odd, nondecreasing,
and Lipschitz, and \(\psi(u)\to\pm1\) as \(u\to\pm\infty\). Its
derivative \(\psi'\) is even and nonincreasing on \([0,\infty)\), and
\begin{equation}
  0\le\psi'(u)\le\psi'(0)<\infty,\qquad
  \int_{\mathbb R}\psi'(u)\,du=2,\qquad
  |u|\,\psi'(u)\le1\quad\text{for every }u\in\mathbb R.
  \label{eq:mp-psi-properties}
\end{equation}
\end{lemma}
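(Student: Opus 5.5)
The plan is to treat all three kernels together through the two-coordinate KKT equation of \appref{app:preliminary-kkt}, and to check only one kernel-specific fact at the end. Define, for \(\psi\in(-1,1)\),
\[
  g(\psi):=\theta'\!\left(\tfrac{1+\psi}2\right)-\theta'\!\left(\tfrac{1-\psi}2\right).
\]
When both coordinates are positive, \(\psi(u)\) is the unique solution of \(g(\psi)=u\). The function \(g\) is odd and \(C^1\) on \((-1,1)\). It is strictly increasing, with
\[
  g'(\psi)=\tfrac12\bigl[\theta''(y_1)+\theta''(y_2)\bigr]>0,\qquad y_{1,2}=\tfrac{1\pm\psi}2 .
\]

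\emph{Oddness, monotonicity, limits.} Swapping the two coordinates maps \(Q_h(u_1,u_2)\) to \(Q_h(u_2,u_1)\), so \(\psi\) is odd. Strict monotonicity of \(g\) and the inverse function theorem make \(\psi\) nondecreasing and \(C^1\) wherever the response is interior.

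For the steep kernels (entropy and Tsallis), \(\theta'(0^+)=-\infty\). Hence \(g(\psi)\to\pm\infty\) as \(\psi\to\pm1\), so \(\psi\) is defined on all of \(\mathbb R\) and \(\psi(u)\to\pm1\). For the Euclidean kernel, \(g(\psi)=\psi\). The inactive-coordinate KKT condition \(u_j+\lambda\le0\) gives saturation at \(\pm1\) exactly when \(|u|\ge1\); this yields the explicit clipped formula and its limits.

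\emph{Derivative, Lipschitz bound, integral.} On the interior region,
\[
  \psi'(u)=\frac{2}{\theta''(y_1)+\theta''(y_2)} .
\]
This is at most \(C_\theta\) by \cref{ass:new-movement-regularity}, which gives Lipschitz continuity. In the Euclidean case \(\psi'=\mathbf 1\{|u|\le1\}\) almost everywhere, which is also bounded.

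Evenness of \(\psi'\) follows from oddness of \(\psi\). Because \(\psi\) is Lipschitz, hence absolutely continuous, we get
\[
  \int_{\mathbb R}\psi'=\lim_{u\to\infty}\psi(u)-\lim_{u\to-\infty}\psi(u)=2 .
\]
The value \(\psi'(0)=1/\theta''(1/2)\) is finite.

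\emph{Monotonicity of \(\psi'\) on \([0,\infty)\).} This is the one step that is kernel-specific, and I expect it to be the main point. Set \(\Theta(\psi):=\theta''(\tfrac{1+\psi}2)+\theta''(\tfrac{1-\psi}2)\). It suffices to show that \(\Theta\) is nondecreasing on \([0,1)\); since \(\psi\) is nondecreasing in \(u\), \(\psi'=2/\Theta(\psi(u))\) is then nonincreasing for \(u\ge0\).

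The function \(\Theta\) is even in \(\psi\). If \(\theta''\) is convex on \((0,1)\), then \(\Theta\) is convex, and an even convex function is nondecreasing on \([0,1)\). The curvatures are
\[
  \theta''(p)=\frac1p\ \text{(entropy)},\qquad \theta''(p)=q\,p^{q-2}\ \text{(Tsallis, exponent }q-2<0\text{)},\qquad \theta''(p)=1\ \text{(Euclidean)} .
\]
All three are convex in \(p\). In the Euclidean case, past \(u=1\) the derivative simply drops from \(1\) to \(0\), which is still nonincreasing.

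\emph{The bound \(|u|\psi'(u)\le1\).} By evenness it suffices to take \(u\ge0\). Since \(\psi'\) is nonincreasing on \([0,u]\),
\[
  u\,\psi'(u)\le\int_0^u\psi'(s)\,ds=\psi(u)-\psi(0)=\psi(u)\le1 ,
\]
using \(\psi(0)=0\) from oddness. This completes \eqref{eq:mp-psi-properties}.
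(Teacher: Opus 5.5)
Your proof is correct and follows essentially the same route as the paper. Both arguments obtain \(\psi'(u)=2/(\theta''(y_1)+\theta''(y_2))\) from the two-coordinate KKT equation, use convexity of \(\theta''\) for each kernel to make the symmetric curvature sum nondecreasing away from \(1/2\), and close with the comparison \(u\,\psi'(u)\le\int_0^u\psi'\). The differences are cosmetic: you invert \(g\) explicitly and take the Lipschitz constant \(C_\theta\) from the inverse-curvature assumption, whereas the paper reads boundedness off the explicit per-kernel formulas.
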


\begin{proof}
Exchanging the two coordinates of the score exchanges the two
coordinates of the response, so \(\psi\) is odd. When both
coordinates of the response are positive, differentiating the KKT
equation in \(u\) gives
\begin{equation}
  \psi'(u)=\frac{2}{\theta''(y_1)+\theta''(y_2)},
  \qquad y_1=\frac{1+\psi(u)}2,\quad y_2=\frac{1-\psi(u)}2 .
  \label{eq:mp-psi-derivative}
\end{equation}
For negative entropy this is
\(\psi'(u)=2y_1y_2=\tfrac12\operatorname{sech}^2(u/2)\). For
Euclidean it is \(1\) when \(|u|<1\), and \(\psi'(u)=0\) when
\(|u|>1\), where the response is a vertex. For negative Tsallis it is
\(2/\bigl(q(y_1^{q-2}+y_2^{q-2})\bigr)\). In each case \(\psi'\) is
nonnegative and bounded, so \(\psi\) is nondecreasing and Lipschitz.
The limits \(\pm1\) are clear for the first two formulas. For
negative Tsallis, \(\theta'(0^+)=-\infty\), so the left side of the
KKT equation tends to \(+\infty\) exactly when \(y_2\to0\).

Next we show that \(\psi'\) is nonincreasing on \([0,\infty)\). The
second derivative \(\theta''(p)\) equals \(1/p\), \(1\), and
\(qp^{q-2}\) for the three kernels, and each of these is convex in
\(p\). So \(p\mapsto\theta''(p)+\theta''(1-p)\) is convex and
symmetric about \(1/2\), hence nondecreasing on \([1/2,1)\). As
\(u\) increases from zero, \(y_1\) increases from \(1/2\). By
\eqref{eq:mp-psi-derivative}, \(\psi'\) is nonincreasing as long as
both coordinates are positive, and for Euclidean it drops to zero
once the response is a vertex. So \(\psi'\) is nonincreasing on
\([0,\infty)\). It is even because \(\psi\) is odd, so its maximum
is \(\psi'(0)\).

The integral of \(\psi'\) equals
\(\psi(+\infty)-\psi(-\infty)=2\). Finally, for \(u>0\), the
monotonicity of \(\psi'\) gives
\[
  u\,\psi'(u)\le\int_0^u\psi'(r)\,dr\le\int_0^\infty\psi'(r)\,dr=1 .
  \qedhere
\]
\end{proof}

Next we express the payoff through \(\psi\). For a trajectory
\(x_{<T}\), write
\[
  \alpha_t:=x_{t,1}-x_{t,2}\in[-1,1],\qquad
  d_t:=2\sum_{\tau<t}\alpha_\tau .
\]
In matching pennies \(A^\top x=(x_1-x_2)\,(1,-1)\). Since \(B=-A\),
the cumulative score \(S_t=-\sum_{\tau<t}A^\top x_\tau\) has
\(S_{t,1}-S_{t,2}=-d_t\). The two coordinates of the learner's score
\(\eta_t(S_t+Z_t)\) therefore differ by \(\eta_t(W_t-d_t)\). The
payoff is \(x^\top Ay=(x_1-x_2)(y_1-y_2)\). Hence
\begin{equation}
  x_t^\top Ay_t=\alpha_t\,\psi\bigl(\eta_t(W_t-d_t)\bigr).
  \label{eq:mp-payoff}
\end{equation}

The surplus is an expectation of \eqref{eq:mp-payoff}, so two
averages appear in every step below. The first is the expected
response when the deterministic part of the score difference is
\(v\). The second is its derivative, the expected sensitivity of the
response to a unit change of the score difference:
\begin{equation*}
  \bar\psi_t(v):=\mathbb E\,\psi\bigl(\eta_t(W_t+v)\bigr),
  \qquad
  \chi_t(v):=\eta_t\,\mathbb E\,\psi'\bigl(\eta_t(W_t+v)\bigr).
\end{equation*}
Since \(\psi\) is Lipschitz with derivative \(\psi'\),
\begin{equation}
  \bar\psi_t(b)-\bar\psi_t(a)=\int_a^b\chi_t(v)\,dv
  \qquad\text{for all }a,b\in\mathbb R.
  \label{eq:mp-ftc}
\end{equation}
With \(\val=0\), \eqref{eq:mp-payoff} gives
\begin{equation}
  \Surplus(x_{<T})
     =\sum_{t=0}^{T-1}\alpha_t\,\bar\psi_t(-d_t).
  \label{eq:mp-surplus}
\end{equation}

\paragraph{Step 2: the expected sensitivity.}
The gain of a pair will be an integral of \(\chi_t\) over a unit
interval. So we need the size of \(\chi_t\).

\begin{lemma}[Expected sensitivity]
\label{lem:mp-sensitivity}
There is a constant \(C_\chi<\infty\) such that
\begin{equation}
  \sup_{v\in\mathbb R}\chi_t(v)\le\frac{C_\chi}{\sqrt{t+1}}
  \qquad\text{for every }t\ge0.
  \label{eq:mp-sensitivity-bound}
\end{equation}
If \(0<\rho<1/2\), then for every \(R>0\),
\begin{equation}
  \sup_{|v|\le R}\left|\sqrt t\,\chi_t(v)
        -\frac{2}{\sigma\sqrt{2\pi}}\right|\longrightarrow0
  \qquad\text{as }t\to\infty.
  \label{eq:mp-sensitivity-limit}
\end{equation}
\end{lemma}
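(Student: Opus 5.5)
The plan is to treat $\chi_t(v)=\eta_t\,\mathbb E\,\psi'(\eta_t(W_t+v))$ as an integral of the profile $\psi'$ against the law of the scaled random walk $\eta_t(W_t+v)$. Partition the $u$-axis into blocks of width $1/M$ in the scaled variable, that is, windows of width $L_t=1/(M\eta_t)$ for $W_t$. This reduces both claims to estimates on the probabilities $\Pr(W_t\in[x,x+L_t))$. By \cref{lem:mp-scalar-response}, $\psi'$ is even, bounded by $\psi'(0)$, nonincreasing on $[0,\infty)$, and has $\int\psi'=2$. Monotonicity gives upper and lower Riemann-sum comparisons, with $\frac1M\sum_k\sup_{[k/M,(k+1)/M)}\psi'\le \int\psi'+\frac{2\psi'(0)}M$, and symmetrically for the infimum sums.

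\textbf{Uniform bound \eqref{eq:mp-sensitivity-bound}.} I would take $M=1$ and invoke a Kolmogorov--Rogozin concentration inequality for sums of i.i.d.\ nondegenerate variables with finite variance:
\[
\sup_x\Pr(W_t\in[x,x+L])\le C_0\,\frac{L+1}{\sqrt t}\qquad(t\ge1,\ L>0).
\]
Bounding $\psi'$ on each unit block by its supremum then gives
\[
\chi_t(v)\le \eta_t\,C_0\,\frac{1/\eta_t+1}{\sqrt t}\,\bigl(2+2\psi'(0)\bigr)\le \frac{C_0(1+\eta_0)(2+2\psi'(0))}{\sqrt t}.
\]
For $t=0$ we have $\chi_0\le\eta_0\psi'(0)$ directly, and together these yield a constant $C_\chi$ in terms of $\sqrt{t+1}$. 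This part uses only finite variance and $\sigma>0$.

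\textbf{Limit \eqref{eq:mp-sensitivity-limit}.} Here I would use Stone's local limit theorem. It holds for nonlattice laws and for lattice laws, the latter provided the window length is at least a fixed multiple of the span. It gives, uniformly in $x$ and in $L$ ranging over any compact set of admissible lengths,
\[
\Pr(W_t\in[x,x+L))=\frac{L}{\sigma\sqrt{2\pi t}}e^{-x^2/(2\sigma^2t)}+o(t^{-1/2}).
\]
Since $L_t=t^{\rho}(1+o(1))/(M\eta_0)\to\infty$, I would split each window into $\lceil L_t\rceil$ subwindows of bounded length. The accumulated error is then $o(L_t/\sqrt t)$ per block, uniformly. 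Fix a cutoff $U$ and consider the blocks with $|u|\le U$. The corresponding $x$ satisfies $|x|\le U/\eta_t+R$, which is $o(\sqrt t)$ because $\rho<1/2$, so the Gaussian factor tends to $1$ uniformly in $|v|\le R$. Hence $\sqrt t$ times the contribution of these blocks lies between the lower and upper Riemann sums of $\psi'$ over $[-U,U]$, divided by $\sigma\sqrt{2\pi}$, up to $o(1)$. The blocks with $|u|>U$ are handled by the concentration bound above: their contribution to $\sqrt t\,\chi_t$ is at most a constant times $\int_{|u|>U-1}\psi'$ plus $O(1/M)$. Letting $t\to\infty$, then $M\to\infty$, then $U\to\infty$ gives $\sqrt t\,\chi_t(v)\to\int\psi'/(\sigma\sqrt{2\pi})=2/(\sigma\sqrt{2\pi})$, uniformly in $|v|\le R$.

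\textbf{Main obstacle.} The delicate step is applying the local limit theorem uniformly when no density exists and the increments may be lattice. The resolution is that the relevant windows have length of order $t^{\rho}$. This length diverges, so it eventually exceeds any lattice span, yet it is $o(\sqrt t)$, so the Gaussian profile is essentially flat across each window. These two scale facts are exactly where $0<\rho<1/2$ enters.
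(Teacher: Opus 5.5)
Your argument is correct, but it is organized differently from the paper's. You discretize the scaled axis into blocks of width $1/M$ and sandwich $\chi_t$ between upper and lower Riemann sums of $\psi'$, weighted by block probabilities. You apply the local limit theorem to the finitely many blocks in $[-U,U]$, which lie at distance at most $U/\eta_t+R=o(\sqrt t)$ from the origin. The remaining blocks you charge to Kolmogorov--Rogozin, and you take $t\to\infty$, then $M\to\infty$, then $U\to\infty$.

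The paper instead uses the layer-cake identity $\psi'(u)=\int_0^{\psi'(0)}\mathbf 1\{\psi'(u)>y\}\,dy$. Since $\psi'$ is even and nonincreasing on $[0,\infty)$, every superlevel set is a centered interval of half-width $r(y)$ with $\int 2r(y)\,dy=2$. So $\chi_t(v)$ is an average over levels $y$ of probabilities of the single centered window $\{|W_t+v|\le r(y)/\eta_t\}$. The uniform bound is then one application of \eqref{eq:mp-concentration}. The limit is \eqref{eq:mp-local-limit} at each fixed level plus dominated convergence with the integrable majorant $C_0(\eta_0+2r(y))$. There is no mesh, cutoff, or iterated limit.

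Your route uses monotonicity of $\psi'$ only to control Riemann sums, so it would extend to any bounded, directly Riemann integrable profile. The paper's route exploits unimodality to reduce everything to centered windows. Both rest on the same scale separation $1\ll 1/\eta_t\ll\sqrt t$.

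One precision point concerns lattice increments. There the fixed-length formula with error $o(t^{-1/2})$ holds only for half-open windows whose length is an exact multiple of the span, not merely at least one. A generic bounded length carries an $O(t^{-1/2})$ lattice-counting error, and summed over $\lceil L_t\rceil$ pieces this could be of the same order as the main term. Cut each block into pieces of exact multiples of the span and charge the single leftover piece to Kolmogorov--Rogozin, as the paper does when deriving \eqref{eq:mp-local-limit}. That leftover costs $O(t^{-1/2})=o(L_t/\sqrt t)$.
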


The constant in \eqref{eq:mp-sensitivity-limit} comes from the
probability of a score tie. The factor
\(\psi'(\eta_t(W_t+v))\) is non-negligible only when \(W_t+v\) is
within order \(1/\eta_t\) of zero. Near the origin \(W_t\) has about
the density \(1/(\sigma\sqrt{2\pi t})\) of its Gaussian
approximation. So
\[
  \chi_t(v)\approx\frac{1}{\sigma\sqrt{2\pi t}}
      \int_{\mathbb R}\psi'(u)\,du
      =\frac{2}{\sigma\sqrt{2\pi t}} .
\]
The two restrictions on \(\rho\) are both used in this calculation. Since \(\rho>0\),
the window \(1/\eta_t\) grows, so the count does not depend on the
fine structure of the law of \(\xi_\tau\), such as a lattice. Since
\(\rho<1/2\), the window is short compared with \(\sqrt t\), so the
Gaussian density is flat on it.

The proof uses two facts about the sum \(W_t\). The first is a bound
that holds for every \(t\): there is a constant \(C_0<\infty\) such
that, for all \(t\ge0\), \(a\in\mathbb R\), and \(\ell\ge0\),
\begin{equation}
  \Pr(a\le W_t\le a+\ell)\le\frac{C_0(1+\ell)}{\sqrt{t+1}} .
  \label{eq:mp-concentration}
\end{equation}
This is the Kolmogorov--Rogozin inequality for sums of i.i.d.\
nondegenerate random
variables~\citep{rogozin1961estimate,esseen1966kolmogorov}. The
second fact supplies the exact coefficient. Let \(\ell_t\to\infty\) with
\(\ell_t=o(\sqrt t)\). Then, for every \(R>0\),
\begin{equation}
  \sup_{|v|\le R}\left|\frac{\sqrt t}{\ell_t}
        \Pr\bigl(|W_t+v|\le\ell_t\bigr)
        -\frac{2}{\sigma\sqrt{2\pi}}\right|\longrightarrow0 .
  \label{eq:mp-local-limit}
\end{equation}
It follows from the local limit theorem for i.i.d.\ sums with finite
variance~\citep{shepp1964local,stone1965local}, which has a nonlattice
form and a lattice form.

We derive \eqref{eq:mp-local-limit} in the two cases. Suppose first
that the law of \(\xi_\tau\) is nonlattice. Then an interval of
length one whose distance from the origin is \(o(\sqrt t)\) has
probability \((1+o(1))/(\sigma\sqrt{2\pi t})\), uniformly over such
intervals~\citep[Theorem~3.5.4]{durrett2019probability}. Split the
window \(\{w:|w+v|\le\ell_t\}\) into \(\lfloor2\ell_t\rfloor\)
intervals of length one and a remainder of length less than one. All
of them are within distance \(R+\ell_t=o(\sqrt t)\) of the origin.
The remainder has probability at most \(2C_0/\sqrt{t+1}\) by
\eqref{eq:mp-concentration}. Summing gives
\[
  \Pr\bigl(|W_t+v|\le\ell_t\bigr)
    =\frac{2\ell_t}{\sigma\sqrt{2\pi t}}\,(1+o(1))
      +O\bigl(t^{-1/2}\bigr),
\]
which is \eqref{eq:mp-local-limit} because \(\ell_t\to\infty\). Now
suppose that the law of \(\xi_\tau\) is lattice with span
\(\delta_\xi\). Then each point of the lattice of \(W_t\) within
distance \(o(\sqrt t)\) of the origin has probability
\(\delta_\xi(1+o(1))/(\sigma\sqrt{2\pi t})\),
uniformly~\citep[Theorem~3.5.3]{durrett2019probability}. The window
contains \(2\ell_t/\delta_\xi+O(1)\) such points, and the same
display follows.

\begin{proof}[Proof of \cref{lem:mp-sensitivity}]
We turn the expectation of \(\psi'\) into probabilities of windows.
By \cref{lem:mp-scalar-response}, \(\psi'\) is even and
nonincreasing on \([0,\infty)\). So for each level
\(y\in(0,\psi'(0))\) the set \(\{u:\psi'(u)>y\}\) is an interval
centered at zero. Let \(r(y)>0\) be its half-width. Then
\begin{equation}
  \psi'(u)=\int_0^{\psi'(0)}\mathbf 1\{\psi'(u)>y\}\,dy,
  \qquad
  \int_0^{\psi'(0)}2r(y)\,dy=\int_{\mathbb R}\psi'(u)\,du=2 .
  \label{eq:mp-layer}
\end{equation}
Taking expectations in the first identity,
\begin{equation}
  \chi_t(v)=\eta_t\int_0^{\psi'(0)}
      \Pr\bigl(\psi'(\eta_t(W_t+v))>y\bigr)\,dy ,
  \label{eq:mp-chi-layer}
\end{equation}
and the event inside contains
\(\{|W_t+v|<r(y)/\eta_t\}\) and is contained in
\(\{|W_t+v|\le r(y)/\eta_t\}\).

We first prove \eqref{eq:mp-sensitivity-bound}. The larger event is
a window of length \(2r(y)/\eta_t\). By \eqref{eq:mp-concentration}
and \eqref{eq:mp-layer},
\[
  \chi_t(v)\le\eta_t\int_0^{\psi'(0)}
       \frac{C_0\bigl(1+2r(y)/\eta_t\bigr)}{\sqrt{t+1}}\,dy
     =\frac{C_0\bigl(\eta_t\psi'(0)+2\bigr)}{\sqrt{t+1}} .
\]
Since \(\eta_t\le\eta_0\), the bound holds with
\(C_\chi:=C_0(\eta_0\psi'(0)+2)\).

We now prove \eqref{eq:mp-sensitivity-limit}. Fix
\(y\in(0,\psi'(0))\) and put \(\ell_t=r(y)/\eta_t\). Then
\(\ell_t\to\infty\) because \(\eta_t\to0\), and
\(\ell_t=o(\sqrt t)\) because \(\eta_t\sqrt t\to\infty\), which is
the condition \(\rho<1/2\). So \eqref{eq:mp-local-limit} gives
\[
  \eta_t\sqrt t\,\Pr\bigl(|W_t+v|\le r(y)/\eta_t\bigr)
     \longrightarrow\frac{2r(y)}{\sigma\sqrt{2\pi}}
\]
uniformly in \(|v|\le R\). The same limit holds with strict
inequality, because the two events differ by two points, each of
probability at most \(C_0/\sqrt{t+1}\). Hence the integrand of
\eqref{eq:mp-chi-layer}, multiplied by \(\sqrt t\), converges to
\(2r(y)/(\sigma\sqrt{2\pi})\) for every \(y\). By
\eqref{eq:mp-concentration} it is at most \(C_0(\eta_0+2r(y))\),
which is integrable over \((0,\psi'(0))\). Dominated convergence and
\eqref{eq:mp-layer} give
\[
  \sqrt t\,\chi_t(v)\longrightarrow
     \int_0^{\psi'(0)}\frac{2r(y)}{\sigma\sqrt{2\pi}}\,dy
     =\frac{2}{\sigma\sqrt{2\pi}} ,
\]
and the convergence is uniform in \(|v|\le R\) because the pointwise
limits and the dominating function are.
\end{proof}

\paragraph{Step 3: the expected response at a tie changes slowly.}
On an even round of the alternating strategy the deterministic score
difference is zero. The expected response there is
\(\beta_t:=\bar\psi_t(0)=\mathbb E\,\psi(\eta_tW_t)\).
If the law of \(\xi_\tau\) is symmetric, then \(\beta_t=0\) because
\(\psi\) is odd. Under \eqref{eq:finite-variance-noise} the law can
be skewed, and then \(\beta_t\) need not be zero. A pair compares
round \(2s\) with round \(2s+1\). So what we need is that
\(\beta_t\) is almost the same on two adjacent rounds.

\begin{lemma}[Adjacent rounds]
\label{lem:mp-adjacent}
If \(0<\rho<1/2\), then \(\beta_{t+1}-\beta_t=o(t^{-1/2})\).
\end{lemma}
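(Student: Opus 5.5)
The plan is to split the adjacent difference into a learning-rate change and a one-step noise increment:
\[
  \beta_{t+1}-\beta_t
   =\underbrace{\mathbb E\bigl[\psi(\eta_{t+1}W_{t+1})-\psi(\eta_tW_{t+1})\bigr]}_{(\mathrm{a})}
   +\underbrace{\mathbb E\bigl[\psi(\eta_tW_{t+1})-\psi(\eta_tW_t)\bigr]}_{(\mathrm{b})}.
\]
Term (a) is crude. By \cref{lem:mp-scalar-response}, \(\psi\) is Lipschitz with constant \(\psi'(0)\). Hence
\(|(\mathrm a)|\le\psi'(0)(\eta_t-\eta_{t+1})\mathbb E|W_{t+1}|\).
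The mean value theorem gives \(\eta_t-\eta_{t+1}=O(t^{-\rho-1})\), and finite variance gives \(\mathbb E|W_{t+1}|\le\sigma\sqrt{t+1}\). So (a) is \(O(t^{-1/2-\rho})=o(t^{-1/2})\), because \(\rho>0\).

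Term (b) carries the content, and here the mean-zero increment must produce a cancellation. I would write \(W_{t+1}=W_t+\xi_t\), where \(\xi_t\) is independent of \(W_t\). Conditioning on \(\xi_t=x\), the inner expectation is \(\bar\psi_t(x)-\bar\psi_t(0)\). By \eqref{eq:mp-ftc} this equals \(\int_0^x\chi_t(v)\,dv\). Therefore
\[
  (\mathrm b)=\mathbb E\!\int_0^{\xi_t}\chi_t(v)\,dv
   =\mathbb E\!\int_0^{\xi_t}\Bigl(\chi_t(v)-\frac{2}{\sigma\sqrt{2\pi t}}\Bigr)dv,
\]
where the second equality subtracts the constant \(2/(\sigma\sqrt{2\pi t})\) at no cost, since \(\mathbb E\xi_t=0\). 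This is the only point where centering is used, and it is exactly what removes the skew-induced bias.

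The main obstacle is controlling this remainder uniformly, because \eqref{eq:mp-sensitivity-limit} is only uniform on compact sets \(|v|\le R\). I would fix \(R>0\) and split on the event \(\{|\xi_t|\le R\}\).
\begin{itemize}
\item On \(\{|\xi_t|\le R\}\), the integrand is at most \(\epsilon_t(R)/\sqrt t\), with \(\epsilon_t(R)\to0\) by \eqref{eq:mp-sensitivity-limit}. This part contributes at most \(\epsilon_t(R)\,\mathbb E|\xi|/\sqrt t\).
\item On \(\{|\xi_t|>R\}\), use \eqref{eq:mp-sensitivity-bound}. The integrand is bounded by \((C_\chi+1)/\sqrt{t}\) for large \(t\), so this part contributes at most \((C_\chi+1)\,\mathbb E[|\xi|\mathbf 1\{|\xi|>R\}]/\sqrt t\).
\end{itemize}
Multiplying by \(\sqrt t\) and letting \(t\to\infty\) leaves at most \((C_\chi+1)\mathbb E[|\xi|\mathbf 1\{|\xi|>R\}]\). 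This tends to \(0\) as \(R\to\infty\) by integrability of \(\xi\), which follows from finite variance. So \(\sqrt t\,|(\mathrm b)|\to0\). Combined with the bound on (a), this gives \(\beta_{t+1}-\beta_t=o(t^{-1/2})\).
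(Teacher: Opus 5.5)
Your proposal is correct and follows the paper's proof: the same split into a learning-rate term and a new-increment term, the same conditioning on \(\xi_t\) via \eqref{eq:mp-ftc}, and the same cancellation from \(\mathbb E\xi_t=0\). The paper subtracts \(\chi_t(0)\) rather than the limiting constant and invokes dominated convergence where you truncate at \(|\xi_t|\le R\) by hand. The only real deviation is term (a): the paper uses \(|u|\psi'(u)\le1\) to get the pathwise, moment-free bound \(\log(\eta_t/\eta_{t+1})\le\rho/(t+1)\), whereas your Lipschitz-plus-second-moment bound \(O(t^{-1/2-\rho})\) is weaker but still \(o(t^{-1/2})\).
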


\begin{proof}
Two things change from round \(t\) to round \(t+1\): the learning
rate, and the noise, which receives one more increment. We treat
them one at a time:
\[
  \beta_{t+1}-\beta_t
   =\underbrace{\mathbb E\bigl[\psi(\eta_{t+1}W_{t+1})
                    -\psi(\eta_tW_{t+1})\bigr]}_{\text{learning rate}}
    +\underbrace{\mathbb E\bigl[\psi(\eta_tW_{t+1})
                    -\psi(\eta_tW_t)\bigr]}_{\text{new increment}} .
\]

\emph{The learning rate.} Fix \(w\in\mathbb R\). The derivative of
\(\lambda\mapsto\psi(\lambda w)\) is \(w\,\psi'(\lambda w)\), and
\(|w\,\psi'(\lambda w)|\le1/\lambda\) by the last bound in
\eqref{eq:mp-psi-properties}. Integrating over
\(\lambda\in[\eta_{t+1},\eta_t]\) gives
\[
  \bigl|\psi(\eta_{t+1}w)-\psi(\eta_tw)\bigr|
     \le\log\frac{\eta_t}{\eta_{t+1}}
     =\rho\log\frac{t+2}{t+1}\le\frac{\rho}{t+1}.
\]
So the first term is \(O(1/t)\), which is \(o(t^{-1/2})\).

\emph{The new increment.} We have \(W_{t+1}=W_t+\xi_t\), and
\(\xi_t\) is independent of \(W_t\). Conditioning on \(\xi_t\) and
using \eqref{eq:mp-ftc},
\[
  \mathbb E\bigl[\psi(\eta_tW_{t+1})-\psi(\eta_tW_t)\bigr]
    =\mathbb E\bigl[\bar\psi_t(\xi_t)-\bar\psi_t(0)\bigr]
    =\mathbb E\int_0^{\xi_t}\chi_t(v)\,dv .
\]
The local sensitivity estimate makes \(\chi_t(0)\,\xi_t\) the
first-order term. Its mean is zero, so we subtract \(\chi_t(0)\,\mathbb E\xi_t=0\):
\[
  \mathbb E\int_0^{\xi_t}\chi_t(v)\,dv
    =\mathbb E\int_0^{\xi_t}\bigl(\chi_t(v)-\chi_t(0)\bigr)\,dv .
\]
For each fixed value of \(\xi_t\),
\[
  \sqrt t\,\left|\int_0^{\xi_t}
        \bigl(\chi_t(v)-\chi_t(0)\bigr)\,dv\right|
   \le|\xi_t|\sup_{|v|\le|\xi_t|}
        \sqrt t\,\bigl|\chi_t(v)-\chi_t(0)\bigr| .
\]
The right side tends to zero by \eqref{eq:mp-sensitivity-limit} with
\(R=|\xi_t|\), and it is at most \(2C_\chi|\xi_t|\) by
\eqref{eq:mp-sensitivity-bound}. Since \(\mathbb E|\xi_t|<\infty\)
and the law of \(\xi_t\) does not depend on \(t\), dominated
convergence shows that the second term is \(o(t^{-1/2})\).
\end{proof}

A new increment shifts the score difference by \(\xi_t\). To first
order the expected response changes by \(\chi_t(0)\,\xi_t\), which
averages to zero. The rest is small because the sensitivity
\(\chi_t(v)\) hardly depends on \(v\). Symmetry of the noise is not
used.

\paragraph{Step 4: the alternating strategy.}
For \(x^{\rm alt}_{<T}\) we have \(\alpha_{2s}=\tfrac12\) and
\(\alpha_{2s+1}=-\tfrac12\), so \(d_{2s}=0\) and \(d_{2s+1}=1\). By
\eqref{eq:mp-surplus}, pair \(s\) contributes
\[
  \mathcal G_s:=\tfrac12\bar\psi_{2s}(0)
                 -\tfrac12\bar\psi_{2s+1}(-1)
     =\tfrac12\beta_{2s}-\tfrac12\bar\psi_{2s+1}(-1).
\]
The two terms differ in two ways: the round, and the deterministic
shift. We separate them by adding and subtracting
\(\tfrac12\beta_{2s+1}\):
\begin{equation*}
  \mathcal G_s
    =\tfrac12\bigl(\beta_{2s}-\beta_{2s+1}\bigr)
     +\tfrac12\bigl(\bar\psi_{2s+1}(0)-\bar\psi_{2s+1}(-1)\bigr).
\end{equation*}
The first term is \(o(s^{-1/2})\) by \cref{lem:mp-adjacent}.
The second term is the effect of the unit shift that playing \(x'\)
creates. By \eqref{eq:mp-ftc} and \eqref{eq:mp-sensitivity-limit}
with \(R=1\),
\[
  \bar\psi_{2s+1}(0)-\bar\psi_{2s+1}(-1)
     =\int_{-1}^{0}\chi_{2s+1}(v)\,dv
     =\frac{2}{\sigma\sqrt{2\pi(2s+1)}}\,(1+o(1)).
\]
Therefore
\begin{equation}
  \mathcal G_s=\frac{1}{\sigma\sqrt{2\pi(2s+1)}}\,(1+o(1))
  \qquad\text{as }s\to\infty.
  \label{eq:mp-pair-gain}
\end{equation}
Thus each late pair has gain of order \(t^{-1/2}\), with coefficient
\(1/(\sigma\sqrt{2\pi})\) at \(t=2s+1\). The probability of lying
in the responsive window has canceled the learning-rate factor.

It remains to sum over the pairs. An unmatched final round plays
\(x^\star\), which has \(\alpha_t=0\) and earns zero by
\eqref{eq:mp-payoff}. Comparing with the integral of
\((2u+1)^{-1/2}\),
\[
  \sum_{s=0}^{\lfloor T/2\rfloor-1}(2s+1)^{-1/2}
     =\sqrt{2\lfloor T/2\rfloor}+O(1)=\sqrt T+O(1).
\]
The relative errors \(o(1)\) in \eqref{eq:mp-pair-gain} contribute
\(o(\sqrt T)\): for any tolerance, all late pairs are within the
tolerance, and the finitely many early pairs contribute a constant.
This proves \eqref{eq:movement-below-half-alt}.

\paragraph{Step 5: the upper bound.}
Now fix any trajectory \(x_{<T}\) that is independent of the noise.
If it is randomized, we condition on it, so we take it
deterministic. Such a trajectory has no pair structure. We use
instead a potential that measures how far the optimizer has moved
the score difference away from a tie. Define
\[
  \Psi_t(v):=\int_0^v\psi(\eta_tu)\,du,\qquad
  \Phi_t(v):=\mathbb E\,\Psi_t(v-W_t).
\]
The potential \(\Phi_t\) has four properties.
\begin{enumerate}
  \item \(\Phi_t\ge0\) and \(\Phi_0(0)=0\). Indeed, \(\psi\) is odd
  and nondecreasing, so
  \(\Psi_t(v)=\int_0^{|v|}|\psi(\eta_tu)|\,du\ge0\), and \(W_0=0\).
  \item \(\Psi_{t+1}\le\Psi_t\). Indeed, \(\eta_{t+1}\le\eta_t\),
  and \(|\psi(\eta u)|\) is nondecreasing in \(\eta\).
  \item \(\Phi_t'(v)=\mathbb E\,\psi(\eta_t(v-W_t))=-\bar\psi_t(-v)\),
  because \(\psi\) is odd. So by \eqref{eq:mp-surplus} the expected
  payoff of round \(t\) is
  \(\alpha_t\bar\psi_t(-d_t)=-\alpha_t\Phi_t'(d_t)\).
  \item \(\Phi_t'\) is Lipschitz with constant
  \(L_t:=\sup_v\chi_t(v)\le C_\chi/\sqrt{t+1}\), by
  \eqref{eq:mp-ftc} and \eqref{eq:mp-sensitivity-bound}. Hence, for
  all \(v,u\in\mathbb R\),
  \begin{equation}
    \Phi_t(v+u)\le\Phi_t(v)+u\,\Phi_t'(v)+\frac{L_t}{2}u^2 .
    \label{eq:mp-descent}
  \end{equation}
\end{enumerate}
Round \(t\) changes the pair \((\Phi_t,d_t)\) into
\((\Phi_{t+1},d_{t+1})\) in two moves: the optimizer moves \(d_t\),
and the noise receives a new increment.

\emph{The optimizer's move.} Apply \eqref{eq:mp-descent} with
\(v=d_t\) and \(u=d_{t+1}-d_t=2\alpha_t\):
\(\Phi_t(d_{t+1})\le\Phi_t(d_t)+2\alpha_t\Phi_t'(d_t)+2L_t\alpha_t^2 \).
Rearranging,
\begin{equation}
  -\alpha_t\Phi_t'(d_t)
    \le\tfrac12\bigl(\Phi_t(d_t)-\Phi_t(d_{t+1})\bigr)+L_t\alpha_t^2 .
  \label{eq:mp-optimizer-move}
\end{equation}
The left side is the expected payoff of round \(t\). So the
optimizer earns in a round only by moving the score difference down
the potential, up to a curvature term of order \(L_t\).

\emph{The noise's move.} By the second property and
\(W_{t+1}=W_t+\xi_t\), with \(\xi_t\) independent of \(W_t\),
\[
  \Phi_{t+1}(v)=\mathbb E\,\Psi_{t+1}(v-\xi_t-W_t)
     \le\mathbb E\,\Psi_t(v-\xi_t-W_t)
     =\mathbb E\,\Phi_t(v-\xi_t).
\]
Apply \eqref{eq:mp-descent} with \(u=-\xi_t\) and take the
expectation. The first-order term vanishes because
\(\mathbb E\xi_t=0\), and \(\mathbb E\xi_t^2=\sigma^2\). So
\begin{equation}
  \Phi_{t+1}(v)\le\Phi_t(v)+\frac{\sigma^2L_t}{2}
  \qquad\text{for every }v\in\mathbb R.
  \label{eq:mp-noise-move}
\end{equation}
A new increment raises the potential by at most \(\sigma^2L_t/2\).

\emph{Telescoping.} Use \eqref{eq:mp-noise-move} at \(v=d_{t+1}\)
in \eqref{eq:mp-optimizer-move}:
\[
  -\alpha_t\Phi_t'(d_t)
    \le\tfrac12\bigl(\Phi_t(d_t)-\Phi_{t+1}(d_{t+1})\bigr)
        +L_t\Bigl(\alpha_t^2+\frac{\sigma^2}{4}\Bigr).
\]
Sum over \(t<T\). The potential terms telescope,
\(\Phi_0(d_0)=\Phi_0(0)=0\), \(\Phi_T\ge0\), and
\(\alpha_t^2\le1\). Therefore
\[
  \Surplus(x_{<T})
    \le\Bigl(1+\frac{\sigma^2}{4}\Bigr)\sum_{t=0}^{T-1}L_t
    \le\Bigl(1+\frac{\sigma^2}{4}\Bigr)C_\chi
          \sum_{t=0}^{T-1}\frac1{\sqrt{t+1}}
    \le2\Bigl(1+\frac{\sigma^2}{4}\Bigr)C_\chi\sqrt T .
\]
The constant does not depend on the trajectory. This proves
\eqref{eq:movement-below-half-upper} and completes the proof of
\cref{prop:movement-below-half}.

\end{document}